%% file: arxiv.tex
\documentclass[a4paper, 12pt]{article}
\usepackage[a4paper]{geometry}
\newgeometry
{textheight=9in,
textwidth=6.5in,
top=1in,
headheight=14pt,
headsep=25pt,
footskip=30pt}

\usepackage{amsmath, amssymb, amsthm}
\usepackage{mathrsfs}
\usepackage[colorlinks=true,citecolor=blue, linkcolor=blue, urlcolor=blue]{hyperref}
\usepackage{cleveref}
\usepackage{graphicx}
\usepackage{xifthen}
\usepackage{enumitem}
\usepackage{xcolor}
\usepackage{multirow}
\usepackage{tikz}
\usepackage{pgfplots}
\pgfplotsset{compat=1.18}

\usepackage{aliascnt}
\let\oldtheorem\newtheorem
\RenewDocumentCommand{\newtheorem}{s m o m O{}}{%
\IfBooleanTF{#1}%
{\oldtheorem{#2}{#4}}%
{\IfNoValueTF{#3}{\oldtheorem{#2}{#4}[#5]}%
{\newaliascnt{#2}{#3}%
\oldtheorem{#2}[#2]{#4}%
\aliascntresetthe{#2}}}}
\newtheorem{theorem}{Theorem}[section]

\newtheorem*{claim*}{Claim}
\newtheorem{example}[theorem]{Example}
\newtheorem{fact}[theorem]{Fact}
\newtheorem{lemma}[theorem]{Lemma}
\newtheorem{proposition}[theorem]{Proposition}
\newtheorem{corollary}[theorem]{Corollary}
\theoremstyle{definition}
\newtheorem{definition}[theorem]{Definition}
\newtheorem{remark}[theorem]{Remark}
\newtheorem*{remark*}{Remark}
\newtheorem{construction}[theorem]{Construction}
\newtheorem{condition}{Condition}

\def\*#1{\boldsymbol{#1}} 
\def\+#1{\mathcal{#1}} 
\def\-#1{\mathrm{#1}} 
\def\=#1{\mathbb{#1}} 

\DeclareMathOperator{\oPr}{\mathbf{Pr}}
\renewcommand{\Pr}[2][]
{\ifthenelse{\isempty{#1}}
  {\oPr\left[#2\right]}
  {\oPr_{#1}\left[#2\right]}
} 

\DeclareMathOperator{\oE}{\mathbb{E}}
\newcommand{\E}[2][]
{\ifthenelse{\isempty{#1}}
  {\oE\left[#2\right]}
  {\oE_{#1}\left[#2\right]}
} 

\DeclareMathOperator{\oVar}{\mathbf{Var}}
\newcommand{\Var}[2][]
{\ifthenelse{\isempty{#1}}
  {\oVar\left[#2\right]}
  {\oVar_{#1}\left[#2\right]}
}
\def\oEnt{\mathbf{Ent}}
\NewDocumentCommand{\Ent}{ O{} O{} m }{
  \ifthenelse{\isempty{#1}} {
    \ifthenelse{\isempty{#2}} {
      \oEnt\left[#3\right]
    } {
      \oEnt^{#2}\left[#3\right]
    }
  } {
    \ifthenelse{\isempty{#2}} {
      \oEnt_{#1}\left[#3\right]
    } {
      \oEnt_{#1}^{#2}\left[#3\right]
    }
  }
}

\newcommand{\tp}[1]{\left(#1\right)}
\newcommand{\set}[1]{\left\{#1\right\}}
\newcommand{\abs}[1]{\left\lvert#1\right\rvert}
\newcommand{\inner}[2]{\left\langle #1,#2\right\rangle}
\newcommand{\ol}[1]{\overline{#1}}

\newcommand{\pcor}[1]{\-{Cor}^{(2)}_{#1}}
\newcommand{\pSI}[1]{\lambda_{\max}\left(\pcor{#1}\right)}
\newcommand{\Tmix}{T_{\mathrm{mix}}}
\newcommand{\DTV}[2]{\mathrm{D}_{\mathrm{TV}}\tp{#1,#2}}
\newcommand{\Cov}{\mathrm{Cov}}
\newcommand{\diag}{\mathrm{diag}}

\newcommand{\DKL}[2]{\-D_{\-{KL}}\left(#1 \parallel #2\right)}
\newcommand{\dist}{\mathrm{dist}}
\newcommand{\e}{\mathrm{e}}
\renewcommand{\epsilon}{\varepsilon}
\renewcommand{\emptyset}{\varnothing}
\newcommand{\norm}[1]{\left\Vert#1\right\Vert}

\newcommand{\eps}{\varepsilon}

\newcommand{\swcor}[1]{\mathrm{Cor}_{#1}^{\mathrm{SW}}}
\newcommand{\cor}[1]{\mathrm{Cor}_{#1}}

\newboolean{Ent}
\setboolean{Ent}{true} 

\title{Edge-Tilting Field Dynamics:\\[0.5em]{Rapid Mixing at the Uniqueness Threshold\\ and Optimal Mixing for Swendsen--Wang Dynamics}}

\def\AND{\hspace{2ex}}
\author{Xiaoyu Chen\thanks{Massachusetts Institute of Technology, Cambridge, MA, USA. Email: \texttt{xiaoyu@mit.edu}. Supported by the NSF CAREER grant CCF-2443045, and the Reed Fund at MIT.}
  \AND Zhe Ju\thanks{State Key Laboratory for Novel Software Technology, New Cornerstone Science Laboratory, Nanjing University, China. Emails: \texttt{\{juzhe, miaotianshun, zhangxy\}@smail.nju.edu.cn, yinyt@nju.edu.cn}}
  \AND Tianshun Miao\footnotemark[2]
  \AND Yitong Yin\footnotemark[2]
  \AND Xinyuan Zhang\footnotemark[2]}
  
\date{}

\begin{document}

\maketitle

\begin{abstract}
    We prove two results on the mixing times of Markov chains for two-spin systems. 
    First, we show that the Glauber dynamics mixes in polynomial time for Gibbs distributions of antiferromagnetic two-spin systems at the critical threshold of the uniqueness phase transition of the Gibbs measure on infinite regular trees. 
    This completes the computational phase transition picture for antiferromagnetic two-spin systems, which includes near-linear-time optimal mixing in the uniqueness regime~[Chen--Liu--Vigoda, STOC '21; Chen--Feng--Yin--Zhang, FOCS '22], NP-hardness of approximate sampling in the non-uniqueness regime~[Sly--Sun, FOCS '12], and polynomial-time mixing at criticality (this work). 
    
    Second, we prove an optimal $O(\log n)$ mixing time bound as well as an optimal $\Omega(1)$ spectral gap for the Swendsen--Wang dynamics for the ferromagnetic Ising model with an external field on bounded-degree graphs.
    To the best of our knowledge, these are the first sharp bounds on the mixing rate of this classical global Markov chain beyond the mean-field and strong spatial mixing (SSM) regimes, and resolve a conjecture of~[Feng--Guo--Wang, IANDC '23].

    A key ingredient in both proofs is a new family of localization schemes that extends the field dynamics of~[Chen--Feng--Yin--Zhang, FOCS '21] by tilting general edge (or hyperedge) weights rather than vertex fields. 
    This framework, which subsumes the classical Swendsen--Wang dynamics as a special case, extends the localization framework of~[Chen--Eldan, FOCS '22] beyond stochastic and field localizations, and enables controlled tilting of interaction strengths while preserving external fields.
\end{abstract}

\thispagestyle{empty}

\newpage

\tableofcontents
\thispagestyle{empty}
\pagebreak

\setcounter{page}{1}

\section{Introduction}


A central phenomenon in the study of sampling algorithms is the computational phase transition of antiferromagnetic two-spin systems at the uniqueness threshold.
Let $G = (V,E)$ be an undirected graph. 
A two-spin system on $G$ is specified by parameters $(\beta, \gamma, \lambda)$, where $\beta, \gamma \geq 0$ corresponds to the \emph{edge activities} and $\lambda > 0$ corresponds to the \emph{external field}.
The associated Gibbs distribution $\mu$ is defined by
\begin{align*}
\mu(\sigma) \propto \beta^{m_1(\sigma)}\gamma^{m_0(\sigma)}\lambda^{\|\sigma\|}, \quad \forall\sigma\in\set{0,1}^V,
\end{align*}
where $m_{b}(\sigma) \triangleq \abs{\{(u,v) \in E : \sigma_u = \sigma_v = b\}}$ denotes the number of monochromatic edges with spin $b\in\{0,1\}$, and $\|\sigma\|$ denotes the number of vertices with spin $1$ in $\sigma$.

A two-spin system is \emph{ferromagnetic} if $\beta\gamma > 1$ and \emph{antiferromagnetic} if $\beta\gamma < 1$.
Up to normalization, any antiferromagnetic two-spin system can be specified by $(\beta,\gamma,\lambda)$ satisfying
\begin{align} \label{eq:cond-anti-ferro}
    0 \leq \beta \leq \gamma, \quad \gamma>0, \quad \lambda>0, \quad \text{and} \quad \beta\gamma < 1.
\end{align}


The following \emph{uniqueness condition} characterizes the uniqueness of the Gibbs measure on the infinite $(d+1)$-regular tree.


\begin{definition}[Uniqueness condition] \label{def:d-unique}
Let $\delta \in [0,1]$ and let $d\ge 1$ be an integer. 
Let $(\beta,\gamma,\lambda)$ be parameters satisfying \eqref{eq:cond-anti-ferro}.
We say that $(\beta,\gamma,\lambda)$ is \emph{$d$-unique with slack $\delta$} if
\begin{align*}
\abs{f'(\hat{x}_d)} = \frac{d(1-\beta\gamma)\hat{x}_d}{(\beta \hat{x}_d+1)(\hat{x}_d+\gamma)} \leq 1-\delta,
\end{align*}
where
\begin{align*}
f_d(x) :=\lambda\left(\frac{\beta x+1}{x+\gamma}\right)^d,
\end{align*}
and  $\hat{x}_d$ is the unique positive fixed point of $f_d$, that is, $\hat{x}_d = f(\hat{x}_d)$.

We say that $(\beta,\gamma,\lambda)$ is \emph{critically $d$-unique} if
\begin{align*}
     \abs{f'(\hat{x}_d)} = 1.
\end{align*}
\end{definition}


Remarkably, the same phase transition also captures the computational complexity of sampling.
In particular, 
when $(\beta,\gamma,\lambda)$ lies in the non-uniqueness regime, 
it was shown in~\cite{sly2012computational,galanis2016inapproximability} that approximate sampling from the Gibbs distribution of the antiferromagnetic two-spin system with parameters $(\beta,\gamma,\lambda)$ on $(d+1)$-regular graphs, for any $d \ge 2$, is computationally intractable unless $\-{NP} = \-{RP}$.
In contrast, when $(\beta,\gamma,\lambda)$ is $d$-unique with positive slack for all $d < \Delta$ (i.e., \emph{up-to-$\Delta$-unique}), Gibbs sampling is tractable in polynomial time on graphs with constant maximum degree $\Delta$~\cite{LLY13}.



The Glauber dynamics is the canonical Markov chain for sampling from the Gibbs distribution $\mu$.
Given a current state $X$ in the support $\Omega(\mu)$ of $\mu$, the chain updates as follows:
\begin{itemize}
\item choose a vertex $v \in V$ uniformly at random;
\item resample $X_v$ according to the conditional distribution $\mu_v(\cdot \mid X(V \setminus {v}))$.
\end{itemize}
It is well known that the Glauber dynamics has stationary distribution $\mu$.
Let $(X_t)_{t \geq 0}$ denote the trajectory of the Glauber dynamics.
The \emph{mixing time} of Glauber dynamics is defined as:
\begin{align*}
    \Tmix := \max_{x \in \Omega(\mu)}\min\set{t \mid \DTV{\Pr{X_t = \cdot \mid X_0 = x}}{\mu} \leq 1/4},
\end{align*}
where, for two distributions $\nu$ and $\mu$ on a finite set $\Omega$, the \emph{total variation distance} is defined by
    $\DTV{\nu}{\mu} := \frac{1}{2} \sum_{\sigma \in \Omega} \abs{\nu(\sigma) - \mu(\sigma)}$.

In recent years, a new class of techniques based on high-dimensional expanders (HDX) and spectral independence (SI) has led to major advances in the analysis of mixing times \cite{anari2019logconcave, alev2020improved, anari2020spectral, chen2021optimal}. 
For antiferromagnetic two-spin systems, rapid mixing of the Glauber dynamics up to the uniqueness threshold has been established in a series of works \cite{anari2020spectral, chen2020rapid, chen2021optimal, chen2021rapid, anari2021entropic, blanca2022mixing, anari2022entropic, chen2022localization, chen22optimal},
which prove polynomial and in some cases near-linear bounds on the mixing time. 
In contrast, it has long been known \cite{mossel2009hardness} that beyond the uniqueness threshold, the Glauber dynamics requires exponential time to mix in the worst case.

Despite substantial progress on tractability in the uniqueness regime and hardness in the non-uniqueness regime, the critical case has remained unresolved.
Existing results have been largely restricted to specific graph families \cite{levin2010glauber,ding2009mixing,ding2010mixing,lubetzky2012critical,bauerschmidt2024log,prodromidis2025polynomial}, leaving the mixing behavior on general graphs unclear.

Recently, \cite{chen2025uniqueness} established polynomial bounds on the mixing time of the Glauber dynamics for the critical zero-field Ising model and the critical hardcore model by exploiting spectral and entropic stability within the framework of \emph{localization schemes}  introduced in \cite{chen2022localization}, in particular \emph{stochastic} localization \cite{eldan2013thin} and  \emph{negative-field} localization (also known as \emph{field dynamics}) \cite{chen22optimal}.



In this paper, we study the mixing time of the Glauber dynamics at the uniqueness threshold for general antiferromagnetic two-spin systems. 
In particular, we consider critical instances of antiferromagnetic two-spin systems, formalized by the following condition.

\begin{condition}[Critical instance] \label{cond:critical}
An antiferromagnetic two-spin system with parameters $(\beta,\gamma,\lambda)$ on a graph $G$ of maximum degree $\Delta \ge 3$ satisfies:
\begin{itemize}
    \item if $\gamma\le 1$, then $(\beta,\gamma,\lambda)$ is critically $(\Delta-1)$-unique;
    \item if $\gamma>1$, then $(\beta,\gamma,\lambda)$ is critically $(\Delta-1)$-unique, and $G$ is $\Delta$-regular.
\end{itemize}
\end{condition}

\begin{remark}[Criticality and regularity]
On $\Delta$-regular graphs, the notion of criticality is natural: $(\beta,\gamma,\lambda)$ is critically $(\Delta-1)$-unique. 

For general graphs, defining criticality is more subtle. 
In particular, when $\gamma > 1$, the uniqueness property is not monotone in the degree, so the worst-case behavior may not occur at the maximum degree $\Delta$. 
To avoid this complication, \Cref{cond:critical} restricts to $\Delta$-regular graphs in this regime.  
In contrast, for strictly antiferromagnetic $2$-spin systems with $\gamma \le 1$, the uniqueness property is monotone in the degree, so criticality at degree $\Delta$ automatically implies uniqueness at all smaller degrees (i.e., $(\beta,\gamma,\lambda)$ is up-to-$\Delta$-unique), and our analysis naturally extends to general graphs with maximum degree $\Delta$.
\end{remark}

The subcritical analogue of \Cref{cond:critical} (i.e., uniqueness holds with slack) was adopted as the uniqueness regime in \cite{chen22optimal}; 
this represents the state of the art for near-linear mixing time for general antiferromagnetic two-spin systems with {unbounded} maximum degree.

For antiferromagnetic two-spin systems at the critical threshold specified by \Cref{cond:critical}, we establish the following upper bound on the mixing time of the Glauber dynamics.


\begin{theorem}[Main theorem]
\label{thm:critical-mixing}
Let $\mu$ be the Gibbs distribution of an antiferromagnetic two-spin system with parameters $(\beta,\gamma,\lambda)$ on an $n$-vertex graph of maximum degree $\Delta \ge 3$ satisfying \Cref{cond:critical}.
Then the Glauber dynamics for $\mu$ has mixing time
\begin{align*}
\Tmix = O\!\left(\tp{\log\tp{\lambda+\lambda^{-1}}+\Delta\log\alpha}\cdot n^{2\sqrt{2}+4+O\tp{\frac{1}{\Delta}}}\right),
\end{align*}
where 
$$\alpha=\begin{cases}\tp{2+\beta^{-1}}& \text{if }\beta>0,\\\tp{2+\gamma+\gamma^{-1}} & \text{if }\beta=0. \end{cases}$$ 
\end{theorem}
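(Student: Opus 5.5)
The proof runs through the localization-scheme framework of Chen--Eldan, using the edge-tilting field dynamics introduced in this paper as the outer chain, in the spirit of the critical zero-field Ising and hardcore analysis of \cite{chen2025uniqueness}. The point of tilting edge weights instead of vertex fields is that the critical instance can then be pushed into the uniqueness regime with positive slack while keeping the external field $\lambda$ fixed. Near-linear optimal mixing of Glauber dynamics is known there \cite{chen21optimal,chen22optimal}. Concretely, we run the edge-tilting dynamics with a time parameter $\theta$, chosen so that the intermediate measures are two-spin systems on $G$ with activities $(\beta_\theta,\gamma_\theta,\lambda)$ plus pinned or tilted edges. We choose $\theta$ so that $|f'(\hat x_{\Delta-1})|\le 1-\delta$ with $\delta=\Theta(1/\log n)$ or a small power of $1/n$. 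We then decompose
\[
\Tmix(\text{Glauber on }\mu) \lesssim \Tmix(\text{edge-tilting dynamics}) \cdot \max \Tmix(\text{Glauber on tilted measures}),
\]
via the standard comparison for localization schemes with block factorization of entropy. The factor $\log(\lambda+\lambda^{-1})+\Delta\log\alpha$ is the usual $\log(1/\mu_{\min})$-type term converting a spectral-gap or entropy-decay bound into mixing time. Here $\alpha$ bounds the per-edge weight ratio, with $\beta=0$ handled separately.

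\textbf{Step 1: approximate conservation of variance or entropy.} Along the tilting path we need approximate conservation of variance (or entropy). By the Chen--Eldan calculus this reduces to bounding, at each time $t$, the covariance or spectral-independence quantity of the current measure against the tilt direction. In pairwise form this is $\lambda_{\max}$ of the correlation matrix $\pcor{\mu_t}$ restricted to edge observables. We bound it using the tree recursion and the contraction potential for two-spin systems. At criticality the influence matrix has spectral radius bounded by $O(1/\delta(t))$, where $\delta(t)$ is the slack at time $t$. The slack grows linearly in the tilt, so $\delta(t)\approx c\,t$.

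\textbf{Step 2: integrate to get a polynomial loss.} Integrating $\exp\bigl(\int \kappa/\delta(t)\,dt\bigr)$ gives a loss of $n^{\kappa}$, where $\kappa$ is the sharp constant in the SI bound near criticality. This is how exponents such as $2\sqrt2$ arise. The $O(1/\Delta)$ in the exponent absorbs lower-order degree-dependent corrections to the potential-function contraction.

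\textbf{Step 3: mixing of the tilted measures.} For each tilted measure, which has slack $\delta$, we invoke the known result that Glauber mixes in $n^{O(1)}$ with an exponent that stays bounded once $\delta\ge n^{-c}$. Alternatively we use the optimal $O(n\log n)$ bound with constant $\exp(O(1/\delta))$, which is why we stop at polynomially small slack.

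\textbf{Step 4: combine.} Combining the pieces, the factor $n^{4}$ comes from the comparison step (spectral gap to mixing, and the number of tilt steps) together with the end-stage chain.

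The main obstacle is Step 1. We need a sharp, field-preserving bound showing that the spectral independence of edge-tilted measures degrades like $C/\delta$ with the exact constant, uniformly over the pinnings and partially tilted edges created by the dynamics, and for $\gamma>1$ only on $\Delta$-regular graphs. My first attempt would be to adapt the potential-function contraction argument of \cite{chen22optimal} to the tilted recursion $f_{d,\theta}$. This requires checking that tilting edges preserves monotonicity of the uniqueness condition in degree (when $\gamma\le 1$) and that the fixed point $\hat x$ moves smoothly in $\theta$. The constant $\kappa$ is then read off from the second-order expansion of $|f'_\theta(\hat x_\theta)|$ at $\theta=0$.
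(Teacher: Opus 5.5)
Your outline (edge-field localization, spectral stability from a sharp SI/CI bound, then integrating $C(t)/(1-t)$) is the paper's argument for the regime $\sqrt{\beta\gamma}\ge 1-O(1/\Delta)$. Run on its own, however, it does not prove the theorem. The missing idea is a split into two regimes.

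\textbf{Why edge tilting alone fails.} Spectral stability of the edge-field process is governed by the edge-to-edge matrix $\mathrm{Cor}^{(2)}$. \Cref{lem:cond-stable-pFD} bounds it by $2\Delta$ times the vertex coupling-independence constant. A factor of order $\Delta$ is intrinsic here, because the $\Delta$ events $u\overline{z}$ at a vertex all move with $\sigma_u$; your $O(1/\delta(t))$ hides it. Meanwhile, the slack guaranteed after tilting interactions by $\frac{1}{1-t}$ is only about $\frac{\sqrt{\beta\gamma}}{1-\sqrt{\beta\gamma}}\,t$ (\Cref{lem:interaction-to-unique-slack}). Integrating gives $\log R\approx 2\Delta H\log n$ with $H=\frac{\Delta(1-\sqrt{\beta\gamma})}{(\Delta-1)\sqrt{\beta\gamma}}$. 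This is a bounded power of $n$ only when $\sqrt{\beta\gamma}\ge 1-O(1/\Delta)$. For smaller $\sqrt{\beta\gamma}$ the exponent grows like $\Delta$, and for $\beta=0$ edge tilting never reaches a product measure, since $\beta'$ stays $0$.

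\textbf{How the paper handles it.} It splits at $\bar\beta=1-\frac{1+\sqrt2}{\Delta}$. Above the threshold, edge tilting gives exponent $2k+2+\varepsilon$ with $k=\Delta(1-\bar\beta)$ (\Cref{lem:mixing-2spin-2}). Below it, ordinary vertex-field (negative-field) localization with the sharp coupling-independence bound of \Cref{lem:optimal-CI} gives exponent $2\kappa+2$, where $\kappa=\sqrt{(1-\bar\beta^2)/(\beta_c^2-\bar\beta^2)}$ (\Cref{lem:mixing-2spin-1}). The value $4+2\sqrt2$ is where these two exponents meet ($k\approx\kappa\approx 1+\sqrt2$). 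It is not a constant read off from an expansion of $|f'_\theta(\hat x_\theta)|$.

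\textbf{Your end stage is also broken.} Stopping at slack $\Theta(1/\log n)$ or $n^{-c}$ leaves an essentially critical measure. By your own description the subcritical constant is $\exp(O(1/\delta))$. At $\delta=n^{-c}$ this is useless, and at $\delta=\Theta(1/\log n)$ it adds an uncontrolled $n^{O(1)}$, so $4+2\sqrt2+O(1/\Delta)$ cannot come out. The ``known result'' giving a bounded exponent once $\delta\ge n^{-c}$ does not exist; it would itself be a near-critical mixing theorem.

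The paper instead keeps tilting to $\theta=1-\sqrt{\beta\gamma}$. There $\beta'\gamma'=1$, so $\mathrm{Law}(Y_1\mid Y_\theta)=\frac{1}{1-\theta}\otimes\mu^{\tau}$ is a product measure with $1$-approximate tensorization. The whole integral is still only $\log R\le 2\Delta H(\log n+1)+\Delta\log\frac{1}{\beta\gamma}$ (\Cref{lem:AC-var-edge-FD}). In the vertex regime, it tilts the field until the slack is the constant $1-1/\mathrm{e}$ (\Cref{lem:field-to-unique-slack-good}), then applies \cite{chen2021rapid}.

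\textbf{Assembly of the bound.} It goes through the boosting lemma ($KR$-approximate tensorization of variance) and $\Tmix\le nKR\log\frac{1}{\mu_{\min}}$, not a product of mixing times. The extra $n^2$ beyond $R$ is $n\cdot\log\frac{1}{\mu_{\min}}$, since $\log\frac{1}{\mu_{\min}}=O\bigl(n(\log(\lambda+\lambda^{-1})+\Delta\log\alpha)\bigr)$. It is not a count of tilt steps.
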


Complementing the near-linear mixing time bounds for the Glauber dynamics in the subcritical regime corresponding to \Cref{cond:critical}, established in \cite{chen2021optimal} for bounded-degree graphs and in \cite{chen22optimal} for general graphs, 
as well as the hardness of sampling in the non-uniqueness regime shown in~\cite{sly2012computational, galanis2016inapproximability}, \Cref{thm:critical-mixing} completes the computational phase transition for Gibbs sampling in antiferromagnetic two-spin systems:
\[
\genfrac{}{}{0pt}{}{\text{near-linear time}}{\text{(subcritical)}}
      \,\,\longrightarrow \,\,
\genfrac{}{}{0pt}{}{\text{polynomial time}}{\text{(critical)}}
      \,\,\longrightarrow \,\,
\genfrac{}{}{0pt}{}{\text{NP-hardness}}{\text{(supercritical)}}.
\]

A key step in proving \Cref{thm:critical-mixing} is the construction of an \emph{edge-tilting} variant of the field dynamics. 
This yields a new family of localization schemes that achieves \emph{controlled interaction} localization.
Compared to the original field dynamics introduced in \cite{chen2021rapid}, which reduces critical instances to subcritical ones with smaller external fields by tilting vertex fields, the new scheme instead tilts edge strengths toward the subcritical (high-temperature) regime. 
Previously, such edge tilting was achieved via stochastic localization, which induces uncontrolled drifts in the external fields; in contrast, the new scheme performs edge tilting in a controlled manner, without introducing such drift.

This ability to tilt interaction strengths in a controlled (rather than stochastic) manner fills a gap in the current localization scheme framework and is crucial for resolving the critical case for general antiferromagnetic two-spin systems. 
We believe this direction of localization is also of independent interest.




\subsection{Edge-tilting field dynamics}\label{sec:intro-edge-field-dynamics}

We introduce a Markov chain called the \emph{edge-tilting field dynamics}, or simply the \emph{edge-field dynamics}.

Given an arbitrary distribution $\mu$ over $\set{0, 1}^V$ and a graph $G=(V,E)$, 
we write $\theta \otimes \mu$ for the distribution obtained by tilting the interactions by a factor of $\theta>0$. 
Specifically,
\begin{align}
\label{eq:def-otimes}
\theta\otimes\mu(\sigma) \propto \mu(\sigma)\cdot\theta^{m(\sigma)}, \qquad \forall\sigma\in\set{0, 1}^V,
\end{align}
where $m(\sigma) := \abs{\set{\set{u,v} \in E \mid \sigma_u = \sigma_v}}$ denotes the number of monochromatic edges in $\sigma$.

Given a distribution $\mu$ over $\set{0,1}^V$, a graph $G=(V,E)$, and a parameter $\theta\in(0,1)$, the \emph{edge-field dynamics  for $\mu$ with tilt parameter $\theta$} is a Markov chain on the state space $\Omega(\mu)$. 
From a current configuration $\sigma\in\Omega(\mu)$, denote by $N(\sigma) := \set{\set{u,v} \in E \mid \sigma_u \neq \sigma_v}$ the set of non-monochromatic edges in $\sigma$. The chain performs the following update:
\begin{enumerate}[leftmargin=2cm]
\item[(\textsc{Down})] 
Independently remove each edge in $N(\sigma)$  with probability $\theta$. 
Let $T\subseteq V$ be the set of  vertices incident to the remaining edges.
\item[(\textsc{Up})] Resample a new configuration from the conditional distribution $\tp{\frac{1}{\theta}\otimes\mu}^{\sigma_{T}}$.
\end{enumerate}
The process is an edge-tilting variant of the \emph{vertex-field dynamics} introduced in \cite{chen2021rapid}, where it was simply called the \emph{field dynamics}.
In the vertex-field dynamics with tilt parameter $\theta\in(0,1)$, at each step the subset $T\subseteq V$ is constructed by independently removing each vertex $v$ from the set $\set{v \in V \mid \sigma_v = 1}$ with probability $\theta$.
The new configuration is then sampled from the conditional distribution $\theta * \mu^{\sigma_{T}}$, where
\begin{align}\label{eq:def-field-tilt}
\theta*\mu(\sigma)\propto \mu(\sigma)\cdot\theta^{\|\sigma\|}, \qquad\forall \sigma\in\set{0,1}^V,
\end{align}
i.e., the distribution obtained from $\mu$ by tilting the external fields by a factor of $\theta$.

\begin{remark}[A generalized event-based framework for field dynamics] \label{rem:event-FD}
More generally, 
consider a family of events $\mathcal{A}$, where each $A \in \mathcal{A}$ is a subset $A \subseteq \set{0,1}^V$, 
and let $\boldsymbol{\theta}=(\theta_A)_{A\in\mathcal{A}}\in[0,1]^{\mathcal{A}}$ be the associated tilts.

The \emph{event-field dynamics} for a distribution $\mu$ over $\set{0,1}^V$, with respect to the event family $\mathcal{A}$ and tilts $\boldsymbol{\theta}$, is a Markov chain on $\Omega(\mu)$.
From a current configuration $\sigma \in \Omega(\mu)$, 
let $\mathcal{A}(\sigma) := \{A \in \mathcal{A} \mid \sigma \in A\}$ be the set of events that occur under $\sigma$. 
The chain updates as:
\begin{enumerate}[leftmargin=1.8cm]
\item[(\textsc{Down})] 
Generate a random subset $T\subseteq \mathcal{A}(\sigma)$  by independently removing each event $A \in \mathcal{A}(\sigma)$ with probability $\theta_A$.

\item[(\textsc{Up})] Resample a new configuration   from the conditional distribution $\nu^T$, where
\[
 \nu(\sigma) \propto \mu(\sigma)\cdot
 \prod_{A\in\mathcal{A}} \theta_A^{\mathbf{1}[\sigma\in A]}, \qquad \forall\sigma\in\set{0, 1}^V,
\]
and $\nu^T$ denotes $\nu$ conditioned on the occurrence of all events in $T$. 
\end{enumerate}
This generalized framework subsumes several Markov chains as special cases, including the vertex-field dynamics, the edge-field dynamics, and the classical Swendsen--Wang dynamics:
\begin{itemize}
    \item \emph{Vertex-field dynamics:} each $A\in\mathcal{A}$ corresponds to a vertex $v\in V$, and occurs if $\sigma_v=1$.
    \item \emph{Edge-field dynamics:} each $A\in\mathcal{A}$ corresponds to an oriented edge $(u,v)$ with $\set{u,v} \in E$, and occurs if $\sigma_u = 1$ and $\sigma_v = 0$.
    \item \emph{Swendsen--Wang dynamics:} 
    each $A \in \mathcal{A}$ corresponds to an edge $\{u,v\} \in E$, 
    occurs if $\sigma_u = \sigma_v$, 
    and is associated with tilt $\theta_A = \beta^{-1}$. 
On the ferromagnetic Ising model (i.e., two-spin systems with $\beta = \gamma > 1$), this recovers the classical Swendsen--Wang chain.
\end{itemize}

As we shall see in \Cref{sec:edge-field-dynamics}, this general class of field dynamics falls into the framework of localization schemes of \cite{chen2022localization}, and is guaranteed to have $\mu$ as its stationary distribution.
\end{remark}

The following theorem states a mixing time bound for the edge-field dynamics for critical antiferromagnetic two-spin systems with soft constraints ($\beta>0$).
\begin{theorem}[Mixing of edge-field dynamics at criticality]\label{thm:mixing-edge-field-dynamics}
Let $\mu$ be the Gibbs distribution of an antiferromagnetic two-spin system with parameters $(\beta,\gamma,\lambda)$ on an $n$-vertex graph of maximum degree $\Delta \ge 3$ satisfying \Cref{cond:critical}, 
and suppose that $\beta>0$. 
Then the edge-field dynamics for $\mu$ with tilt parameter $\theta=\sqrt{\beta\gamma}$ has mixing time
\begin{align*}
\Tmix = O\!\tp{\tp{\log\tp{\lambda+\lambda^{-1}}+\Delta\log(2+\beta^{-1})} (\beta\gamma)^{-\Delta} \-{e}^c n^{c+1}},
\end{align*}
where $c=\frac{2\Delta^2(1-\sqrt{\beta\gamma})}{(\Delta-1)\sqrt{\beta\gamma}}$.
\end{theorem}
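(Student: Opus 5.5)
We treat the edge-field dynamics as a localization scheme in the sense of \cite{chen2022localization} and bound its entropy contraction directly. Starting from $\mu$, the Down/Up chain is realized by a continuous edge-tilting process: the tilt $\theta^{m(\cdot)}$ is applied gradually along $s\in[\theta,1]$, and at each infinitesimal step the (oriented) non-monochromatic edges are ``pinned'' in the event-based sense of \Cref{rem:event-FD}. The chain contracts entropy by a factor $1-\kappa$ with
\[
\kappa\;\ge\;\prod\nolimits_{\text{path}}\tp{1-\text{(instantaneous entropic stability)}}^{-1}\!\!\!\!\!\!,
\]
or, more concretely,
\[
\Ent{\mu P f}\le \tp{1-\kappa}\Ent[\mu]{f}, \qquad \kappa \gtrsim \exp\Bigl(-\int_\theta^1 \frac{C(s)}{s}\,ds\Bigr),
\]
where $C(s)$ bounds the entropic (or spectral) stability of the tilted measure $s^{-1}\otimes\mu^{\tau}$ with respect to the edge-field direction, uniformly over pinnings $\tau$ that arise.

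The first step is to identify $s^{-1}\otimes\mu$ as again a two-spin system. Tilting monochromatic edges by $s^{-1}$ changes $(\beta,\gamma,\lambda)$ to $(\beta/s,\gamma/s,\lambda)$, so $\beta\gamma$ becomes $\beta\gamma/s^2$, with the field $\lambda$ unchanged (the drift-free property emphasized in the introduction). At $s=\theta=\sqrt{\beta\gamma}$ the chain resamples from the endpoint $\beta'\gamma'=1$, a trivial product-type system. For $s\in(\theta,1)$ the system is antiferromagnetic with $\beta\gamma/s^2<1$, which is strictly closer to $1$ than at the critical point. Hence, since $|f'(\hat x)|$ scales with $(1-\beta\gamma)$, it should be subcritical with slack roughly $1-\frac{1-\beta\gamma/s^2}{1-\beta\gamma}$. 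Pinnings (conditioning on $\sigma_T$) preserve the two-spin form on the residual graph; on $\Delta$-regular graphs with pinned boundaries we only see degrees $\le\Delta-1$ in the tree recursion, so \Cref{cond:critical} suffices in both the $\gamma\le1$ and $\gamma>1$ cases.

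The second step bounds spectral independence / the influence matrix for these subcritical instances via the standard contraction of the tree recursion (potential-function argument as in \cite{chen2020rapid,chen2021optimal}). This gives $\lambda_{\max}(\mathrm{Inf})\le \frac{\Delta}{\delta(s)}$-type bounds, which we convert into edge-direction covariance bounds: the stability $C(s)$ for the edge-field scheme involves the covariance of the edge indicators $\mathbf 1[\sigma_u\neq\sigma_v]$, controlled by $\Delta^2/(\Delta-1)$ times the vertex influences. Integrating $C(s)/s$ from $\theta$ to $1$ should produce the exponent $c=\frac{2\Delta^2(1-\sqrt{\beta\gamma})}{(\Delta-1)\sqrt{\beta\gamma}}$, giving $\kappa\gtrsim e^{-c}n^{-c}$. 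Here the $n^{-c}$ comes from the near-critical region $s\approx1$, where the slack vanishes like $1-s$ and the cruder $\log n$ cutoff replaces divergence. The factor $(\beta\gamma)^{-\Delta}$ accounts for the marginal lower bounds / the constant loss in passing from spectral to entropic stability in the first $O(1)$ step. Finally we use
\[
\Tmix \le \kappa^{-1}\,\log\log(1/\mu_{\min}),\qquad \log(1/\mu_{\min})=O\bigl(n(\log(\lambda+\lambda^{-1})+\Delta\log(2+\beta^{-1}))\bigr).
\]

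The main obstacle is the stability integral near $s=1$, where the slack degenerates. Here we would adapt the approach of \cite{chen2025uniqueness}: at criticality the tree-recursion contraction gives influence sums bounded by $O(\min\{n,\ (1-\text{contraction})^{-1}\})$. We therefore split the integral at $s=1-1/n$, using the trivial $O(n)$ bound on the last piece and the slack bound on the rest. This yields exactly the polynomial $n^{c}$ loss, with the additional factor $n$ coming from the conversion to mixing time.
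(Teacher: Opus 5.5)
Your overall plan is the paper's. The posterior of the edge-field denoising process is $\frac{1}{1-t}\otimes\mu^{\tau}$: a pinned two-spin system with $(\beta,\gamma)$ tilted to $(\beta/s,\gamma/s)$, where $s=1-t$, the field is unchanged, and the measure is product at $s=\sqrt{\beta\gamma}$. One then bounds the stability rate by a multiple of $1/\text{slack}$, integrates $C(t)/(1-t)$, and cuts off at $t\approx 1/n$. The gap is that the statement is a specific bound, and its exponent $c=2\Delta H$ with $H=\frac{\Delta(1-\sqrt{\beta\gamma})}{(\Delta-1)\sqrt{\beta\gamma}}$ comes from sharp ingredients that you replace by heuristics which do not produce it.

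\emph{The slack.} The argument that $|f'(\hat x)|$ ``scales with $1-\beta\gamma$'' is not valid, since both $\hat x$ and $(\beta\hat x+1)(\hat x+\gamma)$ move under the tilt. Your resulting guess, $\approx\frac{2\beta\gamma}{1-\beta\gamma}\,t$ near $t=0$, is smaller than the bound the paper proves, $\frac{\sqrt{\beta\gamma}}{1-\sqrt{\beta\gamma}}\,t$, by a factor $\frac{2\sqrt{\beta\gamma}}{1+\sqrt{\beta\gamma}}$. Used as is, it inflates the exponent by $\frac{1+\sqrt{\beta\gamma}}{2\sqrt{\beta\gamma}}$. The needed input is \Cref{lem:interaction-to-unique-slack}. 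It is proved by symmetrizing to $\hat\beta=\vartheta\sqrt{\beta\gamma}$ and substituting $\hat x=\tanh\eta$, $\hat\beta=\tanh\zeta$, so that $1-\delta=\frac{(\Delta-1)\sinh 2\eta}{\sinh(2\eta+2\zeta)}$. One then integrates $\frac{\mathrm{d}\delta}{\mathrm{d}\hat\beta}\ge\frac{2(1-\delta)}{1-\hat\beta^2}$.

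\emph{The correlation bound.} A generic potential-function estimate of ``$\Delta/\delta$ type'' does not fix the constant, and here that constant sets the exponent. The paper uses the sharp coupling independence $1+\frac{\Delta(1-\delta)}{(\Delta-1)\delta}$ of \Cref{lem:optimal-CI}, built from a new control function and SAW-tree recursive coupling. Moreover, the matrix to control is the second-order correlation matrix of the \emph{oriented} events $\{\sigma_u=1,\sigma_v=0\}$. It is not the matrix of $\mathbf{1}[\sigma_u\neq\sigma_v]$, whose conditioning is not a vertex pinning. Its entries are two-point probabilities under pinnings. \Cref{lem:cond-stable-pFD} bounds its row sums by $2\Delta C$ as follows:
\begin{itemize}
\item route $\pi(w\ol{z}\mid u\ol{v})-\pi(w\ol{z})$ through $\pi(w\ol{z}\mid u)$;
\item couple each of the two pieces;
\item use $\sum_{wz}\abs{\mathbf{1}[x\in w\ol{z}]-\mathbf{1}[y\in w\ol{z}]}\le\Delta\,\dist(x,y)$.
\end{itemize}
So the input must be a coupling bound, not a vertex-influence bound, and this is what gives $C(t)=2\Delta\min\{n,1+H/t\}$.

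\emph{The bookkeeping.} Nothing in your plan yields \emph{entropic} stability of this process at criticality, and the paper proves only spectral stability. \Cref{lem:AC-var-edge-FD} gives $R$-approximate conservation of variance with $R=(\beta\gamma)^{-\Delta}(\mathrm{e}n)^{c}$. That yields variance decay at rate $1/R$ for $P_{1\leftrightarrow 1-\sqrt{\beta\gamma}}$, and hence $\Tmix=O(R\log(1/\mu_{\min}))$. The factor $n\,(\log(\lambda+\lambda^{-1})+\Delta\log(2+\beta^{-1}))$ in the statement is exactly $\log(1/\mu_{\min})$. With your $\log\log(1/\mu_{\min})$ no extra $n$ would appear, so your last line contradicts itself.

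The factor $(\beta\gamma)^{-\Delta}$ is not a loss from marginal bounds or from passing to entropy. It is the contribution of the additive $1$ in the coupling-independence bound: $\exp\bigl(2\Delta\int_{\sqrt{\beta\gamma}}^{1}\mathrm{d}s/s\bigr)=(\beta\gamma)^{-\Delta}$.

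Finally, for $\gamma>1$ the reason pinnings are harmless is not that the recursion ``only sees degrees $\le\Delta-1$''; uniqueness is not monotone in the degree there. Rather, on a $\Delta$-regular graph, pinned children enter the recursion as $x_i\in\{0,+\infty\}$, so the arity stays exactly $\Delta-1$.
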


\begin{remark}[Mixing of vertex-field dynamics at criticality]\label{remark:vertex-field-dynamics}
Note that the exponent $c$ in \Cref{thm:mixing-edge-field-dynamics} is small when $\beta\gamma$ is close to $1$.
In particular, when $\sqrt{\beta\gamma}\ge 1-\Theta(\frac{1}{\Delta})$, we have $c=O(1)$, and the mixing time of the edge-field dynamics is polynomial in $n$.

On the other hand, when $\sqrt{\beta\gamma}\le 1-\Theta(\frac{1}{\Delta})$, 
the spin system behaves more like the hardcore model, where the analysis for the critical hardcore model in \cite{chen2025uniqueness} applies,
showing that in this regime the \emph{vertex}-field dynamics mixes in polynomial time at criticality.
\end{remark}

The rapidly mixing edge- and vertex-field dynamics in \Cref{thm:mixing-edge-field-dynamics} and \Cref{remark:vertex-field-dynamics} provide polynomial-time samplers for critical antiferromagnetic two-spin systems satisfying \Cref{cond:critical}.
To implement these Markov chains, each transition of the chain involves sampling from a subcritical Gibbs distribution with tilted parameters, which can be simulated via Glauber dynamics that is optimally mixing in the subcritical regime due to \cite{chen22optimal}.

To establish the rapid mixing of the Glauber dynamics in \Cref{thm:critical-mixing}, we interpret the edge-tilting field dynamics within the localization framework of \cite{chen2022localization}; this is developed in \Cref{sec:edge-field-dynamics}.
\Cref{thm:critical-mixing} then follows from the stability and conservation of mixing properties under this framework, as shown in \Cref{sec:rapid-mixing-at-criticality}.

\subsection{Tight SI bound for antiferromagnetic two-spin systems}

As part of our analysis of mixing at criticality for antiferromagnetic two-spin systems, we establish a tight bound on \emph{spectral independence} ({SI}), a notion introduced by Anari, Liu, and Oveis Gharan \cite{anari2020spectral} 
that has led to recent breakthroughs in the study of Markov chain mixing times and also plays a key role in our analysis at criticality.

For a distribution $\mu$ over $\set{0,1}^{V}$, let $X\sim\mu$. 
Define the \emph{influence matrix} $\Psi_{\mu}\in\mathbb{R}^{V\times V}$ by
\begin{align*}
\Psi_{\mu}(u,v):=\begin{cases}
\Pr{X_v=1\mid X_u=1}-\Pr{X_v=1\mid X_u=0} &\quad \text{if }\E{X_u}\in(0,1)\text{ and }u\ne v,\\
0 &\quad \text{otherwise}.
\end{cases}
\end{align*}
A distribution $\mu$ is said to be $\rho$-\emph{spectrally independent} if, for any {feasible pinning} $\tau$ (i.e., $\tau \in \{0,1\}^{\Lambda}$ for some $\Lambda \subseteq V$ such that $\mu_\Lambda(\tau)>0$), 
it holds that $\lambda_{\max}(\Psi_{\mu^\tau})\le \rho$. 

The following theorem establishes a SI bound for antiferromagnetic two-spin systems.

\begin{theorem}\label{thm:optimal-SI}
Let $\delta\in(0,1)$. 
Let $\mu$ be the Gibbs distribution of an antiferromagnetic two-spin system with parameters $(\beta,\gamma,\lambda)$ on a graph $G$ of maximum degree $\Delta \ge 3$, satisfying:
\begin{itemize}
\item if $\gamma\le 1$, then $(\beta,\gamma,\lambda)$ is $(\Delta-1)$-unique with slack $\delta$;
\item if $\gamma>1$, then $(\beta,\gamma,\lambda)$ is $(\Delta-1)$-unique with slack $\delta$, and $G$ is $\Delta$-regular.
\end{itemize}
Then, for any feasible pinning $\tau$ for $\mu$, it holds that
\begin{align*}
\lambda_{\max}\bigl(\Psi_{\mu^{\tau}}\bigr) \le \frac{\Delta(1-\delta)}{(\Delta-1)\delta}.
\end{align*}
\end{theorem}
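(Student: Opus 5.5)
We follow the standard route for spectral independence of two-spin systems: a self-avoiding-walk (SAW) tree reduction, followed by a potential-function contraction analysis on the tree. The new ingredient is a sharper accounting that gives the constant $\frac{\Delta(1-\delta)}{(\Delta-1)\delta}$ rather than a loose $O(1/\delta)$.

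\textbf{Reduction to the SAW tree.} Fix a feasible pinning $\tau$ and write $\nu=\mu^\tau$. Since $\Psi_\nu$ is similar to a symmetric matrix, it has real spectrum. We bound $\lambda_{\max}(\Psi_\nu)$ by a weighted row-sum norm $\max_u \sum_v \frac{w_v}{w_u}|\Psi_\nu(u,v)|$ for a suitable positive weight vector $w$. Following Chen--Liu--Vigoda, for each root $u$ we pass to the Weitz self-avoiding-walk tree $T_{\mathrm{SAW}}(G,u)$ with the induced pinning. On this tree:
\begin{itemize}
\item $\Psi_\nu(u,v)$ equals the sum, over copies $v'$ of $v$, of the tree influences $\Psi(u,v')$;
\item each tree influence factors along the root-to-$v'$ path as a product of edge derivatives $h'(R_w)$, where $R_w$ is the ratio $\Pr{X_w=1}/\Pr{X_w=0}$ in the subtree and $h(x)=\frac{\beta x+1}{x+\gamma}$.
\end{itemize}
So it suffices to bound, for every level $k$, the total absolute influence of the root on vertices at depth $k$. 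We then sum the geometric series over $k$.

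\textbf{Potential function and contraction.} Choose the standard potential $\Phi$ with $\Phi'(x)=\frac{1}{\sqrt{x(\beta x+1)(x+\gamma)}}$, or a variant calibrated at the fixed point $\hat x_{\Delta-1}$. The key estimate is the following: for a vertex with children ratios $x_1,\dots,x_d$, $d\le\Delta-1$, the recursion $F(x)=\lambda\prod_i h(x_i)$ satisfies
$$\sum_i \frac{\Phi'(F(x))}{\Phi'(x_i)}\Bigl|\frac{\partial F}{\partial x_i}\Bigr| \le 1-\delta.$$
For the symmetric potential this reduces, by AM--GM/Jensen over the children, to checking the single-variable condition $d\,|f_d'(\hat x_d)|\le$ (uniqueness bound). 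Uniqueness with slack $\delta$ at $d=\Delta-1$ gives exactly the $1-\delta$. For $\gamma\le1$, monotonicity in $d$ covers smaller degrees. For $\gamma>1$, $\Delta$-regularity means every non-root vertex of the SAW tree has exactly $\Delta-1$ free children, apart from pinned ones, whose ratios are constant and contribute zero derivative. Pinned children, however, shift the effective field, so we must check that the contraction bound holds uniformly in $\lambda$ at fixed $d$. The standard argument handles this by taking a maximum over $x$, which is why it is plausible.

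\textbf{Summing the levels.} The root has up to $\Delta$ children while every other vertex has at most $\Delta-1$. The total weighted influence at depth $k\ge1$ is therefore at most $\frac{\Delta}{\Delta-1}(1-\delta)^{k}$, up to the boundary conversion factors $\Phi'$ at the two endpoints. Summing over $k\ge1$ gives
$$\frac{\Delta}{\Delta-1}\sum_{k\ge1}(1-\delta)^k=\frac{\Delta(1-\delta)}{(\Delta-1)\delta}.$$

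\textbf{Main obstacle.} To obtain the sharp constant, with no extra multiplicative factor from the endpoint conversion $\Phi'(R_u)/\Phi'(R_v)$, we symmetrize. The potential-weighted matrix $D\Psi_\nu D^{-1}$, with $D=\diag(1/\Phi'(R_v))$ or equivalently with weights built from $\sqrt{\mu_v(0)\mu_v(1)}$, has the same spectrum as $\Psi_\nu$. Its row sums are exactly the weighted tree sums above, so the endpoint factors cancel. The delicate points are:
\begin{itemize}
\item verifying that the chosen $\Phi$ both makes this conjugation match the tree weights and yields the $(1-\delta)$ contraction with equality at criticality;
\item the first step from the root, which has $\Delta$ rather than $\Delta-1$ children and must contribute exactly the factor $\frac{\Delta}{\Delta-1}(1-\delta)$.
\end{itemize}
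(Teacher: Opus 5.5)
Your outline cannot produce the constant $\frac{\Delta(1-\delta)}{(\Delta-1)\delta}$, because both steps you flag as delicate actually fail.

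\textbf{The contraction rate is not $1-\delta$.} Write $D=\Delta-1$ and $\psi(x)=\frac{(1-\beta\gamma)x}{(\beta x+1)(x+\gamma)}$. Take $\delta$ to be the exact slack, so $D\psi(\hat x)=1-\delta$. With $\Phi'(x)=1/\sqrt{x(\beta x+1)(x+\gamma)}$, the amortized factor at a vertex is $\sqrt{\psi(F)}\sum_i\sqrt{\psi(x_i)}$. For equal children this is $D\sqrt{\psi(x)\psi(f_D(x))}$. It equals $1-\delta$ at $x=\hat x$, as it must for every potential. However, the derivative of $\log\bigl[\psi(x)\psi(f_D(x))\bigr]$ with respect to $\log x$ at $\hat x$ is $\delta\,G(\hat x)$, where $G(x)=\frac{\gamma-\beta x^2}{(\beta x+1)(x+\gamma)}$. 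This is positive whenever $\delta>0$ and $\hat x<\sqrt{\gamma/\beta}$. So the contraction rate is strictly worse than $1-\delta$. For example, with $\beta=\gamma=0.6$, $\lambda=0.1$, $D=2$, one has $1-\delta\approx 0.283$, while at $x=0.25$ the factor is $\approx 0.297$. You never specify a ``calibrated variant'' that does better.

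\textbf{The symmetrization does not cancel the endpoint factors.} On the tree, $|\Psi(u,v')|=\prod_{j=1}^{k}\psi(R_{w_j})$, where $R_{w_j}$ are the \emph{subtree} ratios along the path. The leftover factors $\sqrt{\psi(R_{w_1})}$ and $\sqrt{\psi(R_{v'})}$ therefore depend on the root $u$ and on which SAW copy $v'$ of $v$ you are at. A similarity $W\Psi W^{-1}$, by contrast, uses one root-independent weight per vertex. Conjugating by $\sqrt{\mu_v(0)\mu_v(1)}$ is not equivalent either: it is a different function of a different (whole-graph) ratio. It yields the row sums of the correlation-coefficient matrix, not your potential-weighted tree sums.

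\textbf{The level-wise bound is false.} The claim that the depth-$k$ influence is at most $\frac{\Delta}{\Delta-1}(1-\delta)^k$ already fails at $k=1$ for the actual influences. The depth-one total is $\sum_i\psi(R_i)$, and $R_i$ can exceed $\hat x$, where $\psi(R_i)>\psi(\hat x)$. In the example above, pinning all grandchildren to $0$ gives $R_i=\lambda\gamma^{-D}\approx 0.278$, and then $3\psi(R_i)\approx 0.52>0.42\approx\Delta\psi(\hat x)$. The sharp bound holds only after amortizing across levels: a child with $R_i>\hat x$ has a large one-step influence but a correspondingly small subtree.

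\textbf{The missing idea: a control function for total influence.} Bound the total influence directly, $\mathrm{TI}_v\le\Xi(R_v)$, by structural induction, with $\Xi$ encoding this trade-off.
\begin{itemize}
\item First flip spins so that $\lambda\le(\gamma/\beta)^{\Delta/2}$, hence $\hat x\le\sqrt{\gamma/\beta}$.
\item Take $\Xi=1/\delta$ on $[0,\hat x]$, $\Xi(x)=1+\frac{D}{\delta}\psi(f_D^{-1}(x))$ on $(\hat x,\lambda\gamma^{-D}]$, and $\Xi=0$ beyond.
\item Show $\psi(x)\Xi(x)\le\frac{1-\delta}{\delta D}$ for all $x$, using that $G$ is decreasing.
\item Show $\Xi\bigl(\lambda\prod_i h(x_i)\bigr)\ge 1+\sum_i\psi(x_i)\Xi(x_i)$ for all $x_i\in[0,\infty]$. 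This follows from concavity of $\psi\circ(\log h)^{-1}$ and Jensen, padding with dummy children having $h=1$ when $d<D$ and $\max\{\beta,\gamma\}\le 1$.
\end{itemize}
Pinned children enter as $x_i\in\{0,\infty\}$ in the same recursion with the same $\lambda$. So there is no uniformity in $\lambda$ to check, and such uniformity is false in general. The SAW-tree comparison then gives $\lambda_{\max}(\Psi_{\mu^\tau})\le\|\Psi_{\mu^\tau}\|_\infty\le\Delta\max_x\psi(x)\Xi(x)$, with no potential and no weights.
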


This SI bound is essentially tight, in the sense that the constant factor cannot be further improved. 
To see this, the value $\frac{\Delta(1-\delta)}{(\Delta-1)\delta}$ is attained (in the limit) by $\|\Psi_{\pi}\|_{\infty}$, 
where $\pi$ is a Gibbs distribution with the same slack $\delta$ defined on the infinite $\Delta$-regular tree $\mathbb{T}_{\Delta}$. 
Although $\mathbb{T}_{\Delta}$ is infinite, 
one can construct a sequence of distributions $\{\mu_n\}$ on finite graphs that locally converge to $\pi$, 
and verify that $\lambda_{\max}(\Psi_{\mu_n})$ approaches $\frac{\Delta(1-\delta)}{(\Delta-1)\delta}$ in the limit. 
See \Cref{sec:new-CI} for the proof of \Cref{thm:optimal-SI} and a matching lower bound stated in \Cref{thm:SI-lowerbound}.

We emphasize that the tightness of the SI bound, even at the level of the constant factor, is critical in the analysis of mixing rates at criticality, as it directly affects the exponent in the resulting polynomial mixing time.

\subsection{Optimal mixing of the Swendsen--Wang dynamics}
For the ferromagnetic Ising model (i.e., two-spin systems with edge activities $\beta = \gamma > 1$), 
the \emph{Swendsen--Wang dynamics}, introduced in~\cite{Swendsen1987}, defines a global Markov chain for sampling from the Gibbs distribution.
Despite its strong empirical performance, a tight analysis of its mixing time remains largely open.
In particular, it is conjectured in~\cite{feng2023swendsen-wang} that the Swendsen--Wang dynamics mixes in $O(\log n)$ steps for the ferromagnetic Ising model when the external field satisfies $\lambda < 1$ (and, symmetrically, $\lambda > 1$).

We resolve this conjecture for bounded-degree graphs.
As observed in \Cref{rem:event-FD}, the Swendsen--Wang dynamics can be interpreted as a variant of the edge-field dynamics. 
This allows us to leverage the analysis developed for edge-field dynamics to study its mixing behavior. 
In particular, we prove the following bounds on its spectral gap and mixing time. 
\begin{theorem}\label{thm:SW-main}
    Fix sufficiently small $\delta \in (0, 1)$. 
    Let $G = (V, E)$ be a graph on $n$ vertices with maximum degree~$\Delta$.
    For the Swendsen--Wang dynamics for the ferromagnetic Ising model on $G$ with edge activity $\beta \in [1,\infty)$ and external field $\lambda \in [0, 1 - \delta]$, 
    the spectral gap is at least $\Omega\bigl((\beta \Delta)^{-2/\delta^2}\bigr) = \Omega_{\beta,\delta,\Delta}(1)$, 
%
    and the mixing time is at most $O((\beta \Delta)^{4/\delta^2} \log n)=O_{\beta,\delta,\Delta}(\log n)$.
\end{theorem}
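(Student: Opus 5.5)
We view the Swendsen--Wang dynamics as the event-field dynamics of \Cref{rem:event-FD}: the events are monochromatic edges with tilt $\theta=\beta^{-1}$. Tilting $\mu$ by $\beta^{-1}$ on every monochromatic edge gives the product measure $\nu$ with external field $\lambda$. So the (\textsc{Up}) step samples from a product measure conditioned on the retained edges being monochromatic, which is exactly the random-cluster step. The plan is to cast this chain as a localization scheme in the sense of \cite{chen2022localization}, developed in \Cref{sec:edge-field-dynamics}. In that scheme one continuously interpolates the edge activity from $\beta$ down to $1$ while keeping the external field $\lambda$ fixed. Spectral gap and entropy contraction of the down--up chain then reduce to two ingredients: (i) a bound on the covariance (or spectral independence) of the intermediate measures along the path, and (ii) a bound for the terminal measure, which is a product measure restricted by random connectivity constraints and is trivial to handle.

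\textbf{Step 1 (stability along the path).} For the Ising model with activity $\beta'\in[1,\beta]$ and field $\lambda\le 1-\delta$, under arbitrary pinnings and arbitrary edge subsets, we need a uniform bound on $\lambda_{\max}$ of the relevant correlation matrix. This must hold with respect to the edge-indexed statistic $m(\sigma)$, not the vertex statistic. Here we exploit the ferromagnetic Ising model with a nonzero field: by Lee--Yang type zero-freeness, or by the correlation decay of \cite{feng2023swendsen-wang}, connected correlations decay as $(1-\delta)^{\text{dist}}$-like quantities, uniformly in $\beta$ up to factors of $\beta\Delta$. I would aim for an entrywise bound of the form $|\Cov(\mathbf 1_{e},\mathbf 1_{f})|\le (\beta\Delta)^{O(1)}\exp(-\delta^2\,\dist(e,f))$ and sum it over the bounded-degree graph to get a spectral bound of order $(\beta\Delta)^{O(1/\delta^2)}$. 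If direct decay is hard to prove, the fallback is to compare the edge covariance with the vertex covariance using $\mathbf 1[\sigma_u=\sigma_v]=\sigma_u\sigma_v+(1-\sigma_u)(1-\sigma_v)$.

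\textbf{Step 2 (integrating the localization).} Along a localization path where the log-tilt $t$ runs over $[0,\log\beta]$, the standard argument of \cite{chen2022localization} gives spectral gap at least $\exp\bigl(-\int_0^{\log\beta}\|\Cov_t\|\,dt\bigr)$. However, the total "length" here is one step of the SW chain, not a continuum of steps. I would therefore follow the approach used for field dynamics \cite{chen22optimal}: bound the variance decay of one down--up step by $\prod$ of approximate conservation factors, giving a gap of at least $(\beta\Delta)^{-2/\delta^2}$. The mixing time then follows from $\Omega(1)$ entropy contraction, proved the same way with entropic stability replacing spectral stability. Combined with a $\log(1/\mu_{\min})$-free warm-start argument (the product structure gives the first step a $\mathrm{poly}(n)$-warm start), this yields $O((\beta\Delta)^{4/\delta^2}\log n)$.

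\textbf{Main obstacle.} The key difficulty is Step 1. We need correlation bounds that are uniform over all pinnings and all intermediate edge activities, and that depend only on $\delta$ through the field: for $\beta$ large we are deep in the low-temperature phase, where only the field breaks symmetry. I would first try the Lee--Yang zero-free disk around $\lambda\in[0,1-\delta]$, combined with Barvinok-style analytic interpolation, to convert zero-freeness into covariance bounds of size $\exp(O(1/\delta^2))$ per unit tilt. The quadratic dependence on $1/\delta$ in the exponent arises from the width of the zero-free region.
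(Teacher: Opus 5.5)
Your framing matches the paper's. Swendsen--Wang is the event-field dynamics for the events $\{\sigma_u=\sigma_v\}$ with tilt $\beta^{-1}$, that is, the down--up chain at time $1-\beta^{-1}$ of the Swendsen--Wang denoising process. So variance and entropy decay follow from approximate conservation with $R=\exp\bigl(\int_0^{1-\beta^{-1}}\frac{C(t)}{1-t}\,\mathrm{d}t\bigr)$, and no property of the terminal measure is needed.

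\textbf{Step 1 is the gap.} It carries the whole proof, and your sketch would not deliver it. An entrywise bound $\mathrm{e}^{-\delta^2\mathrm{dist}(e,f)}$ cannot be summed over a graph of maximum degree $\Delta\ge 3$. There can be order $(\Delta-1)^k$ edges at distance $k$, and $(\Delta-1)\mathrm{e}^{-\delta^2}>1$ for small $\delta$, so the row sums grow with $n$. You need an aggregate (coupling or total-influence) bound, not pointwise decay.

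Your fallbacks do not supply one either. Univariate Lee--Yang zero-freeness in $\lambda$ controls only the variance of the magnetization (test vector $\mathbf 1$), not $\lambda_{\max}$ of an edge-indexed correlation matrix. And since $\mathbf 1[\sigma_u=\sigma_v]=1-\sigma_u-\sigma_v+2\sigma_u\sigma_v$, edge correlations involve four-point spin correlations, which vertex covariances do not control.

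The missing idea is the down-operator comparison. By Edwards--Sokal, the law $\tilde\nu$ of the monochromatic-edge set under $(1-t)\otimes\mu^{S}$ is mapped by $D_{1\to\mathbf p}$ to the conditioned random-cluster measure $\pi$. Here $p_e=p:=1-((1-t)\beta)^{-1}$ for $e\notin S$ and $p_e=1$ on $S$. The measure $\pi$ is $\frac{2}{\delta^2}$-coupling independent for $\lambda\le 1-\delta$ (\Cref{lem:SI-CZ23}). Then $\mathrm{Cov}_\pi\succeq D_{\mathbf p}\mathrm{Cov}_{\tilde\nu}D_{\mathbf p}$ gives $\lambda_{\max}(\mathrm{Cor}_{\tilde\nu})\le p^{-1}\cdot\frac{2}{\delta^2}$ (\Cref{lem:SI-reduction}).

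\textbf{Your bookkeeping also does not close.} With $s=(1-t)\beta$ one has $\int_0^{1-\beta^{-1}}\frac{C(t)}{1-t}\,\mathrm{d}t=\int_1^{\beta}C\,\frac{\mathrm{d}s}{s}$. A rate of order $(\beta\Delta)^{O(1/\delta^2)}$, which is what your Step 1 targets, would give $R=\exp\bigl((\beta\Delta)^{O(1/\delta^2)}\log\beta\bigr)$, not $(\beta\Delta)^{2/\delta^2}$. The stated bound needs a rate $\frac{2}{\delta^2}\cdot\frac{s}{s-1}$ that is independent of $\beta$ and $\Delta$. Its singularity as $s\to 1$ (where $p\to 0$ and the down operator erases everything) must be cut off by a separate bound $\frac{6\Delta}{\mathrm{e}\delta^3}$. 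The paper gets that bound from the Edwards--Sokal coupling plus path coupling along the random-cluster coupling. The exponent then comes from $\int_1^\beta\min\bigl\{\frac{2}{\delta^2(s-1)},\frac{6\Delta}{\mathrm{e}\delta^3 s}\bigr\}\mathrm{d}s\le\frac{2}{\delta^2}\log(\beta\Delta)+O_\delta(1)$.

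\textbf{Entropic stability is not obtained ``the same way''.} Covariance bounds do not imply it. The paper uses three further ingredients:
\begin{itemize}
\item entropic stability of the random-cluster model, which follows from coupling independence under arbitrary fields;
\item an entropic version of the down-operator lemma;
\item for $s\le 1+\delta^2/\Delta^2$ only, a lemma lifting spectral to entropic stability under marginal lower bounds $\nu(e)\ge\frac14$, combined with a total-influence bound near $\beta=1$.
\end{itemize}
That narrower window is what turns $\frac{2}{\delta^2}\log(\beta\Delta)$ into $\frac{2}{\delta^2}\log(\beta\Delta^2)\le\frac{4}{\delta^2}\log(\beta\Delta)$. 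No warm-start argument is needed, since entropy decay at rate $1/R$ already gives mixing time $O(R\log\log(1/\mu_{\min}))$.
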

When the maximum degree $\Delta = O(1)$, \Cref{thm:SW-main} gives an optimal $\Omega(1)$ lower bound on the spectral gap and an optimal $O(\log n)$ upper bound on the mixing time of the Swendsen--Wang dynamics for the ferromagnetic Ising model with an external field.
To the best of our knowledge, these are the first sharp mixing-time bounds beyond mean-field~\cite{long2014SW} or strong spatial mixing regimes~\cite{blanca2022entropy,blanca2022mixing}, breaking the $\Omega(|E|)$ barrier of previous approaches in this general regime~\cite{guo2018random,feng2023swendsen-wang}.
%
See \Cref{sec:rapid-mixing-SW-dynamics} for the proof of \Cref{thm:SW-main}.

\begin{remark}[An ``edge-to-edge'' SI]
Notably, spectral independence (SI) does not hold for the ferromagnetic Ising model beyond the uniqueness threshold, and the Glauber dynamics in the ``spin world'' mixes torpidly in this regime~\cite{gerschenfeld2007reconstruction}.

In contrast, we show that the Swendsen--Wang dynamics mixes at an optimal rate over the entire ferromagnetic regime in the presence of an external field. 
A key step in our proof is to establish a notion of SI over edges, despite the absence of SI in the spin world.
This edge-to-edge SI supports rapid mixing of the corresponding edge-tilting field dynamics, which in our case is the Swendsen--Wang dynamics.
%
\end{remark}

\subsection{Related work}
\paragraph{Computational phase transition of antiferromagnetic two-spin systems.}
The computational phase transition of antiferromagnetic two-spin systems has attracted considerable attention~\cite{dyer2002counting,goldberg2003computational,weitz2006counting,sly2010computational,li2012approximate,sly2012computational,LLY13,sinclair2014approximation,galanis2015inapproximability,galanis2016inapproximability}.
In a seminal work, \cite{weitz2006counting} developed a correlation-decay-based algorithm via the self-avoiding walk (SAW) tree, proving polynomial-time tractability of the hardcore model up to the uniqueness threshold. 
This was subsequently extended to all antiferromagnetic two-spin systems in the uniqueness regime~\cite{li2012approximate,LLY13,sinclair2014approximation}.
In the non-uniqueness regime, it was shown~\cite{sly2010computational,sly2012computational,galanis2015inapproximability,galanis2016inapproximability} that the problem of Gibbs sampling is intractable in polynomial time unless $\mathrm{NP} = \mathrm{RP}$.

\paragraph{Rapid mixing of Glauber dynamics up to criticality.}
The notion of spectral independence was introduced in \cite{anari2020spectral}, where techniques based on high-dimensional expanders (HDX), developed in \cite{anari2019logconcave,alev2020improved}, were applied to analyze the mixing time of Glauber dynamics. This line of work led to a series of major advances establishing rapid mixing of Glauber dynamics up to the uniqueness threshold, including~\cite{chen2020rapid,chen2021optimal,chen2021rapid,anari2021entropic,blanca2022mixing,anari2022entropic,chen2022localization,chen22optimal}.
In particular, \cite{chen22optimal} established an optimal near-linear bound on the mixing time of Glauber dynamics for general antiferromagnetic two-spin systems on graphs with potentially unbounded maximum degree, throughout the  regime below the critical threshold specified in \Cref{cond:critical}.


\paragraph{Rapid mixing at criticality.}
For systems at the critical threshold, prior work has focused on the Ising model on specific graph families, including complete graphs \cite{levin2010glauber,ding2009mixing}, trees \cite{ding2010mixing}, the grid $\mathbb{Z}^2$ \cite{lubetzky2012critical}, higher-dimensional lattices ($\mathbb{Z}^d$ with $d>4$) \cite{bauerschmidt2024log}, and sparse random graphs \cite{prodromidis2025polynomial}. However, they do not extend directly to general graphs.

Recently, \cite{chen2025uniqueness} proved polynomial mixing time bounds for the Glauber dynamics of the hardcore model and the zero-field Ising model at the critical uniqueness threshold, 
by applying the localization scheme framework introduced in \cite{chen2022localization},
which, in particular, unifies stochastic localization \cite{eldan2013thin} and the negative-field localization (field dynamics) introduced in \cite{chen2021rapid}. 
In an independent work, \cite{prodromidis2024polynomial} provided an alternative proof for the zero-field Ising model. Furthermore, in follow-up work, \cite{chen2025improved} improved the mixing time bound for the critical hardcore model.

\paragraph{Mixing times of Swendsen--Wang dynamics.}
Theoretical analysis of the mixing time of the Swendsen--Wang dynamics has been a major challenge since its introduction~\cite{Swendsen1987}.
For the zero-field Ising model on complete graphs (also known as the mean-field model), the chain mixes in $O(n^{1/4})$ steps, and this bound is tight at criticality \cite{long2014SW}. It has been conjectured that this $O(n^{1/4})$ mixing-time bound extends to general graphs.
For the zero-field Ising model, sharp mixing-time bounds have recently been obtained on $\mathbb{Z}^d$~\cite{blanca2022entropy} and within the uniqueness regime~\cite{blanca2022mixing}.
For the ferromagnetic Ising model on general graphs, polynomial mixing time bounds for the Swendsen--Wang dynamics were established only recently~\cite{guo2018random,feng2023swendsen-wang}.
In particular, on bounded-degree graphs with an external field, an $O(n \log n)$ mixing time bound was shown in~\cite{feng2023swendsen-wang} for the bounded-degree graphs via a comparison with Glauber dynamics on random clusters.

\section{Preliminaries}
\ifthenelse{ \boolean{Ent} }{ \input{ent-prelim} }{ \input{var-prelim} }

\section{Barriers to Localization Schemes} \label{sec:proof-main-overview}








\paragraph{Localization schemes.}
Localization schemes, introduced in \cite{chen2022localization}, have proven effective for establishing rapid mixing of Glauber dynamics at criticality \cite{chen2025uniqueness,chen2025improved}.


A localization scheme interpolates between a target distribution $\mu$ and a trivial distribution via a denoising process $(Y_t)_{t \in [0,1]}$.
In particular, $(Y_t)_{t \in [0,1]}$ is a Markov process such that $Y_0 = \emptyset = \mathbf{0}$ corresponds to a state with no information, and $\mathrm{Law}(Y_1) = \mu$.
Using this denoising process, the target distribution $\mu$ can be decomposed as a mixture of distributions via a direct application of the law of total expectation:
\begin{align*}
    \forall S, \quad \mu(S) = \Pr{Y_1 = S} = \E{\Pr{Y_1 = S \mid Y_t}} = \E{\mu_t(S)},
\end{align*}
where $\mu_t = \mathrm{Law}(Y_1 \mid Y_t)$ is a random measure depending on $Y_t$.

This interpolation provides a systematic mechanism for lifting mixing properties from subcritical regimes to the critical threshold.

In particular, consider the Gibbs distribution $\mu$ of an antiferromagnetic Ising model with edge activity $\beta = \gamma < 1$ and external field $\lambda > 0$ in the uniqueness regime.
By applying suitable localization schemes at appropriately chosen times $t^*$, the target distribution $\mu$ can be transformed into ``easy'' distributions $\mu_{t^*}$:

\begin{itemize}
\item \emph{Stochastic localization}: $\mu_t$ becomes the Gibbs distribution of an Ising model with tilted edge activities $\beta_t = \gamma_t = \beta^{\frac{1-2t}{1-t}}$ and randomly tilted external fields $\boldsymbol{\lambda}' \in \mathbb{R}^n_{>0}$.
By choosing $t^* = \tfrac{1}{2}$, the resulting distribution $\mu_{t^*}$ becomes a product distribution.
\item \emph{Negative-field localization}: $\mu_t$ becomes the Gibbs distribution of an Ising model with unchanged edge activities $\beta_t = \gamma_t = \beta$ and a tilted external field $\lambda_t = (1 - t)\lambda$.  
By choosing $t^* = \tfrac{9}{10}$, the distribution $\mu_{t^*}$ satisfies the Dobrushin condition.
\end{itemize}




The following boosting argument for the mixing time of Glauber dynamics, from the “easy” distribution $\mu_{t^*}$ to the target  $\mu$, is supported by the localization framework:
\begin{align*}
    \Tmix(\mu) 
    &\lesssim \exp\tp{\int_0^{t^*} \sup_{Y_t} \norm{\Cov(\mu_t)}_2 \-d t} \times \sup_{Y_{t^*}} \Tmix(\mu_{t^*}) \\
    &\lesssim \exp\tp{\underbrace{\int_0^{t^*} \min\set{n, \sup_{Y_t} \frac{1}{\delta(\mu_t)}} \-d t}_{=:\mathrm{I}}} \times \mathrm{poly}(n),
\end{align*}
where $\lesssim$ hides polynomial factors, and $\delta(\mu_t)$ represents the “slackness” between the distribution $\mu_t$ and the target distribution $\mu$, to be discussed later.
Therefore, in order to show that $\Tmix(\mu) = \mathrm{poly}(n)$, it suffices to bound the integral $\mathrm{I}$ by $O(\log n)$.
By a direct calculation,
\begin{align*}
    \mathrm{I} \leq \int_{0}^{1/n} n \,\-d t + \int_{1/n}^{t^*} \sup_{Y_t} \frac{1}{\delta(\mu_t)} \,\-d t.
\end{align*}
This implies that we require the following condition on the slackness of the distributions $\mu_t$:
\begin{align} \label{eq:required-bound-for-slack}
\delta(\mu_t) = \Omega(t) \quad \text{for all} \quad  Y_t.
\end{align}

\paragraph{Slackness in the uniqueness condition.}
For antiferromagnetic two-spin systems,  $\delta(\mu_t)$ corresponds to the slackness of the contraction in the uniqueness condition.

In particular, let $\mu$ be the Gibbs distribution of an antiferromagnetic two-spin system with parameters $(\beta,\gamma,\lambda)$ on a $\Delta$-regular graph, where $\Delta - 1 =: d \geq 2$.
We say that $\mu$ satisfies the uniqueness condition with \emph{exact slack} $\delta$ if
$\delta = \delta(\mu)  := 1 - \abs{f'(\hat{x}_d)}$  
where $\hat{x}_d$ is the unique positive fixed point of the tree recursion $f_d$, i.e., $\hat{x}_d = f(\hat{x}_d)$ (see \Cref{def:d-unique}).

The uniqueness regime itself undergoes a phase transition at the threshold $\sqrt{\beta\gamma} = \frac{d-1}{d+1}$.
Intuitively, the ``flexibility'' of the external field changes sharply at this threshold:
\begin{itemize}
 \item If $\sqrt{\beta\gamma} \geq \frac{d-1}{d+1}$, then  $(\beta,\gamma,\lambda)$ is $d$-unique for all $\lambda > 0$ (i.e., $\delta \geq 0$).
 \item If $\sqrt{\beta\gamma} < \frac{d-1}{d+1}$, then there exists $\lambda > 0$ such that $(\beta,\gamma,\lambda)$ is not $d$-unique (i.e., $\delta < 0$).
\end{itemize}
We refer the reader to \Cref{fig:Ising-unique-illustration} for an illustration of this phenomenon.

This critical behavior suggests that when $\sqrt{\beta\gamma} \in [\frac{d-1 - \epsilon}{d+ 1+ \epsilon}, \frac{d-1}{d+1})$ for sufficiently small $\epsilon > 0$, the slack $\delta$ may depend poorly on the external field $\lambda$.
%
In particular, when $ \sqrt{\beta\gamma} \in [\frac{d-1 - \epsilon}{d+ 1+ \epsilon}, \frac{d-1}{d+1})$ and $\delta(\mu) = 0$ (i.e., at criticality), the slack $\delta(\mu_t)$ for the  distribution $\mu_t$ with tilted external field $(1-t)\lambda$ (as in the negative-field localization) satisfies
\begin{align} \label{eq:bad-reliance}
\delta(\mu_t) = O(t \cdot \epsilon^{1/2}).
\end{align}
We defer the calculation of this bound to \Cref{rem:bad-dependency-field-dynamics}.

\begin{figure}
    \centering

\begin{tikzpicture}[ 
  scale=1.4,
  xscale=20, yscale=0.9, 
  declare function={
    d = 4;
    theta(\b) = d * (1 - \b^2) - (1 + \b^2);
    x1(\b) = (theta(\b) - sqrt(theta(\b)^2 - 4 * \b^2))/(2 * \b);
    L(\b) = x1(\b) ((x1(\b) + \b)/(\b * x1(\b) + 1))^d;
    R(\b) = 1/L(\b);
  }]
  \begin{scope}
  \fill[red!20] plot[domain=0.5:0.599] (\x, {L(\x)}) -- plot[domain=0.599:0.5] (\x, {R(\x)}) -- cycle;
  \node at (0.545, 1.1) {\color{red!50!black}non-unique};
  \fill[green!20] plot[domain=0.5:0.599] (\x, {L(\x)}) -- plot[domain=0.599:0.5] (\x, {R(\x)}) -- (0.5,4.2) -- (0.695,4.2) -- (0.695,0) -- (0.5,0) -- cycle;
  \node at (0.65, 3.5) {\color{green!50!black}unique};
  \draw[blue, thick, domain=0.5:0.6] plot (\x, {L(\x)});
  \draw[blue, thick, domain=0.5:0.6] plot (\x, {R(\x)});
  \draw[black,->] (0.5,0) -- (0.5,4.5) node[left] {$\lambda$};
  \draw[black,->] (0.5,0) node[below] {$0.5$} -- (0.7,0) node[below] {$\beta$};
  \draw[-, dashed] (0.6,0) node[below] {$0.6$} -- (0.6,4.2);
  \end{scope}
\end{tikzpicture}

    \caption{Illustration of uniqueness regime for the Ising model ($\beta = \gamma$) with $d = 4$}
    \label{fig:Ising-unique-illustration}
\end{figure}
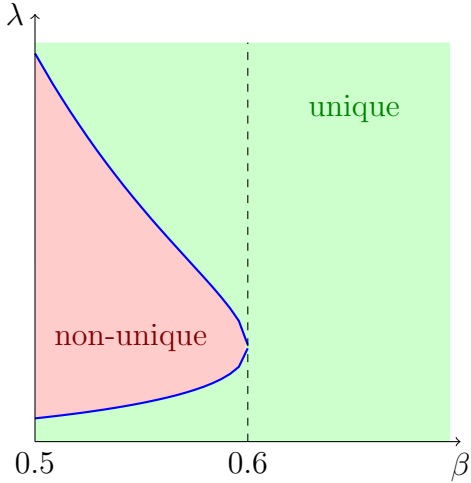

\paragraph{A barrier case for existing localization schemes.}
Following the preceding discussion, we identify a barrier case for existing approaches based on stochastic localization and negative-field localization.
Specifically, when
$\sqrt{\beta\gamma} \in [\frac{d-1 - \epsilon}{d+ 1+ \epsilon}, \frac{d-1}{d+1})$ 
for sufficiently small $\epsilon > 0$, neither scheme is able to establish rapid mixing at criticality:
\begin{itemize}
\item \emph{Stochastic localization} fails in this regime because it introduces uncontrolled external fields. Since the uniqueness condition is highly sensitive to the external field when $\sqrt{\beta\gamma} \leq \frac{d-1}{d+1}$, these perturbations preclude a direct proof of rapid mixing.
\item \emph{Negative-field localization} also fails because the slack $\delta$ in the uniqueness condition exhibits a poor dependence on the external field $\lambda$ in this regime. In particular, combining \eqref{eq:required-bound-for-slack} and \eqref{eq:bad-reliance}, we see that for sufficiently small $\epsilon$, the integral derived earlier is insufficient to yield a polynomial mixing-time bound.
\end{itemize}

\paragraph{Our solution and novelty.}
We address this barrier by introducing a new localization scheme, which we call \emph{edge-tilting field dynamics}, a controlled interaction localization scheme.
Unlike negative-field localization, which tilts the external field, this  approach instead tilts the edge activities $\beta$ and $\gamma$, in a manner reminiscent of stochastic localization.
Crucially, this is achieved through the introduction of pinnings (which preserve the uniqueness condition), rather than introducing uncontrollable perturbations to the external field.

This key feature ensures that the dynamics remains well-behaved near criticality for general antiferromagnetic two-spin systems, and in particular, makes it applicable to instances in the barrier regime
$\sqrt{\beta\gamma} \in [\frac{d-1 - \epsilon}{d+ 1+ \epsilon}, \frac{d-1}{d+1})$.


\section{Edge-Tilting Field Dynamics}\label{sec:edge-field-dynamics}
\subsection{Generalizing the field dynamics}\label{sec:generalized-field-dyanmics}
Let $\mu$ be a distribution over $\set{0,1}^V$, where $V$ is a finite ground set of Boolean variables.
Let $\mathcal{A}$ be a collection of events, i.e., each $A \in \mathcal{A}$ is a subset of $\set{0,1}^V$.
Let $X \sim \mu$, and for each event $A \in \mathcal{A}$ define the indicator variable $Z_A = Z_A(X) := \mathbf{1}[X \in A]$. 
Let $Z = Z(X) := (Z_A)_{A \in \mathcal{A}}$.
Furthermore, we identify configurations $\sigma \in \set{0,1}^V$ (and $T \in \set{0,1}^{\mathcal{A}}$) with the subsets $\sigma \subseteq V$ (and $T \subseteq \mathcal{A}$) that they indicate.

Let $\pi$ denote the joint distribution of $(X, Z)$, i.e.,
\begin{align} \label{eq:dist-joint}
    \pi := \mathrm{Law}(X, Z(X)),
\end{align}
which is a distribution over $\set{0,1}^{V \cup \mathcal{A}}$.
By construction, $\mu = \pi_V$. 
We also write $\pi_{\mathcal{A}} = \mathrm{Law}(Z)$ for the marginal distribution on the event indicators.

We now formalize the event-field dynamics introduced in \Cref{rem:event-FD} using the above notation.
For simplicity, we assume that all tilts $\{\theta_A\}_{A \in \mathcal{A}}$ are identical.

\begin{definition}[Event-field dynamics]\label{def:event-FD-explicit}
Let $\theta \in (0,1)$. 
The \emph{event-field dynamics} for $\mu$ with respect to the event family  $\mathcal{A}$ and tilt parameter $\theta$ is a Markov chain on the state space $\Omega(\mu)$.
From the current state $X_t \in \Omega(\mu)$, the next state $X_{t+1}$ is generated as follows:
\begin{itemize}[leftmargin=2.5cm]
    \item[($V \to \mathcal{A}$)] Set $Z_t := Z(X_t) = \{A \in \mathcal{A} \mid X_t \in A\}$;
    \item[({Down}-$\mathcal{A}$)] Generate a random subset $T \subseteq Z_t$ by independently removing each $A \in Z_t$ with probability $\theta$;
    \item[({Up}-$\mathcal{A}$)] Sample $Z_{t+1}\sim (\theta * \pi_{\mathcal{A}})(\cdot \mid Z_{t+1} \supseteq T)$, where $\theta * \pi_{\mathcal{A}}$ is as defined in \eqref{eq:def-field-tilt};
    \item[($\mathcal{A} \to V$)] Sample $X_{t+1} \sim \pi_V(\cdot \mid Z_{t+1})$.
\end{itemize}
\end{definition}


\begin{remark}[Generalization of vertex-field dynamics]\label{remark:general-field-dynamics}
The process defined in \Cref{def:event-FD-explicit} generalizes the vertex-field dynamics  (\Cref{def:field-dynamics-down-up-chain}).
%
Specifically, the two intermediate steps (Down-$\mathcal{A}$) and (Up-$\mathcal{A}$) together simulate a single transition of the vertex-field dynamics for the marginal distribution $\pi_{\mathcal{A}}$. 
The outer wrapper steps ($V \to \mathcal{A}$) and ($\mathcal{A} \to V$) then lift this Markov chain on $\pi_{\mathcal{A}}$ to a Markov chain on $\mu = \pi_V$, from which we aim to sample.

We present the event-field dynamics in this form so that analyses of vertex-field dynamics can be lifted to this more general setting.
\end{remark}

The event-field dynamics unifies the vertex-field dynamics, the edge-field dynamics, and the Swendsen--Wang dynamics as special cases.


\begin{example} \label{exm:event-FD}
Let $\mu$ be a distribution on $\set{0,1}^V$, and let $G = (V,E)$ be a graph. 
    \begin{itemize}
    \item \textbf{Vertex-field dynamics.} We set 
    \[
    \mathcal{A} := \{A_v \mid v \in V\}, \quad \text{where } A_v := \{\sigma \in \set{0,1}^V \mid \sigma_v = 1\}, \quad \forall v \in V.
    \]
For $\sigma \in \set{0,1}^V$, let $\mathcal{A}(\sigma) := \set{A \in \mathcal{A}\mid \sigma \in A}$ denote the set of events occurring on~$\sigma$. 

Given $T \subseteq \mathcal{A}$ generated in the (Down-$\mathcal{A}$) step, we have
\begin{align}
    \nonumber
    \Pr{X_{t+1} = \sigma \mid T} 
    =\,& \Pr{X_{t+1} = \sigma, Z_{t+1} = \mathcal{A}(\sigma) \mid T} \\
    \nonumber
    =\,& \pi_V(\sigma \mid Z = \mathcal{A}(\sigma)) \cdot\Pr{Z_{t+1} = \mathcal{A}(\sigma) \mid T} \\
    \label{eq:FD-pi-A}
    \propto\,& \pi_V(\sigma \mid Z = \mathcal{A}(\sigma)) \cdot \theta^{\abs{\mathcal{A}(\sigma)}} \cdot \pi_{\mathcal{A}}(\mathcal{A}(\sigma)) \cdot \*1[\mathcal{A}(\sigma) \supseteq T]  \\
    \label{eq:resample-FD-event}
    =\,& \mu(\sigma) \cdot \theta^{\abs{\mathcal{A}(\sigma)}} \cdot \*1[\mathcal{A}(\sigma) \supseteq T],
\end{align}
where \eqref{eq:FD-pi-A} follows from \Cref{remark:general-field-dynamics}, which observes that the transition $T \to Z_{t+1}$ (via the (Down-$\mathcal{A}$) and (Up-$\mathcal{A}$) steps) simulates the vertex-field dynamics on $\pi_{\mathcal{A}}$.

For our choice of $\mathcal{A}$ above, $|\mathcal{A}(\sigma)|=\|\sigma\|$.
Therefore, by \eqref{eq:resample-FD-event}, 
        \begin{align*}
        \Pr{X_{t+1} = \sigma \mid T} 
        &\propto \mu(\sigma) \cdot \theta^{\|\sigma\|} \cdot \*1[\mathcal{A}(\sigma)\supseteq T] 
        \propto (\theta * \mu)^T(\sigma),
        \end{align*}
where $\mu^T$ denotes $\mu$ conditioned on all the events in $T$ occurring.

\item \textbf{Edge-field dynamics.}  Define $\vec{E} := \{(u,v), (v,u) \mid \{u,v\} \in E\}$ as the set of oriented edges, and  set
        \[\+A := \set{A_{uv} \mid (u,v) \in \vec{E}}, \quad \text{ where } A_{uv} := \set{\sigma \mid \sigma_u = 1 \land \sigma_v = 0},\quad \forall (u,v) \in \vec{E}.\]
Then, by the same argument as above (in particular, by \eqref{eq:resample-FD-event}), we obtain
        \begin{align*}
            \Pr{X_{t+1} = \sigma \mid T} 
            &\propto \mu(\sigma) \cdot \theta^{\abs{\mathcal{A}(\sigma)}} \cdot \*1[\mathcal{A}(\sigma)\supseteq T] \\
            &= \mu(\sigma) \cdot \theta^{\abs{E} - m(\sigma)} \cdot \*1[\mathcal{A}(\sigma)\supseteq T] \\
            &\propto (\theta^{-1} \otimes \mu)^T(\sigma),
        \end{align*}
where $m(\sigma)$ denotes the number of monochromatic (undirected) edges in $\sigma$.
    \item 
   \textbf{Swendsen--Wang dynamics.} For the ferromagnetic Ising model (i.e., $\beta = \gamma > 1$), 
we choose the tilt parameter $\theta = \beta^{-1}$ and set
\begin{align*}
        \mathcal{A} := \set{A_{\set{u,v}} \mid \set{u,v} \in E}, \quad \text{where } A_{\set{u,v}} := \set{\sigma \mid \sigma_u = \sigma_v}, \quad\forall \set{u,v} \in E.
    \end{align*}
Let $\mu$ be the Gibbs distribution of this Ising model. 
By the same argument as in \eqref{eq:resample-FD-event}, 
    \begin{align*}
        \Pr{X_{t+1} = \sigma \mid T} 
        &\propto \mu(\sigma) \cdot \theta^{\abs{\mathcal{A}(\sigma)}} \cdot \*1[\mathcal{A}(\sigma)\supseteq T] \\
        &= \mu(\sigma) \cdot \beta^{-m(\sigma)} \cdot \*1[\mathcal{A}(\sigma)\supseteq T] \\
        &\propto \lambda^{\abs{\sigma}}\cdot\*1[\mathcal{A}(\sigma)\supseteq T] \\
        &\propto \*1[\mathcal{A}(\sigma)\supseteq T] \cdot \prod_{C \in \kappa(V,E_T) \atop C\subseteq \sigma} \frac{\lambda^{\abs{C}}}{1 + \lambda^{\abs{C}}} \prod_{C \in \kappa(V,E_T) \atop C\not\subseteq \sigma} \tp{1 - \frac{\lambda^{\abs{C}}}{1 + \lambda^{\abs{C}}} },
    \end{align*}
    where $\kappa(V,E_T)$ denotes the set of connected components of the graph $(V,E_T)$, and 
$$E_T := \set{\set{u,v} \mid A_{\set{u,v}} \in T}.$$
In other words, conditioned on a fixed $T$, the new state $X_{t+1}$ is generated by selecting each connected component $C \in \kappa(V,E_T)$ independently with probability $\frac{\lambda^{|C|}}{1 + \lambda^{|C|}}$.
    \end{itemize}
\end{example}

\subsection{Localization scheme for edge-field dynamics}

Like the vertex-field dynamics, which corresponds to the negative-field localization scheme, the generalized event-field dynamics  also admits a localization interpretation. In particular, its associated denoising process can be defined as follows.

\begin{definition}[Event-field denoising process] \label{def:FD-event-denoising}
Let $\mu$ be a distribution on $\set{0,1}^V$, and let $\mathcal{A}$ be a collection of events. 
Let $\pi$ be the joint distribution on $\set{0,1}^{V \cup \mathcal{A}}$ defined in~\eqref{eq:dist-joint}.
%
%
Let $(Y_t')_{t \in [0,1]}$ be the vertex-field denoising process (see \Cref{def:field-dynamics-denoising}) for the marginal distribution $\pi_{\mathcal{A}}$, so that $\mathrm{Law}(Y_1') = \pi_{\mathcal{A}}$.

The \emph{event-field denoising process for $\mu$} with respect to $\+A$, denoted by $(Y_t)_{t \in [0,1]}$, is defined as follows:
\begin{itemize}
    \item Sample $Y_1 \sim \pi_V(\cdot \mid Y_1')$;
    \item For all $t \in [0,1)$, set $Y_t = Y_t'$.
\end{itemize}
%
That is, the process evolves according to the vertex-field denoising process on the event indicators, and at time $t=1$ projects onto the variables in $V$ via the conditional distribution.
\end{definition}

Note that the two processes $(Y_t)_{t\in [0,1]}$ and $(Y_t')_{t\in [0,1]}$ coincide for all $t < 1$, and differ only at the terminal projection step. 
In particular, the step that samples $Y_1 \sim \pi_V(\cdot \mid Y_1')$ corresponds to the wrapper step ($\mathcal{A} \to V$) in the event-field dynamics (see \Cref{def:event-FD-explicit}).

To apply the boosting argument in the localization framework (see \Cref{lem:boosting,lem:SS-to-AC-variance}) 
to establish the mixing time bound of the Glauber dynamics via the denoising process $(Y_t)_{t\in[0,1]}$, 
we need to establish  spectral stability of $(Y_t)_{t\in [0,1]}$.

Intuitively, since $(Y_t)_{t\in[0,1]}$ and $(Y_t')_{t\in [0,1]}$ are almost identical, 
the spectral stability of $(Y_t)_{t\in[0,1]}$ should be implied by that of $(Y_t')_{t\in [0,1]}$, 
as formalized below.

\begin{proposition}[Invariant of spectral stability] \label{prop:spectral-stable-compare}
    Let $(Y_t)_{t\in [0,1]}$ and $(Y_t')_{t \in [0,1]}$  be the two denoising processes defined in \Cref{def:FD-event-denoising}.
    For any $\theta \in [0,1)$, if $(Y_t')_{t\in [0,1]}$ is spectrally stable with rate $C$ at time $\theta$, then $(Y_t)_{t\in[0,1]}$ is also spectrally stable with rate $C$ at time $\theta$. 
\end{proposition}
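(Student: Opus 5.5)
The plan is to reduce every variance quantity for the process $(Y_t)$ to the corresponding quantity for $(Y'_t)$ via the law of total variance. The one structural fact needed is that the terminal projection $Y_1 \sim \pi_V(\cdot \mid Y_1')$ does not depend on $t$. I will work with spectral stability in the variance (approximate-conservation) form used in \Cref{lem:SS-to-AC-variance}. Concretely, I assume it means a differential/ratio inequality of the form
\[
\frac{\mathrm d}{\mathrm d t}\E{\Var[\mu_t]{f}} \ \ge\ -C\,\E{\Var[\mu_t]{f}}
\]
for every test function $f$, up to time $\theta$ (or the integrated version on $[0,\theta]$). The argument is insensitive to which of these equivalent forms is used. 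I have not seen the definition in the preliminaries, so this is the assumption the whole proof rests on.

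\textbf{Step 1 (mixture representation).} Write $\mu'_t := \mathrm{Law}(Y_1' \mid Y_t')$, a random measure on $\set{0,1}^{\mathcal{A}}$. Since $Y_t = Y_t'$ for $t<1$, and $Y_1$ is obtained from $Y_1'$ by the fixed kernel $\pi_V(\cdot \mid \cdot)$, I get the mixture
\[
\mu_t = \mathrm{Law}(Y_1 \mid Y_t) = \sum_{z} \mu'_t(z)\, \pi_V(\cdot \mid Z = z).
\]
This uses the Markov property: given $Y_1'$, $Y_1$ is independent of $Y_t'$.

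\textbf{Step 2 (variance decomposition).} Fix $f:\set{0,1}^V\to\mathbb R$. Set $g(z) := \E[\pi]{f(X)\mid Z=z}$ and $h(z) := \Var[\pi]{f(X)\mid Z=z}$. The law of total variance applied to the mixture gives
\[
\Var[\mu_t]{f} = \E[\mu'_t]{h} + \Var[\mu'_t]{g}.
\]
Because $(\mu'_t)$ is a measure-valued martingale with $\E{\mu'_t}=\pi_{\mathcal{A}}$, the first term has expectation $\E[\pi_{\mathcal A}]{h}$ for every $t$, a constant in $t$. Hence
\[
\frac{\mathrm d}{\mathrm dt}\E{\Var[\mu_t]{f}} = \frac{\mathrm d}{\mathrm dt}\E{\Var[\mu'_t]{g}}.
\]
Equivalently, the loss $\E{\Var[\mu_s]{f}}-\E{\Var[\mu_t]{f}}$ equals the same loss for $g$ under $\mu'$.

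\textbf{Step 3 (transfer).} Apply the spectral stability of $(Y'_t)$ at time $\theta$ to the test function $g$. This gives $\frac{\mathrm d}{\mathrm dt}\E{\Var[\mu'_t]{g}} \ge -C\,\E{\Var[\mu'_t]{g}}$. Then use $\Var[\mu'_t]{g}\le \Var[\mu_t]{f}$, which holds because $h\ge 0$ in Step 2. This yields the same inequality for $f$ and $\mu_t$ with the same rate $C$.

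If instead the definition is stated through covariance matrices, e.g.\ $\norm{\Cov(\mu'_t)}$ controlling the quadratic variation $\mathrm d\,\Var[\mu'_t]{g}$, I would rerun Steps 2--3 at that level. The martingale increments of $\mu_t$ are exactly the pushforwards of those of $\mu'_t$ under the kernel $\pi_V(\cdot\mid\cdot)$. So the variance-loss rate for $f$ is the variance-loss rate for $g$, and the same comparison $\Var[\mu'_t]{g}\le\Var[\mu_t]{f}$ closes the argument.

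\textbf{Main obstacle.} The delicate point is Step 1 together with the time index. The statement concerns time $\theta<1$, so only the coincidence $Y_t=Y'_t$ for $t<1$ is used, and the jump at $t=1$ must never be differentiated across. I need to check that the definition of spectral stability involves only $t\le\theta<1$ and the conditional laws $\mathrm{Law}(Y_1\mid Y_t)$. If that holds, the kernel argument is exact and the rate $C$ is preserved without loss.
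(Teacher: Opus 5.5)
Your argument is correct and is essentially the paper's proof. Both project the test function to $g(z)=\E{f(Y_1)\mid Y_1'=z}$, apply the stability hypothesis for $(Y'_t)$ to $g$, observe that the variance-loss quantity for $f$ under $(Y_t)$ coincides with the one for $g$ under $(Y'_t)$, and close with the law of total variance $\Var[\mu'_\theta]{g}\le\Var[\mu_\theta]{f}$; the paper gets the coincidence from the tower identity $\E{g(Y_1')\mid Y'_{s}}=\E{f(Y_1)\mid Y_{s}}$ for $s<1$, you from the decomposition $\Var[\mu_t]{f}=\E[\mu'_t]{h}+\Var[\mu'_t]{g}$ with $\E[\mu'_t]{h}$ a martingale. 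The paper's definition is pointwise in $Y_\theta=T$ with normalization $C/(1-\theta)$ rather than your averaged form, but your Step 2 runs verbatim conditionally on $Y_\theta=T$, so nothing changes.
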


In fact, \Cref{prop:spectral-stable-compare}  holds in greater generality for all denoising processes satisfying the abstract relation in \Cref{def:FD-event-denoising} (referred to as \emph{projected} denoising processes). We prove this more general statement in \Cref{sec:proof-spectral-stable-compare}. 

\paragraph{Edge-field denoising process.} Now, by instantiating the events  $\mathcal{A}$ as in \Cref{exm:event-FD}, we obtain the explicit construction of the denoising process associated with edge-field dynamics. 
%

\begin{definition}[Edge-field denoising process] \label{def:field-dynamics-on-edge}
Let $\mu$ be a distribution on $\set{0, 1}^V$. 
Let $G = (V, E)$ be a graph, and define $\vec{E} := \{(u,v), (v,u) \mid \{u,v\} \in E\}$.
Define 
\[
\+A := \set{A_{uv} \mid (u,v) \in \vec{E}}, \quad \text{ where } A_{uv} := \set{\sigma \mid \sigma_u = 1 \land \sigma_v = 0},\quad \forall (u,v) \in \vec{E}.
\]
Let $(Y_t)_{t\in[0,1]}$ be the event-field denoising process for $\mu$ with respect to this event family $\mathcal{A}$. 
We call the resulting process $(Y_t)_{t \in [0,1]}$ the \emph{edge-field denoising process} for $\mu$ on $G$.
%
\end{definition}

Recall the notation $\theta \otimes \mu$ for the distribution $\mu$ with tilted interaction, as defined in~\eqref{eq:def-otimes}. 
In particular, when $\mu$ is the Gibbs distribution of a two-spin system on a graph $G$ with parameters $(\beta, \gamma, \lambda)$, the distribution $\theta \otimes \mu$ corresponds to the Gibbs distribution of the spin system on the same $G$ with parameters $(\theta\beta, \theta\gamma, \lambda)$.

For an edge-field denoising process $(Y_t)_{t\in [0,1]}$, as defined in \Cref{def:field-dynamics-on-edge}, its posterior distribution is the target distribution $\mu$ with tilted interaction.

\begin{proposition}\label{prop:posterior-edge-field-dynamics}
Let $(Y_t)_{t\in [0,1]}$ be the edge-field denoising process for a distribution $\mu$ on a graph $G$.
Its posterior distribution satisfies
\begin{align*}
\-{Law}(Y_1\mid Y_t)=\frac{1}{1-t}\otimes\mu^{Y_t},
\end{align*}
where $\mu^{Y_t}$ denotes the conditional distribution of $\mu$ given that for every $(u,v)\in \vec{E}$ with $Y_t(uv)=1$, 
the event $A_{uv}$ occurs; that is, $\sigma_u=1$ and $\sigma_v=0$.
\end{proposition}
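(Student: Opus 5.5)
The plan is to reduce to the posterior formula for the vertex-field denoising process, applied to $\pi_{\mathcal{A}}$, and then lift through the projection $Y_1\sim\pi_V(\cdot\mid Y_1')$. From the preliminaries (\Cref{def:field-dynamics-denoising}) I will use that the vertex-field denoising process $(Y'_t)$ for a distribution $\nu$ on $\set{0,1}^{\mathcal{A}}$ has posterior
\[
\mathrm{Law}(Y_1'\mid Y_t') = \bigl((1-t)*\nu\bigr)^{Y_t'},
\]
the $(1-t)$-field-tilted measure conditioned on $Y_1'\supseteq Y_t'$. This is the standard negative-field localization fact: $Y_t'$ is obtained from $Y_1'$ by keeping each $1$ independently with probability $t$. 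With $\nu=\pi_{\mathcal{A}}$, Bayes' rule gives
\[
\Pr{Y_1'=S\mid Y_t'=T}\propto \pi_{\mathcal{A}}(S)\,t^{|T|}(1-t)^{|S|-|T|}\mathbf{1}[S\supseteq T]\propto \pi_{\mathcal{A}}(S)(1-t)^{|S|}\mathbf{1}[S\supseteq T].
\]
If the preliminaries only give the construction, I will verify this line directly.

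Next I handle the terminal step. For $t<1$ we have $Y_t=Y_t'$, and $Y_1$ is drawn from $\pi_V(\cdot\mid Y_1')$ independently of the past given $Y_1'$. So for $\sigma\in\set{0,1}^V$,
\[
\Pr{Y_1=\sigma\mid Y_t=T}=\sum_S \pi_V(\sigma\mid Z=S)\Pr{Y_1'=S\mid Y_t'=T}.
\]
Since $Z=Z(X)$ is a deterministic function, $\pi_V(\sigma\mid Z=S)$ vanishes unless $S=\mathcal{A}(\sigma)$, in which case it equals $\mu(\sigma)/\pi_{\mathcal{A}}(\mathcal{A}(\sigma))$. The sum therefore collapses to the single term $S=\mathcal{A}(\sigma)$, exactly as in \eqref{eq:FD-pi-A}–\eqref{eq:resample-FD-event}, giving
\[
\Pr{Y_1=\sigma\mid Y_t=T}\propto \mu(\sigma)(1-t)^{|\mathcal{A}(\sigma)|}\mathbf{1}[\mathcal{A}(\sigma)\supseteq T].
\]

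Finally I specialize to the oriented-edge events. Every edge $\{u,v\}$ contributes exactly one occurring event $A_{uv}$ or $A_{vu}$ if it is bichromatic, and none if it is monochromatic. Hence $|\mathcal{A}(\sigma)|=|E|-m(\sigma)$, and
\[
(1-t)^{|\mathcal{A}(\sigma)|}\propto (1-t)^{-m(\sigma)}=\Bigl(\tfrac{1}{1-t}\Bigr)^{m(\sigma)}.
\]
The indicator $\mathbf{1}[\mathcal{A}(\sigma)\supseteq T]$ says precisely that $\sigma_u=1$ and $\sigma_v=0$ for each $(u,v)$ with $Y_t(uv)=1$, which is the conditioning defining $\mu^{Y_t}$. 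Normalizing yields $\frac{1}{1-t}\otimes\mu^{Y_t}$. Here tilting and conditioning commute, since both act as multiplicative reweightings.

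The only delicate points are the exact convention of the vertex-field denoising process in \Cref{def:field-dynamics-denoising}, namely whether time $t$ corresponds to keeping with probability $t$ so that the tilt is $1-t$, and the degenerate case where $Y_t$ is infeasible. I expect the former to be the main obstacle. I will settle it by re-deriving the posterior from the definition via Bayes as above, and I will treat only feasible $Y_t$, where the normalizations are positive.
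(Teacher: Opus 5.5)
Your proposal is correct and follows the paper's proof essentially step for step. You apply the vertex-field posterior \eqref{eq:FD-posterior-calc} to $\pi_{\mathcal{A}}$. The convention in \Cref{def:field-dynamics-denoising} keeps each occurring event in $Y_t'$ with probability $t$, which confirms the tilt $1-t$ you were unsure about. You then collapse the sum using that $Z$ is a deterministic function of $\sigma$, and finish with $|\mathcal{A}(\sigma)|=|E|-m(\sigma)$.
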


\begin{proof}
For $\sigma\in \set{0,1}^V$, let ${Z}(\sigma)\in\set{0,1}^{\mathcal{A}}$ indicate whether each $A\in\mathcal{A}$ occurs under $\sigma$.
%
Let $X \sim \mu$, and recall the joint distribution $\pi:= \mathrm{Law}(X, Z(X))$ on $\set{0,1}^{V\cup\mathcal{A}}$ defined in \eqref{eq:dist-joint}.


Let  $(Y'_t)_{t\in [0,1]}$ be the auxiliary vertex-field denoising process for $\pi_{\+A}$ used in \Cref{def:FD-event-denoising}   for constructing the edge-field denoising process $(Y_t)_{t\in [0,1]}$.

Now fix $\sigma\in \Omega(\pi_V)=\Omega(\mu)$ and $T\in \Omega(Y_t)$. 
By \Cref{def:FD-event-denoising,def:field-dynamics-on-edge}, we have
\begin{align*}
& \Pr{Y_1=\sigma\mid Y_t=T} \\
=\,& \Pr{Y_1=\sigma\mid Y_1' = Z(\sigma)} \cdot \Pr{Y_1' = Z(\sigma) \mid Y_t'=T} \\ 
\tag{by \eqref{eq:FD-posterior-calc}} 
\propto\,& \Pr{Y_1=\sigma\mid Y_1' = Z(\sigma)} \cdot (1-t)^{\abs{Z(\sigma)}} \cdot \pi_{\mathcal{A}}(Z(\sigma)) \cdot \*1[T \subseteq Z(\sigma)] \\
\propto\,& (1-t)^{-m(\sigma)}\cdot \pi(\sigma, Z(\sigma)) \cdot \*1[T \subseteq Z(\sigma)]\\
=\,& (1-t)^{-m(\sigma)}\cdot \mu(\sigma) \cdot \*1[T \subseteq Z(\sigma)],
\end{align*}
where 
the last equality follows since ${Z}(\sigma)$ is determined by $\sigma$.
By definition of $\theta\otimes\mu$ in \eqref{eq:def-otimes}, this implies $\-{Law}(Y_1\mid Y_t)=\frac{1}{1-t}\otimes\mu^{Y_t}$.
\end{proof}

\subsection{Spectral stability of edge-field dynamics}
The following notion of a \emph{second-order correlation matrix} provides a natural matrix criterion for the spectral stability of edge-field dynamics.


\begin{definition}[Second-order correlation matrix]\label{def-2nd-correlation-matrix}
Let $\mu$ be a distribution on $\set{0,1}^V$.
Let $G = (V, E)$ be a graph, with the set of oriented edges $\vec{E} := \{(u,v), (v,u) \mid \{u,v\} \in E\}$.

For $u,v \in V$, denote by $u\ol{v}$ the event 
\[
u\ol{v} := \set{\sigma \in \set{0,1}^V \mid \sigma_u = 1 \land \sigma_v = 0}.
\]
The \emph{second-order correlation matrix} $\pcor{\mu} \in \mathbb{R}^{\vec{E} \times \vec{E}}$ is defined by
\begin{align*}
\forall (u,v), (w,z) \in \vec{E}, \quad
\pcor{\mu}(uv,wz) :=
\begin{cases}
\mu(w\ol{z}\mid u\ol{v})-\mu(w\ol{z}) \quad &\text{if }\mu\tp{u\ol{v}} > 0,\\
0 \quad &\text{otherwise}.
\end{cases}
\end{align*}
\end{definition}

The following corollary follows from \Cref{def:field-dynamics-on-edge}, \Cref{prop:spectral-stable-compare}, 
and an application of \Cref{lem:spectral-stable-matrix-form} to the auxiliary vertex-field denoising process $(Y_t')_{t\in [0,1]}$ on the event variables used for constructing the edge-field denoising process $(Y_t)_{t\in [0,1]}$ in \Cref{def:FD-event-denoising}.
\begin{corollary}[Spectral stability of edge-field dynamics via second-order correlations]
\label{lem:pcor-to-stable}
Let $(Y_t)_{t\in[0,1]}$ be the edge-field denoising process for $\mu$ on a graph $G=(V,E)$.

Fix $\theta \in [0,1)$.
If for every $\Lambda\subseteq V$ and every feasible pinning $\tau \in \Omega(\mu_\Lambda)$, it holds that
\begin{align*}
\pSI{\frac{1}{1-\theta}\otimes\mu^{\tau}} \le C,
\end{align*}
then $(Y_t)_{t\in [0,1]}$ is spectrally stable with rate $C$ at time $\theta$.
\end{corollary}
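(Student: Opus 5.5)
The plan is to reduce the claim, through \Cref{prop:spectral-stable-compare}, to spectral stability of the auxiliary vertex-field denoising process $(Y'_t)_{t\in[0,1]}$ for the marginal $\pi_{\mathcal{A}}$ on the event indicators. That reduced statement is then handled by \Cref{lem:spectral-stable-matrix-form}. I expect the latter, in the form used for vertex-field dynamics, to say roughly the following: if for every feasible pinning $T$ of the event variables the tilted conditional law $\frac{1}{1-\theta}*(\pi_{\mathcal{A}})^{T}$ has correlation matrix $\cor{\cdot}$ with $\lambda_{\max}\le C$, then $(Y'_t)$ is spectrally stable with rate $C$ at time $\theta$. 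Here the correlation matrix has entries $\nu(Z_B=1\mid Z_A=1)-\nu(Z_B=1)$, and the pinning fixes some events to occur, as in the posterior formula \eqref{eq:FD-posterior-calc}. So it suffices to identify these correlation matrices with $\pcor{\cdot}$ of suitable tilted, pinned versions of $\mu$.

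\textbf{Step 1 (posterior as a pinned tilt).} Fix a feasible $T\subseteq \mathcal{A}$, i.e.\ a set of oriented edges $(u,v)$ that are forced to satisfy $\sigma_u=1,\sigma_v=0$. Pinning $T$ is the same as the vertex pinning $\tau$ on $\Lambda:=\{u,v : (u,v)\in T\}$ with $\tau_u=1$ and $\tau_v=0$. This pinning is consistent and feasible exactly when $T$ is feasible for $\pi_{\mathcal{A}}$. The computation in the proof of \Cref{prop:posterior-edge-field-dynamics} shows that the law of $Z(X)$ under $X\sim \frac{1}{1-\theta}\otimes\mu^{\tau}$ equals $\bigl(\frac{1}{1-\theta}*\pi_{\mathcal{A}}\bigr)^{T}$. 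The reason is that $|\mathcal{A}(\sigma)|=|E|-m(\sigma)$, so a field tilt on events is an edge tilt on $\sigma$ up to normalization. For the opposite inclusion, every vertex pinning $\tau$ appearing in the hypothesis is allowed, and that is more than we need.

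\textbf{Step 2 (matching matrices).} Set $\nu:=\frac{1}{1-\theta}\otimes\mu^{\tau}$. The correlation matrix of the event-indicator law under $\nu$ has $(A_{uv},A_{wz})$ entry $\nu(w\ol z\mid u\ol v)-\nu(w\ol z)$, with the entry set to zero when $\nu(u\ol v)=0$. This is exactly $\pcor{\nu}(uv,wz)$ from \Cref{def-2nd-correlation-matrix}. There is one point to check. Events already pinned in $T$, or impossible under $\tau$, have degenerate indicators, and the field-dynamics lemma may delete those coordinates or zero them out. In the first case the reduced matrix is a principal submatrix, which by Cauchy interlacing has $\lambda_{\max}$ at most that of the full matrix. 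In the second case the rows for pinned events are zero, because $\nu(w\ol z\mid u\ol v)=\nu(w\ol z)$ when $u\ol v$ holds almost surely. Either way the hypothesis $\lambda_{\max}(\pcor{\nu})\le C$ gives the required bound.

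\textbf{Step 3 (conclusion).} \Cref{lem:spectral-stable-matrix-form} then yields spectral stability of $(Y'_t)$ with rate $C$ at time $\theta$, and \Cref{prop:spectral-stable-compare} transfers it to $(Y_t)$. I expect the main obstacle to be the bookkeeping in Steps 1 and 2, not any analytic estimate. One part is ensuring that the normalization of the correlation matrix used in \Cref{lem:spectral-stable-matrix-form} matches \Cref{def-2nd-correlation-matrix}: it is written as a conditional difference, not as a covariance, and the two share nonzero spectrum after similarity by a diagonal marginal matrix. The other part is the handling of degenerate or pinned event coordinates; I will check that case against the exact statement of \Cref{lem:spectral-stable-matrix-form} first.
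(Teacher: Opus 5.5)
Your proposal follows the paper's own route: transfer through \Cref{prop:spectral-stable-compare} to the auxiliary vertex-field process $(Y'_t)$, apply \Cref{lem:spectral-stable-matrix-form}, and identify the law of the event indicators under $\frac{1}{1-\theta}\otimes\mu^{\tau}$. That lemma's condition $\Cov\preceq C\,\diag(\text{marginals})$ is equivalent to your $\lambda_{\max}$ bound on $\pcor{\frac{1}{1-\theta}\otimes\mu^{\tau}}$; pinned or impossible events contribute only zero rows and columns, so no deletion or interlacing is needed. One slip to fix: that law is $\bigl((1-\theta)*\pi_{\mathcal{A}}\bigr)^{T}$, not $\bigl(\frac{1}{1-\theta}*\pi_{\mathcal{A}}\bigr)^{T}$, because $|\mathcal{A}(\sigma)|=|E|-m(\sigma)$ turns the weight $(1-\theta)^{|\mathcal{A}(\sigma)|}$ from \eqref{eq:FD-posterior-calc} into $(1-\theta)^{-m(\sigma)}$ up to normalization, whereas the tilt you wrote would give $(1-\theta)\otimes\mu^{\tau}$.
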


Finally, we note that the eigenvalues of the second-order correlation matrix are closely related to coupling independence (see \Cref{def:CI}), which gives the following lemma.
\begin{lemma}[Spectral stability from coupling independence]
\label{lem:cond-stable-pFD}
Let $(Y_t)_{t\in[0,1]}$ be the edge-field denoising process for $\mu$ on a graph $G = (V,E)$ with maximum degree $\Delta$. 
Fix $\theta \in [0,1)$.
If $\frac{1}{1-\theta}\otimes\mu$ is $C$-coupling independent, then $(Y_t)_{t\in[0,1]}$ is $(2\Delta C)$-spectrally stable at time $\theta$.
\end{lemma}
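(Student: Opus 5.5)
We reduce to \Cref{lem:pcor-to-stable}. It suffices to show that for every $\Lambda\subseteq V$ and every feasible pinning $\tau$,
\[
\pSI{\nu^{\tau}} \le 2\Delta C, \qquad \text{where } \nu:=\tfrac{1}{1-\theta}\otimes\mu .
\]
Tilting interactions and pinning commute: $\frac{1}{1-\theta}\otimes(\mu^\tau)=\nu^\tau$, since the tilt is a product of edge weights. Coupling independence as in \Cref{def:CI} is a statement about all feasible pinnings, so it is inherited by $\nu^\tau$. We may therefore fix a single $C$-coupling independent distribution $\nu$ (already pinned) and bound $\lambda_{\max}(\pcor{\nu})$.

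The plan has three steps.

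\textbf{Step 1: similarity to a symmetric matrix.} Although $\pcor{\nu}$ is not symmetric, it satisfies
\[
\nu(u\ol v)\,\pcor{\nu}(uv,wz)=\nu(u\ol v\wedge w\ol z)-\nu(u\ol v)\nu(w\ol z),
\]
which is the covariance of the indicators $\mathbf 1[u\ol v]$ and $\mathbf 1[w\ol z]$. So $\pcor{\nu}=D^{-1}\Sigma$ with $D=\diag(\nu(u\ol v))$ and $\Sigma$ positive semidefinite; we restrict to rows with $\nu(u\ol v)>0$, the other rows being zero. Hence all eigenvalues of $\pcor{\nu}$ are real and nonnegative, and $\lambda_{\max}$ is at most any induced operator norm. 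We will use the $\ell_\infty$ row-sum norm: it suffices to show that for each $(u,v)$,
\[
\sum_{(w,z)\in\vec E}\abs{\nu(w\ol z\mid u\ol v)-\nu(w\ol z)}\le 2\Delta C .
\]

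\textbf{Step 2: a coupling bound.} Write $\nu$ as the mixture $\nu(u\ol v)\,\nu^{u\ol v}+(1-\nu(u\ol v))\,\nu^{\neg u\ol v}$. Then $\nu^{u\ol v}$ is $\nu$ pinned with $\sigma_u=1,\sigma_v=0$, and $\nu^{\neg u\ol v}$ is a mixture of $\nu$ pinned at the other three values of $(\sigma_u,\sigma_v)$. For two pinnings differing at one vertex, coupling independence yields a coupling with expected Hamming distance at most $C$. Chaining along the path $(1,0)\to(1,1)\to(0,1)$ or $(1,0)\to(0,0)$, with each step changing one vertex, gives a coupling of $\nu^{u\ol v}$ with $\nu^{\neg u\ol v}$ with expected Hamming distance at most $2C$. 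Since $\nu(w\ol z\mid u\ol v)-\nu(w\ol z)=(1-\nu(u\ol v))(\nu^{u\ol v}(w\ol z)-\nu^{\neg u\ol v}(w\ol z))$, each summand is at most $\Pr{\mathbf 1[X\in w\ol z]\ne\mathbf 1[X'\in w\ol z]}$ under the coupling. That event requires $X$ and $X'$ to disagree at $w$ or at $z$. Each vertex lies in at most $2\Delta$ oriented edges, and a disagreement at $x$ affects only the events $x\ol z$ and $w\ol x$ for neighbours; a careful count makes this factor exactly $\Delta$ per disagreeing vertex. Summing over $(w,z)$ then bounds the row sum by $\Delta\cdot\E{d_H(X,X')}\le 2\Delta C$. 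If the counting only gives $2\Delta$ per vertex, the resulting constant is $4\Delta C$, and we would recover the factor of $2$ by sharpening the chaining instead.

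\textbf{Main obstacle.} The delicate point is matching constants in the coupling step. Coupling independence in \Cref{def:CI} is likely stated for a single-vertex pinning change, while conditioning on the edge event $u\ol v$ versus its complement involves two vertices and a mixture. We must also check that every intermediate pinning in the chain is feasible, and otherwise route the chain differently using the other path. Once the row-sum bound holds for every pinned $\nu^\tau$, \Cref{lem:pcor-to-stable} concludes spectral stability with rate $2\Delta C$ at time $\theta$.
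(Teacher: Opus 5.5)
Your argument is correct and has the same skeleton as the paper's proof. You reduce via \Cref{lem:pcor-to-stable} to bounding $\lambda_{\max}(\pcor{\pi})$ for the pinned, tilted distribution, bound that by the maximum row sum, turn each row sum into an expected Hamming distance under a coupling built from two single-site applications of coupling independence, and pay $\Delta$ per disagreeing vertex.

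The difference is how the two single-site steps are organized. You decompose $\nu$ into $u\ol v$ and its complement, and path-couple $\nu^{u\ol v}$ with the mixture of the other three $(\sigma_u,\sigma_v)$-pinnings. The paper instead telescopes through $\pi(\cdot\mid u)$. It writes $\pi(w\ol z\mid u\ol v)-\pi(w\ol z\mid u)=\pi(v\mid u)\bigl(\pi(w\ol z\mid u\ol v)-\pi(w\ol z\mid uv)\bigr)$ and $\pi(w\ol z\mid u)-\pi(w\ol z)=\pi(\ol u)\bigl(\pi(w\ol z\mid u)-\pi(w\ol z\mid\ol u)\bigr)$. Each term then uses one coupling-independence step, and its weight vanishes whenever the needed pinning is infeasible.

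That gives the paper feasibility for free, and your route does not have it. Your fallback of rerouting through the other path fails when $(1,0)$ and $(0,1)$ are feasible but both $(1,1)$ and $(0,0)$ are infeasible, which can happen for a general $\mu$. The patch is easy but should be stated: in that case $\sigma_v=1-\sigma_u$ almost surely, so $\nu^{u\ol v}=\nu^{u\gets 1}$ and $\nu^{\ol u v}=\nu^{u\gets 0}$, and a single step on $\Lambda=\{u\}$ costs $C$.

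The $\Delta$-per-vertex count you assert does hold, and it is the step that must be written out. For each undirected edge $\{w,z\}$, the indicators of $w\ol z$ and $z\ol w$ together differ in at most as many places as there are endpoints where $X$ and $X'$ disagree. If only $w$ disagrees, exactly one of the two flips, depending on the common value at $z$. Summing over $E$ gives at most $\sum_x \deg(x)\,\mathbf 1[X_x\ne X'_x]\le \Delta\,\dist(X,X')$.

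So your hedge about $4\Delta C$ is unnecessary. It also would not have been rescued by sharpening the chaining: coupling independence alone gives nothing better than $2C$ for the $(0,1)$ branch, and that branch may carry almost all the weight of your mixture.
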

\begin{proof}
Fix an arbitrary feasible pinning $\tau$, and let $\pi:=\frac{1}{1-\theta}\otimes\mu^{\tau}$. 
Note that $\pi$ is $C$-coupling independent since $\frac{1}{1-\theta}\otimes\mu$ is $C$-coupling independent.
By \Cref{lem:pcor-to-stable}, it suffices to show,
\[
\pSI{\pi}\le 2\Delta C.
\]

For $v\in V$, we use $v$ (resp.\ $\ol{v}$) to denote the event $\{\sigma \mid \sigma_v = 1\}$ (resp.\ $\{\sigma \mid \sigma_v = 0\}$).
The maximum eigenvalue is upper bounded by the maximum row sum:
\begin{align}
\label{eq:pSI-rowsum}
\pSI{\pi} \le \max_{uv}\sum_{wz}\abs{\pcor{\pi}(uv,wz)}
= \max_{uv}\sum_{wz}\abs{\pi(w\ol{z}\mid u\ol{v})-\pi(w\ol{z})}.
\end{align}
By the triangle inequality, we have
\begin{align*}
&\abs{\pi(w\ol{z}\mid u\ol{v})-\pi(w\ol{z})}\\
\le\,
&\abs{\pi(w\ol{z}\mid u\ol{v})-\pi(w\ol{z}\mid u)} + \abs{\pi(w\ol{z}\mid u)-\pi(w\ol{z})}\\
=\,
&\pi(v \mid u) \cdot \abs{\pi(w\ol{z}\mid uv)-\pi(w\ol{z}\mid u\ol{v})} + \pi(\ol{u}) \cdot \abs{\pi(w\ol{z}\mid u)-\pi(w\ol{z}\mid\ol{u})}\\
\le\, 
&\abs{\pi(w\ol{z}\mid uv)-\pi(w\ol{z}\mid u\ol{v})} + \abs{\pi(w\ol{z}\mid u)-\pi(w\ol{z}\mid\ol{u})}.
\end{align*}
Summing over all $wz \in \vec{E}$ gives
\begin{align}
\label{eq:triangle-rowsum}
\sum_{wz}\abs{\pcor{\pi}(uv,wz)}
\le& \sum_{wz}\abs{\pi(w\ol{z}\mid uv)-\pi(w\ol{z}\mid u\ol{v})}+\sum_{wz}\abs{\pi(w\ol{z}\mid u)-\pi(w\ol{z}\mid \ol{u})}.
\end{align}
We bound the second term via the coupling lemma:
\begin{align}
\label{eq:rowsum-pair-hamming-distance}
\sum_{wz}\abs{\pi(w\ol{z}\mid u)-\pi(w\ol{z}\mid \ol{u})}
=& \sum_{wz}\inf_{\xi}\E[(X,Y)\sim\xi]{\abs{\*{1}[X \in w\ol{z}] - \*1[Y \in w\ol{z}]}}\notag\\
\le& \inf_{\xi}\E[(X,Y)\sim\xi]{\sum_{wz}\abs{\*{1}[X \in w\ol{z}] -  \*1[Y \in w\ol{z}] }},
\end{align}
where the infimum is over all couplings $\xi$ between $\pi^{u\gets 1}$ and $\pi^{u\gets 0}$.

For any $x,y\in\set{0,1}^V$, we have
\begin{align}
\label{eq:depair-hamming-distance}
&\sum_{wz\in\vec{E}}\abs{\*{1}[x \in w\ol{z}] - \*1[y \in w\ol{z}]}\notag\\
=\, 
&\sum_{\set{w,z}\in E} \tp{\abs{\*{1}[x \in w\ol{z}] -  \*1[y \in w\ol{z}]} + \abs{\*{1}[x \in z\ol{w}] - \*1[y \in z\ol{w}] } } \notag\\
\le\, 
&\sum_{\set{w,z}\in E} \tp{ \abs{\*{1}[x \in w] - \*1[ y \in w]} + \abs{\*{1}[x \in z] - \*1[y \in z] } }\notag\\
\le\, 
&\Delta\cdot \dist(x,y),
\end{align}
where $\dist(\cdot,\cdot)$ denotes the Hamming distance, and the inequality follows by a case analysis.

Combining \eqref{eq:rowsum-pair-hamming-distance}, \eqref{eq:depair-hamming-distance}, and the condition that $\pi$ is $C$-coupling independent,
\begin{align*}
\sum_{wz}\abs{\pi(w\ol{z}\mid u)-\pi(w\ol{z}\mid \ol{u})}
\le \inf_{\xi}\E[(x,y)\sim\xi]{\Delta\cdot \dist(x,y)}
\le \Delta C.
\end{align*}
The same argument yields
\begin{align*}
\sum_{wz}\abs{\pi(w\ol{z}\mid uv)-\pi(w\ol{z}\mid u\ol{v})} \le \Delta C.
\end{align*}
Substituting into \eqref{eq:pSI-rowsum} and \eqref{eq:triangle-rowsum}, we conclude
\begin{align*}
\pSI{\pi} &\le 2\Delta C. \qedhere
\end{align*}
\end{proof}


\ifthenelse{\boolean{Ent}}{%
\subsection{Invariant of $\phi$-entropic stability} \label{sec:proof-spectral-stable-compare}
}{%
\subsection{Invariant of spectral stability} \label{sec:proof-spectral-stable-compare}
}
In this section, we prove \Cref{prop:spectral-stable-compare}. 
In fact, the proposition holds in a more general setting. In the following definition, we generalize \Cref{def:FD-event-denoising} to a denoising process for general localization schemes.

\begin{definition}[Projected denoising process]\label{def:denoising-subset-of-vars}
Let $N$ be a finite ground set. 
Let $\pi$ be a distribution over $\set{0,1}^{N}$, and let $\Lambda \subseteq N$. 
Let $(Y_t^\Lambda)_{t \in [0,1]}$ be a denoising process with target distribution $\pi_{\Lambda}$.

For any $V \subseteq N$, the \emph{projected denoising process induced by $(Y_t^\Lambda)_{t \in [0,1]}$ onto $V$}, denoted by $(Y_t^V)_{t \in [0,1]}$, is a denoising process with target distribution $\pi_V$, defined as follows:
\begin{itemize}
    \item sample $Y_1^V \sim \pi_V(\cdot \mid Y_1^\Lambda)$;
    \item for all $t \in [0,1)$, set $Y_t^V = Y_t^\Lambda$.
\end{itemize}
%
For convenience, when $V = N$, we omit the superscript and write $Y_t := Y_t^{N}$.
\end{definition}

We note that \Cref{prop:spectral-stable-compare} is a specialization of the following more general result which ensures the invariant of $\phi$-entropic stability under projection of denoising processes. 

\ifthenelse{ \boolean{Ent} }{%

\begin{proposition}[Invariant of $\phi$-entropic stability under projection] \label{prop:spectral-stable-denoising-subset}
Let $(Y_t^\Lambda)_{t \in [0,1]}$ and $(Y_t^V)_{t \in [0,1]}$ respectively denote the original and projected denoising processes  in \Cref{def:denoising-subset-of-vars}.

For any $\theta \in [0,1)$, if $(Y_t^\Lambda)_{t \in [0,1]}$ is $\phi$-entropically stable with rate $C$ at time $\theta$, then $(Y_t^V)_{t \in [0,1]}$ is also $\phi$-entropically stable with rate $C$ at time $\theta$.
\end{proposition}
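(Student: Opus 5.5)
The plan is to reduce everything to the observation that, for $t<1$, the two processes carry the same information and differ only by a fixed, time-independent Markov kernel applied at the end. Write $\mu^\Lambda_t := \mathrm{Law}(Y_1^\Lambda \mid Y_t^\Lambda)$ and $\mu^V_t := \mathrm{Law}(Y_1^V \mid Y_t^V)$. Let $K$ be the kernel from $\set{0,1}^\Lambda$ to $\set{0,1}^V$ given by $K(y,\cdot) := \pi_V(\cdot \mid X_\Lambda = y)$.

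\textbf{Step 1: posteriors are related by $K$.} By \Cref{def:denoising-subset-of-vars}, $Y_1^V$ is drawn from $K(Y_1^\Lambda,\cdot)$, using fresh randomness independent of the path $(Y_s^\Lambda)_{s<1}$. Also $Y_t^V = Y_t^\Lambda$ for every $t<1$. Hence, for every $t<1$, the natural filtrations of the two processes coincide up to time $t$, and
\[
\mu^V_t = \mu^\Lambda_t K .
\]
This holds pathwise, with the same kernel $K$ for every $t$. For a test function $f \ge 0$ on $\set{0,1}^V$, define $g := Kf$ on $\set{0,1}^\Lambda$, so $g(y) = \E{f(X_V)\mid X_\Lambda = y}$ under $\pi$. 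Then for all $s\in[\theta,1)$,
\[
\E[\mu^V_s]{f} = \E[\mu^\Lambda_s]{g}.
\]
So every martingale quantity built from the means of $f$ along the projected process is exactly the same quantity built from the means of $g$ along the original process.

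\textbf{Step 2: compare the right-hand sides by $\phi$-entropy contraction.} I will unfold the definition of $\phi$-entropic stability with rate $C$ at time $\theta$ from the preliminaries. As I recall it, this bounds a $\phi$-type divergence between $\E[\mu_{s}]{f}$ and $\E[\mu_\theta]{f}$, or its infinitesimal version, conditioned on $Y_\theta$, by $C$ times $\Ent[\mu_\theta][\phi]{f}$. Applying the hypothesis to $g$ on the $\Lambda$-process bounds the left-hand side for $f$ on the $V$-process by $C\cdot\Ent[\mu^\Lambda_\theta][\phi]{g}$. It then remains to show
\[
\Ent[\mu^\Lambda_\theta][\phi]{Kf} \le \Ent[\mu^V_\theta][\phi]{f}.
\]
This is the data-processing (Jensen) inequality for $\phi$-entropies. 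Since $\phi$ is convex, Jensen gives $\phi(Kf(y)) \le K\phi(f)(y)$ pointwise. Integrating against $\mu^\Lambda_\theta$ and using $\mu^\Lambda_\theta K = \mu^V_\theta$ gives $\E[\mu^\Lambda_\theta]{\phi(g)} \le \E[\mu^V_\theta]{\phi(f)}$, while the means agree: $\E[\mu^\Lambda_\theta]{g} = \E[\mu^V_\theta]{f}$. Subtracting $\phi$ of the common mean gives the claim. Combining with Step 1 yields stability of $(Y^V_t)$ with the same rate $C$.

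\textbf{Expected obstacle.} The delicate point is matching the exact form of the stability definition. I need to check that it is phrased purely in terms of the evolution of $\E[\mu_s]{f}$ (or $\mu_s(f)$ for test functions) together with the $\phi$-entropy of $f$ under $\mu_\theta$. Then the substitution $f\mapsto Kf$ is legitimate, including the requirement that $g$ lie in the admissible class (nonnegative, in the domain of $\phi$), which holds since $K$ preserves these. I also need the endpoint behaviour to be harmless, since the processes differ at $s=1$. If the definition integrates up to time $1$, I would use that $\E[\mu^V_1]{f}$ and $\E[\mu^\Lambda_1]{g}$ have the same conditional law given the path before time $1$, or alternatively take a limit $s\uparrow 1$.
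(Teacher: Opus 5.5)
Your proposal is correct and follows the paper's proof. The paper also sets $g(y)=\mathbb{E}[f(Y_1)\mid Y_1^\Lambda=y]$, which is your $Kf$. It then identifies the left-hand sides via the tower property together with $Y_t=Y_t^\Lambda$ for $t<1$, and closes with the law of total $\phi$-entropy, which is exactly your Jensen/data-processing step.

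Your endpoint worry is moot. The definition only involves $Y_{\theta+h}$ with $h\to 0^+$ and $\theta<1$, so the two processes never need to be compared at time $1$.
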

\begin{proof}
Without loss of generality, we assume $V = N$ and write $Y_t := Y_t^{N}$. 
This is without loss because any function $f_V : \Omega(\pi_V) \to \mathbb{R}$ can be naturally extended to a function $f : \Omega(\pi) \to \mathbb{R}$.

    Fix a function $f:\Omega(\pi) \to \mathbb{R}$ and $S$ with $\Pr{Y_\theta = S} = \Pr{Y_\theta^\Lambda = S} > 0$.
    Now, define a function $g:\Omega(\pi_\Lambda) \to \mathbb{R}$ by  
    \[
    g(y) := \E{f(Y_1) \mid Y_1^\Lambda = y}.
    \]
    
    Since $(Y_t^\Lambda)_{t\in [0,1]}$ is $\phi$-entropically stable with rate $C$ at time $\theta$, we have
    \begin{align*}
        \lim_{h\to 0^+} \Ent[][\phi]{\E{g(Y_1^\Lambda) \mid Y_{\theta + h}^\Lambda} \mid Y_\theta^\Lambda = S} \leq \frac{C}{1-\theta} \Ent[][\phi]{g(Y_1^\Lambda) \mid Y_\theta^\Lambda = S}.
    \end{align*}
    Note that  
    \[
    \E{g(Y_1^\Lambda) \mid Y_{\theta + h}^\Lambda} = \E{\E{f(Y_1) \mid Y_1^\Lambda} \mid Y_{\theta + h}^\Lambda} = \E{f(Y_1) \mid Y_{\theta + h}^\Lambda}.
    \]
    Moreover, $Y_t = Y_t^\Lambda$ for $t \in [0,1)$.
Therefore, the inequality above reduces to
\begin{align*}
        \lim_{h\to 0^+} \Ent[][\phi]{\E{f(Y_1) \mid Y_{\theta + h}} \mid Y_\theta = S} 
        &\leq \frac{C}{1-\theta} \Ent[][\phi]{\E{f(Y_1) \mid Y_1^\Lambda } \mid Y_\theta = S} \\
        &\leq \frac{C}{1-\theta} \Ent[][\phi]{f(Y_1) \mid Y_\theta = S},
    \end{align*}
where the last inequality follows from the law of total $\phi$-entropy.
\end{proof}

}{

\begin{proposition}[Invariant of spectral stability under projection] \label{prop:spectral-stable-denoising-subset}
Let $(Y_t^\Lambda)_{t \in [0,1]}$ and $(Y_t^V)_{t \in [0,1]}$ respectively denote the original and projected denoising processes  in \Cref{def:denoising-subset-of-vars}.

For any $\theta \in [0,1)$, if $(Y_t^\Lambda)_{t \in [0,1]}$ is spectrally stable with rate $C$ at time $\theta$, then $(Y_t^V)_{t \in [0,1]}$ is also spectrally stable with rate $C$ at time $\theta$.
\end{proposition}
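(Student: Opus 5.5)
The plan is to mirror the entropic proof given just above, with variance in place of $\phi$-entropy. As there, we may assume $V = N$ and write $Y_t := Y_t^N$: any test function $f_V$ on $\Omega(\pi_V)$ extends to a function on $\Omega(\pi)$ that depends only on the $V$-coordinates.

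Recall the variance form of spectral stability (from the preliminaries). The process $(Y_t)$ is spectrally stable with rate $C$ at time $\theta$ if, for every $f$ and every $S$ with $\Pr{Y_\theta = S}>0$,
\begin{align*}
\lim_{h\to 0^+} \frac{1}{h}\Var{\E{f(Y_1)\mid Y_{\theta+h}} \mid Y_\theta = S} \le \frac{C}{1-\theta}\Var{f(Y_1)\mid Y_\theta = S}.
\end{align*}
If the preliminaries normalize this differently (without the $1/h$, or as a bound on the covariance of $\mu_\theta$), the argument below is unchanged, since it only manipulates the two sides.

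Fix $f$ and $S$, and define $g(y) := \E{f(Y_1)\mid Y_1^\Lambda = y}$ on $\Omega(\pi_\Lambda)$. We apply the assumed stability of $(Y^\Lambda_t)$ to $g$. Then we make three identifications.
\begin{itemize}
\item For $t<1$, the conditioning on $Y^\Lambda_t$ is the same as on $Y_t$, because the two processes coincide there. This holds for both $t=\theta$ and $t=\theta+h$ with $h$ small.
\item We have $\E{g(Y_1^\Lambda)\mid Y^\Lambda_{\theta+h}} = \E{f(Y_1)\mid Y_{\theta+h}}$ by the tower property. This needs $Y_1$ to depend on the past only through $Y_1^\Lambda$: by construction $Y_1 \sim \pi(\cdot\mid Y_1^\Lambda)$ is drawn freshly given $Y_1^\Lambda$, so $Y_1$ is conditionally independent of $(Y_t^\Lambda)_{t<1}$ given $Y_1^\Lambda$.
\item On the right-hand side, the law of total variance gives
\begin{align*}
\Var{\E{f(Y_1)\mid Y_1^\Lambda}\mid Y_\theta = S}\le \Var{f(Y_1)\mid Y_\theta = S}.
\end{align*}
\end{itemize}
Chaining these gives exactly the stability inequality for $(Y_t)$ with the same rate $C$.

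The only step needing care is the conditional-independence justification behind the tower identity; everything else is formal. For that step, I would spell out the joint law: first generate the full path of $Y^\Lambda$, then draw $Y_1$ given $Y_1^\Lambda$. Reading off this construction makes the conditional independence explicit.
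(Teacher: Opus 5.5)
Your proposal is correct and follows essentially the same route as the paper's proof: reduce to $V=N$, apply stability of $(Y^\Lambda_t)$ to $g(y)=\mathbb{E}[f(Y_1)\mid Y_1^\Lambda=y]$, use the tower identity together with $Y_t=Y_t^\Lambda$ for $t<1$, and finish with the law of total variance. Your explicit remark that $Y_1$ is conditionally independent of $(Y^\Lambda_t)_{t<1}$ given $Y_1^\Lambda$ makes precise a step the paper uses only implicitly, both in the tower identity and in the total-variance step.
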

\begin{proof}
Without loss of generality, we assume $V = N$ and write $Y_t := Y_t^{N}$. 
This is without loss because any function $f_V : \Omega(\pi_V) \to \mathbb{R}$ can be naturally extended to a function $f : \Omega(\pi) \to \mathbb{R}$.

    Fix a function $f:\Omega(\pi) \to \mathbb{R}$ and $S$ with $\Pr{Y_\theta = S} = \Pr{Y_\theta^\Lambda = S} > 0$.
    Now, define a function $g:\Omega(\pi_\Lambda) \to \mathbb{R}$ by  
    \[
    g(y) := \E{f(Y_1) \mid Y_1^\Lambda = y}.
    \]
    
    Since $(Y_t^\Lambda)_{t\in [0,1]}$ is $\phi$-entropically stable with rate $C$ at time $\theta$, we have
    \begin{align*}
        \lim_{h\to 0^+} \Var{\E{g(Y_1^\Lambda) \mid Y_{\theta + h}^\Lambda} \mid Y_\theta^\Lambda = S} \leq \frac{C}{1-\theta} \Var{g(Y_1^\Lambda) \mid Y_\theta^\Lambda = S}.
    \end{align*}
    Note that  
    \[
    \E{g(Y_1^\Lambda) \mid Y_{\theta + h}^\Lambda} = \E{\E{f(Y_1) \mid Y_1^\Lambda} \mid Y_{\theta + h}^\Lambda} = \E{f(Y_1) \mid Y_{\theta + h}^\Lambda}.
    \]
    Moreover, $Y_t = Y_t^\Lambda$ for $t \in [0,1)$.
Therefore, the inequality above reduces to
\begin{align*}
        \lim_{h\to 0^+} \Var{\E{f(Y_1) \mid Y_{\theta + h}} \mid Y_\theta = S} 
        &\leq \frac{C}{1-\theta} \Var{\E{f(Y_1) \mid Y_1^\Lambda } \mid Y_\theta = S} \\
        &\leq \frac{C}{1-\theta} \Var{f(Y_1) \mid Y_\theta = S},
    \end{align*}
where the last inequality follows from the law of total variance.
\end{proof}

}

\section{Mixing Times of Glauber Dynamics at Criticality}\label{sec:rapid-mixing-at-criticality}
Let $\mu$ be the Gibbs distribution of an antiferromagnetic two-spin system with parameters $(\beta,\gamma,\lambda)$ on an $n$-vertex graph of maximum degree $\Delta \ge 3$ satisfying the critical uniqueness condition (\Cref{cond:critical}).
%

We prove rapid mixing of the Glauber dynamics for $\mu$, as stated in \Cref{thm:critical-mixing}, by distinguishing the following two regimes for a suitably chosen threshold $\beta^* := 1 - \frac{1+\sqrt{2}}{\Delta}$:
\begin{enumerate}
    \item \emph{Edge-tilting regime (large $\sqrt{\beta\gamma}$):} $\sqrt{\beta\gamma} \in \left[\beta^*, \frac{\Delta-2}{\Delta}\right]$;
    \item \emph{Vertex-tilting regime (small $\sqrt{\beta\gamma}$):} $\sqrt{\beta\gamma} < \beta^*$.
\end{enumerate}
Note that these cases are exhaustive: If $\sqrt{\beta\gamma} > \frac{\Delta-2}{\Delta}$, then $(\beta,\gamma,\lambda)$ is $(\Delta-1)$-unique with a positive slack, and hence the criticality asserted by \Cref{cond:critical} cannot hold.


The case of large $\sqrt{\beta\gamma}$ can be resolved via the edge-field localization. 
Specifically, we establish the following mixing time bound for the Glauber dynamics at criticality.

\begin{lemma}[Mixing at criticality via edge-tilting]
\label{lem:mixing-2spin-2}
Let $\mu$ be the Gibbs distribution of an antiferromagnetic two-spin system on an $n$-vertex graph of maximum degree $\Delta \ge 3$ satisfying \Cref{cond:critical}.
%
Suppose that 
$$\sqrt{\beta\gamma} \in \left[\bar{\beta}, \frac{\Delta-2}{\Delta}\right]$$ 
for some $0<\bar{\beta}<\frac{\Delta-2}{\Delta}$.
Then the Glauber dynamics for $\mu$ has mixing time
\begin{align*}
\Tmix = O\tp{\tp{\log\tp{\lambda+\lambda^{-1}}+\Delta\log\tp{2+\beta^{-1}}}(\beta\gamma)^{-\Delta}\-{e}^{2k+\eps} n^{2k+2+\eps}},
\end{align*}
where $k=\Delta(1-\bar{\beta})$, and $\eps=2k\frac{(k+1)\Delta-k}{(\Delta-1)(\Delta-k)}$.

\end{lemma}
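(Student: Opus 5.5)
We run the edge-field localization and feed it into the boosting framework (\Cref{lem:boosting,lem:SS-to-AC-variance}). Let $(Y_t)_{t\in[0,1]}$ be the edge-field denoising process for $\mu$ on $G$ (\Cref{def:field-dynamics-on-edge}). By \Cref{prop:posterior-edge-field-dynamics}, $\mu_t=\mathrm{Law}(Y_1\mid Y_t)=\frac{1}{1-t}\otimes\mu^{Y_t}$. This is a pinned two-spin system with parameters $(\beta/(1-t),\gamma/(1-t),\lambda)$: the pinnings $\sigma_u=1,\sigma_v=0$ only fix vertices, and the external field is untouched. We stop at the time $t^*$ with $(1-t^*)^2=\beta\gamma$, where $\mu_{t^*}$ has $\beta'\gamma'=1$. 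This is a degenerate (product-like) two-spin system, since each edge weight factorizes, so the Glauber dynamics on any pinning of it mixes in $O(n\log n)$ times a factor controlled by $\log(\lambda+\lambda^{-1})+\Delta\log(2+\beta^{-1})$ and $(\beta\gamma)^{-\Delta}$, the latter coming from the marginal lower bounds under the tilted weights. This explains those prefactors in the bound.

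The main work is spectral stability along the path, for which we use \Cref{lem:cond-stable-pFD}. At time $t$ the tilted system has $\sqrt{\beta_t\gamma_t}=\sqrt{\beta\gamma}/(1-t)$, which is strictly larger than the critical value at which $\mu$ sits. Uniqueness is stable under pinning and, in the stated regimes of \Cref{cond:critical}, monotone in the relevant direction. We would compute the slack $\delta_t$ of $(\beta_t,\gamma_t,\lambda)$. Heuristically, $|f'(\hat x)|$ is proportional to $1-\beta\gamma$ up to the fixed-point dependence, so $\delta_t\gtrsim$ a linear function of $t$ with constant governed by $d=\Delta-1$ and $\bar\beta$. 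Concretely, we aim for $1-\delta_t\le \frac{\Delta-k'}{\Delta}$-type bounds giving $\delta_t\ge t\cdot\frac{\Delta-k}{\ldots}$ with $k=\Delta(1-\bar\beta)$.

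Then we need coupling independence with constant $C_t$ of order $\frac{1-\delta_t}{(\Delta-1)\delta_t}$-ish. \Cref{thm:optimal-SI} gives the tight SI; we need its coupling-independence version from \Cref{sec:new-CI}, presumably proved there with the same constant. We then obtain stability rate $2\Delta C_t$ at time $t$. We cap it by the trivial bound $O(n)$ near $t=0$ (using an integral cutoff at $1/n$) and integrate $\int_{1/n}^{t^*}\frac{2\Delta C_t}{1-t}\,dt$. If $2\Delta C_t\approx \frac{2k}{t}\cdot(1+O(\cdot))$, this integral gives $2k\log n$ plus the correction $\eps\log n$, and the $[0,1/n]$ piece contributes the $\mathrm{e}^{2k+\eps}$ constant. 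The boosting lemma multiplies $n^{2k+\eps}$ by the $O(n^2)$-type relaxation-to-mixing conversion and the $t^*$ mixing time, yielding $n^{2k+2+\eps}$.

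The main obstacle is the quantitative lower bound on the slack $\delta_t$ as a function of $t$, uniformly over $\lambda$ and over pinnings, with the sharp constant. The barrier discussion shows field-tilting gives only $\epsilon^{1/2}t$; we must show that edge-tilting gives order $t/k$. First we would analyze $|f'(\hat x)|=\frac{d(1-b)\hat x}{(\beta\hat x+1)(\hat x+\gamma)}$ as a function of $b=\beta\gamma$. We would use the inequality $(\beta x+1)(x+\gamma)\ge(1+\sqrt{b}x)^2$-style AM–GM, which gives $|f'|\le \frac{d(1-\sqrt b)}{1+\sqrt b}$ uniformly in $\lambda$. Writing $\sqrt{b_t}=\sqrt b/(1-t)$ and comparing with criticality, this yields the linear-in-$t$ slack with the claimed $k,\eps$ dependence.
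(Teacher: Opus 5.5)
Your architecture is the paper's: edge-field denoising with posterior $\frac{1}{1-t}\otimes\mu^{Y_t}$, stopping at $t^*=1-\sqrt{\beta\gamma}$ where the posterior is a product measure, stability via \Cref{lem:cond-stable-pFD} plus the coupling-independence bound \Cref{lem:optimal-CI}, then \Cref{lem:SS-to-AC-variance} and boosting. The gap is at the step you yourself flag as the main obstacle, and the tool you propose there does not work.

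The AM--GM estimate $(\beta x+1)(x+\gamma)\ge(1+\sqrt{\beta\gamma})^2x$ gives
\[
1-\delta_t\le U_t:=\frac{d\,(1-\sqrt{b_t})}{1+\sqrt{b_t}},\qquad \sqrt{b_t}=\frac{\sqrt{\beta\gamma}}{1-t}.
\]
This is the worst case over all $\lambda$ (it is tight at $\hat x=\sqrt{\gamma/\beta}$), so it carries no information from criticality. Moreover, $U_t<1$ only when $\sqrt{b_t}>\frac{d-1}{d+1}=\frac{\Delta-2}{\Delta}$.

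Now suppose $\sqrt{\beta\gamma}<\frac{\Delta-2}{\Delta}$. The lemma allows this, and the main theorem needs it, e.g.\ $\sqrt{\beta\gamma}=1-\frac{1+\sqrt2}{\Delta}$. Then your bound certifies no uniqueness at all for $t\le 1-\frac{\Delta\sqrt{\beta\gamma}}{\Delta-2}$; in that example this is a window of length $\frac{\sqrt2-1}{\Delta-2}$. On that window only the trivial bound $C(t)\le n$ is available. So $\int\frac{2\Delta C(t)}{1-t}\,\mathrm{d}t=\Theta(n)$ there, and the best your bounds give is $R=\mathrm{e}^{\Theta(n)}$. Your argument closes only in the degenerate case $\sqrt{\beta\gamma}=\frac{\Delta-2}{\Delta}$, where the critical field is the symmetric one and AM--GM is tight.

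The comparison with criticality you gesture at, $1-\delta_t\le U_t/U_0$ (using $1-\delta_0=1$), is in fact true and is equivalent to the paper's bound. But it requires the ratio $|f'(\hat x)|/U$ to be non-increasing as the edge activities are tilted at fixed $\lambda$. That is a statement about how the fixed point $\hat x$ moves, and it is the missing ingredient.

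The paper supplies it in \Cref{lem:interaction-to-unique-slack}, in four steps:
\begin{enumerate}
\item Symmetrize: $\hat\beta=\theta\sqrt{\beta\gamma}$, $\hat\lambda=\lambda(\beta/\gamma)^{(d+1)/2}\le 1$.
\item Set $\hat x=\tanh\eta$, $\hat\beta=\tanh\zeta$, so that $1-\delta=\frac{d\sinh 2\eta}{\sinh(2\eta+2\zeta)}$ and $\hat\lambda=\tanh\eta\,\tanh^d(\eta+\zeta)$.
\item Differentiate at fixed $\hat\lambda$. This gives $\mathrm{d}\eta=-\frac{1-\delta}{2-\delta}\,\mathrm{d}\zeta$, and then $\frac{\mathrm{d}\delta}{\mathrm{d}\hat\beta}\ge\frac{2(1-\delta)}{1-\hat\beta^2}$.
\item Integrate from $\delta=0$. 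The slack is at least $\frac{(\theta-1)\sqrt{\beta\gamma}}{1-\sqrt{\beta\gamma}}$ for $\theta=\frac{1}{1-t}$.
\end{enumerate}
Hence $C(t)=1+\frac{\Delta(1-\sqrt{\beta\gamma})}{(\Delta-1)\sqrt{\beta\gamma}\,t}$ (\Cref{lem:CI-under-otimes}), and $2\Delta\cdot\frac{\Delta(1-\sqrt{\beta\gamma})}{(\Delta-1)\sqrt{\beta\gamma}}\le\frac{2\Delta^2k}{(\Delta-1)(\Delta-k)}=2k+\eps$.

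\textbf{Bookkeeping.}
\begin{itemize}
\item At $t^*$ the posterior is an exact product measure with $1$-approximate tensorization, so no marginal bounds enter there.
\item The factor $(\beta\gamma)^{-\Delta}$ comes from the additive $2\Delta$ in the rate $2\Delta C(t)$, namely $\int_0^{1-\sqrt{\beta\gamma}}\frac{2\Delta}{1-t}\,\mathrm{d}t=\Delta\log\frac{1}{\beta\gamma}$. Your approximation $2\Delta C_t\approx\frac{2k}{t}$ drops this term.
\item The factor $\log(\lambda+\lambda^{-1})+\Delta\log(2+\beta^{-1})$ is $\frac{1}{n}\log\frac{1}{\mu_{\min}}$ in \eqref{eq:AT-implies-mixing}. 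This also supplies the second extra factor of $n$.
\end{itemize}
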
 
\begin{remark}
   In particular, when $\sqrt{\beta\gamma} \ge \bar{\beta}= 1 - O\!\left(\frac{1}{\Delta}\right)$, we have $(\beta\gamma)^{-\Delta} = O(1)$, and the exponent parameters $k$ and $\eps$ in \Cref{lem:mixing-2spin-2} satisfy $k = O(1)$ and $\eps = O\!\left(\frac{1}{\Delta}\right)$.
\Cref{lem:mixing-2spin-2} therefore proves an $n^{O(1)}$ mixing time at criticality in the regime $\sqrt{\beta\gamma} \in \left[1 - O\left(\frac{1}{\Delta}\right), \frac{\Delta-2}{\Delta}\right]$.
This includes the barrier case $[\frac{\Delta-2}{\Delta} - \epsilon, \frac{\Delta-2}{\Delta}]$ for small $\epsilon > 0$, 
which, as discussed in \Cref{sec:proof-main-overview}, posed a challenge for previous approaches.
We overcome this difficulty by introducing the edge-tilting field dynamics, a new localization approach that circumvents these  barriers. 
\end{remark}

On the other hand,
the case of small $\sqrt{\beta\gamma}$ can be resolved via the vertex-field localization, generalizing the analysis of the critical hardcore model~\cite{chen2025uniqueness}, and lifting the rapid mixing results for Glauber dynamics in the subcritical regime~\cite{chen2021rapid} to the critical threshold.
Its proof is based on the analysis of vertex-field dynamics and is deferred to \Cref{sec:mixing-2spin-1}.

\begin{lemma}[Mixing at criticality via vertex-tilting]
\label{lem:mixing-2spin-1}
Let $\mu$ be the Gibbs distribution of an antiferromagnetic two-spin system on an $n$-vertex graph of maximum degree $\Delta \ge 3$ satisfying \Cref{cond:critical}.
Suppose that 
$$\sqrt{\beta\gamma}\le \bar{\beta},$$
for some $\bar{\beta}\le \frac{\Delta-2.1}{\Delta}$.
Then the Glauber dynamics for $\mu$ has mixing time
\begin{align*}
\Tmix = O\tp{\tp{\log\tp{\lambda+\lambda^{-1}}+\Delta\log\alpha}n^{2\kappa+2}},
\end{align*}
where $\kappa=\sqrt{\frac{1-\bar{\beta}^2}{\beta_c^2-\bar{\beta}^2}}$, $\beta_c=\frac{\Delta-2}{\Delta}$, 
and 
$$\alpha=\begin{cases}\tp{2+\beta^{-1}}& \text{if }\beta>0,\\\tp{2+\gamma+\gamma^{-1}} & \text{if }\beta=0. \end{cases}$$ 
\end{lemma}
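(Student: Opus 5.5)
The proof runs the vertex-field (negative-field) localization of \cite{chen2021rapid,chen2022localization} and follows the template of the critical hardcore analysis of \cite{chen2025uniqueness}. The boosting argument from \Cref{sec:proof-main-overview} gives
\[
\Tmix(\mu)\lesssim \exp\Bigl(\int_0^{t^*}\sup_{Y_t}\norm{\Cov(\mu_t)}_2\,\mathrm{d}t\Bigr)\cdot\sup_{Y_{t^*}}\Tmix(\mu_{t^*}).
\]
Here $\mu_t=(1-t)*\mu^{\tau}$ is the Gibbs distribution with external field $(1-t)\lambda$ under some feasible pinning $\tau$.

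The first step is to quantify how the uniqueness slack grows as the field is tilted down. Let $\delta(t)$ be the $(\Delta-1)$-uniqueness slack of $(\beta,\gamma,(1-t)\lambda)$. I would show that when $\sqrt{\beta\gamma}\le\bar\beta<\beta_c$ and $\delta(0)=0$, we have $\delta(t)\ge t/\kappa'$ for an explicit $\kappa'$. The tool is the fixed-point parametrization $\lambda=\hat x\bigl(\tfrac{\hat x+\gamma}{\beta\hat x+1}\bigr)^{d}$ with $d=\Delta-1$. Uniqueness fails exactly on an interval of $\hat x$ between the two roots of the quadratic $d(1-\beta\gamma)x=(\beta x+1)(x+\gamma)$. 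Differentiating $\log\lambda$ and $|f'(\hat x)|$ with respect to $\hat x$ and comparing near the critical root, the ratio $\mathrm{d}\delta/\mathrm{d}\log\lambda$ should be controlled by the gap between $\sqrt{\beta\gamma}$ and $\beta_c$. This gives roughly $\delta\gtrsim -\log(1-t)\cdot\sqrt{(\beta_c^2-\beta\gamma)/(1-\beta\gamma)}$, and monotonicity in $\sqrt{\beta\gamma}\le\bar\beta$ yields the constant $\kappa$. One point needs care. When $\gamma\le1$, criticality sits at the upper root (field $\lambda$ on the side where decreasing $\lambda$ moves us into uniqueness), whereas for $\gamma>1$ one may need to use the symmetric flip or the regularity hypothesis. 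In either case, decreasing the field must move us away from the non-uniqueness interval, and I would check this direction separately.

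Next, convert slack into a covariance bound. By \Cref{thm:optimal-SI} applied to $\mu_t$ with pinning, $\lambda_{\max}(\Psi_{\mu_t^\tau})\le\frac{\Delta(1-\delta)}{(\Delta-1)\delta}$. For vertex-field localization, spectral stability at time $t$ has rate of the form $\frac{1}{1-t}\cdot O(\lambda_{\max}(\Psi))$, using the standard lemma relating the covariance of $\mu_t$ to the influence matrix. The time-integral then splits as $\int_0^{1/n}n\,\mathrm{d}t+\int_{1/n}^{t^*}\frac{C}{t\,\delta'(t)}\,\mathrm{d}t$, which with $\delta(t)\gtrsim t/\kappa$ gives $\kappa\log n+O(1)$ in the exponent. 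The sharpness of \Cref{thm:optimal-SI} matters here: it is what makes the leading constant exactly $\kappa$ rather than $\kappa$ times some factor, so it yields $n^{\kappa}$. Squaring from the variance/entropy conversion, or combining with the entropic version, produces the $n^{2\kappa}$ factor.

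Finally, choose $t^*$ constant (say $1/2$) so that $\mu_{t^*}$ has constant slack. The subcritical optimal mixing result of \cite{chen22optimal} then gives $\Tmix(\mu_{t^*})=O(n\log n)$-type bounds under every pinning. Converting the spectral gap to a mixing time costs the $\log(1/\mu_{\min})=O(n(\log(\lambda+\lambda^{-1})+\Delta\log\alpha))$ factor, which gives the stated prefactor and the remaining $n^{2}$. The choice of $\alpha$ reflects the bound on $\mu_{\min}$ for soft versus hard ($\beta=0$) constraints. I expect the main obstacle to be the first step: obtaining a clean linear-in-$t$ lower bound on $\delta(t)$ with the exact constant $\kappa$, uniformly for all critical $\lambda$ and both signs of $\gamma-1$. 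This is precisely where the behavior degenerates as $\bar\beta\to\beta_c$, and it is why the hypothesis $\bar\beta\le\frac{\Delta-2.1}{\Delta}$ is imposed.
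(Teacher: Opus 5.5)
Your overall plan matches the paper's: vertex-field localization, a stability rate of $1+\frac{\Delta(1-\delta)}{(\Delta-1)\delta}$ from the sharp bound of \Cref{thm:optimal-SI}/\Cref{lem:optimal-CI}, the cap at $n$, a subcritical endpoint, and the $nKR\log(1/\mu_{\min})$ conversion. However, your derivation of the exponent $2\kappa$ rests on two errors that happen to cancel.

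First, the slack does not grow like $-\log(1-t)\cdot\sqrt{(\beta_c^2-\beta\gamma)/(1-\beta\gamma)}$. Parametrize by the fixed point $x$, so that $\lambda(x)=x\bigl(\frac{x+\gamma}{\beta x+1}\bigr)^{D}$ and $1-\delta(x)=D\psi(x)$. Then $\frac{\mathrm{d}\log\lambda}{\mathrm{d}\log x}=2-\delta(x)$ and $\frac{\mathrm{d}\log(1-\delta)}{\mathrm{d}\log x}=G(x)$. At criticality this gives $\frac{\mathrm{d}\delta}{\mathrm{d}(-\log(1-t))}=\frac{G(x_c)}{2}=\frac{\Delta}{2(\Delta-1)}\sqrt{\frac{\beta_c^2-\beta\gamma}{1-\beta\gamma}}$. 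In words, multiplying the field by $\mathrm{e}^{-s}$ multiplies the fixed point only by about $\mathrm{e}^{-s/2}$. For example, for the hardcore model with $\Delta=3$ one gets $\delta\approx s/4$, not $s/3$. Hence the stability rate is $\approx 2\kappa_0/t$ with $\kappa_0=\sqrt{(1-\beta\gamma)/(\beta_c^2-\beta\gamma)}\le\kappa$. The exponent $\int_0^{t_0}\frac{C(t)}{1-t}\,\mathrm{d}t$ is therefore already $2\kappa\log n+O(1)$.

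Second, there is no ``squaring from the variance/entropy conversion.'' \Cref{lem:SS-to-AC-variance} gives $R=\exp\bigl(\int_0^{t_0}\frac{C(t)}{1-t}\,\mathrm{d}t\bigr)$, \Cref{lem:boosting} gives AT constant $KR$, and \eqref{eq:AT-implies-mixing} is linear in $KR$. So your first step, as stated, asserts a bound that is false and would yield $n^{\kappa}$, and the squaring step is invented to restore $n^{2\kappa}$.

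The paper also avoids what you call the main obstacle: it never proves a global linear-in-$t$ lower bound on $\delta$. The numerator of $\delta(x)$ is the quadratic with roots $x_c$ and $\gamma/(\beta x_c)$, so exactly $\frac{1-\delta(x)}{\delta(x)}=\frac{D(1-\beta\gamma)x}{(x_c-x)(\gamma x_c^{-1}-\beta x)}$. For $x\le x_c$, the identity in \Cref{lem:coefV} then gives $C\le 1+\kappa\min\{n,\frac{x_c}{x_c-x}\}$. Next, $\frac{\mathrm{d}t}{1-t}=|\mathrm{d}\log\lambda|\le 2\,|\mathrm{d}\log x|$, which is the actual source of the factor $2$. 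This turns the exponent into $2\kappa\int_{x_0}^{x_c}\min\{n,\frac{x_c}{x_c-x}\}\frac{\mathrm{d}x}{x}+2\log\frac{x_c}{x_0}\le 2\kappa\log n+O(1)$.

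The hypothesis $\bar\beta\le\frac{\Delta-2.1}{\Delta}$ serves only to give $\kappa\le 10$. That yields slack $1-\mathrm{e}^{-1}$ at $x=\mathrm{e}^{-10}x_c$ (\Cref{lem:field-to-unique-slack-good}), $16$-AT at the endpoint via \Cref{lem:easy-regime}, and $\log(x_c/x_0)=O(1)$. Your endpoint $t^*=1/2$ with \cite{chen22optimal} is a fine substitute, provided you invoke approximate tensorization of variance (an $\Omega(1/n)$ Glauber gap) under the pinnings the process produces. \Cref{lem:boosting} needs AT, not a mixing-time bound.

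Finally, the spin flip is not tied to $\gamma>1$. For any $\gamma$ the critical field can sit at either end of the non-uniqueness interval. The paper flips so that $\lambda\le(\gamma/\beta)^{\Delta/2}$, which makes $x_c\le\sqrt{\gamma/\beta}$ the smaller root and guarantees that lowering the field increases the slack.
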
 
 In particular, when $\sqrt{\beta\gamma}\le 1-\frac{c}{\Delta}$, the exponent parameter $\kappa$ in \Cref{lem:mixing-2spin-1}
satisfies  $\kappa=\sqrt{\frac{c}{c-2}}+O(\frac{1}{\Delta})$  and the mixing time is $T_{\-{mix}}=\Tilde{O}\tp{n^{2\sqrt{\frac{c}{c-2}}+2+O(\frac{1}{\Delta})}}$.




\begin{proof}[Proof of \Cref{thm:critical-mixing}]
Combining \Cref{lem:mixing-2spin-1,lem:mixing-2spin-2}, and choosing  $\bar{\beta}$ optimally as
    $$\bar{\beta}=\beta^*=1-\frac{1+\sqrt{2}}{\Delta},$$
the polynomial exponents in \Cref{lem:mixing-2spin-1} and \Cref{lem:mixing-2spin-2} match, giving
\[
2k + 2 +\eps = 4+2\sqrt{2} + O\!\left(\frac{1}{\Delta}\right)\quad\text{ and }\quad 2\kappa + 2 =4+ 2\sqrt{2} + O\!\left(\frac{1}{\Delta}\right).\qedhere
\]
\end{proof}

In the remainder of this subsection, we prove \Cref{lem:mixing-2spin-2}.
The key step is to establish approximate  conservation of variance  for the edge-field dynamics, stated as follows.

\begin{lemma}[Approximate conservation of variance for edge-field dynamics]
\label{lem:AC-var-edge-FD}
Let $\mu$ be the Gibbs distribution of an antiferromagnetic two-spin system on a graph $G$ satisfying \Cref{cond:critical}. 
Let $(Y_t)_{t\in[0,1]}$ be the edge-field denoising process for $\mu$ on $G$, as in \Cref{def:field-dynamics-on-edge}.
Then $(Y_t)_{t\in[0,1]}$ satisfies $R$-approximate conservation of variance up to time $1-\sqrt{\beta\gamma}$, where
\begin{align*}
R = (\beta\gamma)^{-\Delta}(\-{e}n)^{\frac{2\Delta^2(1-\sqrt{\beta\gamma})}{(\Delta-1)\sqrt{\beta\gamma}}}.
\end{align*}
\end{lemma}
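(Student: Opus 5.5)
The plan is to use the standard localization route: first prove spectral stability of the edge-field denoising process at every time $t\in[0,1-\sqrt{\beta\gamma})$, then integrate the stability rate to obtain approximate conservation of variance (via \Cref{lem:SS-to-AC-variance}). At time $t$, \Cref{prop:posterior-edge-field-dynamics} identifies the posterior as $\frac{1}{1-t}\otimes\mu^{Y_t}$. This is a two-spin system with parameters $(\frac{\beta}{1-t},\frac{\gamma}{1-t},\lambda)$ together with a pinning, since the events $A_{uv}$ pin $\sigma_u=1$ and $\sigma_v=0$. For $t<1-\sqrt{\beta\gamma}$ the product of the tilted activities is $\beta\gamma/(1-t)^2<1$, so the system stays antiferromagnetic, and the endpoint $t=1-\sqrt{\beta\gamma}$ is exactly where it reaches $\beta\gamma=1$. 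That explains the cutoff time in the statement.

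\textbf{Step 1 (slack along the path).} Set $s=\sqrt{\beta\gamma}/(1-t)$, the tilted geometric mean. I need a lower bound on the uniqueness slack $\delta_t$ of $(\frac{\beta}{1-t},\frac{\gamma}{1-t},\lambda)$ at degree $\Delta-1$. At $t=0$ the system is critical by \Cref{cond:critical}.

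The key computation is to parametrize $|f'(\hat x)| = \frac{d(1-\beta\gamma)\hat x}{(\beta\hat x+1)(\hat x+\gamma)}$ and observe that this quantity is at most $d\frac{1-s}{1+s}$. The maximum over $\hat x$ is attained at $\hat x=\sqrt{\gamma/\beta}$, giving $d(1-s^2)/(1+s)^2$. Since the original system is critical, we need finer control, so I would compare the actual values along the path. The tilt scales $\beta,\gamma$ by the same factor, so it multiplies the normalized quantity $\frac{(1-\beta\gamma)\hat x}{(\beta\hat x+1)(\hat x+\gamma)}$ by a computable factor that depends on the fixed point. Tracking this factor should give $1-\delta_t \le \frac{1-s^2}{1-\beta\gamma}$ up to fixed-point movement, hence $\delta_t \gtrsim \frac{t}{\sqrt{\beta\gamma}}\cdot$const. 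I will aim to show $\delta_t\ge \frac{(1-\sqrt{\beta\gamma})^{-1}\cdots}{}$; concretely, whatever bound is needed so that the final exponent is $\frac{2\Delta^2(1-\sqrt{\beta\gamma})}{(\Delta-1)\sqrt{\beta\gamma}}$.

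\textbf{Step 2 (second-order correlation).} Rather than invoke \Cref{lem:cond-stable-pFD}, whose factor $2\Delta C$ is too lossy for the exponent, I would use \Cref{lem:pcor-to-stable} directly. The goal is to bound $\pSI{\cdot}$ by roughly $\frac{2\Delta}{\Delta-1}\cdot\frac{1-\delta_t}{\delta_t}$, using the tight SI bound \Cref{thm:optimal-SI}. The second-order correlation of the event $u\bar v$ decomposes into single-vertex influences from $u$ and from $v$, which accounts for the factor $2$. Pinnings are harmless because \Cref{thm:optimal-SI} holds under all feasible pinnings. Regularity in the $\gamma>1$ case is preserved by the tilt, since the graph is unchanged.

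\textbf{Step 3 (integration).} The rate $C_t/(1-t)$, capped at $n$ for $t\le 1/n$, is integrated over $[0,1-\sqrt{\beta\gamma}]$ to produce $\log R$. The $\log n$ coefficient should come out as $\frac{2\Delta^2(1-\sqrt{\beta\gamma})}{(\Delta-1)\sqrt{\beta\gamma}}$, and the near-$0$ part contributes the $\mathrm{e}$ factor. The $(\beta\gamma)^{-\Delta}$ factor I expect to arise from boundary effects of the pinning near vertices of low degree, or from converting second-order correlation into a variance bound with marginal lower bounds.

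\textbf{Main obstacle.} The hardest part is the quantitative slack bound $\delta_t=\Omega(t)$ in Step 1, which must hold uniformly over $\lambda$ and with the correct constant. This is exactly the barrier regime, and I would handle it by exploiting that the tilt moves $\sqrt{\beta\gamma}$ up while keeping $\lambda$ fixed.
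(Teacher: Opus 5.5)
Your outline matches the paper's: the posterior is the tilted pinned system, spectral stability comes from the edge-event correlation matrix, and the rate is integrated via \Cref{lem:SS-to-AC-variance}. But you never supply the slack bound (Step 1), and the correlation bound you do commit to (Step 2) is false.

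\textbf{Step 1 is only a target.} You need $\delta_t\ge\frac{t}{1-t}\cdot\frac{\sqrt{\beta\gamma}}{1-\sqrt{\beta\gamma}}$ for the $(\Delta-1)$-slack of $\frac{1}{1-t}\otimes\mu$; this is \Cref{lem:interaction-to-unique-slack}. Your bound $|f'(\hat x)|\le d\frac{1-s}{1+s}$ is vacuous at and near $t=0$, because criticality forces $\sqrt{\beta\gamma}\le\frac{d-1}{d+1}$. The phrase ``up to fixed-point movement'' names the entire difficulty rather than resolving it. The missing idea is as follows.
\begin{itemize}
\item Symmetrize: $\hat\lambda=\lambda(\beta/\gamma)^{(d+1)/2}$ is unchanged by the tilt, which only moves $\hat\beta=\sqrt{\beta\gamma}/(1-t)$.
\item Set $\hat x=\tanh\eta$ and $\hat\beta=\tanh\zeta$. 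Then $1-\delta=\frac{d\sinh 2\eta}{\sinh(2\eta+2\zeta)}$ and $\hat\lambda=\tanh\eta\,\tanh^d(\eta+\zeta)$.
\item Differentiate the fixed-point identity at constant $\hat\lambda$ to get $\mathrm{d}\eta=-\frac{1-\delta}{2-\delta}\,\mathrm{d}\zeta$. This gives $\frac{\mathrm{d}\delta}{\mathrm{d}\hat\beta}\ge\frac{2(1-\delta)}{1-\hat\beta^2}$, which you integrate from $\delta=0$.
\end{itemize}
Your guessed inequality $1-\delta_t\le\frac{1-s^2}{1-\beta\gamma}$ does follow from this ODE, but you give no argument for it.

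\textbf{Step 2 is wrong.} The factor $2\Delta$ in \Cref{lem:cond-stable-pFD} is not lossy. The target exponent is exactly $2\Delta H$ with $H=\frac{\Delta(1-\sqrt{\beta\gamma})}{(\Delta-1)\sqrt{\beta\gamma}}$. Here $1+H/t$ is the coupling-independence constant of $\frac{1}{1-t}\otimes\mu$, obtained from \Cref{lem:optimal-CI} plus the slack bound (\Cref{lem:CI-under-otimes}).

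No bound of order $1/\delta_t$ without a factor $\Delta$ can hold. A product measure has slack $\delta=1$, so your bound $\frac{2\Delta}{\Delta-1}\cdot\frac{1-\delta}{\delta}$ gives $0$. Yet on a $\Delta$-regular graph with $\Pr[X_v=1]=p$, the all-ones test vector gives Rayleigh quotient
\[
\mathrm{Var}(B)/\mathbb{E}[B]=(1-q)+(\Delta-1)(1-2q),\qquad q=2p(1-p),
\]
where $B$ is the number of bichromatic edges. This is about $\Delta$ for small $p$, because the $\Delta$ edge events at a vertex are intrinsically correlated.

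Moreover, $w\overline{z}$ is a two-vertex event, so vertex-marginal influences (spectral independence) do not control $\lvert\pi(w\overline{z}\mid u)-\pi(w\overline{z}\mid\overline{u})\rvert$. You need a coupling with small expected Hamming distance, i.e.\ coupling independence. After that, each vertex disagreement costs up to $\Delta$ edge-event disagreements, which is exactly where the $2\Delta$ comes from.

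\textbf{Step 3 is reverse-engineered.} Your Step 2 rate combined with the correct slack would give a $\log n$ coefficient $\Delta$ times smaller than the one you claim to recover. The factor $(\beta\gamma)^{-\Delta}$ does not come from boundary or marginal effects. It is simply $\exp\bigl(2\Delta\int_0^{1-\sqrt{\beta\gamma}}\frac{\mathrm{d}t}{1-t}\bigr)$, produced by the additive $1$ in the rate $2\Delta\min\{n,1+H/t\}$. The $\mathrm{e}$ comes from the cut-off at $t_0=H/(n-1)$, where $(n-1)(1-t_0)\log\frac{1}{1-t_0}\le H$.
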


Assuming \Cref{lem:AC-var-edge-FD}, we now prove \Cref{lem:mixing-2spin-2} and \Cref{thm:mixing-edge-field-dynamics}.

\begin{proof}[Proof of \Cref{lem:mixing-2spin-2}]
By \Cref{lem:AC-var-edge-FD}, the edge-field denoising process $(Y_t)_{t\in[0,1]}$ satisfies $R$-approximate conservation of variance up to time $1-\sqrt{\beta\gamma}$, where
\begin{align*}
R = (\beta\gamma)^{-\Delta}(\-{e}n)^{\frac{2\Delta^2(1-\sqrt{\beta\gamma})}{(\Delta-1)\sqrt{\beta\gamma}}}
\le (\beta\gamma)^{-\Delta}(\-{e}n)^{2k+\eps},
\end{align*}
with $k=\Delta(1-\bar{\beta})$ and $\eps=2k\frac{(k+1)\Delta-k}{(\Delta-1)(\Delta-k)}$. 
The inequality follows by assuming $\sqrt{\beta\gamma}\ge \bar{\beta}$.

By \Cref{prop:posterior-edge-field-dynamics}, when $\theta = 1 - \sqrt{\beta\gamma}$, for any fixed $Y_\theta$ we have
\[
\mathrm{Law}(Y_1 \mid Y_\theta) = \frac{1}{1-\theta} \otimes \mu^{Y_\theta},
\]
where $\mu^{Y_\theta} = \mu^\tau$ for a feasible pinning $\tau$ determined by $Y_\theta$.

With $\theta = 1 - \sqrt{\beta\gamma}$, the distribution $\frac{1}{1-\theta} \otimes \mu$ corresponds to a two-spin system with parameters $(\beta',\gamma',\lambda)$, where $\beta'=\frac{\beta}{\sqrt{\beta\gamma}}$ and $\gamma'=\frac{\gamma}{\sqrt{\beta\gamma}}$. 
Since $\beta'\gamma'=1$, the distribution $\frac{1}{1-\theta} \otimes \mu$ is a product distribution, and the same holds for $\frac{1}{1-\theta} \otimes \mu^\tau$. 
Hence, by \Cref{lem:AT-prod-dist}, $\mathrm{Law}(Y_1 \mid Y_\theta)$ satisfies $1$-approximate tensorization of variance.

Finally, applying \Cref{lem:boosting} together with \eqref{eq:AT-implies-mixing}, we obtain that the Glauber dynamics for $\mu$ has mixing time
\begin{align*}
\Tmix = nR\log\frac{1}{\mu_{\min}} = O\tp{\tp{\log\tp{\lambda+\lambda^{-1}}+\Delta\log\tp{2+\beta^{-1}}}(\beta\gamma)^{-\Delta}\-{e}^{2k+\eps} n^{2k+2+\eps}}.
\end{align*}
\end{proof}

\begin{proof}[Proof of \Cref{thm:mixing-edge-field-dynamics}]
Recall that the edge-field dynamics with tilt parameter $\theta$ corresponds to the  down-up chain $P_{1 \leftrightarrow 1-\theta}$ (see \Cref{def:LS-induce-down-up-walk})  induced by the edge-field denoising process defined in \Cref{def:field-dynamics-on-edge}.
The claimed mixing time bound then follows by combining \Cref{lem:AC-var-to-var-decay,lem:AC-var-edge-FD} with \eqref{eq:var-decay-implies-mixing}.
\end{proof}
\subsection{Approximate conservation of variance for edge-field dynamics}
\label{sec:conservation-var-edge-FD}

In this section, we prove \Cref{lem:AC-var-edge-FD}, establishing approximate conservation of variance for edge-field dynamics on critical antiferromagnetic two-spin systems.

By \Cref{lem:SS-to-AC-variance}, approximate conservation of variance follows from spectral stability, which is closely related to the notion of coupling independence defined in \Cref{def:CI}.

The following lemma establishes coupling independence under a tilted interaction.

\begin{lemma}\label{lem:CI-under-otimes}
Let $\mu$ be the Gibbs distribution of an antiferromagnetic two-spin system with parameters $(\beta,\gamma,\lambda)$ on a graph of maximum degree $\Delta \ge 3$ satisfying \Cref{cond:critical}.

Then for any $0 \le \theta \le 1 - \sqrt{\beta\gamma}$, the distribution $\frac{1}{1-\theta} \otimes \mu$ is $C$-coupling independent, where
\[
C=\tp{1+\frac{\Delta(1-\sqrt{\beta\gamma})}{(\Delta-1)\sqrt{\beta\gamma}\theta}}.
\]
\end{lemma}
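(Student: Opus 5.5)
The plan is to reduce the statement to the optimal SI / coupling-independence bound of \Cref{thm:optimal-SI} by showing that tilting the interaction by $\frac{1}{1-\theta}$ produces a \emph{subcritical} system whose slack is bounded below linearly in $\theta$. Write $\mu_\theta:=\frac{1}{1-\theta}\otimes\mu$. This is the two-spin system with parameters $(\beta_\theta,\gamma_\theta,\lambda)=(\frac{\beta}{1-\theta},\frac{\gamma}{1-\theta},\lambda)$ on the same graph. For $\theta\le 1-\sqrt{\beta\gamma}$ we have $\beta_\theta\gamma_\theta=\frac{\beta\gamma}{(1-\theta)^2}\le 1$, so it remains antiferromagnetic, or a product distribution at the endpoint, where $C$-CI is trivial. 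Regularity, needed when $\gamma_\theta>1$, is inherited from \Cref{cond:critical}. The case $\gamma\le1<\gamma_\theta$ would need separate handling, or a monotonicity remark. Coupling independence is closed under pinnings. So the task is to bound the slack $\delta_\theta$ of $(\beta_\theta,\gamma_\theta,\lambda)$ at degree $d=\Delta-1$ from below, and then apply the CI analogue of \Cref{thm:optimal-SI}, proved in \Cref{sec:new-CI} via the same SAW-tree/potential-function contraction. That analogue gives a bound of the form $1+\frac{\Delta(1-\delta_\theta)}{(\Delta-1)\delta_\theta}$, where the leading $1$ accounts for the pinned vertex itself in the Hamming distance.

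The key computation is then the following lower bound on the slack:
\[
\frac{1-\delta_\theta}{\delta_\theta}\le \frac{1-\sqrt{\beta\gamma}}{\sqrt{\beta\gamma}\,\theta}.
\]
To get it, I would parametrize by $x=\hat x_d$ and write $|f'(\hat x)|=\frac{d(1-b)\hat x}{(\beta\hat x+1)(\hat x+\gamma)}$ with $b=\beta\gamma$. It is convenient to use the symmetric substitution $\hat x=\sqrt{\gamma/\beta}\,y$, so that the denominator becomes $\sqrt{b}\,(\sqrt b y+1)(y+\sqrt b)/y$-type expressions depending only on $b$ and $y$. Then $|f'|\le \frac{d(1-b)}{(1+\sqrt b)^2}=d\frac{1-\sqrt b}{1+\sqrt b}$, maximized at $y=1$. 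Under the tilt, $\sqrt b$ becomes $\sqrt{b}/(1-\theta)$. I expect to show that the contraction $|f'_\theta(\hat x_\theta)|$ decreases in $\theta$ at least in proportion to $(1-\sqrt{b_\theta})$. This would use criticality at $\theta=0$, i.e.\ $|f'(\hat x)|=1$, together with the fact that $\frac{1-\sqrt{b_\theta}}{1-\sqrt b}$ is affine-like in $\theta$: for $\sqrt{b_\theta}=\frac{\sqrt b}{1-\theta}$, one has $1-\sqrt{b_\theta}=\frac{1-\theta-\sqrt b}{1-\theta}$, which equals $(1-\sqrt b)$ times $\bigl(1-\frac{\theta}{1-\sqrt b}\bigr)/(1-\theta)$. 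Combined with how $\hat x$ moves, this should yield $1-\delta_\theta\le \frac{(1-\sqrt b)-\theta}{(1-\sqrt b)(1-\theta)}$-ish. From this, $\delta_\theta\ge \frac{\theta\sqrt b}{(1-\sqrt b)(1-\theta)}$ up to harmless factors, and hence the displayed bound.

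The main obstacle is this slack computation. The fixed point $\hat x$ depends on $\theta$ as well as $b$, since $\lambda$ is held fixed while the edge weights change. One cannot simply freeze $y$; one must show that moving the fixed point does not increase $|f'|$ beyond the estimate. I would try first to write $|f'|$ at the fixed point as a function $\Phi(b,y)$ and bound $\Phi(b_\theta,y_\theta)/\Phi(b,y_0)$ using monotonicity of $\frac{y}{(\sqrt b y+1)(y+\sqrt b)}$ in $\sqrt b$ uniformly in $y$. The ratio $\frac{(\sqrt b y+1)(y+\sqrt b)}{(\sqrt{b_\theta} y+1)(y+\sqrt{b_\theta})}$ would be controlled by its worst case at $y=1$. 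Failing that, I would work directly with the two-spin uniqueness curve as in \Cref{rem:bad-dependency-field-dynamics}.
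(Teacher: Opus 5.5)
\paragraph{Verdict.} Your reduction is the paper's own, but you do not prove the one estimate that carries the lemma. In the paper, $\frac{1}{1-\theta}\otimes\mu$ is the two-spin system $(\frac{\beta}{1-\theta},\frac{\gamma}{1-\theta},\lambda)$, and the lemma is \Cref{lem:optimal-CI} combined with \Cref{lem:interaction-to-unique-slack}. Applied with tilt factor $\frac{1}{1-\theta}$, the latter gives $\delta_\theta\ge\frac{\theta\sqrt{\beta\gamma}}{(1-\theta)(1-\sqrt{\beta\gamma})}$, which is exactly your target.

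\paragraph{The gap.} This slack lower bound is the whole content of the lemma. You correctly see that the fixed point moves with $\theta$, but then you only say what you ``expect'' and what you ``would try.'' Also, ``up to harmless factors'' is not affordable here. The constant in $C$ is exact, and it determines the exponent in \Cref{thm:mixing-edge-field-dynamics}. For small $\theta$, losing even a factor $2$ in $\delta_\theta$ would no longer give the stated $C$.

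\paragraph{How to close it.} Your fixed-$y$ comparison does work, and it reproduces the paper's sharper intermediate bound. Set $s=\sqrt{\beta\gamma}$, $s_\theta=s/(1-\theta)$, and write $\Phi(s,y)=\frac{d(1-s^2)}{s(y+y^{-1})+1+s^2}$.
\begin{itemize}
\item The ratio $\Phi(s_\theta,y)/\Phi(s,y)$ is decreasing in $y+y^{-1}$, since its derivative has the sign of $(s_\theta-s)(ss_\theta-1)<0$.
\item Hence the ratio is at most its value at $y=1$, namely $\frac{(1-s_\theta)(1+s)}{(1-s)(1+s_\theta)}$, uniformly in $y$.
\end{itemize}
What is missing is control of the fixed point, which needs three observations.
\begin{itemize}
\item[(i)] Apply the spin-flip normalization of \Cref{subsec:ctrl.2}. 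It commutes with $\otimes$ because $m(\sigma)$ is flip-invariant, and it preserves coupling independence. After it, $\hat\lambda=\lambda(\beta/\gamma)^{\Delta/2}\le 1$, so the symmetrized fixed point satisfies $y_0\le 1$.
\item[(ii)] For $y<1$ one has $\partial_s\frac{sy+1}{y+s}=\frac{y^2-1}{(y+s)^2}<0$. Since $y\mapsto y-\hat\lambda\bigl(\frac{sy+1}{y+s}\bigr)^d$ is increasing, it follows that $y_\theta\le y_0$.
\item[(iii)] $\Phi(s,\cdot)$ is increasing on $(0,1]$, so $\Phi(s,y_\theta)\le\Phi(s,y_0)=1$ by criticality.
\end{itemize}
Chaining these gives $1-\delta_\theta=\Phi(s_\theta,y_\theta)\le\frac{(1-s_\theta)(1+s)}{(1-s)(1+s_\theta)}\le\frac{1-s_\theta}{1-s}$, which is exactly what you need.

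\paragraph{Comparison with the paper.} Completed this way, your route replaces the paper's argument by a finite comparison with the same constant. The paper instead substitutes $\hat x=\tanh\eta$, $\hat\beta=\tanh\zeta$ and integrates the differential inequality $\frac{\mathrm{d}\delta}{\mathrm{d}\hat\beta}\ge\frac{2(1-\delta)}{1-\hat\beta^2}$ to reach the same bound.

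\paragraph{Regularity caveat.} Your concern about $\gamma\le 1<\gamma_\theta$ on non-regular graphs is legitimate. The paper also applies \Cref{lem:optimal-CI} to the tilted parameters directly without treating that case, so this is not where your proposal falls short of the paper.
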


We first prove \Cref{lem:AC-var-edge-FD} using \Cref{lem:CI-under-otimes}.
\begin{proof}[Proof of \Cref{lem:AC-var-edge-FD}]
Let $(Y_t)_{t\in[0,1]}$ be the edge-field denoising process for $\mu$ on $G$. 
By \Cref{lem:cond-stable-pFD,lem:CI-under-otimes}, for any $\theta \le 1 - \sqrt{\beta\gamma}$, the process $(Y_t)_{t\in[0,1]}$ is spectrally stable at time $\theta$ with rate $C(\theta)$, where
\begin{align*}
C(\theta) = 2\Delta \cdot \min\set{n, 1+\frac{\Delta(1-\sqrt{\beta\gamma})}{(\Delta-1)\sqrt{\beta\gamma}\theta}}.
\end{align*}
Here, we use $n$ as a trivial bound for coupling independence. 
Combining with \Cref{lem:SS-to-AC-variance}, it follows that $(Y_t)_{t\in[0,1]}$ satisfies $R$-approximate conservation of variance with
\begin{align*}
\log R = \int_{0}^{1-\sqrt{\beta\gamma}}\frac{C(t)}{1-t}\-{d}t.
\end{align*}

Let $H = \frac{\Delta(1-\sqrt{\beta\gamma})}{(\Delta-1)\sqrt{\beta\gamma}}$ and $t_0 =\frac{H}{n-1}$.
Without loss of generality, suppose $n$ is sufficiently large so that $t_0=\frac{H}{n-1}\le 1-\sqrt{\beta\gamma}$. 
\begin{align*}
\frac{\log R}{2\Delta} &= \int_{0}^{t_0}\frac{n}{1-t}\-{d}t + \int_{t_0}^{1-\sqrt{\beta\gamma}}\tp{1+\frac{H}{t}}\frac{\-{d}t}{1-t}\\
&= \log\frac{1}{\sqrt{\beta\gamma}}+(n-H-1)\log\frac{1}{1-t_0}+H\log\frac{1-\sqrt{\beta\gamma}}{\sqrt{\beta\gamma} t_0}\\
&= \log\frac{1}{\sqrt{\beta\gamma}}+(n-1)(1-t_0)\log\frac{1}{1-t_0}+H\log\frac{(\Delta-1)(n-1)}{\Delta}\\
&\le \log\frac{1}{\sqrt{\beta\gamma}}+H(\log n+1),
\end{align*}
where the last inequality uses $(1-x)\log\frac{1}{1-x} \le x$ for $0 < x < 1$.
This implies
\[
R = (\beta\gamma)^{-\Delta}(\-{e}n)^{\frac{2\Delta^2(1-\sqrt{\beta\gamma})}{(\Delta-1)\sqrt{\beta\gamma}}}.\qedhere
\]
\end{proof}

To prove \Cref{lem:CI-under-otimes}, we introduce the following coupling independence result, whose proof is deferred to \Cref{subsec:ctrl.2}.

\begin{lemma}[Coupling independence within the uniqueness regime]\label{lem:optimal-CI}
Let $\delta \in (0,1)$. 
Fix any antiferromagnetic two-spin system with parameters $(\beta,\gamma,\lambda)$ on a graph $G$ with maximum degree $\Delta\ge 3$ satisfying:
\begin{itemize}
\item if $\gamma\le 1$, then $(\beta,\gamma,\lambda)$ is $(\Delta-1)$-unique with slack $\delta$;
\item if $\gamma>1$, then $(\beta,\gamma,\lambda)$ is $(\Delta-1)$-unique with slack $\delta$, and $G$ is $\Delta$-regular.
\end{itemize}
Then the Gibbs distribution $\mu$ is $\displaystyle \tp{1 + \frac{\Delta(1-\delta)}{(\Delta-1)\delta}}$-coupling independent.
\end{lemma}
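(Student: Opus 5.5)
We will bound the expected Hamming distance $\E[(X,Y)\sim\xi]{\dist(X,Y)}$ under a good coupling $\xi$ of $\mu^{\tau\wedge v\gets 1}$ and $\mu^{\tau\wedge v\gets 0}$, for an arbitrary feasible pinning $\tau$ and unpinned vertex $v$. Coupling independence (\Cref{def:CI}) asks that this distance be at most $C$ uniformly. The coupling is built recursively, mirroring Weitz's self-avoiding-walk (SAW) tree. Since conditioning on $\tau$ only changes boundary conditions, and uniqueness with slack is a property of the tree recursion alone, it suffices to treat a general pinning on $G$.

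\textbf{Step 1: coupling construction.} Reveal the spins of the neighbors $u_1,\dots,u_d$ of $v$ one at a time, using the optimal coupling of their conditional marginals given the already-revealed spins. On disagreement at $u_i$, recurse from $u_i$ with $v$ now pinned, i.e.\ on the graph with $v$ removed and the appropriate boundary field. This yields an expected disagreement count
\[
D(v)\le 1+\sum_i \abs{\Pr{\sigma_{u_i}=1\mid \cdots,v\gets1}-\Pr{\sigma_{u_i}=1\mid\cdots,v\gets0}}\cdot D(u_i).
\]
Each TV term is the influence of $v$ on $u_i$ through a single edge. Writing it in terms of the ratio $R_{u_i}$ of $u_i$ (computed without $v$), it equals $h(R_{u_i}):=\frac{(1-\beta\gamma)R_{u_i}}{(\beta R_{u_i}+1)(R_{u_i}+\gamma)}$. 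Unrolling the recursion, $D(v)$ is bounded by the sum over SAW-tree paths from $v$ of the products of these edge factors. The root contributes $1$ and has up to $\Delta$ children; every other vertex has at most $\Delta-1=d$ children.

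\textbf{Step 2: contraction via a potential.} Bounding this sum requires the standard potential-function analysis from the uniqueness regime, as in \cite{LLY13} and \cite{chen2021rapid}. Let $\Phi$ be the potential with $\Phi'(x)=1/\sqrt{x(\beta x+1)(x+\gamma)}$, or its equivalent. The goal is to show that the weighted path sums contract by a factor of $(1-\delta)$ per level below the root. The mechanism is that $d\cdot h$ evaluated along the recursion, after reweighting by $\Phi'$ through an AM-GM/Hölder step, is maximized at the symmetric fixed point $\hat x_d$. There its value equals $\abs{f_d'(\hat x_d)}\le 1-\delta$. This uses monotonicity in degree when $\gamma\le 1$, and regularity of $G$ when $\gamma>1$, so that every internal vertex has exactly $d$ children.

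With this contraction in hand, the per-generation mass of the root's subtree is at most $\frac{\Delta}{d}(1-\delta)^{\ell}$ at depth $\ell\ge 1$, with the root's extra child accounting for the factor $\Delta/d$. Summing the geometric series gives
\[
D(v)\le 1+\frac{\Delta}{\Delta-1}\sum_{\ell\ge1}(1-\delta)^\ell = 1+\frac{\Delta(1-\delta)}{(\Delta-1)\delta}.
\]

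\textbf{Main obstacle.} The hard part is Step 2 with \emph{exact} constants. The influences are sums of absolute values, not linearized messages, and the potential weights must telescope so that no multiplicative constant is lost at the root or the leaves. Pinned leaves contribute no influence, so they only help. The plan is to prove the one-step inequality
\[
\sum_i \frac{\Phi'(R_v)}{\Phi'(R_{u_i})}h(R_{u_i})\le 1-\delta
\]
by reducing, through concavity and symmetrization, to the case where all $R_{u_i}$ are equal, which is then evaluated at the fixed point. If the $\Phi'$ ratios fail to cancel at the root, we would instead compare against the symmetric tree directly, using that the root sees true ratios within the range of the recursion.
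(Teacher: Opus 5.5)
There is a genuine gap in Step~2. You plan to sum the intermediate bound ``depth-$\ell$ influence mass at most $\frac{\Delta}{D}(1-\delta)^{\ell}$'', with $D=\Delta-1$. That bound is false, so no potential can deliver it.

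\textbf{Counterexample.} Your $h$ is the paper's $\psi$. Take the hardcore model ($\beta=0$, $\gamma=1$) with $\Delta=3$ and $\lambda=1<\lambda_c=4$. Then $\hat x\approx0.466$ and $1-\delta=D\psi(\hat x)\approx0.635$. Let $v$ have three leaf children: a star, which is allowed since $\gamma\le1$; on a regular graph, pin the children's other neighbours to $0$. Each child has ratio $\lambda$ and factor $\psi(\lambda)=\tfrac12$, so the depth-one mass is $\tfrac32$. This is well above $\frac{\Delta}{D}(1-\delta)\approx0.95$.

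The phenomenon is generic. The fixed point $\hat x$ lies strictly inside the range of child ratios, and $\psi$ increases up to $\sqrt{\gamma/\beta}\ge\hat x$. So one level can carry more than its share. The total bound survives only because a child with ratio above $\hat x$ has children with small ratios, hence a light subtree.

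\textbf{Weights do not restore the sharp constant.} Along a root-to-$u$ path the weights telescope to $w(R_v)/w(R_u)$. Passing back to unweighted sums therefore costs a factor $\sup w/\inf w>1$, unless $w$ is constant, which is the false unweighted claim. Moreover, take the Li--Lu--Yin weight, which is proportional to $\sqrt{\psi}$ in the log-ratio scale matching $h$. Its one-step rate at symmetric inputs $x$ is $\sqrt{D\psi(x)\,D\psi(f(x))}$. The logarithmic derivative of $\psi(x)\psi(f(x))$ at $\hat x$ equals $\delta\,\frac{\gamma-\beta\hat x^2}{\hat x(\beta\hat x+1)(\hat x+\gamma)}$, which is positive whenever $\delta>0$ and $\hat x<\sqrt{\gamma/\beta}$. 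In the example the rate is about $0.650$ at $x=3/4$.

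So the one-step inequality with constant exactly $1-\delta$ fails. Your fallback of comparing with the symmetric tree level by level fails for the same reason. This is why potential arguments give $O(1/\delta)$ but not the sharp constant.

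\textbf{The missing idea.} Bound the influence of a whole subtree by a function of its root's ratio, $\mathrm{TI}_u\le\Xi(R_u)$, rather than level by level. Use the non-constant control function of \Cref{cons:STD}: $\Xi=1/\delta$ on $[0,\hat x]$, $\Xi(R)=1+\frac{D}{\delta}\psi(f^{-1}(R))$ on $(\hat x,\lambda\gamma^{-D}]$, and $0$ beyond. Two checks suffice.

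First, $\psi(x)\Xi(x)\le\frac{1-\delta}{D\delta}$ for all $x$ (\Cref{lem:STD-max}). This encodes exactly the trade-off above, through monotonicity of $t\mapsto\bigl(1+\frac{D}{\delta}\psi(t)\bigr)\psi(f(t))$ on $t\le\hat x$. It also yields the root bound $1+\Delta\cdot\frac{1-\delta}{D\delta}$, with the extra root child accounted for.

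Second, the recursive inequality (\Cref{lem:STD-sol}). If $R_v\le\hat x$, it follows from the first check. If $R_v>\hat x$, it reduces via $\Xi\le1/\delta$ to $\sum_i\psi(x_i)\le D\psi(f^{-1}(R_v))$. This is where your concavity/symmetrization idea belongs: apply Jensen in the variable $s=\log\frac{\beta x+1}{x+\gamma}$, in which $\psi$ is concave. Use dummy children when $\max\{\beta,\gamma\}\le1$, and regularity otherwise.

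Your Step~1, made precise as the SAW-tree recursive coupling of \Cref{lem:coupling-to-tree-comparison}, then transfers the tree bound to coupling independence as in \Cref{cor:CI-general-graphs}.
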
\label{sec:stable-2spin-2-SI} 


With \Cref{lem:optimal-CI} in hand, \Cref{lem:CI-under-otimes} follows as a direct corollary of the following.

\begin{lemma}
\label{lem:interaction-to-unique-slack}
If $(\beta,\gamma,\lambda)$ is critically $d$-unique, then $(\theta\beta,\theta\gamma,\lambda)$ is $d$-unique with slack 
\[
\frac{(\theta-1)\sqrt{\beta\gamma}}{1-\sqrt{\beta\gamma}}
\]
for $1 \le \theta \le 1/\sqrt{\beta\gamma}$.
\end{lemma}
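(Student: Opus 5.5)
The plan is to rewrite $\abs{f'}$ in a form where the tilt $\theta$ and the fixed point separate. Set $b:=\sqrt{\beta\gamma}$ and, for $x>0$, $s(x):=\beta x+\gamma/x$, so that $s(x)\ge 2b$. For the tilted parameters $(\theta\beta,\theta\gamma,\lambda)$ the denominator is $(\theta\beta x+1)(x+\theta\gamma)=x\bigl(\theta s(x)+1+\theta^2b^2\bigr)$. Hence, at any point $x$,
\[
\abs{f_\theta'(x)}=\frac{d(1-\theta^2b^2)}{\theta s(x)+1+\theta^2 b^2},
\]
which is decreasing in $s(x)$. The $d$-slack of the tilted system is therefore $1-\abs{f_\theta'(\hat x_\theta)}$, where $\hat x_\theta$ is the fixed point of the tilted recursion $f_\theta$, and this slack is increasing in $s(\hat x_\theta)$.

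\emph{Step 1: criticality.} Criticality of $(\beta,\gamma,\lambda)$, i.e. $\abs{f'(\hat x)}=1$ at $\theta=1$, pins down $s_0:=s(\hat x)=d(1-b^2)-1-b^2$.

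\emph{Step 2: compare fixed points.} Show that $s(\hat x_\theta)\ge s_0$ for all $1\le\theta\le 1/b$. The natural route is monotonicity of $\hat x_\theta$ in $\theta$. Increasing $\theta$ raises both $\theta\beta x+1$ and $x+\theta\gamma$, so one can track the sign of $\partial_\theta \log f_\theta(x)$ at the fixed point and apply the implicit function theorem, using that $\abs{f_\theta'}<1$ along the path for $\theta>1$. One must then check that $\hat x_\theta$ moves away from the minimizer $\sqrt{\gamma/\beta}$ of $s$, or at least that $s$ does not decrease along the path. Under \Cref{cond:critical} with $\beta\le\gamma$, I would first determine on which side of $\sqrt{\gamma/\beta}$ the critical fixed point $\hat x$ lies, and then show $\frac{d}{d\theta}s(\hat x_\theta)\ge 0$ directly.

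\emph{Step 3: algebra.} Given $s(\hat x_\theta)\ge s_0$, substitute $s_0$ into the slack. A direct expansion gives
\[
\theta s_0+1+\theta^2b^2-d(1-\theta^2b^2)=(\theta-1)\bigl(d(1+\theta b^2)+\theta b^2-1\bigr),
\]
so the slack is at least
\[
\frac{(\theta-1)\bigl(d(1+\theta b^2)+\theta b^2-1\bigr)}{\theta s_0+1+\theta^2b^2}.
\]
It then remains to show that this quantity is at least $\frac{(\theta-1)b}{1-b}$ on $1\le\theta\le 1/b$. Both sides vanish at $\theta=1$ and equal $1$ at $\theta=1/b$, where the tilted system is a product measure. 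The comparison reduces to a quadratic inequality in $\theta$, which I would verify using the criticality constraint $s_0\ge 2b$, i.e. $b\le\frac{d-1}{d+1}$. If this fails for some $d$, the fallback is to keep the exact $s(\hat x_\theta)$ and use a more precise bound in place of $s_0$.

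The main obstacle is Step 2: controlling where the tilted fixed point lands. The trivial bound $s\ge 2b$ goes in the wrong direction, because the slack worsens as $d$ grows, so the monotonicity of $s(\hat x_\theta)$ in $\theta$ is genuinely needed.
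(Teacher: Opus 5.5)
Your proposal is correct. Once the two steps you left open are filled in, it is a complete proof, and it takes a different route from the paper.

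\textbf{Closing Step 2.} It is easier than you feared. For $\beta>0$, $f_\theta(\sqrt{\gamma/\beta})=\lambda(\beta/\gamma)^{d/2}$ does not depend on $\theta$. Since $f_\theta(x)-x$ is strictly decreasing, $\hat x_\theta$ stays on one fixed side of $\sqrt{\gamma/\beta}$ for all $\theta$. Moreover,
\[
\partial_\theta\log f_\theta(x)=\frac{d(\beta x^2-\gamma)}{(\theta\beta x+1)(x+\theta\gamma)} .
\]
So on the lower side, for $\theta'>\theta$ you get $f_{\theta'}(\hat x_\theta)\le\hat x_\theta$, hence $\hat x_{\theta'}\le\hat x_\theta$; the upper side is symmetric. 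The fixed point therefore moves away from the minimizer of $s$, and $s(\hat x_\theta)$ is nondecreasing. You need neither the implicit function theorem nor $\lvert f'_\theta\rvert<1$; in any case $1-f'_\theta\ge1$ in the antiferromagnetic regime. Also, your formula for $\lvert f'_\theta(x)\rvert$ holds only at the fixed point, which is the only place you use it.

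\textbf{Closing Step 3.} Write $A(\theta):=d(1+\theta b^2)+\theta b^2-1$ and $B(\theta):=\theta s_0+1+\theta^2b^2$. Then $Q(\theta):=(1-b)A(\theta)-bB(\theta)$ is a concave quadratic with leading coefficient $-b^3$. It satisfies $Q(1/b)=0$ and $Q(1)=(1-b)^2(d-1-b)>0$, where $d\ge2$ is forced by criticality. Hence $Q\ge0$ on $[1,1/b]$, and no fallback is needed.

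\textbf{Comparison with the paper.} The paper symmetrizes to $\hat\beta=\theta b$ and $\hat\lambda=\lambda(\beta/\gamma)^{(d+1)/2}$, then substitutes $\hat x=\tanh\eta$, $\hat\beta=\tanh\zeta$. This gives $1-\delta=d\sinh(2\eta)/\sinh(2\eta+2\zeta)$ and $\hat\lambda=\tanh\eta\tanh^d(\eta+\zeta)$. Differentiating yields $\mathrm{d}\eta=-\frac{1-\delta}{2-\delta}\mathrm{d}\zeta$, which is exactly your Step~2 monotonicity. Bounding the $1/\tanh$ factors below by $1$ gives the differential inequality $\mathrm{d}\delta/\mathrm{d}\hat\beta\ge 2(1-\delta)/(1-\hat\beta^2)$. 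Integrating it produces the $d$-independent bound $\delta(\theta)\ge\frac{2(\theta-1)b}{(1+\theta b)(1-b)}$.

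Your route freezes $s$ at its critical value instead of integrating, and exploits the explicit $\theta$-dependence of $\lvert f'_\theta\rvert$. It is purely algebraic, with no hyperbolic substitution and no ODE. Its intermediate bound is in fact at least as strong as the paper's, since
\[
A(\theta)(1+\theta b)(1-b)-2bB(\theta)=\bigl((d-1)-(d+1)b\bigr)(1-b\theta)(1-b^2\theta)\ge0,
\]
using $b\le\frac{d-1}{d+1}$, which is equivalent to $s_0\ge 2b$. The paper's route buys a closed-form, $d$-free estimate with no endpoint or quadratic check.
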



\begin{proof}
For $(\beta,\gamma,\lambda)$ that is critically $d$-unique  and  $1\le\theta\le 1/\sqrt{\beta\gamma}$, let $\hat{\beta}=\hat{\beta}(\theta)=\theta\sqrt{\beta\gamma}, \hat{\lambda}=\lambda(\beta/\gamma)^{(d+1)/2}$. Define
\begin{align*}
\delta=\delta(\theta)=1-\frac{d(1-\hat{\beta}^2)\hat{x}^*_d}{(\hat{\beta}\hat{x}^*_d+1)(\hat{x}^*_d+\hat{\beta})},
\end{align*}
where $\hat{x}^*_d$ is the unique solution to the equation
\begin{align*}
\hat{x}^*_d=\hat{\lambda}\left(\frac{\hat{\beta}\hat{x}^*_d+1}{\hat{x}^*_d+\hat{\beta}}\right)^d.
\end{align*}
It is straightforward to verify that $(\theta\beta,\theta\gamma,\lambda)$ is $d$-unique with slack $\delta(\theta)$. Hence, it suffices to show that  $\delta(\theta)\ge\frac{(\theta-1)\hat{\beta}(1)}{1-\hat{\beta}(1)}\,(1\le \theta\le 1/\sqrt{\beta\gamma})$. 

Note that after symmetrization, $\hat{\lambda}\le 1$, hence the fixed point of the symmetric recursion satisfies $\hat x_d^*\le 1$.
Let $\hat{x}^*_d=\tanh\eta,\hat{\beta}=\tanh\zeta$. Then we have
\begin{align}
\label{eq:tanh-eq1}
1-\delta = \frac{d(1-\tanh^2\zeta)\tanh\eta}{(\tanh\zeta\tanh\eta+1)(\tanh\zeta+\tanh\eta)}
= \frac{d\sinh(2\eta)}{\sinh(2\eta+2\zeta)},
\end{align}
\begin{align}
\label{eq:tanh-eq2}
\hat{\lambda} = \tanh\eta\left(\frac{\tanh\eta+\tanh\zeta}{\tanh\eta\tanh\zeta+1}\right)^d
= \tanh\eta\tanh^d(\eta+\zeta).
\end{align}
Taking log-derivatives on both sides of \eqref{eq:tanh-eq2}, we have
\begin{align*}
0 = \frac{2\-{d}\eta}{\sinh(2\eta)}+2d\frac{\-{d}\eta+\-{d}\zeta}{\sinh(2\eta+2\zeta)}
= \frac{2}{\sinh(2\eta)}(\-{d}\eta+(1-\delta)(\-{d}\eta+\-{d}\zeta)),
\end{align*}
which implies that
\begin{align}
\label{eq:tanh-eta-zeta}
\-{d}\eta = -\frac{1-\delta}{2-\delta}\-{d}\zeta.
\end{align}
Similarly, taking log-derivatives on both sides of~\eqref{eq:tanh-eq1}, we have
\begin{align*}
-\frac{\-{d}\delta}{1-\delta} = 2\frac{\-{d}\eta}{\tanh(2\eta)}-2\frac{\-{d}\eta+\-{d}\zeta}{\tanh(2\eta+2\zeta)}.
\end{align*}
Thus we have
\begin{align*}
\-{d}\delta &= -2(1-\delta)\left(\frac{\-{d}\eta}{\tanh(2\eta)}-\frac{\-{d}\eta+\-{d}\zeta}{\tanh(2\eta+2\zeta)}\right)\\
&= \frac{2(1-\delta)}{2-\delta}\left(\frac{1-\delta}{\tanh(2\eta)}+\frac{1}{\tanh(2\eta+2\zeta)}\right)\-{d}\zeta\tag{by \eqref{eq:tanh-eta-zeta}}\\
&= \frac{2(1-\delta)}{(1-\hat{\beta}^2)(2-\delta)}\left(\frac{1-\delta}{\tanh(2\eta)}+\frac{1}{\tanh(2\eta+2\zeta)}\right)\-{d}\hat{\beta}\tag{by relation $\hat{\beta}=\tanh\zeta$}.
\end{align*}
Moreover, $0 \le \tanh x\le 1$ for any $x\in\={R_{+}}$, and $\eta,\zeta \ge 0$. Therefore, we have
\begin{align*}
\frac{\delta'(\theta)}{\hat{\beta}'(\theta)}=\frac{\-{d}\delta}{\-{d}\hat{\beta}} \ge \frac{2(1-\delta)}{1-\hat{\beta}^2}.
\end{align*}
Integrating this expression on $[1,\theta]$, we obtain
\begin{align*}
\int^{\theta}_{1}\frac{\delta'(\theta)\-{d}\theta}{1-\delta} \ge \int^{\theta}_{1}\frac{2\hat{\beta}'(\theta) \-{d}\theta}{1-\hat{\beta}^2},
\end{align*}
which implies that
\begin{align*}
\delta(\theta) \ge \frac{2(\theta-1)\hat{\beta}(1)}{(1+\hat{\beta}(\theta))(1-\hat{\beta}(1))} \ge \frac{(\theta-1)\hat{\beta}(1)}{1-\hat{\beta}(1)}.
\end{align*}
The last inequality holds because $\hat{\beta}(\theta)=\theta\sqrt{\beta\gamma}\le 1$ for $\theta\le 1/\sqrt{\beta\gamma}$.
\end{proof}



\section{Tight Analysis of Spectral Independence} 
\label{sec:new-CI}



Following~\cite{chen2020contraction}, spectral independence on general graphs can be reduced to bounding total influence on trees.
Specifically, for a two-spin system with parameters $(\beta, \gamma, \lambda)$ on a graph $G$ with maximum degree $\Delta$, the total influence of any vertex $v$ on all vertices is bounded by the total influence of the self-avoiding walk (SAW) tree rooted at $v$. 
Exploiting the recursive structure of trees, this bound can be analyzed via contraction of the tree recursion.

Let $T = (V,E)$ be a tree with maximum degree $\Delta$ and root $r$, and let $\mu$ be the Gibbs distribution of a two-spin system with parameters $(\beta, \gamma, \lambda)$ on $T$. 
For a vertex $v$, let $T_v$ denote the subtree of $T$ rooted at $v$, and let $\mu_{T_v}$ be the Gibbs distribution of the two-spin system with the same parameters on $T_v$.
The \emph{marginal ratio} at vertex $v$ is defined as
\begin{align*}
    R_v = R_{v,T_v} := \frac{\mu_{T_v}(\sigma_v = 1)}{\mu_{T_v}(\sigma_v = 0)}.
\end{align*}
The tree recursion for marginal ratios is given by
\begin{align} \label{eq:tree-recursion}
    R_r = \lambda \prod_{i=1}^d \frac{\beta R_i + 1}{R_i + \gamma},
\end{align}
where $R_i = R_{v_i,T_{v_i}}$ denotes the marginal ratio at the root $v_i$ of the subtree $T_{v_i}$ rooted at the $i$-th child of $r$.

For vertex $v$, let $\-{TI}_v:=\-{TI}^{\mu_{T_v}}_{v}$ denote the total influence (see \Cref{def:total-influence}) from $v$ to all vertices in the subtree $T_v$.
As shown in~\cite{chen2020contraction}, the total influence $\mathrm{TI}_r$ from the root to all vertices in $T$ satisfies the following tree recursion for total influences
\begin{align} \label{eq:TI-recursion}
    \mathrm{TI}_r = 1 + \sum_{i=1}^d  \frac{(1-\beta\gamma)R_i}{(\beta R_i + 1)(R_i + \gamma)}  \cdot \mathrm{TI}_i,
\end{align}
where $\mathrm{TI}_i=\-{TI}_{v_i}$ denotes the total influence from the $i$-th child $v_i$ of $r$ to all vertices in the subtree $T_{v_i}$ rooted at $v_i$. 

To bound $\mathrm{TI}_r$, we introduce a \emph{control function}, namely a function $\Xi$ on $[0,+\infty]$ satisfying $\Xi(R_v) \ge \mathrm{TI}_v$ for all vertices $v$. 
Substituting this bound into~\eqref{eq:TI-recursion} and taking the supremum over all tree structures and external fields yields the sufficient condition
\begin{align} \label{eq:CI-control}
    \Xi\left(\lambda \cdot \prod_{i=1}^d \frac{\beta R_i+1}{R_i+\gamma}\right) \ge 1 + \sum_{i=1}^d \frac{(1-\beta\gamma)R_i}{(\beta R_i + 1)(R_i + \gamma)} \cdot \Xi(R_i).
\end{align}
Conversely, the existence of such a function $\Xi$ implies an upper bound on spectral independence (see \Cref{sec:control-SI}).
Combined with a carefully constructed choice of $\Xi$ (see \Cref{subsec:ctrl.2}), this yields the improved bounds on spectral independence stated in \Cref{thm:optimal-SI}.
Finally, a matching lower bound is proved in \Cref{subsec:lower-bound}, showing the tightness of these bounds.


\subsection{Control function for spectral independence}\label{sec:control-SI}
We now formally define the control function for spectral independence.
\begin{definition}[Control function]\label{def:control-function}
Let $\mu$ be the Gibbs distribution of a two-spin system  with parameters $(\beta,\gamma,{\lambda})$ on a graph $G = (V,E)$ with maximum degree $\Delta$.
For each $v \in V$, let $\Delta_v$ denote the degree of $v$ in $G$.
A function $\Xi:[0,+\infty] \to [0,+\infty)$
is called a \emph{control function} for $\mu$ if, for every  $v \in V$, letting $d_v := \Delta_v - 1$, for all $x_1,\dots,x_{d_v} \in [0,+\infty]$,
    \begin{align}\label{eq:ctrl_functional}
        \Xi \tp{\lambda \cdot \prod_{i=1}^{d_v} \frac{\beta x_i+1}{x_i+\gamma}} \ge 1+\sum_{i=1}^{d_v} \frac{(1-\beta\gamma)x_i}{(\beta x_i+1)(x_i+\gamma)}\cdot \Xi(x_i). 
    \end{align}
\end{definition}
The following result shows that the existence of such a control function implies an upper bound on spectral independence.

\begin{theorem}\label{thm:CI-ctrl-main}
Suppose that $G = \mathbb{T}$ is a tree rooted at $r$, and $\Xi$ is a control function for $\mu$.
Then the total influence $\mathrm{TI}_r$ from the root $r$ to all vertices satisfies
\begin{align}\label{eq:764}
    \mathrm{TI}_r \le 1 + \Delta_r \cdot \max_{x\ge 0} \left\{\frac{(1-\beta\gamma)x}{(\beta x+1)(x+\gamma)  }  \cdot \Xi(x)\right\}.
\end{align}
%
Moreover, this  holds under any feasible pinning
with the root $r$ unpinned, since $0$ and $+\infty$ lie in the domain of $\Xi$.
\end{theorem}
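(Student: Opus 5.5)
The plan is to unroll the total-influence recursion \eqref{eq:TI-recursion} once at the root. The key difference is that the root has $\Delta_r$ children, while every non-root vertex $v$ has at most $\Delta_v-1=d_v$ children. Writing $v_1,\dots,v_{\Delta_r}$ for the children of $r$, \eqref{eq:TI-recursion} gives
\[
\mathrm{TI}_r = 1+\sum_{i=1}^{\Delta_r}\frac{(1-\beta\gamma)R_i}{(\beta R_i+1)(R_i+\gamma)}\,\mathrm{TI}_i .
\]
It therefore suffices to show that $\mathrm{TI}_v\le \Xi(R_v)$ for every non-root vertex $v$. Once this holds, each summand is at most $\frac{(1-\beta\gamma)x}{(\beta x+1)(x+\gamma)}\Xi(x)$ evaluated at $x=R_i$. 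Since $\beta\gamma<1$, this weight is nonnegative, so bounding each of the $\Delta_r$ terms by the maximum over $x\ge 0$ yields \eqref{eq:764}.

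To prove $\mathrm{TI}_v\le\Xi(R_v)$ for non-root $v$, I would induct on the height of $T_v$, from the leaves upward.

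\emph{Base case.} For a leaf $v$ we have $\mathrm{TI}_v=1$ and $R_v=\lambda$. Applying \eqref{eq:ctrl_functional} at the vertex $v$ of $G$ with only $d_v$ of the arguments relevant, and setting the remaining ones to $0$ or $+\infty$ so that they contribute a factor $1/\gamma$ or $\beta$ and a zero influence weight, gives $\Xi(\lambda)\ge 1$. Here I would use the fact that $\frac{x}{(\beta x+1)(x+\gamma)}$ vanishes at $x=0$ and as $x\to\infty$.

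\emph{Inductive step.} For a general vertex $v$ with $k\le d_v$ children $u_1,\dots,u_k$, the inductive hypothesis together with nonnegativity of the weights gives
\[
\mathrm{TI}_v\le 1+\sum_{j=1}^k \frac{(1-\beta\gamma)R_{u_j}}{(\beta R_{u_j}+1)(R_{u_j}+\gamma)}\,\Xi(R_{u_j}).
\]
The right-hand side is at most $\Xi\bigl(\lambda\prod_j\frac{\beta R_{u_j}+1}{R_{u_j}+\gamma}\bigr)=\Xi(R_v)$ by \eqref{eq:ctrl_functional}, applied with $x_j=R_{u_j}$ for $j\le k$.

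The main obstacle is the mismatch between the number of children in the tree (here: the SAW tree) and $d_v$, which is defined from the degree of $v$ in $G$. In the SAW tree, a copy of a vertex $w$ has at most $\Delta_w-1$ children, but some of these may be truncated or pinned. I would handle this uniformly through pinnings: a child pinned to $1$ acts as $x=+\infty$, a child pinned to $0$ acts as $x=0$, and a missing child is absorbed in the same way. The catch is that padding with $x=0$ or $x=\infty$ changes the product inside $\Xi$ by a factor $1/\gamma$ or $\beta$, which shifts the effective field. This is precisely why the statement allows the endpoints $0$ and $+\infty$ in the domain. If the paper's setting is a tree $T$ whose vertices come from $G$ with their degrees, with absent neighbors realized as pinned leaves, then the recursion with pinned children exactly matches \eqref{eq:ctrl_functional} and no padding is needed. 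Pinned vertices have zero influence and marginal ratio $0$ or $\infty$, so they contribute zero weight in the total-influence sum while entering the product correctly. The same reasoning covers the claim about arbitrary feasible pinnings with $r$ unpinned: the recursion is unchanged, except that pinned subtrees contribute the boundary values $0$ or $+\infty$.
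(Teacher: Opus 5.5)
Your proposal is correct and follows the paper's own proof. Both prove $\mathrm{TI}_v\le\Xi(R_v)$ for every unpinned non-root $v$ by structural induction, with pinned children encoded as $R=0$ or $R=+\infty$, and then unroll \eqref{eq:TI-recursion} once at the root, bounding each of its $\Delta_r$ nonnegative terms by the maximum.

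The padding discussion is unnecessary. Since $G=\mathbb{T}$, every non-root vertex has exactly $d_v=\Delta_v-1$ children and a non-root leaf has $d_v=0$, so \eqref{eq:ctrl_functional} applies verbatim. Note also that your padded base case would only give $\Xi(\lambda\gamma^{-a}\beta^{b})\ge 1$, not $\Xi(\lambda)\ge 1$, as you yourself observe later. Finally, when $\beta=0$ the weight does not vanish at $x=+\infty$, so pinned children should be discarded because their influence is zero, not because their weight is.
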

\begin{proof}
We first claim that for any non-root and unpinned vertex $v \in V \setminus \{r\}$, the total influence $\mathrm{TI}_v$ of $v$ on its respective subtree is upper bounded by $\Xi(R_v)$, where $R_v=R_{v,T_v}$ is the marginal ratio of vertex $v$ in $T_v$. 

We prove this claim by structural induction. In the base case where $v$ is 
an unpinned leaf,
the claim follows immediately since $\mathrm{TI}_v = 1$ 
and $\Xi(R_v)=\Xi(\lambda) \ge 1$ given that $d_v=0$ in \eqref{eq:ctrl_functional}.
For the inductive step, consider a non-leaf, non-root 
and unpinned vertex $v$ with $d_v$ children. Let $R_i=R_{v_i,T_{v_i}}$ denote the marginal ratio of $v$'s $i$-th child $v_i$ in the subtree $T_{v_i}$ rooted at $v_i$. We follow the convention that $R_j=+\infty$ if some child $j$ is pinned $1$, and $R_j=0$ if $j$ is pinned $0$. By the tree recursion for total influence in \eqref{eq:TI-recursion} and the inductive hypothesis, we have
\begin{align} \label{eq:TI-recursion-step}
    \mathrm{TI}_v = 1 + \sum_{i=1}^{d_v} \frac{(1-\beta\gamma)R_i}{(\beta R_i + 1)(R_i + \gamma)} \cdot \mathrm{TI}_i \le 1 + \sum_{i=1}^{d_v} \frac{(1-\beta\gamma)R_i}{(\beta R_i + 1)(R_i + \gamma)} \cdot \Xi(R_i).
\end{align}

According to the properties of the control function $\Xi$,  the right-hand side of \eqref{eq:TI-recursion-step} is at most $\Xi(R_v)$, which completes the proof of the claim.

To conclude, we bound the total influence at the root $r$. By~\eqref{eq:TI-recursion} and our inductive claim,
\begin{align*}
    \mathrm{TI}_r &\le 1 + \sum_{i=1}^{\Delta_r} \frac{(1-\beta\gamma)R_i}{(\beta R_i + 1)(R_i + \gamma)} \cdot \Xi(R_i) \le  1 + \Delta_r \cdot \max_{x\ge 0} \left\{\frac{(1-\beta\gamma)x}{(\beta x+1)(x+\gamma)  } \cdot \Xi(x)\right\}. \qedhere
\end{align*}
\end{proof}

Together with~\cite{chen2020contraction}, the following corollary extends the total-influence bound from trees to general graphs.


\begin{corollary}\label{cor:SI-general-graphs}
Suppose $\Xi$ is a control function for the Gibbs distribution $\mu$ with parameters $(\beta,\gamma,\lambda)$ on a graph $G$ with maximum degree $\Delta$.
Then $\mu$ is $\rho$-spectrally independent, where
    \begin{align}\label{eq:SI-general-bound}
        \rho= \Delta \cdot \max_{x \ge 0} \left\{\frac{(1-\beta\gamma)x}{(\beta x+1)(x+\gamma)}  \cdot  \Xi(x)  \right\}.
    \end{align}
\end{corollary}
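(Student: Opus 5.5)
The plan is to reduce to the tree bound of \Cref{thm:CI-ctrl-main} through the self-avoiding walk (SAW) tree construction of~\cite{chen2020contraction}, and then bound the maximum eigenvalue of the influence matrix by its maximum absolute row sum.

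\textbf{Step 1: reduction to total influence.} Fix a feasible pinning $\tau$ on $\Lambda\subseteq V$ and an unpinned vertex $r$. By~\cite{chen2020contraction}, for antiferromagnetic (and general) two-spin systems one has
\[
\sum_{v\ne r}\abs{\Psi_{\mu^\tau}(r,v)} \le \mathrm{TI}_r(T_{\mathrm{SAW}}(G,r)) - 1 .
\]
Here $T_{\mathrm{SAW}}(G,r)$ is the SAW tree rooted at $r$. It carries the pinning $\tau$ on the copies of pinned vertices, together with the additional boundary pinnings (to $0$ or $1$) at the copies that close cycles. Influences on $G$ are sums over copies in the tree, so the triangle inequality gives the bound; I would invoke this from~\cite{chen2020contraction} rather than reprove it.

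\textbf{Step 2: apply the tree bound.} The SAW tree has root degree $\Delta_r\le\Delta$. Every non-root vertex $v$ of the tree has at most $\Delta_v-1$ children, where $\Delta_v$ is the degree of the original vertex in $G$. Hence the inequality \eqref{eq:ctrl_functional} with $d_v=\Delta_v-1$ is exactly what the inductive step of \Cref{thm:CI-ctrl-main} needs.

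One point requires care: a tree vertex may have fewer children than $\Delta_v-1$ (truncated or pinned branches). Pinned children are handled by the values $x_i\in\{0,+\infty\}$. At these values the weight $\frac{(1-\beta\gamma)x_i}{(\beta x_i+1)(x_i+\gamma)}$ vanishes, except when $\beta=0$ and $x_i=\infty$. In that case the child is pinned to $1$ and the ratio at $v$ is $0$, which is still covered, since the factor $\frac{\beta x+1}{x+\gamma}\to\beta$. So the control inequality with those arguments reproduces the true recursion.

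The pinning $\tau$ only restricts which subtrees appear. The theorem's remark that it holds under any feasible pinning with the root unpinned covers this. Therefore
\[
\mathrm{TI}_r\le 1+\Delta\max_x\frac{(1-\beta\gamma)x}{(\beta x+1)(x+\gamma)}\Xi(x).
\]

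\textbf{Step 3: conclude.} Combining the two steps, every row sum of $\abs{\Psi_{\mu^\tau}}$ is at most $\rho$. Since $\lambda_{\max}(\Psi)\le\norm{\Psi}_\infty$ (the maximum absolute row sum bounds the spectral radius), we get $\lambda_{\max}(\Psi_{\mu^\tau})\le\rho$ for every feasible $\tau$, which is $\rho$-spectral independence.

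\textbf{Main obstacle.} The main obstacle is the bookkeeping in Step 2. The tree must be a legitimate instance to which \Cref{thm:CI-ctrl-main} applies with the same control function. This needs degrees to be bounded by those in $G$ and pinned leaves to be encoded by $0$ and $+\infty$. I would also state explicitly that \eqref{eq:ctrl_functional} is required for every $v$ with $d_v=\Delta_v-1$, whereas a vertex with fewer live children corresponds to some arguments being pinned. If necessary, I would allow fewer children by monotonicity, since the weights and $\Xi$ are nonnegative.
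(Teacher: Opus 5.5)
Your proposal is correct and follows the same route as the paper. The paper also derives this corollary from the SAW-tree comparison (\Cref{lem:graph-to-tree-comparison}), the fact that free copies inherit $\lambda_v$ and $\Delta_v$ (\Cref{fact:SAW-tree-lambda-degree}), the tree bound of \Cref{thm:CI-ctrl-main}, and the row-sum estimate $\lambda_{\max}(\Psi)\le\norm{\Psi}_\infty$.

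Drop your fallback of allowing fewer children ``by monotonicity''. It would not be valid: $\Xi$ need not be monotone, and removing a factor changes the argument of $\Xi$. It is also unnecessary, because every free non-root copy in the SAW tree has exactly $\Delta_v-1$ children, with cycle-closing and $\tau$-pinned children encoded by the values $0$ or $+\infty$.
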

\begin{remark}
The discrepancy between~\eqref{eq:764} and~\eqref{eq:SI-general-bound} arises because the influence from a vertex $u$ to itself is $1$, whereas in the influence matrix $\Psi(u,u)$ is defined to be $0$.
\end{remark}

Combining the recursive coupling developed in~\cite[Lemma~39]{chen2024rapid} with the above framework, we obtain a corresponding upper bound for coupling independence on general graphs.

\begin{corollary}\label{cor:CI-general-graphs}
Suppose $\Xi$ is a control function for the Gibbs distribution $\mu$ with parameters $(\beta,\gamma,\lambda)$ on $G$ with maximum degree $\Delta$.
Then $\mu$ satisfies $C$-coupling independence, where
    \begin{align*}
        C = 1+\Delta \cdot \max_{x \ge 0} \left\{\frac{(1-\beta\gamma)x}{(\beta x+1)(x+\gamma)}  \cdot  \Xi(x)  \right\}.
    \end{align*}
\end{corollary}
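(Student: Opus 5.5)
We prove \Cref{cor:CI-general-graphs} by combining the recursive coupling of \cite[Lemma~39]{chen2024rapid} with the tree bound of \Cref{thm:CI-ctrl-main}, in the same way that \Cref{cor:SI-general-graphs} is obtained from \cite{chen2020contraction}.

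\textbf{Step 1: reduce to an expected-Hamming-distance bound.} Coupling independence (\Cref{def:CI}) requires that for every feasible pinning $\tau$ and every unpinned vertex $u$, there is a coupling $\xi$ of $\mu^{\tau,u\gets 1}$ and $\mu^{\tau,u\gets 0}$ with $\E[(X,Y)\sim\xi]{\dist(X,Y)}\le C$.

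\textbf{Step 2: construct the coupling recursively.} We build $\xi$ by the recursive coupling of \cite[Lemma~39]{chen2024rapid}. Having fixed $u$ to $1$ versus $0$, we process the neighbours $w_1,\dots,w_{\Delta_u}$ of $u$ in order. For each $w_i$, we couple its marginal under the two measures, which are conditioned on $u$ and on the already-coupled earlier neighbours, using an optimal coupling. The two copies disagree at $w_i$ with probability equal to the absolute influence. Whenever they disagree, we recurse on the graph with $u$ deleted, where the disagreement now sits at $w_i$.

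Unrolling this recursion shows that the expected number of discrepancies is bounded by a sum over self-avoiding walks from $u$ of products of edge influences. This sum is exactly the quantity that bounds the total influence $\mathrm{TI}_u$ in the SAW tree $T_{\mathrm{SAW}}(G,u)$. The SAW tree carries the inherited pinnings, together with the extra pinnings at cycle-closing copies used in \cite{chen2020contraction}. The lemma of \cite{chen2024rapid} packages this as follows: the coupling distance from $u$ is at most $1+\sum_i |\text{influence}_{u\to w_i}|\cdot(\text{recursive distance from } w_i)$. This matches the form of the recursion \eqref{eq:TI-recursion} with absolute values of the influence factors.

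\textbf{Step 3: dominate by the control function.} The tree recursion holds in that SAW tree, and every vertex there has at most $\Delta-1$ children except the root, which has at most $\Delta$. Since the control function $\Xi$ satisfies \eqref{eq:ctrl_functional} for $d_v=\Delta_v-1$ and allows the values $0,+\infty$ for pinned children, the structural induction in the proof of \Cref{thm:CI-ctrl-main} applies unchanged. It shows that each non-root recursive term is at most $\Xi(R_v)$, because for antiferromagnetic systems the factor $\frac{(1-\beta\gamma)x}{(\beta x+1)(x+\gamma)}$ is nonnegative. At the root this gives
\[
\E{\dist(X,Y)}\le 1+\Delta\cdot\max_{x\ge 0}\frac{(1-\beta\gamma)x}{(\beta x+1)(x+\gamma)}\Xi(x).
\]
The leading $1$ is kept here, unlike in \eqref{eq:SI-general-bound}, because the disagreement at $u$ itself counts in the Hamming distance.

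\textbf{Main obstacle.} The delicate point is Step 2. One must check that the conditional-on-partial-coupling influences arising in the recursive coupling are bounded by the SAW-tree edge factors $\frac{(1-\beta\gamma)R_i}{(\beta R_i+1)(R_i+\gamma)}$, evaluated at the correct subtree marginal ratios $R_i$ and under valid pinnings. This is exactly what \cite[Lemma~39]{chen2024rapid} supplies, so we would invoke it as a black box. We would verify only that its hypotheses (two-spin, bounded degree, arbitrary pinning) are met and that the per-vertex degree bound $d_v\le\Delta_v-1$ matches \Cref{def:control-function}.
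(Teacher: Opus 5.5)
Your proposal is correct and follows the paper's route: the paper also combines \cite[Lemma~39]{chen2024rapid} (stated as \Cref{lem:coupling-to-tree-comparison}) with \Cref{fact:SAW-tree-lambda-degree} and the tree bound of \Cref{thm:CI-ctrl-main}, keeping the leading $1$ for the disagreement at $u$. One point in Step~3 needs tightening. Each free non-root copy of $v$ in the SAW tree has \emph{exactly} $\Delta_v-1$ children, some pinned with ratios $0$ or $+\infty$, and this is what lets \eqref{eq:ctrl_functional} apply as stated; saying ``at most'' that many children is not enough.
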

The proofs of \Cref{cor:SI-general-graphs,cor:CI-general-graphs} are deferred to~\Cref{sec:missing-5}.   

\subsection{Construction of the control function}\label{subsec:ctrl.2}
We construct the control function $\Xi$ and use it to prove \Cref{thm:optimal-SI} and \Cref{lem:optimal-CI}.

Let $\mu$ be the Gibbs distribution of an antiferromagnetic two-spin system with parameters $(\beta,\gamma,\lambda)$ on a graph $G$ with maximum degree $\Delta \ge 3$, satisfying:
\begin{itemize}
    \item If $\gamma \le 1$, then $(\beta,\gamma,\lambda)$ is $(\Delta-1)$-unique with slack $\delta$.
    \item If $\gamma > 1$, then $(\beta,\gamma,\lambda)$ is $(\Delta-1)$-unique with slack $\delta$, and $G$ is $\Delta$-regular.
\end{itemize}

If $\lambda > (\gamma/\beta)^{\frac{\Delta}{2}}$, we swap the roles of $0$-spin and $1$-spin in the two-spin system. Specifically, this defines a new Gibbs distribution $\ol{\mu}$ on $\set{0,1}^V$ satisfying 
\[
\ol{\mu}(\sigma) = \mu((1-\sigma_v)_{v\in V}), \quad \forall \sigma \in \set{0,1}^V.
\]
This corresponds to the two-spin system with parameters
 $(\beta',\gamma',\lambda')=(\gamma,\beta,1/\lambda)$.
Note that this transformation preserves spectral independence.

Hence, without loss of generality, we may assume
\begin{align} 
    \label{eq:cond-anti-ferro-flip}
     &\beta\ge 0, \quad  \gamma>0, \quad \lambda >0, \quad \beta\gamma < 1, \quad \lambda \le (\gamma/\beta)^{\frac{\Delta}{2}};\\
    \label{eq:cond-regular-flip}
    &\text{If } \max\{\beta,\gamma\} > 1, \text{ then $G$ is $\Delta$-regular.}
\end{align}
We drop the original requirement $\beta \le \gamma$  in \eqref{eq:cond-anti-ferro} because we may have swapped $\beta$ and $\gamma$. We still have $\gamma > 0$, since $\lambda > (\gamma/\beta)^{\frac{\Delta}{2}}$ can only occur when $\beta > 0$.


Define $D := \Delta - 1$ and introduce the auxiliary functions
\begin{equation} \label{eq:def-aux-func}
    \psi(x) := \frac{(1-\beta\gamma)x}{(\beta x+1)(x+\gamma)}, \qquad 
    f(x) := \lambda \left( \frac{\beta x+1}{x+\gamma} \right)^D.
\end{equation}
Let $\hat{x}$ be the unique solution to the fixed-point equation $x = f(x)$. We assume
\begin{equation} \label{eq:766}
    \delta = 1 - D \cdot\psi(\hat{x}).
\end{equation}
\begin{remark}[Uniqueness condition with exact slackness $\delta$]
    More precisely, the $D$-uniqueness condition with slack $\delta$ only guarantees that  $\delta \le  1 - D \cdot\psi(\hat{x})$.
Here, the $\delta$ in \eqref{eq:766} represents the \emph{exact} slackness with which the $D$-uniqueness is satisfied.  
Since the spectral and coupling independence bounds claimed in \Cref{thm:optimal-SI} and \Cref{lem:optimal-CI} are monotonically decreasing in the slackness $\delta$, it is safe to assume \eqref{eq:766} without loss of generality. 
\end{remark}

\begin{fact} \label{fact:range-hat-x}
Under the assumption \eqref{eq:cond-anti-ferro-flip}, the fixed point satisfies
 $\hat{x} \le \sqrt{\gamma/\beta}$.
\end{fact}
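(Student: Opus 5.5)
The plan is to compare the fixed point $\hat{x}$ of $f(x)=\lambda\bigl(\frac{\beta x+1}{x+\gamma}\bigr)^D$ with the point $x^\star:=\sqrt{\gamma/\beta}$, using that $f$ is strictly decreasing on $[0,+\infty)$. Indeed, $f'(x)$ has the sign of $\beta(x+\gamma)-(\beta x+1)=\beta\gamma-1<0$, so by the antiferromagnetic condition $\beta\gamma<1$ the map $f$ is strictly decreasing. Hence $g(x):=x-f(x)$ is strictly increasing, $g(0)=-f(0)<0$, and $\hat{x}$ is its unique zero. It therefore suffices to show $g(x^\star)\ge 0$, i.e.\ $f(x^\star)\le x^\star$; then $\hat{x}\le x^\star$ follows by monotonicity. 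If $\beta=0$, then $x^\star=+\infty$ and there is nothing to prove, so assume $\beta>0$.

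The key computation is the value of the ratio at $x^\star$. One checks
\[
\frac{\beta x^\star+1}{x^\star+\gamma}=\frac{\sqrt{\beta\gamma}+1}{\sqrt{\gamma/\beta}+\gamma}=\frac{\sqrt{\beta}(1+\sqrt{\beta\gamma})}{\sqrt{\gamma}(1+\sqrt{\beta\gamma})}=\sqrt{\beta/\gamma}.
\]
Thus $f(x^\star)=\lambda(\beta/\gamma)^{D/2}$. The condition $f(x^\star)\le x^\star=(\gamma/\beta)^{1/2}$ is then equivalent to $\lambda\le(\gamma/\beta)^{(D+1)/2}=(\gamma/\beta)^{\Delta/2}$, which is exactly the normalization $\lambda\le(\gamma/\beta)^{\Delta/2}$ in \eqref{eq:cond-anti-ferro-flip}. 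This completes the argument.

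There is no real obstacle here. The only points needing care are the degenerate cases $\beta=0$ and the convention $x\in[0,+\infty]$, and confirming that the exponent matches. It does: $D=\Delta-1$ gives the exponent $(D+1)/2=\Delta/2$. This is consistent with the symmetrization in the proof of \Cref{lem:interaction-to-unique-slack}, where $\hat\lambda=\lambda(\beta/\gamma)^{(d+1)/2}\le1$ forces the symmetric fixed point $\le 1$; rescaling by $\sqrt{\gamma/\beta}$ gives the same statement.
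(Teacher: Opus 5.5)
Your proof is correct and is essentially the paper's reasoning. The paper states this fact without proof, but in the vertex-tilting appendix it derives the analogous bound $x_c \le \sqrt{\gamma/\beta}$ the same way: it uses that $x \mapsto x\bigl(\frac{x+\gamma}{\beta x+1}\bigr)^{D}$ is increasing and takes the value $(\gamma/\beta)^{(D+1)/2}$ at $\sqrt{\gamma/\beta}$, which is equivalent to your combination of monotonicity of $x - f(x)$ and $f(\sqrt{\gamma/\beta}) = \lambda(\beta/\gamma)^{D/2}$.
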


\begin{construction}[A good control function for two-spin systems] \label{cons:STD}
Define
    \begin{align*}
        \Xi(x) := \begin{cases}
            \frac{1}{\delta}, & x \in [0,\hat{x}], \\
            1 + \frac{D}{\delta}\cdot\psi(f^{-1}(x)), & x \in (\hat{x}, \lambda \gamma^{-D}], \\
            0,& x \in  (\lambda \gamma^{-D},+\infty],
        \end{cases}
    \end{align*}
where $\psi$ and $f$ are as in \eqref{eq:def-aux-func}, and  by convention set $\gamma/\beta = +\infty$ when $\beta = 0$.
\end{construction}

\Cref{cons:STD} defines a valid control function as required by \Cref{def:control-function}. 
Moreover, the right-hand side of \eqref{eq:SI-general-bound} can be conveniently bounded:
\begin{lemma}\label{lem:STD-max}
    Assume \eqref{eq:cond-anti-ferro-flip}.
Then, for all $x \ge 0$,
\begin{align*}
        \Xi(x) \cdot \psi(x) \le \frac{1-\delta}{\delta (\Delta-1)}.
    \end{align*}
\end{lemma}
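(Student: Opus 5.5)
The plan is to check the inequality separately on each of the three pieces of \Cref{cons:STD}. Throughout I use the identity $D\psi(\hat{x}) = 1-\delta$ from \eqref{eq:766}, which rewrites the target as $\Xi(x)\psi(x)\le \psi(\hat{x})/\delta$. I also use the following facts about $\psi$ and $f$.
\begin{itemize}
\item Since $\beta\gamma<1$, the function $\psi$ is unimodal on $[0,+\infty]$. It is increasing on $[0,\sqrt{\gamma/\beta}]$ and decreasing afterwards, with $\psi(0)=\psi(+\infty)=0$.
\item The map $f$ is strictly decreasing from $f(0)=\lambda\gamma^{-D}$ down to $f(+\infty)=\lambda\beta^{D}$, so $f^{-1}$ is well defined on $(\hat{x},\lambda\gamma^{-D}]$.
\item \Cref{fact:range-hat-x} gives $\hat{x}\le\sqrt{\gamma/\beta}$.
\end{itemize}

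\emph{Trivial pieces.} On $(\lambda\gamma^{-D},+\infty]$ we have $\Xi=0$, so there is nothing to prove. On $[0,\hat{x}]$ we have $\Xi=1/\delta$, and $x\le\hat{x}\le\sqrt{\gamma/\beta}$. Monotonicity of $\psi$ on this interval gives $\psi(x)\le\psi(\hat{x})=(1-\delta)/D$, which is exactly the claim.

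\emph{Middle piece.} For $x\in(\hat{x},\lambda\gamma^{-D}]$, set $y:=f^{-1}(x)\in[0,\hat{x})$, so that $x=f(y)$. The claim becomes the one-variable inequality
\begin{align*}
F(y):=\psi(f(y))\bigl(\delta+D\psi(y)\bigr)\le \psi(\hat{x}),\qquad y\in[0,\hat{x}].
\end{align*}
Equality holds at $y=\hat{x}$, because there $f(y)=\hat{x}$ and $\delta+D\psi(\hat{x})=1$. At the other endpoint, $F(0)=\delta\,\psi(\lambda\gamma^{-D})$. So my plan is to show that $F$ is nondecreasing on $[0,\hat{x}]$.

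The key structural identity is that $\psi$ is the logarithmic derivative of $f$:
\begin{align*}
\frac{\mathrm{d}\log f(e^s)}{\mathrm{d}s}=-D\psi(e^s).
\end{align*}
In the variable $s=\log y$, this gives
\begin{align*}
\frac{\mathrm{d}}{\mathrm{d}s}\log\psi(f(y)) = -\frac{x\psi'(x)}{\psi(x)}\cdot D\psi(y),
\qquad\text{where}\qquad
\frac{x\psi'(x)}{\psi(x)}=\frac{1-\beta x^2/\gamma}{(1+\beta x)(1+x/\gamma)}.
\end{align*}
Hence the condition $\frac{\mathrm{d}}{\mathrm{d}s}\log F\ge 0$ reads
\begin{align*}
\frac{y\psi'(y)}{\psi(y)}\cdot D\psi(y)\bigl(\delta+D\psi(y)\bigr)^{-1}\;\ge\; \frac{x\psi'(x)}{\psi(x)}\cdot D\psi(y).
\end{align*}
After cancelling $D\psi(y)\ge 0$, this is $\frac{y\psi'(y)}{\psi(y)}\ge \frac{x\psi'(x)}{\psi(x)}\bigl(\delta+D\psi(y)\bigr)$.

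To make this tractable I would symmetrize as in the proof of \Cref{lem:interaction-to-unique-slack}. That is, rescale so that $\beta,\gamma$ become $\hat{\beta},\hat{\beta}$ with $\hat{\beta}=\sqrt{\beta\gamma}=\tanh\zeta$, and substitute $x=\tanh a$, $y=\tanh b$. In these coordinates $\psi$ and its logarithmic derivative become ratios of $\sinh$ and $\cosh$ of $2a$, $2b$ and $2a+2\zeta$, $2b+2\zeta$.

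The needed inequality then splits into two facts:
\begin{itemize}
\item $\delta+D\psi(y)\le 1$, since $y\le\hat{x}$ gives $\psi(y)\le\psi(\hat{x})$ and hence $\delta+D\psi(y)\le\delta+D\psi(\hat{x})=1$;
\item $\frac{y\psi'(y)}{\psi(y)}\ge\frac{x\psi'(x)}{\psi(x)}$, since the elasticity $t\mapsto t\psi'(t)/\psi(t)$ is decreasing and $y<x$.
\end{itemize}
There is one subtlety. When $x>\sqrt{\gamma/\beta}$, the right-hand side $\frac{x\psi'(x)}{\psi(x)}$ is negative, so the inequality holds trivially there. When $x\le\sqrt{\gamma/\beta}$, both elasticities are nonnegative and the two facts multiply directly.

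The main obstacle is the elasticity monotonicity in the second fact. I would first try to verify it directly from the closed form $\frac{1-\beta t^2/\gamma}{(1+\beta t)(1+t/\gamma)}$: its numerator decreases in $t$ and its denominator increases in $t$ while the numerator is positive. If that direct route runs into sign issues near $\sqrt{\gamma/\beta}$, I would fall back on the tanh parameterization, where the elasticity becomes an explicit function of $\tanh(2a)$ and $\tanh(2a+2\zeta)$.

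Once $F$ is shown to be nondecreasing, $F(y)\le F(\hat{x})=\psi(\hat{x})$ on $[0,\hat{x}]$, which gives the claim on the middle piece and completes the lemma. The regularity hypothesis in \eqref{eq:cond-regular-flip} is not needed here, since the lemma is a pointwise statement about $\Xi$ and $\psi$.
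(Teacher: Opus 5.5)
Your proposal is correct and follows the paper's proof essentially step for step. It uses the same three-piece split, the same substitution $t=f^{-1}(x)$, and the same log-derivative $\frac{\mathrm{d}}{\mathrm{d}\log t}\log F = D\psi(t)\bigl(\tfrac{G(t)}{\delta+D\psi(t)}-G(f(t))\bigr)$, where $G(x)=\frac{\gamma-\beta x^2}{(\beta x+1)(x+\gamma)}$ is the elasticity of $\psi$, closed by $\delta+D\psi(t)\le 1$ and the monotonicity of $G$. The tanh symmetrization is unnecessary, and it is unavailable when $\beta=0$: $G(x)=1-\frac{\beta x}{\beta x+1}-\frac{x}{x+\gamma}$ is visibly decreasing on all of $[0,\infty)$, and the paper avoids your sign case split by writing $\frac{G(t)}{\delta+D\psi(t)}\ge G(t)\ge G(f(t))$, which uses only $G(t)\ge 0$.
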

\begin{lemma}\label{lem:STD-sol}
    Assume \eqref{eq:cond-anti-ferro-flip} and \eqref{eq:cond-regular-flip}.
    Then $\Xi$ is a valid control function; that is, for all $x_1,\dots,x_d \in [0,+\infty]$, with $0 \le d \le D$ (and $d = D$ if $\max\{\beta,\gamma\} > 1$),
    \begin{align*}
        \Xi \tp{\lambda \cdot \prod_{i=1}^d \frac{\beta x_i+1}{x_i+\gamma}} \ge 1+\sum_{i=1}^d  \frac{(1-\beta\gamma)x_i}{(\beta x_i+1)(x_i + \gamma)} \cdot \Xi(x_i).
    \end{align*}
\end{lemma}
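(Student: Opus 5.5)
The plan is to fix $x_1,\dots,x_d$ and set $h(x) := \frac{\beta x+1}{x+\gamma}$ and $y := \lambda\prod_{i=1}^d h(x_i)$. We then split into cases according to which piece of \Cref{cons:STD} contains $y$. Two elementary facts drive everything.

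\emph{Monotonicity of $h$.} Since $\beta\gamma<1$, $h$ is strictly decreasing from $h(0)=1/\gamma$ to $h(+\infty)=\beta$. Hence $f$ is decreasing and $f^{-1}$ is well defined on $(\hat x,\lambda\gamma^{-D}]$.

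\emph{Logarithmic derivative.} We have $\frac{\mathrm{d}\log h(x)}{\mathrm{d}\log x} = -\psi(x)$, so $D\psi(x) = -\frac{\mathrm{d}\log f(x)}{\mathrm{d}\log x}$. This is what ties $\psi$ to the recursion.

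First, the third piece never occurs on the left-hand side, i.e.\ $y\le\lambda\gamma^{-D}$. We have $h(x_i)\le 1/\gamma$ because $1/\gamma\ge\beta$. If $\gamma\le 1$, then $\gamma^{-d}\le\gamma^{-D}$ for $d\le D$. If $\gamma>1$, then $\max\{\beta,\gamma\}>1$, so by \eqref{eq:cond-regular-flip} we have $d=D$. In both cases $y\le\lambda\gamma^{-D}$. On the right-hand side, any $x_i>\lambda\gamma^{-D}$ contributes $\Xi(x_i)=0$, which only helps.

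\emph{Case $y\le\hat x$.} Here the left-hand side equals $1/\delta$. By \Cref{lem:STD-max}, each summand satisfies $\psi(x_i)\Xi(x_i)\le\frac{1-\delta}{\delta D}$. So the right-hand side is at most $1+d\cdot\frac{1-\delta}{\delta D}\le 1+\frac{1-\delta}{\delta}=\frac1\delta$, and this case is done.

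\emph{Case $\hat x<y\le\lambda\gamma^{-D}$.} Let $w:=f^{-1}(y)$, so that $h(w)^D=\prod_i h(x_i)$. We must show
\[
\sum_{i=1}^d\psi(x_i)\Xi(x_i)\le \frac{D}{\delta}\psi(w).
\]
I would pass to the log-variables $s_i:=-\log h(x_i)$, so that $D\cdot(-\log h(w))=\sum_i s_i$. The target then reads
\[
\sum_i G(s_i)\le D\cdot\tfrac1\delta\,\psi\bigl(h^{-1}(e^{-\bar s})\bigr),\qquad \bar s:=\tfrac1D\sum_i s_i,
\]
where $G(s):=\psi\Xi$ evaluated at $x=h^{-1}(e^{-s})$.

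The strategy is to find a concave majorant $\widetilde G\ge G$ with $\widetilde G(0)\ge 0$ and $\widetilde G(s)=\frac1\delta\psi(h^{-1}(e^{-s}))$ on the relevant range. Then Jensen's inequality gives the bound. When $d<D$, the missing $D-d$ terms are treated as $s=0$ (i.e.\ $h=1$), which is allowed only in the regime $\gamma\le1$ where that padding makes sense. Otherwise $d=D$ by regularity.

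On the piece $x_i\le\hat x$ we have $G=\psi/\delta$ exactly. On the piece $x_i>\hat x$, $G=\psi(x_i)\bigl(1+\frac D\delta\psi(f^{-1}(x_i))\bigr)$. Here I would use that $f^{-1}(x_i)<\hat x$, together with the identity $D\psi(\hat x)=1-\delta$ at the fixed point, to show this is still dominated by the same majorant.

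I expect the main obstacle to be verifying this concavity and domination across the breakpoint $\hat x$. My first attempt would be the symmetrization used in the proof of \Cref{lem:interaction-to-unique-slack}. One rescales $x\mapsto x\sqrt{\beta/\gamma}$ so the system becomes symmetric with $\hat\beta=\sqrt{\beta\gamma}=\tanh\zeta$, and writes $x=\tanh\eta$. Then $\psi$ becomes $\frac{\sinh 2\zeta}{\sinh(2\eta+2\zeta)}$-type expressions, $s$ becomes linear-ish in $\eta$ via $\log\tanh(\eta+\zeta)$, and the required concavity reduces to explicit hyperbolic inequalities. For these, the assumption $\lambda\le(\gamma/\beta)^{\Delta/2}$, i.e.\ $\hat x\le\sqrt{\gamma/\beta}$ by \Cref{fact:range-hat-x}, is exactly what places $\eta$ in the range where the inequalities hold.
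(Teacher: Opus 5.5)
Your outline is the paper's proof: the same split at $\hat x$, Case~1 via \Cref{lem:STD-max}, and Case~2 via Jensen for $\psi$ written in the logarithmic variable of $h$, with dummy terms at $h=1$ when $d<D$ (which forces $\max\{\beta,\gamma\}\le 1$). The step you single out as the main obstacle is actually a one-line computation.

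Your domination argument, once completed with \Cref{fact:psi-monotone} and \Cref{fact:range-hat-x}, gives $\Xi\le 1/\delta$ on all of $[0,+\infty]$. So the majorant can simply be $\psi/\delta$ everywhere. With $s=\log h(x)$ one computes $\psi=\frac{1+\beta\gamma-\beta e^{-s}-\gamma e^{s}}{1-\beta\gamma}$, which is concave on the whole range of $s$. Jensen then gives $\sum_{i\le D}\psi(x_i)\le D\psi(w)$ directly, and $\psi\ge 0$ handles the padding.

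Do not take the hyperbolic route you suggest for this step. The $x_i$ are arbitrary in $[0,+\infty]$, so concavity is needed on the entire domain. But $x\sqrt{\beta/\gamma}=\tanh\eta$ only parametrizes $x<\sqrt{\gamma/\beta}$. The hypothesis $\hat x\le\sqrt{\gamma/\beta}$ constrains $\hat x$ and $w$, not the $x_i$. Its real role is in the domination $\Xi\le 1/\delta$ and in \Cref{lem:STD-max}, not in the concavity.
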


\noindent
The upper bounds on spectral and coupling independence then follow easily:
\begin{proof}[Proof of \Cref{thm:optimal-SI} and \Cref{lem:optimal-CI}]
Assuming \Cref{lem:STD-max} and \Cref{lem:STD-sol}, \Cref{cor:SI-general-graphs} implies \Cref{thm:optimal-SI}, and \Cref{cor:CI-general-graphs} implies \Cref{lem:optimal-CI}.
\end{proof}

It remains to prove~\Cref{lem:STD-max,lem:STD-sol}.  
We will use the following property of  $\psi$:
\begin{fact} \label{fact:psi-monotone}
The function $\psi(x) = \frac{(1-\beta\gamma)x}{(\beta x + 1)(x+\gamma)}$ satisfies
\begin{itemize}[nosep]
    \item $\psi$ is monotonically increasing on $[0, \sqrt{\gamma/\beta}]$;
    \item $\psi$ is monotonically decreasing on $[\sqrt{\gamma/\beta}, +\infty)$.
\end{itemize}
\end{fact}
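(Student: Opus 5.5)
This is an elementary calculus fact, so I would differentiate $\psi$ directly and read off the sign of $\psi'$. Write $\psi(x) = (1-\beta\gamma)\,x/q(x)$ with
\[
q(x) := (\beta x+1)(x+\gamma) = \beta x^2 + (1+\beta\gamma)x + \gamma .
\]
Since $1-\beta\gamma>0$ under \eqref{eq:cond-anti-ferro-flip} and $q(x)>0$ for $x\ge 0$ (because $\gamma>0$), the sign of $\psi'(x)$ equals the sign of
\[
q(x) - x\,q'(x) = \beta x^2 + (1+\beta\gamma)x + \gamma - x\bigl(2\beta x + 1+\beta\gamma\bigr) = \gamma - \beta x^2 .
\]
This gives the key identity
\[
\psi'(x) = \frac{(1-\beta\gamma)(\gamma-\beta x^2)}{(\beta x+1)^2(x+\gamma)^2}.
\]

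Both claims then follow at once. For $0\le x\le \sqrt{\gamma/\beta}$ we have $\beta x^2\le \gamma$, so $\psi'\ge 0$ and $\psi$ is nondecreasing. For $x\ge\sqrt{\gamma/\beta}$ we have $\psi'\le 0$, so $\psi$ is nonincreasing. The derivative vanishes only at the single point $x=\sqrt{\gamma/\beta}$, so the monotonicity is in fact strict on each interval.

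The only point needing care is the degenerate case $\beta=0$. There the convention $\gamma/\beta=+\infty$ applies, and $\psi(x)=x/(x+\gamma)$ with $\psi'=\gamma/(x+\gamma)^2>0$. So the first interval is all of $[0,+\infty)$ and the second claim is vacuous, which is consistent with the formula above. I do not expect any real obstacle here; the step above is routine algebra.
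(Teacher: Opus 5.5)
Your proof is correct. The identity $\psi'(x) = \frac{(1-\beta\gamma)(\gamma-\beta x^2)}{(\beta x+1)^2(x+\gamma)^2}$, together with $1-\beta\gamma>0$ and $\gamma>0$, gives both monotonicity claims, and your handling of $\beta=0$ through the convention $\gamma/\beta=+\infty$ is right. This is essentially the paper's approach: the paper states the fact without proof, but the same computation appears later, since the function $G(x)=\frac{\gamma-\beta x^2}{(\beta x+1)(x+\gamma)}$ used in the proof of \Cref{lem:STD-max} is exactly $x\psi'(x)/\psi(x)$.
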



\begin{proof}[Proof of \Cref{lem:STD-max}]
For $x > \lambda \gamma^{-D}$, 
$\Xi(x) = 0$, and the claimed bound holds trivially.
  

It remains to consider the cases $x \in [0, \hat{x}]$ and $x \in (\hat{x}, \lambda \gamma^{-D}]$.  

    \paragraph{Case 1: $x \in [0, \hat{x}]$.} 
    By \Cref{fact:range-hat-x}, we have $\hat{x} \leq \sqrt{\gamma/\beta}$.
    In this range, the following bound holds:
    \begin{align*}
        \Xi(x) \cdot \psi(x) = \frac{\psi(x)}{\delta} \overset{(\star)}{\le} \frac{\psi(\hat{x})}{\delta} = \frac{1-\delta}{\delta (\Delta-1)},
    \end{align*}
    where $(\star)$ follows from \Cref{fact:psi-monotone}, and the last equality is due to \eqref{eq:766}.
    
    \paragraph{Case 2: $x \in (\hat{x}, \lambda \gamma^{-D}]$.} 
    In this case, we have
    \begin{align*}
        \psi(x) \cdot \Xi(x) = \left(1+ \frac{D}{\delta} \psi(f^{-1}(x))\right) \psi(x) \overset{(\ast)}{=} \left(1 + \frac{D}{\delta} \psi(t)\right) \psi(f(t)) =: H(t),
    \end{align*}
    where we change variables by setting $t = f^{-1}(x) \le \hat{x} \leq \sqrt{\gamma/\beta}$ in $(\ast)$. 
    Define function 
    \[G(x) := \frac{\gamma-\beta x^2}{(\beta x+1)(x+\gamma)}.\]
    Differentiating $H(t)$ with respect to $t$, we obtain
    \begin{align*}
        \frac{H'(t)}{H(t)} &= \frac{D(1-\beta\gamma)}{(\beta t+1)(t+\gamma)}\left(\frac{\gamma-\beta t^2}{\delta(\beta t+1)(t+\gamma)+D(1-\beta\gamma)t}-\frac{\gamma -\beta f(t)^2}{(\beta f(t)+1)(f(t)+\gamma)}\right)\\
        &= \frac{D(1-\beta\gamma)}{(\beta t+1)(t+\gamma)}\left(\frac{G(t)}{\delta+D\psi(t)}-G(f(t))\right)\\
        \tag{by $t\leq \hat{x} \leq \sqrt{\gamma/\beta}$ and \Cref{fact:psi-monotone}}
        &\ge \frac{D(1-\beta\gamma)}{(\beta t+1)(t+\gamma)}\left(\frac{G(t)}{\delta+D\psi(\hat{x})}-G(f(t))\right)\\
        \tag{by \eqref{eq:766}}
        &= \frac{D(1-\beta\gamma)}{(\beta t+1)(t+\gamma)} \left(G(t) - G(f(t))\right).
    \end{align*}
%
    Since $G(x)$ is monotonically decreasing for $x \geq 0$ and $t \le \hat{x} \le f(t)$, we have $G(t) \ge G(f(t))$, which implies $H'(t) \ge 0$. It follows that $H(t) \le H(\hat{x}) = \frac{1-\delta}{\delta(\Delta-1)}$.
%
\end{proof}

\begin{proof}[Proof of \Cref{lem:STD-sol}]
    Fix $d\le D$ and $x_1,x_2,\ldots,x_d\ge 0$ in the statement of \Cref{lem:STD-sol}.
    We define $y = \lambda \cdot \prod_{i=1}^d \tp{\frac{\beta x_i + 1}{x_i + \gamma}}\in [0,\lambda \gamma^{-D}]$. There are two separate cases:
    
    \paragraph{Case 1. $y \le \hat{x}$:}
    In this case, by \Cref{lem:STD-max}, we have
    \begin{align*}
        1 + \sum_{i=1}^d \psi(x_i) \cdot \Xi(x_i) \le 1 + D \cdot \frac{1-\delta}{D \delta} = \frac{1}{\delta} = \Xi(y).
    \end{align*}
    \paragraph{Case 2. $y\in (\hat{x},\lambda \gamma^{-D}]$:}
    In this case, let $z = f^{-1}(y) \le \hat{x} \le \sqrt{\gamma/\beta}$. 
    After plugging in \eqref{eq:ctrl_functional},
    it remains to show that
    \begin{align*}
        \sum_{i=1}^d \psi(x_i) \cdot \Xi(x_i) \le \frac{D}{\delta}\psi(z).
    \end{align*}
    Note that we  always have $\Xi(x_i) \le \Xi(\hat{x}) = \frac{1}{\delta}$.
    It suffices to show
    \begin{align*}
        \sum_{i=1}^d \psi(x_i)\le D\psi(z).
    \end{align*}    
    By the assumption on $d$, when $\max\set{\beta,\gamma} > 1$, we have $d = D$.
    When  $\max\set{\beta,\gamma} \leq 1$, we can introduce dummy nodes $x_{d+1} =  \cdots = x_D = \frac{1-\gamma}{1-\beta}$, which gives
    \begin{align*}
        \lambda\prod_{i=1}^d \tp{\frac{\beta x_i+1}{x_i+\gamma}}=\lambda\prod_{i=1}^D \tp{\frac{\beta x_i+1}{x_i+\gamma}}
    \end{align*}
    Since $\psi$ is non-negative, it suffices to show that
    \begin{align*}
        \sum_{i=1}^D \psi(x_i) \le D \psi(z).
    \end{align*}
    Let $h(x) = \frac{\beta x + 1}{x + \gamma}$, $s_i = \log h(x_i) $ and $s_z = \log h(z)\ge \log h(\sqrt{\gamma/\beta})=\log \sqrt{\beta/\gamma}$. By the definition of $y$, $z$ and $f^{-1}$, we have
    \begin{align*}
        \lambda \cdot h(z)^D = \lambda \cdot \prod_{i=1}^D h(x_i) \quad \Longleftrightarrow \quad s_z = \frac{1}{D} \sum_{i=1}^D s_i.
    \end{align*}
    Let $\phi := \psi \circ (\log \circ h)^{-1}$. One can compute that $\phi(x) = \frac{1+\beta\gamma -\beta e^{-x}-\gamma e^x}{1-\beta\gamma}$, so $\phi$ is concave. 
    Therefore,
    \begin{align*}
        &\sum_{i=1}^D \psi(x_i)=\sum_{i=1}^D \phi(s_i) \le D \phi(\frac{1}{D}\sum_{i=1}^D s_i) = D\phi(s_z)=D\psi(z). \qedhere
    \end{align*}

\end{proof}

\subsection{A matching lower bound for spectral independence}\label{subsec:lower-bound}

We show that the spectral independence upper bound established in the preceding sections is tight. 
Specifically, the following lower bound holds for any antiferromagnetic parameter $(\beta, \gamma, \lambda)$ (i.e., satisfying \eqref{eq:cond-anti-ferro}) that is $(\Delta-1)$-unique with \emph{exact} slack $\delta$, i.e., satisfying \eqref{eq:766}.

\begin{theorem}\label{thm:SI-lowerbound}
Fix $\Delta \ge 3$ and $\delta \in (0,1)$. 
Assume that $(\beta, \gamma, \lambda)$ satisfies \eqref{eq:cond-anti-ferro} and is $(\Delta-1)$-unique with exact slack $\delta$. 
Then, there exists a sequence of graphs $\{G_n\}_{n \in \mathbb{N}}$ with maximum degree $\Delta$ such that the corresponding Gibbs distributions $\mu_n$ on $G_n$ satisfy
$$\lim_{n \to \infty} \lambda_{\max} \left( \Psi_{\mu_n} \right) = \frac{\Delta(1-\delta)}{(\Delta-1)\delta}.$$
\end{theorem}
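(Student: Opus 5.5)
The plan is to build graphs that locally look like the infinite $\Delta$-regular tree $\mathbb{T}_\Delta$ with the translation-invariant Gibbs measure at the fixed point $\hat x$. On these graphs the total influence of a vertex can be computed exactly through the tree recursion \eqref{eq:TI-recursion}, and it sums to the claimed constant. Since \Cref{thm:optimal-SI} already supplies the matching upper bound, only the lower bound $\liminf_n \lambda_{\max}(\Psi_{\mu_n}) \ge \frac{\Delta(1-\delta)}{(\Delta-1)\delta}$ needs proof.

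\emph{Step 1 (the tree computation).} On the infinite $\Delta$-regular tree with every subtree marginal ratio equal to $\hat x$, the influence of the root on a vertex at depth $\ell\ge1$ is the product of edge factors $\pm\psi(\hat x)$. Here the depth-one factor uses the full $\Delta$-neighbour measure and the deeper factors use the $D$-ary subtrees. The sign alternates because the system is antiferromagnetic, but its absolute value is essentially $\psi(\hat x)^\ell$ with $\psi(\hat x)=(1-\delta)/D$. There are $\Delta D^{\ell-1}$ vertices at depth $\ell$, so the absolute influences sum to
\[
\sum_{\ell\ge1}\Delta D^{\ell-1}\Bigl(\tfrac{1-\delta}{D}\Bigr)^{\ell}=\frac{\Delta}{D}\cdot\frac{1-\delta}{\delta}.
\]
This is exactly the target value, since $D=\Delta-1$.

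\emph{Step 2 (sign issue).} $\lambda_{\max}$ sees signed entries, so the alternating signs must be handled. I would choose $G_n$ to be \emph{bipartite} $\Delta$-regular graphs with girth tending to infinity, for example random bipartite regular graphs or explicit high-girth bipartite constructions. Let $D_\pm$ be the diagonal $\pm1$ matrix given by the bipartition. Then $D_\pm\Psi_{\mu_n}D_\pm$ is similar to $\Psi_{\mu_n}$ and has (asymptotically) nonnegative entries $|\Psi(u,v)|$. Taking the test vector $\mathbf 1$ suitably weighted (the influence matrix is $\mathrm{diag}(\mathrm{Var})^{-1}\mathrm{Cov}$, so I would symmetrize by the variances; by vertex-transitivity in the limit the variances are asymptotically constant) gives a Rayleigh-quotient lower bound of roughly the average row sum of $|\Psi|$.

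\emph{Step 3 (local convergence).} I would show that for fixed radius $L$, as $n\to\infty$, the influences $\Psi_{\mu_n}(u,v)$ with $\mathrm{dist}(u,v)\le L$ converge to the tree values. This is the main obstacle. At criticality-free slack $\delta>0$, uniqueness on the tree gives correlation decay, that is, strong spatial mixing for $\Delta$-regular graphs via Weitz's SAW tree (the \cite{LLY13} results, which apply at the uniqueness regime with slack). With girth $>2L$, the ball of radius $L$ is a tree, and the boundary effect on the pair marginals decays uniformly. Hence the conditional marginals converge to those of the tree Gibbs measure at the fixed point. The decay also makes the contribution from distances $>L$ nonnegative up to error, or at least small in absolute value, after the sign flip. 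Summing the truncated series to depth $L$ and letting $L\to\infty$ yields $\liminf\lambda_{\max}(\Psi_{\mu_n})\ge\frac{\Delta(1-\delta)}{(\Delta-1)\delta}-o(1)$.

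Combining this with the upper bound of \Cref{thm:optimal-SI} (applied with exact slack $\delta$) gives the limit. If $\gamma>1$ is an issue for that theorem, regularity of $G_n$ already satisfies its hypothesis.
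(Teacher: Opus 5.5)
Your overall route is the paper's: high-girth bipartite $\Delta$-regular $G_n$, the bipartition sign vector as test vector for $\Psi_{\mu_n}=D_n^{-1}(\mathrm{Cov}_{\mu_n}-D_n)$, convergence of bounded-distance influences to the tree values via strong spatial mixing~\cite{LLY13} and girth, the tree sum $\sum_{\ell}\Delta(\Delta-1)^{\ell-1}\psi(\hat x)^{\ell}$, truncation at radius $L$ with $L\to\infty$, and \Cref{thm:optimal-SI} for the upper bound.

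In Step~1 you can drop ``essentially''. $\Psi(u,w_1)$ depends only on the $D$-ary subtree of $w_1$ away from $u$, so every edge factor, including the first, is exactly $-\psi(\hat x)$.

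\emph{The gap: pairs at distance $>L$.} You only know the sign-flipped entries are ``asymptotically nonnegative'', and only at bounded distance. You then assert that ``decay'' makes the far contribution nonnegative up to error, or small. Strong spatial mixing as you invoke it gives at best a pointwise bound $|\Psi_{\mu_n}(u,v)|\le C\mathrm{e}^{-c\,\mathrm{dist}(u,v)}$. Nothing ties the rate $c$ to $\log(\Delta-1)$. Meanwhile the sphere of radius $\ell$ has $\Delta(\Delta-1)^{\ell-1}$ vertices, and there are $n$ vertices in all. So nothing uniform in $n$ controls the far part $\sum_u\sum_{v:\,\mathrm{dist}(u,v)>L}\boldsymbol{\alpha}_u\boldsymbol{\alpha}_v\mathrm{Cov}_{\mu_n}(u,v)$ of the Rayleigh quotient. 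Negative far entries could then pull the quotient below the truncated tree sum.

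\emph{The missing idea: the sign pattern is exact, not asymptotic.} Flip the spins on one side of the bipartition. Each edge factor becomes $1$ on agreeing pairs and $\beta$ or $\gamma$ on the two disagreeing pairs. Since $1\cdot 1\ge\beta\gamma$, the flipped measure is log-supermodular. FKG then gives $\boldsymbol{\alpha}_u\boldsymbol{\alpha}_v\mathrm{Cov}_{\mu_n}(u,v)\ge 0$ for \emph{all} $u\ne v$. This is the monotonicity the paper uses. With it, for every $L$ and every $n$,
\[
\lambda_{\max}(\Psi_{\mu_n})\ \ge\ \frac{\sum_{u\ne v}|\mathrm{Cov}_{\mu_n}(u,v)|}{\sum_u \mathrm{Var}_{\mu_n}(u)}\ \ge\ \min_u\sum_{v:\,1\le\mathrm{dist}(u,v)\le L}|\Psi_{\mu_n}(u,v)| .
\]
Only bounded-distance influences then need to converge, and you do not need asymptotically constant variances.

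\emph{An alternative repair.} The SAW-tree/control-function argument behind \Cref{thm:optimal-SI} actually bounds absolute row sums: $\sum_{v}|\Psi_{\mu_n}(u,v)|\le\frac{\Delta(1-\delta)}{(\Delta-1)\delta}$. Let $S_L$ denote the truncated tree sum. Combined with near-field convergence, this forces the tail to be at most $\frac{\Delta(1-\delta)}{(\Delta-1)\delta}-S_L+o_n(1)$, uniformly in $u$. This tail vanishes as $L\to\infty$, which also closes the gap.
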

The intuition behind~\Cref{thm:SI-lowerbound} is rooted in the behavior of the Gibbs measure on the infinite $\Delta$-regular tree. 
For a sequence of graphs that converges locally to the infinite $\Delta$-regular tree $\mathbb{T}_\Delta$, the local weak limit of the Gibbs measures $\{\mu_n\}$ corresponds to the unique Gibbs measure on $\mathbb{T}_\Delta$. 
A direct analysis of this measure shows that the total influence on $\mathbb{T}_\Delta$ is exactly 
 $\frac{\Delta(1-\delta)}{(\Delta-1)\delta}$.

We first introduce a construction of large-girth bipartite regular graphs.
\begin{fact}[\cite{MR3192385}]\label{fact:local-tree}
    There is a family of $\Delta$-regular bipartite graphs $G_n$ with girth $g_n=\omega_n(1)$.
\end{fact}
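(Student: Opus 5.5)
We plan to deduce the fact from the classical Erdős--Sachs theorem by passing to bipartite double covers. Erdős--Sachs asserts that for every $\Delta\ge 3$ and every $g\ge 3$ there is a finite $\Delta$-regular graph $H_g$ with girth at least $g$. Taking $g=g_n:=n$, this gives a sequence $H_n$ whose girth tends to infinity. To make the graphs bipartite, we set $G_n:=H_n\times K_2$, the bipartite double cover. Its vertex set is $V(H_n)\times\{0,1\}$, and $(u,0)\sim(v,1)$ whenever $\{u,v\}\in E(H_n)$. By construction $G_n$ is bipartite, with sides $V(H_n)\times\{0\}$ and $V(H_n)\times\{1\}$. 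It is $\Delta$-regular, since $(u,i)$ has exactly the neighbours $(v,1-i)$ with $v\sim u$ in $H_n$.

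The main step is to show that $\mathrm{girth}(G_n)\ge \mathrm{girth}(H_n)$. The projection $p:(u,i)\mapsto u$ is a covering map: it is a local bijection from the neighbourhood of $(u,i)$ onto the neighbourhood of $u$. Take a cycle $(u_0,i_0),(u_1,i_1),\dots,(u_\ell,i_\ell)=(u_0,i_0)$ in $G_n$. Its image $u_0,u_1,\dots,u_\ell$ is a closed walk of length $\ell$ in $H_n$. This walk is non-backtracking: if $u_{j-1}=u_{j+1}$, then $i_{j-1}=i_{j+1}=1-i_j$, so $(u_{j-1},i_{j-1})=(u_{j+1},i_{j+1})$, contradicting that we started from a cycle. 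The same argument applies cyclically at the base point. A closed non-backtracking walk of positive length in a graph contains a cycle of length at most its own length. (Take a shortest closed subwalk between two visits to the same vertex; non-backtracking rules out length $2$.) Hence $\ell\ge\mathrm{girth}(H_n)\ge n$, and so $\mathrm{girth}(G_n)=\omega_n(1)$.

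The only nontrivial ingredient is Erdős--Sachs itself, which we take as known. If a self-contained argument were wanted, we would prove it by the standard extremal argument. Among $\Delta$-regular graphs of girth at least $g$, with enough vertices, take one with the fewest vertices. If it has more than about $(\Delta-1)^{g}$ vertices, one can find two vertices at distance at least $g$ and locally rewire edges, contradicting minimality. A random-lift argument would also work.

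The step we expect to be the main obstacle is this existence step, not the double cover. The girth comparison is routine once one checks the non-backtracking property carefully, including the wrap-around at the base point of the closed walk.
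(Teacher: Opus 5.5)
Your proof is correct. It differs from the paper only in that the paper gives no argument: it states the fact as a black box with the citation to \cite{MR3192385}.

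You instead reduce the statement to the classical Erd\H{o}s--Sachs theorem and get bipartiteness from the double cover $H_n\times K_2$. Your girth comparison is sound. A cycle of length $\ell\ge 3$ in $H_n\times K_2$ projects to a closed walk of length $\ell$ in $H_n$, and that walk is non-backtracking because $u_{j-1}=u_{j+1}$ would force $(u_{j-1},i_{j-1})=(u_{j+1},i_{j+1})$. Taking a minimal repeated pair $w_i=w_j$ then yields a cycle, so $\mathrm{girth}(H_n\times K_2)\ge\mathrm{girth}(H_n)$. The wrap-around check you mention is harmless but not needed. A minimal pair with $j-i=2$ already has its middle index in the interior of the walk.

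Your route makes explicit that the bipartite requirement costs nothing beyond the existence of high-girth regular graphs. It also uses only the qualitative conclusion $g_n\to\infty$, which is all the paper needs. In Lemma~\ref{lem:one-pair} this makes balls of radius $h_n\to\infty$ in $G_n$ isomorphic to those of $\mathbb{T}_\Delta$. In the proof of Theorem~\ref{thm:SI-lowerbound} the bipartition supplies the test vector. A citation to explicit constructions could add explicitness or better girth-versus-size constants, but the argument here uses neither.

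One caveat on your aside about a self-contained proof of Erd\H{o}s--Sachs. ``Take a $\Delta$-regular graph of girth at least $g$ with the fewest vertices'' presupposes that such graphs exist, so it bounds the order of a cage rather than proving existence. The standard existence argument works with a fixed number $N\gg(\Delta-1)^{g}$ of vertices. Among graphs of maximum degree at most $\Delta$ and girth at least $g$, take one with the most edges. If it is not regular, rewire an edge that is far from the deficient vertices to gain an edge, a contradiction. Since you take Erd\H{o}s--Sachs as known, this does not affect your proof.
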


We next relate the influence between vertices on $G_n$ to that on the infinite tree $\mathbb{T}_\Delta$.

\begin{lemma}\label{lem:one-pair}
Fix $r \ge 1$. 
Let $\{G_n\}$ be the sequence of graphs constructed in~\Cref{fact:local-tree}, and let $\mu_n$ denote the Gibbs distribution on $G_n$ with antiferromagnetic parameters $(\beta, \gamma, \lambda)$ that are $(\Delta-1)$-unique with exact slack $\delta$.  
Then, for any vertices $u,v \in G_n$ with $\dist(u,v) \le r$ and $u \ne v$, 
    \begin{align*}
        \lim_{n \to \infty} \abs{\Psi_{\mu_n}(u,v)} = \tp{\frac{1-\delta}{\Delta-1}}^{\-{dist}(u,v)}.
    \end{align*}
\end{lemma}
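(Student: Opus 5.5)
The plan is to reduce the statement to a computation on the infinite $\Delta$-regular tree $\mathbb{T}_\Delta$, using the large girth of $G_n$ and the uniqueness (strong spatial mixing) of the Gibbs measure at slack $\delta>0$.

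\emph{Step 1: localization.} Fix $u,v$ with $\dist(u,v)=\ell\le r$. Since $g_n\to\infty$, for $n$ large the ball of radius $L\gg r$ around $u$ in $G_n$ is isomorphic to the depth-$L$ ball of $\mathbb{T}_\Delta$, and $u,v$ are joined by a unique shortest path $u=w_0,w_1,\dots,w_\ell=v$. Write $\Psi_{\mu_n}(u,v)=\mu_n(\sigma_v=1\mid\sigma_u=1)-\mu_n(\sigma_v=1\mid\sigma_u=0)$. Condition further on the boundary configuration at distance $L$ from $u$. The parameters are $(\Delta-1)$-unique with positive slack $\delta$, so the tree recursion \eqref{eq:tree-recursion} contracts uniformly in the potential-function sense used earlier for \Cref{thm:optimal-SI}. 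This gives strong spatial mixing: the conditional law of $\sigma_v$ given $\sigma_u$ and an arbitrary boundary at distance $L$ depends on the boundary only up to an error $\epsilon_L\to0$ as $L\to\infty$, uniformly in $n$. Thus $\Psi_{\mu_n}(u,v)$ equals, up to $o(1)$, the same influence computed on the depth-$L$ tree ball with any fixed boundary. Letting $L\to\infty$, that quantity converges to the influence $\Psi_{\pi}(u,v)$ in the unique Gibbs measure $\pi$ on $\mathbb{T}_\Delta$.

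\emph{Step 2: computation on $\mathbb{T}_\Delta$.} In $\pi$, remove the edge $w_{i-1}w_i$; each $w_i$ then sees a subtree in which every vertex has $D=\Delta-1$ children. By translation invariance of the unique measure, every such marginal ratio equals the fixed point $\hat{x}$ of $f$. On a tree, influence along a path factorizes, so $\Psi_\pi(u,v)=\prod_{i=1}^{\ell}\Psi(w_{i-1},w_i)$. Each single-edge influence, by the standard calculation behind \eqref{eq:TI-recursion}, has absolute value
\[
\frac{(1-\beta\gamma)\hat{x}}{(\beta\hat{x}+1)(\hat{x}+\gamma)}=\psi(\hat{x}),
\]
and its sign is negative because the system is antiferromagnetic. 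Here the ratio at $w_i$ is computed in the branch away from $w_{i-1}$. Since $w_i$ has $D$ children there, that ratio is exactly $\hat{x}$. By exact slack \eqref{eq:766}, $\psi(\hat{x})=(1-\delta)/D$, so $|\Psi_\pi(u,v)|=\bigl(\frac{1-\delta}{\Delta-1}\bigr)^{\ell}$.

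\emph{Main obstacle.} The delicate part is Step 1: justifying that the finite-graph influence converges to the tree influence uniformly in $n$. This needs strong spatial mixing with a rate that is uniform over boundary conditions, including pinnings at $0$ and $+\infty$ when $\beta=0$. I would derive it from the contraction of the recursion with slack $\delta$. Concretely, I would use the potential $\Phi$ with $\Phi'=1/\sqrt{\psi}$-type weighting, which gives contraction rate at most $1-\Theta(\delta)$ per level after a bounded number of levels. I would also check that conditioning on $\sigma_u$ only changes boundary data at a vertex on the path, so the SSM estimate applies to both conditionings. Bipartiteness is not needed here beyond supplying the girth.
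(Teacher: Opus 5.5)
Your proposal is correct and follows essentially the paper's route: large girth plus strong spatial mixing transfers the finite-graph quantity to the unique Gibbs measure on $\mathbb{T}_\Delta$. The differences are minor. The paper transfers the marginals $\mu_n(u),\mu_n(uv)$ via $\Psi_\mu(u,v)=\frac{\mu(uv)-\mu(u)\mu(v)}{\mu(u)(1-\mu(u))}$, whereas you transfer the pinned conditionals directly. Your Step~2 (path factorization, every cavity ratio equal to $\hat{x}$, and $D\psi(\hat{x})=1-\delta$) also supplies the tree computation that the paper only asserts.

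For the SSM input you can simply cite \cite{LLY13} as the paper does. If you derive it yourself, the standard potential has derivative proportional to $\sqrt{\psi}$ in the log-ratio coordinate (equivalently $\Phi'(x)\propto 1/\sqrt{x(\beta x+1)(x+\gamma)}$), not $1/\sqrt{\psi}$.
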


\begin{proof}
    Recall from the definition, the influence $\Psi_{\mu}(u,v)$ between $u$ and $v$ satisfies
    \begin{align*}
        \Psi_{\mu}(u,v) = \frac{\mu(uv) - \mu(u) \mu(v)}{\mu(u)(1-\mu(u))},
    \end{align*} 
    where $\mu(uv)$ is the probability of both $u$ and $v$ being occupied. 
    Furthermore, for the Gibbs measure $\nu$ on infinite regular tree $\mathbb{T}_\Delta$, the absolute value of the influence from $\hat{u}$ to $\hat{v}$ with distance $\ell$ is exactly $\tp{\frac{1-\delta}{\Delta-1}}^{\ell}$. 
    Therefore, it suffices to show that
    \begin{align*}
        \lim_{n \to \infty} \mu_n(uv) = \nu(\hat{u} \hat{v}) \quad \text{and} \quad \lim_{n \to \infty} \mu_n(u) = \nu(\hat{u}).
    \end{align*}

    Let $S_n$ be the sphere of radius $h_n$ around $u$, where $h_n = \lfloor g_n/2 \rfloor - 2$. By the strong spatial mixing property~\cite{LLY13} and since the radius-$h_n$ ball  around $u$ in $G_n$ is isomorphic to the radius-$h_n$ ball in $\mathbb{T}_\Delta$, the probability of vertex $u$ being occupied satisfies
    \begin{align*}
        \lim_{n \to \infty} \mu_n(u) = \lim_{n \to \infty} \E[\tau \sim \mu_{n,S_n}]{\mu^{\tau}_n(u)} = \lim_{n \to \infty} \E[\tau \sim \mu_{n,S_n}]{\nu(\hat{u})} = \nu(\hat{u}),
    \end{align*}
    where $\nu$ is the Gibbs measure on $\mathbb{T}_\Delta$ and $\hat{u}$ is an arbitrary vertex on $\mathbb{T}_\Delta$.
    The same argument applies to $\mu_n(uv)$, and thus we complete the proof.
\end{proof}
We are now ready to prove~\Cref{thm:SI-lowerbound}.

\begin{proof}[Proof of \Cref{thm:SI-lowerbound}]

Fix $r \ge 1$. We claim that
\begin{align}\label{eq:lowerbound-r}
    \liminf_{n \to \infty} \lambda_{\max}\tp{\Psi_{\mu_n}} \ge \sum_{i = 1}^r \Delta \cdot (\Delta-1)^{i-1} \cdot \tp{\frac{1-\delta}{\Delta-1}}^i.
\end{align}
By taking $r \to \infty$ on both sides, we complete the proof of \Cref{thm:SI-lowerbound}.

It remains to verify~\eqref{eq:lowerbound-r}. 
Note that 
\[
\Psi_{\mu_n}=D_n^{-1}(\-{Cov}_{\mu_n}-D_n), \text{ where }D_n=\-{diag}\tp{\mu(u)(1-\mu(u))}_{u\in V}.
\]
Let $(L,R)$ be the bipartition of  $G_n$.
Define the test vector $\boldsymbol{\alpha} \in \{-1,+1\}^{L \cup R}$ indicating the partition with $\boldsymbol{\alpha}_u=\*1[u\in L]-\*1[u\in R]$.
By the monotonicity of the antiferromagnetic two-spin system on bipartite graphs, and by the Rayleigh quotient
\begin{align*}
    \lambda_{\max}\tp{\Psi_{\mu_n}} \ge \frac{\boldsymbol{\alpha} ^\intercal (\-{Cov}_{\mu_n}-D_n) \boldsymbol{\alpha} }{\boldsymbol{\alpha}^\intercal D_n\boldsymbol{\alpha} } =\frac{\sum_{u\ne v} \abs{\mu_n(uv)-\mu_n(u)\mu_n(v)}}{\sum_u \mu_n(u)(1-\mu_n(u))}.
\end{align*}
Hence, 
\begin{align*}
    \lambda_{\max}\tp{\Psi_{\mu_n}}\ge \min_{u \in L \cup R}\frac{\sum_{v\ne u} \abs{\mu_n(uv)-\mu_n(u)\mu_n(v)}}{ \mu_n(u)(1-\mu_n(u))}\ge \min_{u \in L \cup R} \sum_{i=1}^r \sum_{v \in S_u(i)} \abs{\Psi_{\mu_n}(u,v)},
\end{align*}
where $S_u(i)$ is the sphere of radius-$i$ around $u$.
By the local convergence of $\{G_n\}$ and~\Cref{lem:one-pair}, 
\[
   \lim_{n \to \infty} \sum_{v \in S_u(i)} \abs{\Psi_{\mu_n}(u,v)} = \Delta \cdot (\Delta-1)^{i-1} \cdot \tp{\frac{1-\delta}{\Delta-1}}^i.
\]
This completes the proof.
\end{proof}

\section{Rapid Mixing of the Swendsen--Wang Dynamics} \label{sec:rapid-mixing-SW-dynamics}
\newcommand{\muIsing}[1]{\mu_{\mathrm{Ising}#1}}
\newcommand{\ZIsing}[1]{Z^{\mathrm{Ising}}_{#1}}
\newcommand{\murc}[1]{\mu_{\mathrm{RC}#1}}
\newcommand{\Zrc}[1]{Z^{\mathrm{RC}}_{#1}}
\newcommand{\SStab}[1]{\mathrm{ES}(#1)}

In this section, we prove \Cref{thm:SW-main}, establishing optimal bounds on the spectral gap and mixing time of the Swendsen--Wang dynamics for the ferromagnetic Ising model with an external field on graphs of constant maximum degree.

Let $\muIsing{}$ be the Gibbs distribution of the ferromagnetic Ising model on a graph $G=(V,E)$ with edge activity $\beta \in [1,+\infty)$ and external field $\lambda \in [0,1]$. 
The Swendsen--Wang dynamics, introduced in~\cite{Swendsen1987}, is a global Markov chain for sampling from  $\muIsing{}$.

\begin{definition}[Swendsen--Wang dynamics]\label{def:SW-chain}
The \emph{Swendsen--Wang dynamics} is a Markov chain with stationary distribution $\muIsing{}$.
Define $p := 1 - \beta^{-1}$.
Given a current configuration $\sigma \in \set{0,1}^V$, the chain updates $\sigma$ to $\tau$ by applying  the following two operators:
\begin{itemize}
    \item $\mathbf{D}_{1 \to p}$: Let $M = M(\sigma) \subseteq E$ be the set of monochromatic edges under $\sigma$. Generate a random subset $T \subseteq M$ by independently removing each $e \in M$ with probability $1-p$.
    \item $\mathbf{U}_{p \to 1}$: Initialize $\tau = \emptyset$, and let $\kappa(V,T)$ denote the set of connected components of the graph $(V,T)$. For each $C \in \kappa(V,T)$, add $C$ to $\tau$ independently with probability $\frac{\lambda^{|C|}}{1 + \lambda^{|C|}}$.
\end{itemize}
\end{definition}
The Swendsen--Wang dynamics exploits the connection between the Ising model and its random-cluster representation. 
Let $p := 1 - \beta^{-1}$. 
The random-cluster model on graph $G$ with edge parameter $p \in [0,1)$ and vertex activity $\lambda \in [0,1]$ defines a distribution
\begin{align*}
    \forall S \subseteq E, \quad \murc{}(S) \propto p^{\abs{S}}(1-p)^{\abs{E\setminus S}} \prod_{C\in\kappa(V,S)}\tp{1 + \lambda^{\abs{C}}}.
\end{align*}
The Edwards--Sokal coupling formalizes this connection and implies the correctness of the Swendsen--Wang dynamics, as stated in the following proposition (see, e.g., \cite{feng2023swendsen-wang}).
\begin{proposition}[Edwards--Sokal coupling] \label{prop:ES-coupling}
Let $\mu_{\mathrm{RC}}$ be the distribution of the random-cluster model on a graph $G = (V,E)$ with edge parameter $p \in [0,1)$ and vertex activity $\lambda \in [0,1]$. 
Let $\mu_{\mathrm{Ising}}$ be the Gibbs distribution of the Ising model on the same graph $G$ with edge activity $\beta = (1-p)^{-1}$ and external field $\lambda \in [0,1]$.
   
For any partial pinning $T \subseteq E$, 
the following procedure constructs a sample $Y \sim \mu_{\mathrm{Ising}}^T$, 
i.e., $Y \sim \mu_{\mathrm{Ising}}$ conditioned on the event that $Y_u = Y_v$ for all $\{u,v\} \in T$:
    \begin{itemize}
    \item Initialize $Y = \emptyset$ and sample $S \sim \mu_{\mathrm{RC}}^T$, i.e., $S \sim \mu_{\mathrm{RC}}$ conditioned on $T \subseteq S$;
    \item For each $C \in \kappa(V,S)$, add $C$ to $Y$ independently with probability $\frac{\lambda^{|C|}}{1 + \lambda^{|C|}}$.
    \end{itemize}

Conversely, the following procedure produces a sample $X \sim \murc{}^T$:
    \begin{itemize}
    \item Initialize $X = T$. Sample $Y \sim \muIsing{}^T$, and for each monochromatic edge $e = \{u,v\} \in E$, independently add $e$ to $X$ with probability $p$.
    \end{itemize}

    In particular, 
\[
\muIsing{} \, \mathbf{D}_{1 \to p} = \murc{} 
\quad \text{and} \quad 
\murc{} \, \mathbf{U}_{p \to 1} = \muIsing{}.
\]
\end{proposition}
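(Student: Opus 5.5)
We prove the Edwards--Sokal coupling by computing the joint distribution of spins and edges directly, conditioned on the pinning $T$. Define the joint weight on pairs $(\sigma, S)$ with $\sigma \subseteq V$ and $T \subseteq S \subseteq E$ by
\[
w(\sigma,S) := \lambda^{|\sigma|}\, p^{|S|}(1-p)^{|E\setminus S|}\,\mathbf{1}[\text{every } e\in S \text{ is monochromatic under } \sigma].
\]
The whole argument rests on checking that both marginals of $w$ are the right distributions.

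\textbf{Spin marginal.} Fix $\sigma$ with all edges of $T$ monochromatic. Edges in $S\setminus T$ must lie in $M(\sigma)\setminus T$, where $M(\sigma)$ is the set of monochromatic edges. Summing over $S$, each monochromatic edge not in $T$ contributes a factor $p+(1-p)=1$. Each edge of $T$ contributes $p$, and each non-monochromatic edge contributes $1-p$. Hence
\[
\sum_S w(\sigma,S)\propto \lambda^{|\sigma|}(1-p)^{|E|-m(\sigma)} \propto \lambda^{|\sigma|}\beta^{m(\sigma)},
\]
using $(1-p)^{-1}=\beta$. This is exactly $\muIsing{}^T$ for the symmetric Ising model with $\beta=\gamma$.

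\textbf{Edge marginal.} Fix $S\supseteq T$. The indicator forces $\sigma$ to be a union of components of $(V,S)$. Summing $\lambda^{|\sigma|}$ over such $\sigma$ factorizes as $\prod_{C\in\kappa(V,S)}(1+\lambda^{|C|})$. This gives $\murc{}^T$.

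\textbf{Conditionals.} Given $S$, the conditional of $\sigma$ includes each component $C$ independently with probability $\lambda^{|C|}/(1+\lambda^{|C|})$. This is the first procedure, so its output has the $\sigma$-marginal, namely $\muIsing{}^T$. Given $\sigma$, the conditional of $S$ is: $T$ is forced, each other monochromatic edge is included independently with probability $p$, and non-monochromatic edges are excluded. This is the second procedure, so its output is distributed as $\murc{}^T$.

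The identities $\muIsing{}\mathbf{D}_{1\to p}=\murc{}$ and $\murc{}\mathbf{U}_{p\to1}=\muIsing{}$ are the case $T=\emptyset$. For the first, note that $\mathbf{D}_{1\to p}$ keeps each monochromatic edge with probability $p$, which is exactly the $\sigma\to S$ conditional.

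The main point needing care is well-definedness and feasibility. Pinnings with $\lambda=0$ require the convention $0^0=1$; then the only feasible $\sigma$ is $\emptyset$, and everything still holds. We also need $T$ to be feasible, which is automatic because the all-$0$ configuration makes every edge monochromatic. Beyond that, the argument is a routine double-counting check.
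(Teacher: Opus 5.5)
The paper does not prove this proposition; it quotes it as a known fact (pointing to Feng--Guo--Wang). Your joint-weight double-counting is the standard Edwards--Sokal argument, and all four parts check out: the spin marginal, the edge marginal, the two conditionals, and the $T=\emptyset$ operator identities.

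One small caveat concerns your feasibility remark, which only addresses the Ising side. On the random-cluster side, conditioning on $T\subseteq S$ requires $p>0$ whenever $T\neq\emptyset$. At $p=0$ your weight $w$ vanishes identically, and you silently drop the factor $p^{|T|}$ as a constant. This is a degeneracy of the statement itself rather than a flaw in your argument.
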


We reformulate the Swendsen--Wang dynamics  within the general framework of event-field dynamics, as in \Cref{exm:event-FD}.



\begin{definition}[Swendsen--Wang denoising process]\label{def:SW-denoising}
Let $\mu$ be the Gibbs distribution of the Ising model on a graph $G=(V,E)$. Define the family of events $\mathcal{A}$ by
\[
\mathcal{A} := \left\{ A_{\{u,v\}} \mid \{u,v\} \in E \right\}, \quad \text{where} \quad A_{\{u,v\}} := \{ \sigma \mid \sigma_u = \sigma_v \}.
\]
Let $(Y_t)_{t \in [0,1]}$ be the event-field denoising process for $\mu$ associated with this event family $\mathcal{A}$. 
We refer to the process $(Y_t)_{t \in [0,1]}$ as the \emph{Swendsen--Wang denoising process} for $\mu$.
\end{definition}
As with the edge-field denoising process, the posterior distribution $\mathrm{Law}(Y_1 \mid Y_t)$ corresponds to the conditional distribution with tilted interactions.
\begin{proposition} \label{lem:SW-posterior}
Let $\mu$ be the Gibbs distribution of the Ising model on a graph $G = (V,E)$, and let $(Y_t)_{t \in [0,1]}$ be the Swendsen--Wang denoising process for $\mu$. 
Then
    \begin{align*}
        \mathrm{Law}(Y_1 \mid Y_t) = (1-t) \otimes \mu^{Y_t},
    \end{align*}
where $\mu^{Y_t}$ denotes the conditional distribution of $\mu$ given that for every $\{u,v\} \in E$ with $Y_t(\{u,v\}) = 1$, the event $A_{\{u,v\}}$ occurs; that is, $\sigma_u = \sigma_v$.
\end{proposition}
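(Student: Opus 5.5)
We follow the argument of \Cref{prop:posterior-edge-field-dynamics} essentially verbatim, now with the event family $\mathcal{A}$ of \Cref{def:SW-denoising}. For $\sigma\in\set{0,1}^V$ let $Z(\sigma)\in\set{0,1}^{\mathcal{A}}$ indicate which events $A_{\set{u,v}}$ occur under $\sigma$, and let $\pi=\mathrm{Law}(X,Z(X))$ with $X\sim\mu$ as in \eqref{eq:dist-joint}. Let $(Y'_t)_{t\in[0,1]}$ be the auxiliary vertex-field denoising process for $\pi_{\mathcal{A}}$, so that by \Cref{def:FD-event-denoising} we have $Y_t=Y'_t$ for $t<1$ and $Y_1\sim\pi_V(\cdot\mid Y'_1)$.

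Fix $\sigma\in\Omega(\mu)$ and a feasible $T=Y_t$. Since $Z$ is a deterministic function of $X$, the only value of $Y'_1$ compatible with $Y_1=\sigma$ is $Z(\sigma)$. This gives the factorization
\[
\Pr{Y_1=\sigma\mid Y_t=T}=\pi_V(\sigma\mid Z(\sigma))\cdot \Pr{Y'_1=Z(\sigma)\mid Y'_t=T}.
\]
For the second factor we invoke the posterior formula \eqref{eq:FD-posterior-calc} of the vertex-field denoising process. It says that $\Pr{Y'_1=S\mid Y'_t=T}\propto (1-t)^{\abs{S}}\pi_{\mathcal{A}}(S)\,\*1[T\subseteq S]$. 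We then multiply by $\pi_V(\sigma\mid Z(\sigma))$ and use $\pi_V(\sigma\mid Z(\sigma))\,\pi_{\mathcal{A}}(Z(\sigma))=\pi(\sigma,Z(\sigma))=\mu(\sigma)$. The result is
\[
\Pr{Y_1=\sigma\mid Y_t=T}\propto (1-t)^{\abs{Z(\sigma)}}\,\mu(\sigma)\,\*1[T\subseteq Z(\sigma)].
\]

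It remains to identify the exponent. Here $Z(\sigma)$ is exactly the set of monochromatic edges of $\sigma$, so $\abs{Z(\sigma)}=m(\sigma)$. The weight $(1-t)^{m(\sigma)}\mu(\sigma)$ is, by \eqref{eq:def-otimes}, the unnormalized weight of $(1-t)\otimes\mu$. The indicator $\*1[T\subseteq Z(\sigma)]$ is the conditioning on $\sigma_u=\sigma_v$ for every $\set{u,v}$ with $Y_t(\set{u,v})=1$. Conditioning commutes with the tilt, since both act by multiplicative reweighting followed by normalization, so the law is $(1-t)\otimes\mu^{Y_t}$.

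The only point needing any care is the orientation of the tilt. The edge-field case gives $\frac{1}{1-t}$ because its events count non-monochromatic oriented edges, with $\abs{\mathcal{A}(\sigma)}=\abs{E}-m(\sigma)$. In the Swendsen--Wang case the events count monochromatic edges directly, which yields the factor $(1-t)$ rather than its inverse. Beyond that bookkeeping, no step is a real obstacle.
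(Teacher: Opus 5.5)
Your proof is correct and follows exactly the paper's route. The paper omits the proof, saying it is identical to that of \Cref{prop:posterior-edge-field-dynamics}, and that is the factorization through the auxiliary vertex-field posterior \eqref{eq:FD-posterior-calc} you carry out, with $\abs{Z(\sigma)} = m(\sigma)$ in place of $\abs{E}-m(\sigma)$; this turns the tilt $\frac{1}{1-t}$ into $1-t$, as you noted.
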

The proof is identical to that of~\Cref{prop:posterior-edge-field-dynamics} and is omitted.

As observed in~\Cref{exm:event-FD}, the event-field dynamics for $\mu$ with respect to the event family $\mathcal{A}$ constructed in~\Cref{def:SW-denoising}, and with tilt parameter $\theta = \beta^{-1}$, 
coincides with the Swendsen--Wang dynamics. 
Therefore, by~\Cref{lem:SS-to-AC-variance}, the Swendsen--Wang denoising process $(Y_t)_{t \in [0,1]}$ satisfies $R$-approximate conservation of variance up to time $1-\theta$, where
\begin{align*}
    R = \exp\tp{\int_0^{1-\theta} \frac{C(t)}{1-t} \-d t},
\end{align*}
provided the Swendsen--Wang denoising process $(Y_t)_{t \in [0,1]}$ is spectrally stable with rate $C(t)$ for $t\in[0,1-\theta]$.
It therefore remains to establish the following spectral stability result.

\newcommand{\SSnoising}[1]{C_{\ref{lem:SI-SW}.\mathrm{a}}(#1)}
\newcommand{\SScrude}[1]{C_{\ref{lem:SI-SW}.\mathrm{b}}(#1)}
\begin{lemma}[Spectral stability of Swendsen--Wang denoising process]\label{lem:SI-SW}
Fix $\delta \in (0,1)$. 
Let $G = (V,E)$ be a graph with maximum degree $\Delta$, and let $\mu$ be the Gibbs distribution of the ferromagnetic Ising model on $G$ with edge activity $\beta \in [1,\infty)$ and external field $\lambda \in [0,1-\delta]$. 
The Swendsen--Wang denoising process $(Y_t)_{t \in [0,1]}$ for $\mu$ is spectrally stable with rate
    \begin{align*}
        C(t) = \min\tp{\SSnoising{t}, \SScrude{t}}, \qquad\forall t \in [0,1-\beta^{-1}];
    \end{align*}
    where
    \begin{align*}
        \SSnoising{t} := \frac{(1-t) \beta}{(1-t) \beta - 1} \cdot \frac{2}{\delta^2} 
        \quad \text{ and } \quad 
        \SScrude{t} := \frac{6\Delta}{\e \delta^3}.
    \end{align*}
\end{lemma}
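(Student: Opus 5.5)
We would follow the template of \Cref{lem:pcor-to-stable}: the Swendsen--Wang denoising process is the projection (\Cref{def:denoising-subset-of-vars}) of a vertex-field denoising process on the edge-event indicators $Z_e=\mathbf{1}[\sigma_u=\sigma_v]$. By \Cref{prop:spectral-stable-denoising-subset} and \Cref{lem:spectral-stable-matrix-form}, it therefore suffices to bound, uniformly over pinnings $T\subseteq E$, the largest eigenvalue of the edge-to-edge correlation matrix $\cor{\nu}(e,f)=\nu(A_f\mid A_e)-\nu(A_f)$. Here, by \Cref{lem:SW-posterior}, $\nu=(1-t)\otimes\mu^{T}$ is an Ising measure with effective edge activity $\beta_t=(1-t)\beta\ge 1$ for $t\le 1-\beta^{-1}$, field $\lambda\le 1-\delta$, and contracted edges in $T$. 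The key observation is that, through the Edwards--Sokal coupling (\Cref{prop:ES-coupling}), events on edges can be re-expressed in the random-cluster world. We plan to prove the two bounds separately and take the minimum.

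\emph{Bound (a).} Write $p_t=1-\beta_t^{-1}$. Under $\nu$, the indicator $Z_e$ is obtained from the random-cluster configuration $S\sim\murc{}^T$ (with parameter $p_t$) as follows: $Z_e=1$ if $e\in S$, and otherwise independently with a probability depending on whether the endpoints lie in the same cluster. Equivalently, $\nu(A_e)=\Pr{e\in S}/p_t$ up to the monochromatic-but-open correction. Our plan is to bound the correlation matrix of edge monochromaticity by $\frac{1}{p_t}=\frac{\beta_t}{\beta_t-1}$ times the spectral norm of the covariance/influence matrix of the random-cluster edge indicators, normalized appropriately; this produces the prefactor $\frac{(1-t)\beta}{(1-t)\beta-1}$. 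For the random-cluster model with field $\lambda\le1-\delta$, every cluster of size $k$ has a weight factor $1+\lambda^{k}$ that is within $[1,1+(1-\delta)^k]$. This makes the measure a mild perturbation of a product (independent Bernoulli $p_t$) measure. In particular, the influence of opening $e$ on $f$ should decay like $(1-\delta)^{\text{size}}$, summing to $O(1/\delta^2)$ via a negative-correlation or monotone-coupling argument: one factor $1/\delta$ comes from the geometric sum over cluster sizes, and another from the derivative of $\lambda^k$. We expect this to give $\lambda_{\max}\le 2/\delta^2$ for the RC part.

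\emph{Bound (b).} This crude $t$-independent bound is needed near $t=1-\beta^{-1}$, where $\beta_t\to1$ and (a) blows up. There we use that $\beta_t$ is small, so the Ising measure is close to one with only the field. We plan to bound row sums as in \Cref{lem:cond-stable-pFD}: changing $Z_e$ affects other edges only through a coupling of spin configurations, which costs at most $\Delta$ edge events per disagreeing vertex. The field $\lambda\le1-\delta$ with weak interaction gives a coupling independence constant of order $1/\delta^3$ times $\max_x x\e^{-x}=1/\e$-type factors. The row sum is then at most $\frac{6\Delta}{\e\delta^3}$.

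We expect the main obstacle to be (a), namely transferring edge correlations in the spin world to the random-cluster world while keeping the constant free of $\Delta$ and $\beta$. The step to attempt first is an explicit computation of $\nu(A_f\mid A_e)-\nu(A_f)$ via the coupling with $e$ forced open versus free, using monotonicity of the RC measure in the FKG sense to control signs.
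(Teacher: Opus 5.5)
The reduction to bounding $\lambda_{\max}$ of the edge correlation matrix of $\nu=(1-t)\otimes\mu^{T}$ is right. The gap in (a) is the mechanism behind the factor $1/p_t$. You run the Edwards--Sokal coupling in the up direction and assert that, given $S$, the indicators $Z_e$ are generated independently. That is false: if $e\notin S$ joins two distinct clusters, then $Z_e=1$ iff those clusters receive the same spin, so all edges touching a common cluster are dependent given $S$. From that direction there is no clean comparison, and the relation $\nu(A_e)=\mathbf{Pr}[e\in S]/p_t$ in fact holds exactly, with no correction.

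The usable structure is the down direction. Writing $\widetilde\nu$ for the law of the monochromatic set, $\pi:=\mu_{\mathrm{RC}}^{T}=\widetilde\nu D_{1\to\mathbf p}$ is an independent $p_t$-thinning, with the edges of $T$ kept. Hence for distinct unpinned $e,f$ we have $\pi(e)=p_t\widetilde\nu(e)$ and $\pi(ef)=p_t^2\widetilde\nu(ef)$. So on unpinned edges $\mathrm{Cor}_{\widetilde\nu}=p_t^{-1}\mathrm{Cor}_\pi-(p_t^{-1}-1)I$, or equivalently $\mathrm{Cov}_\pi=D_{\mathbf p}\mathrm{Cov}_{\widetilde\nu}D_{\mathbf p}+D_{\widetilde\nu}D_{\mathbf p}(I-D_{\mathbf p})$. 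This is \Cref{lem:SI-reduction}, and no FKG sign control is needed.

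The bound $\lambda_{\max}(\mathrm{Cor}_\pi)\le 2/\delta^2$ is the known coupling independence of the random-cluster model with field $\lambda\le 1-\delta$ (\Cref{lem:SI-CZ23}); you should cite it rather than re-derive it. Your ``mild perturbation of a product measure'' heuristic is not a proof. It is also inaccurate: for $\lambda$ near $1$, merging two singletons multiplies the weight by $(1+\lambda^2)/(1+\lambda)^2\approx 1/2$.

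For (b), your weak-interaction spin argument ignores the pinning $T$. Contracting the clusters of $(V,T)$ creates super-vertices of size $k$ with up to $\Delta k$ boundary edges, so the interaction is not small even when $\beta_t\approx 1$; any control has to come from the field $\lambda^k$. Your constants $1/\delta^3$ and $1/\mathrm{e}$ are read off the target rather than derived.

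The paper stays in the random-cluster world, and the argument works for every $\beta_t\ge 1$:
\begin{enumerate}
\item Take the coupling of $\mu_{\mathrm{RC}}^{S}$ and $\mu_{\mathrm{RC}}^{S\cup\{e\}}$ with expected distance at most $2/\delta^2$, and path-couple along single-edge differences.
\item Lift each step to spins via \Cref{prop:ES-coupling}: common clusters get the same spin, and only the at most three changed clusters $K,K_a,K_b$ are sampled separately.
\item A cluster of size $k$ gets spin $1$ with probability at most $\lambda^k\le(1-\delta)^k$. So the expected number of disagreeing vertices is at most $3\max_k k(1-\delta)^k\le 3/(\mathrm{e}\delta)$, and each disagreeing vertex flips at most $\Delta$ edge indicators.
\end{enumerate}
This gives a row-sum bound of $\frac{2}{\delta^2}\cdot\frac{3\Delta}{\mathrm{e}\delta}$. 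Reusing the random-cluster coupling in this way is the idea missing from your (b).
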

We are now ready to prove the spectral gap part of \Cref{thm:SW-main}.
\begin{proof}[Proof of \Cref{thm:SW-main} (spectral gap)]
    By \Cref{lem:SI-SW} and \Cref{lem:SS-to-AC-variance}, the Swendsen--Wang denoising $(Y_t)_{t \in [0,1]}$ satisfies $\exp\tp{I}$-approximate conservation of variance up to time $\theta = 1-\beta^{-1}$, where
    \begin{align*}
        I = \int_0^{1-\beta^{-1}} \frac{C(t)}{1-t} \-d t &= \int_1^\beta \min\set{\frac{2}{\delta^2(s-1)} , \frac{6\Delta}{\e \delta^3 s}} \-d s\\
        &=
        \begin{cases}
            \frac{6\Delta}{\e \delta^3} \log \beta, & \beta \le \frac{3\Delta}{3\Delta - \e \delta},\\
            \frac{6\Delta}{\e \delta^3} \log \frac{3\Delta}{3\Delta - \e \delta} + \frac{2}{\delta^2} \log \frac{(\beta-1) (3\Delta - \e \delta)}{\e \delta} & \beta > \frac{3\Delta}{3\Delta - \e \delta}.
        \end{cases}
    \end{align*}
    In the first case, where $\beta \le \frac{3\Delta}{3\Delta-\e \delta}$, we have
    \begin{align*}
        I \le \frac{6 \Delta}{\e \delta^3} \log \frac{3\Delta}{3\Delta - \e \delta} \le \frac{6\Delta}{\delta^2 (3\Delta - \e \delta)} = O_\delta(1).
    \end{align*}
    In the second case, where $\beta > \frac{3\Delta}{3\Delta-\e \delta}$, similarly
    \[
        I \le O_\delta(1) + \frac{2}{\delta^2} \log \beta \Delta.
    \]
    The $\Omega((\beta \Delta)^{-2/\delta^2})$ bound on the spectral gap follows from \Cref{lem:AC-var-to-var-decay} and \eqref{eq:var-decay-implies-spectral-gap}.
\end{proof}

\newcommand{\ESnoising}[1]{C_{\ref{lem:EI-SW}.\mathrm{a}}(#1)}
\newcommand{\EScrude}[1]{C_{\ref{lem:EI-SW}.\mathrm{b}}(#1)}

Similarly, we establish the following entropic stability result to prove the optimal mixing-time bound for the Swendsen--Wang dynamics.
\begin{lemma}[Entropic stability for Swendsen--Wang denoising process]\label{lem:EI-SW}
    Fix $\delta \in (0, 0.01)$. 
    Let $G = (V,E)$ be a graph with maximum degree $\Delta$, and let $\mu$ be the Gibbs distribution of the ferromagnetic Ising model on $G$ with edge activity $\beta \in [1,\infty)$ and external field $\lambda \in [0,1-\delta]$. 
    The Swendsen--Wang denoising process $(Y_t)_{t \in [0,1]}$ for $\mu$ is entropically stable with rate
    \begin{align*}
        C(t) = \min\tp{\ESnoising{t}, \EScrude{t}}, \qquad \forall t \in [0,1-\beta^{-1}];
    \end{align*}
    where 
    \begin{align*}
        \ESnoising{t} := \frac{(1-t) \beta}{(1-t) \beta - 1} \cdot \frac{2}{\delta^2} 
        \quad \text{ and } \quad 
        \EScrude{t} := \begin{cases}
            \infty, & (1-t)\beta \ge 1+\frac{\delta^2}{\Delta^2}, \\
            321\frac{\Delta^2}{\delta^4}, & (1-t)\beta \le 1+\frac{\delta^2}{\Delta^2}.
        \end{cases}
    \end{align*}
\end{lemma}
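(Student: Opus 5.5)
We prove entropic stability of the Swendsen--Wang denoising process at time $t$ with each of the two rates separately; $C(t)$ is then the minimum of the two. By \Cref{lem:SW-posterior}, conditioned on $Y_t$ the posterior is $(1-t)\otimes\mu^{Y_t}$. This is a ferromagnetic Ising model with edge activity $\beta_t:=(1-t)\beta\ge 1$ and external field $\lambda\le 1-\delta$, conditioned on every edge in $Y_t$ being monochromatic. Contracting these edges gives an Ising model on a multigraph whose vertices are clusters $C$ carrying fields $\lambda^{|C|}\le 1-\delta$.

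By the analogue of \Cref{prop:spectral-stable-compare} for $\phi$-entropic stability (\Cref{prop:spectral-stable-denoising-subset}), it suffices to show entropic stability of the auxiliary vertex-field denoising process on the event indicators $Z\in\set{0,1}^E$. The law of $Z$ is exactly the random-cluster measure $\murc{}$ with edge parameter $p_t=1-\beta_t^{-1}$, suitably conditioned. For field localization, entropic stability at time $t$ reduces to a bound on $\sup_{v}\DKL{\nu^{\text{tilted}}}{\nu}$-type quantities. Concretely, it reduces to an entropic-independence bound for the (conditioned) random-cluster measure with respect to edges. Following \cite{chen22optimal}, I would establish this through a bound on the $\ell_\infty$ influence (or a coupling bound) for pinning single edges.

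\textbf{Rate $\ESnoising{t}$.} I would mimic the spectral argument behind $\SSnoising{t}$. The factor $\frac{\beta_t}{\beta_t-1}=1/p_t$ comes from converting edge influences in the random-cluster measure into Ising influences. Each edge is open with probability at most $p_t$ given its endpoints agree. The quantity $2/\delta^2$ comes from the field: an Ising model with all fields $\le 1-\delta$ has monotone, exponentially decaying cluster-flip influences, controlled by $\lambda^{|C|}/(1+\lambda^{|C|})$. The plan is to show that the entropic analogue follows from the same coupling construction. Coupling independence with constant $K$ implies entropic independence with a comparable constant, via the standard ``coupling independence $\Rightarrow$ entropic stability'' lemma from the preliminaries. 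The remaining step is then a monotone coupling of $\murc{}$ with an edge pinned open versus closed, which changes only the cluster containing that edge.

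\textbf{Rate $\EScrude{t}$.} When $\beta_t\ge 1+\delta^2/\Delta^2$ the bound $\infty$ is vacuous, so assume $\beta_t-1\le \delta^2/\Delta^2$. Then $p_t\le \delta^2/\Delta^2$, so the random-cluster measure is highly subcritical: the expected cluster size under branching with mean at most $\Delta p_t\le \delta^2/\Delta$ is $O(1)$. A Dobrushin-type or path-coupling argument then gives coupling independence $O(\Delta^2/\delta^4)$ uniformly over pinnings. Here the fields enter through at most $1/\delta$ factors from the cluster weights $1+\lambda^{|C|}$, which lie in $[1,2]$, and we must also account for the pinned clusters. I would then invoke the coupling-to-entropic-stability conversion and track constants to reach $321$.

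The main obstacle is the first rate. Spectral arguments work with linear correlation matrices, but entropic stability requires controlling relative entropy under exponential tilts. I expect to need the stronger $\ell_\infty$ or coupling form of the cluster-influence bound, uniformly over all pinnings $Y_t$, and I would try a monotone (FKG) coupling first.
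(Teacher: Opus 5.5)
There is a genuine gap. It starts with the measure you need to control. The auxiliary vertex-field process lives on $Z$, the set of \emph{monochromatic} edges. So (cf.\ \eqref{eq:ES-target}) you must bound $\mathrm{ES}((1-t)\ast\mu_E^T)$, where $\mu_E$ is the monochromatic-edge law of the Ising model, and this is not the random-cluster measure. By \Cref{prop:ES-coupling}, the random-cluster measure appears only after a further thinning: $((1-t)\ast\mu_E^T)\,D_{1\to\boldsymbol{p}}=\mu_{\mathrm{RC}}^T$, with $p_e=p=1-((1-t)\beta)^{-1}$ off $T$ and $p_e=1$ on $T$. A coupling argument for $\mu_{\mathrm{RC}}$ therefore proves stability of the wrong measure. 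That coupling is already \Cref{lem:SI-CZ23}; note also that such couplings do not change only the cluster of the pinned edge. You have the right ingredient (each monochromatic edge is kept with probability $p$), but your account of the factor $1/p$ is only heuristic. The missing step for the first rate is \Cref{lem:ES-reduction-general}, which rests on three facts. First, thinning scales marginals: $(\nu D_{1\to\boldsymbol{\theta}})(i)=\theta_i\nu(i)$. Second, each summand $\nu(i)\log\frac{\nu(i)}{\mu(i)}-\nu(i)+\mu(i)$ is nonnegative. Third, KL divergence contracts under $D_{1\to\boldsymbol{\theta}}$. Together these give $\mathrm{ES}(\mu)\le\max_i\theta_i^{-1}\cdot\mathrm{ES}(\mu D_{1\to\boldsymbol{\theta}})$. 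Apply this with $\mu=(1-t)\ast\mu_E^T$ and combine it with $\mathrm{ES}(\mu_{\mathrm{RC}}^T)\le 2/\delta^2$ (coupling independence under all external fields implies entropic stability). This is exactly where the factor $\frac{(1-t)\beta}{(1-t)\beta-1}=p^{-1}$ comes from.

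Your plan for the second rate fails for the same reason, and more seriously. When $(1-t)\beta\le 1+\delta^2/\Delta^2$ the random-cluster measure is indeed subcritical. But the only bridge from $\mu_{\mathrm{RC}}^T$ back to $(1-t)\ast\mu_E^T$ is the thinning above, which costs $p^{-1}\to\infty$; that blow-up is precisely why a second bound is needed. The measure you actually need is dense, not sparse: each edge is monochromatic with probability at least $1/4$. In addition, turning coupling independence into entropic stability requires coupling independence under all edge tilts. That means monochromatic-edge laws of Ising models with arbitrary edge activities, and your branching estimate (which concerns $\mu_{\mathrm{RC}}$) says nothing about them.

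The paper instead uses \Cref{lem:2nd-derivative}. Let $E(s)$ be the expected conditional entropy along the field localization of $\pi=(1-t)\ast\mu_E^T$. \Cref{prop:brute-force} bounds $E''$ by a quadratic form in the correlation matrix. This gives $(1-s)E''(s)\le(K+1)(\eta-1)(-E'(s))$ on a short window $s\in[0,\theta]$ with $\theta=\delta^2/\Delta^2$. Integrating yields $\mathrm{ES}(\pi)\le\alpha/(1-(1-\theta)^{\alpha})\le\alpha+1/\theta$, where $\alpha=(K+1)(\eta-1)+1$. The argument needs only two inputs for the tilted, pinned measures in the window: a marginal lower bound $\nu(e)\ge 1/K$ and a spectral bound $\lambda_{\max}(\mathrm{Cor}_\nu)\le\eta$. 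These measures have edge activity $(1-s)(1-t)\beta\in[1-\delta^2/\Delta^2,\,1+\delta^2/\Delta^2]$, possibly slightly antiferromagnetic. For them, \Cref{prop:marginal-bound} gives $\nu(e)\ge 1/4$, and \Cref{lem:SI-SW-Dobrushin} gives $\eta=64\Delta/\delta^3$ via a total-influence bound on the cluster-contracted graph. Then $\alpha\le 320\Delta/\delta^3$, and the $\Delta^2/\delta^4$ scaling in $321\Delta^2/\delta^4$ comes from the window length $1/\theta$, not from a cluster-size estimate.
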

\begin{proof}[Proof of \Cref{thm:SW-main} (mixing time)]
    By \Cref{lem:EI-SW} and \Cref{lem:SS-to-AC-variance}, the Swendsen--Wang denoising $(Y_t)_{t \in [0,1]}$ satisfies $\exp\tp{I}$-approximate conservation of entropy up to time $\theta = 1-\beta^{-1}$, where
    \begin{align*}
        I = \int_0^{1-\beta^{-1}} \frac{C(t)}{1-t} \-d t &\le \int_1^{1+\frac{\delta^2}{\Delta^2}} \frac{321\Delta^2}{\delta^4 s} \-d s +\int_{1+\frac{\delta^2}{\Delta^2}}^{\max\set{\beta,1+\frac{\delta^2}{\Delta^2}}}\frac{2\-ds}{\delta^2(s-1)} \\
        &\le
        \begin{cases}
            \frac{321}{\delta^2}, & \beta \le 1+\frac{\delta^2}{\Delta^2},\\
            \frac{321}{\delta^2} + \frac{2}{\delta^2} \log \frac{(\beta-1) \Delta^2}{\delta^2}, & \beta \ge 1+\frac{\delta^2}{\Delta^2}.
        \end{cases} \\
        &\le O_\delta(1)+\frac{4}{\delta^2}\log\beta\Delta
    \end{align*}
    We conclude the proof by \Cref{lem:AC-var-to-var-decay} and \eqref{eq:ent-decay-implies-mixing}.
\end{proof}

The remainder of this section is devoted to the proofs of \Cref{lem:SI-SW} and~\Cref{lem:EI-SW}.

\subsection{Spectral stability of the Swendsen--Wang dynamics} \label{sec:SI-SW}
\paragraph{Edge-to-edge correlation for Swendsen--Wang dynamics.}
In this subsection, we prove \Cref{lem:SI-SW}.
%
We define the correlation matrix for the Swendsen--Wang dynamics.
\begin{definition}[Swendsen--Wang correlation matrix]\label{def-SW-correlation-matrix}
Let $\mu$ be a distribution on $\set{0,1}^V$, and let $G = (V, E)$ be a graph.
For $\{u,v\} \in E$, we denote by $(u=v)$ the event
\[
(u=v) := \set{\sigma \in \set{0,1}^V \mid \sigma_u = \sigma_v}.
\]
The \emph{Swendsen--Wang correlation matrix} $\swcor{\mu} \in \mathbb{R}^{E \times E}$ is defined, for $\{u,v\}, \{w,z\} \in E$, by
\begin{align*}
 \quad
\swcor{\mu}(\set{u,v},\set{w,z}) :=
\begin{cases}
\mu(w=z\mid u=v)-\mu(w=z) \quad &\text{if }\mu\tp{u=v} > 0,\\
0 \quad &\text{otherwise}.
\end{cases}
\end{align*}
\end{definition}
The following corollary follows from \Cref{def:SW-denoising}, \Cref{prop:spectral-stable-compare}, and an application of \Cref{lem:spectral-stable-matrix-form} to the auxiliary vertex-field denoising process $(Y'_t)_{t \in [0,1]}$ on the event variables used for constructing the Swendsen--Wang denoising process $(Y_t)_{t \in [0,1]}$ in \Cref{def:FD-event-denoising}.
\begin{corollary}
\label{lem:SW-matrix-to-stable}
Let $(Y_t)_{t \in [0,1]}$ be the Swendsen--Wang denoising process for $\mu$ on a graph $G=(V,E)$.
Fix $\theta \in [0,1)$. 
If for every $T \subseteq \mathcal{A}$,
\begin{align*}
\lambda_{\max}\tp{ \swcor{((1-\theta)\otimes\mu)^{T}} } \le C,
\end{align*}
then $(Y_t)_{t \in [0,1]}$ is spectrally stable with rate $C$ at time $\theta$.
\end{corollary}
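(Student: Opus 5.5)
The plan is to follow the same route as \Cref{lem:pcor-to-stable}, with the event family of \Cref{def:SW-denoising} in place of oriented edges. By \Cref{def:FD-event-denoising}, the Swendsen--Wang denoising process $(Y_t)_{t\in[0,1]}$ is the projection onto $V$ of an auxiliary vertex-field denoising process $(Y'_t)_{t\in[0,1]}$ for the marginal $\pi_{\mathcal{A}}$. Here $\pi_{\mathcal{A}}$ is the law of the indicator vector $Z=(\mathbf{1}[X_u=X_v])_{\{u,v\}\in E}$ with $X\sim\mu$. By \Cref{prop:spectral-stable-compare} (equivalently \Cref{prop:spectral-stable-denoising-subset}), it therefore suffices to show that $(Y'_t)$ is spectrally stable with rate $C$ at time $\theta$.

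For the auxiliary process I will apply \Cref{lem:spectral-stable-matrix-form}. This is the matrix criterion for vertex-field dynamics: spectral stability at time $\theta$ with rate $C$ follows once the maximum eigenvalue of the correlation (influence-type) matrix of the posterior $\mathrm{Law}(Y'_1\mid Y'_\theta=T)$ is at most $C$ for every feasible $T$. I must then identify that posterior with the matrix in the statement. The vertex-field posterior is $(1-\theta)*\pi_{\mathcal{A}}$ conditioned on $Z\supseteq T$, which is exactly the computation \eqref{eq:FD-posterior-calc} used in \Cref{prop:posterior-edge-field-dynamics}. Pulling back through $X$, as in \Cref{lem:SW-posterior}, the law of $Z$ under this posterior equals the law of $Z(X)$ with $X\sim((1-\theta)\otimes\mu)^T$. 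Here the tilt acts on $|\mathcal{A}(\sigma)|=m(\sigma)$, and the conditioning requires $\sigma_u=\sigma_v$ for all $\{u,v\}\in T$.

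The correlation matrix of a distribution on $\{0,1\}^{\mathcal{A}}$ appearing in \Cref{lem:spectral-stable-matrix-form} has entries $\Pr{Z_B=1\mid Z_A=1}-\Pr{Z_B=1}$, set to zero when $\Pr{Z_A=1}=0$. Under the identification above, these entries are precisely $\nu(w=z\mid u=v)-\nu(w=z)$ for $\nu=((1-\theta)\otimes\mu)^T$. So the matrix coincides entrywise with $\swcor{\nu}$ from \Cref{def-SW-correlation-matrix}. The hypothesis $\lambda_{\max}(\swcor{\nu})\le C$ for every $T\subseteq\mathcal{A}$ then supplies the input to \Cref{lem:spectral-stable-matrix-form}, and the conclusion transfers back via the projection proposition.

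The main obstacle I expect is bookkeeping rather than analysis. First, I must check that \Cref{lem:spectral-stable-matrix-form} is phrased with this "conditional minus unconditional" normalization of the correlation matrix, and not with the covariance or the $\Psi$ normalization. If it uses a different normalization, I would conjugate by the diagonal matrix of marginals: $\swcor{\nu}=D^{-1}\mathrm{Cov}$ restricted appropriately, so it is similar to a symmetric matrix with the same spectrum. Second, I must check that every $T$ arising as a feasible value of $Y'_\theta$ is a subset of $\mathcal{A}$ for which the conditioning is well-defined. This holds trivially, since the hypothesis quantifies over all $T\subseteq\mathcal{A}$.
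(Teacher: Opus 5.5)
Your proposal is correct and is essentially the paper's argument. You reduce to the auxiliary vertex-field process on $\pi_{\mathcal{A}}$ via \Cref{prop:spectral-stable-compare}, identify its posterior at time $\theta$ with the law of the monochromatic-edge indicators under $((1-\theta)\otimes\mu)^T$, and apply \Cref{lem:spectral-stable-matrix-form}. That lemma is stated in the covariance form $\Cov \preceq C\cdot\diag\{\mathbf{m}\}$, so the conversion you anticipated is exactly the step needed: $\swcor{\nu}=D^{-1}\Cov$ is similar to $D^{-1/2}\Cov\, D^{-1/2}$ on coordinates with positive marginal (the other rows and columns vanish), hence $\lambda_{\max}(\swcor{\nu})\le C$ is equivalent to $\Cov\preceq C D$.
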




\paragraph{Coupling independence for random clusters.}
We first introduce a known result on coupling independence for the random-cluster model.
Let $\mu$ be a distribution over $\set{0,1}^E$.
The {correlation matrix} $\mathrm{Cor}_\mu \in \mathbb{R}^{E \times E}$ of $\mu$ is defined as follows:
\begin{align*}
    \forall e,h \in E, \quad \mathrm{Cor}_\mu(e,h) := \begin{cases}
        \mu(h \mid e) - \mu(h),&\text{if }  \mu(e) \in (0,1), \\
        0, &\text{otherwise.}
    \end{cases}
\end{align*}

\begin{lemma}[\text{\cite[Lemma 4.1, 4.4, and 4.5]{chen2023near}}] \label{lem:SI-CZ23}
    Fix $\delta \in (0,1)$.
    The random-cluster model $\murc{}$ with parameters $p \in [0,1)$ and $\lambda \in [0,1-\delta]$ is $\frac{2}{\delta^2}$-coupling independent. 
        Consequently,
    $\lambda_{\max}(\cor{\murc{}}^T) \leq \frac{2}{\delta^2}$ for any feasible pinning $T \subseteq E$.
\end{lemma}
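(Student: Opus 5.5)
The plan is to reduce everything to a single coupling statement. We construct, for every feasible pinning $T \subseteq E$ and every free edge $e$, a coupling of $\murc{}^{T}$ conditioned on $e \in S$ with $\murc{}^{T}$ conditioned on $e \notin S$, whose expected Hamming distance is at most $\frac{2}{\delta^2}$. The spectral consequence then follows by a row-sum argument.

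First we deal with pinnings. Pinning an edge closed just deletes it from $G$. Pinning an edge open contracts it. The result is a random-cluster model on a multigraph whose vertices carry integer multiplicities $w_x \ge 1$, with cluster weight $1+\lambda^{\sum_{x\in C} w_x}$. Since the per-unit field is still at most $1-\delta$, it suffices to prove coupling independence for this slightly more general weighted family, and that family is closed under pinning.

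For the coupling, fix $e=\{u,v\}$ and write $\mu^{+}$ and $\mu^{-}$ for the two conditionings. We go through the Edwards--Sokal coupling (\Cref{prop:ES-coupling}) and use the monotonicity of the random-cluster model with field $\lambda\le 1$: merging clusters of sizes $a,b$ multiplies the weight by
\[
\frac{1+\lambda^{a+b}}{(1+\lambda^{a})(1+\lambda^{b})}\le 1 .
\]
This is the FKG/Holley-type lattice condition, so there is a monotone coupling with $S^- \subseteq S^+ \cup\{e\}$.

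We realise this coupling by an exploration process. Starting from $\{u,v\}$, reveal edges of the open cluster(s) of $u$ and $v$ in $S^+$ one at a time. Each revealed edge is coupled identically in both copies unless its conditional law has been affected by the difference at $e$. Once the explored cluster is closed off by a boundary of closed edges, the two conditional laws outside it agree, by the domain Markov property of the random-cluster model. From then on the coupling is the identity. Hence the Hamming distance is bounded by the number of edges inside or on the boundary of the cluster $C_{uv}$ of $u,v$ in $S^+$, and these contribute only where the two copies can disagree.

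The main obstacle is bounding the expected disagreement by $\frac{2}{\delta^2}$ uniformly in $p$ (that is, in $\beta$) and in $\Delta$, so we cannot simply count boundary edges. Instead I would bound it through the spin side. A disagreement can only persist along a cluster $C$ in the Ising world whose spin is $1$, because in spin $0$ the field factor equals $1$ and the conditional law is insensitive. By \Cref{prop:ES-coupling}, such a cluster is chosen with probability $\frac{\lambda^{|C|}}{1+\lambda^{|C|}}\le(1-\delta)^{|C|}$. The expected discrepancy is then dominated by a sum of the form
\[
\sum_{k\ge1}k(1-\delta)^{k}+\sum_{k\ge 1}(1-\delta)^{k}\le \frac{1}{\delta^2}+\frac{1}{\delta}\le\frac{2}{\delta^2},
\]
where the first sum counts disagreeing vertices or clusters and the second accounts for the pivotal edge chain. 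Getting a linear-size charge per cluster, rather than one proportional to degree times size, is the delicate step. I would try a path-coupling or telescoping argument in which each step of the exploration transfers at most one unit of discrepancy, weighted by the $(1-\delta)$ factor.

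Finally, the matrix bound follows from $\cor{\murc{}^T}(e,h)=(1-\mu(e))\bigl(\mu(h\mid e)-\mu(h\mid \bar e)\bigr)$. The maximum row sum is at most $\sum_h|\mu(h\mid e)-\mu(h\mid\bar e)|$. By the coupling lemma, exactly as in \eqref{eq:rowsum-pair-hamming-distance}, this is at most the coupling's expected Hamming distance, which is at most $\frac{2}{\delta^2}$. This bounds $\lambda_{\max}$ for every feasible pinning $T$.
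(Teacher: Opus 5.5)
\paragraph{Where the proof breaks.}
Your peripheral steps are fine. Pinnings are absorbed by deletion and contraction into a vertex-weighted random-cluster family that is closed under pinning. A monotone coupling does exist. The final step via $\cor{\mu}(e,h)=(1-\mu(e))\bigl(\mu(h\mid e)-\mu(h\mid \bar e)\bigr)$ and the maximum row sum is correct.

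One small correction: the lattice condition is not ``merge factor $\le 1$''. It needs the merge factor $g(a,b)=\frac{1+\lambda^{a+b}}{(1+\lambda^{a})(1+\lambda^{b})}$ to increase as $S$ grows. That means $g$ is nondecreasing in $a,b$, which holds for $\lambda\le 1$, together with $g\le 1$ for the case where the endpoints become connected.

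The genuine gap is the lemma itself. You never produce a coupling whose expected Hamming distance is at most $2/\delta^2$ uniformly in $p$ and in the graph. The exploration argument only confines the disagreement to edges spanned by the vertex set of the cluster $C_{uv}$ of $u,v$ in $S^+$. For $p$ close to $1$ this cluster can have linear size, and a cluster on $k$ vertices can span up to $\Delta k/2$ edges, so no bound results.

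\paragraph{The spin-side series is not an argument.}
You substitute the series $\sum_k k(1-\delta)^k+\sum_k(1-\delta)^k$ without tying its terms to events in any coupling. Nothing shows:
\begin{itemize}
\item why the charges are indexed by cluster sizes with weight $(1-\delta)^k$ each, rather than by the possibly many random clusters affected;
\item why a spin-$1$ cluster on $k$ vertices causes at most $k$ edge disagreements, rather than order $\Delta k$ or its number of open internal edges;
\item what the ``pivotal edge chain'' is.
\end{itemize}
You yourself mark the linear-size charge per cluster as the delicate step and only say what you would try. That step is exactly the content of the lemma. The paper does not reprove it; it imports it from \cite{chen2023near}.

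Converting spin disagreements into edge disagreements naively costs a factor $\Delta$. This is what happens in the paper's own Edwards--Sokal coupling for the crude bound in \Cref{lem:SI-SW}, which gives $\frac{3\Delta}{\e\delta}$ per unit of random-cluster discrepancy. A $\Delta$-free bound needs an extra mechanism. For instance, a spin-$1$ region $A$ has its entire edge boundary bichromatic and hence closed. Its probability therefore carries a factor $(1-p)^{|\partial A|}$, which compensates the roughly $p|\partial A|$ edges it disturbs, since $pb(1-p)^b\le 1/\e$. Nothing of this kind appears in your sketch, so the $2/\delta^2$ bound remains unproved.
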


\paragraph{Spectral stability via down operator.}
We then establish a spectral stability bound by analyzing the following generalized down operator for field dynamics.
\begin{definition}\label{def:down-operator}
    Let $\*\theta \in [0,1]^m$.
We define ${D}_{1 \to \*\theta}$ to be the \emph{down operator} specified by $\*\theta$.
Specifically, ${D}_{1 \to \*\theta}$ is a Markov operator such that, given $S \subseteq [m]$, it generates a random subset $T \subseteq S$ by independently removing each element $i \in S$ with probability $1-\theta_i$.
\end{definition}
\begin{lemma}[Spectral stability under down operator]\label{lem:SI-reduction}
Let $\mu$ be a distribution over $\set{0,1}^m$, and let $\pi := \mu {D}_{1 \to \*\theta}$, where $\*\theta \in (0,1]^m$.
    Then,
    \begin{align*}
        \lambda_{\max}(\cor{\mu}) &\leq \lambda_{\max}(\cor{\pi}) \cdot \max_{i \in [m]} \theta^{-1}_i.
    \end{align*}
    
\end{lemma}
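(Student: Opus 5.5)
Let $S\sim\mu$ and $T=S\cap R$, where $R\subseteq[m]$ includes each $i$ independently with probability $\theta_i$, independently of $S$. Then $T\sim\pi$. My plan is to express $\cor{\pi}$ through $\cor{\mu}$ by a diagonal similarity plus a diagonal correction, and then read off the eigenvalue bound.

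\textbf{Step 1: compute $\cor{\pi}$.} Independence gives $\pi(i)=\theta_i\mu(i)$. For $i\neq j$,
\[
\pi(i\wedge j)=\theta_i\theta_j\,\mu(i\wedge j),
\qquad
\pi(j\mid i)=\theta_j\,\mu(j\mid i).
\]
Hence $\cor{\pi}(i,j)=\theta_j\cor{\mu}(i,j)$ for $i\ne j$. On the diagonal, $\cor{\mu}(i,i)=1-\mu(i)$, while $\cor{\pi}(i,i)=1-\theta_i\mu(i)\ge 1-\mu(i)$. Whenever $\mu(i)\in(0,1)$ we also have $\pi(i)\in(0,1)$, since $\theta_i>0$. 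Coordinates with $\mu(i)\in\{0,1\}$ give zero rows in $\cor{\mu}$. I will simply drop them from the submatrix, or check that they only contribute eigenvalue $0$.

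\textbf{Step 2: symmetrize.} Write $\Theta=\diag(\theta)$ and $D_\mu=\diag(\mu(i)(1-\mu(i)))$. The matrix $\cor{\mu}=D_\mu'^{-1}\mathrm{Cov}_\mu$ (with $D_\mu'=\diag(\mu(i))$) is similar to the PSD matrix $D_\mu'^{-1/2}\mathrm{Cov}_\mu D_\mu'^{-1/2}$, so its eigenvalues are real and $\lambda_{\max}$ is a Rayleigh-quotient maximum. From Step 1,
\[
\cor{\pi}=\cor{\mu}\Theta+E,\qquad E=\diag\bigl(1-\theta_i\mu(i)-\theta_i(1-\mu(i))\bigr)=\diag(1-\theta_i)\succeq 0 .
\]
The matrix $\cor{\mu}\Theta$ is similar to $\Theta^{1/2}\cor{\mu}\Theta^{1/2}$. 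For the symmetrized forms, with weight $D'=\diag(\mu(i))$, I get
\[
D'^{1/2}\Theta^{1/2}\,\cor{\pi}\,\Theta^{-1/2}D'^{-1/2}
= \Theta^{1/2}MΘ^{1/2}+E,
\]
where $M:=D'^{1/2}\cor{\mu}D'^{-1/2}$ is symmetric PSD. (Here I use that $\cor{\pi}$ is likewise similar to a symmetric matrix via $\diag(\pi(i))=\Theta D'$.)

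\textbf{Step 3: compare eigenvalues.} $M$ is PSD, and its top eigenvector $x$ satisfies $x^\top Mx=\lambda_{\max}(\cor{\mu})\|x\|^2$. Take $y=\Theta^{-1/2}x$. Then
\[
y^\top\bigl(\Theta^{1/2}M\Theta^{1/2}+E\bigr)y\ \ge\ x^\top Mx=\lambda_{\max}(\cor{\mu})\|x\|^2,
\qquad
\|y\|^2\le \max_i\theta_i^{-1}\|x\|^2 .
\]
The Rayleigh quotient therefore gives
\[
\lambda_{\max}(\cor{\pi})\ \ge\ \lambda_{\max}(\cor{\mu})\big/\max_i\theta_i^{-1},
\]
which is the claim.

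\textbf{Main obstacle.} The only delicate point is the bookkeeping of the diagonal similarity. Conjugating by $\Theta$ and by the weights $\diag(\mu(i))$ versus $\diag(\pi(i))$ must produce genuinely symmetric matrices; in particular the diagonal entries must match up to the PSD correction $E$. The degenerate coordinates with $\mu(i)\in\{0,1\}$ also need handling. I would verify entrywise that $\diag(\pi)\cor{\pi}=\mathrm{Cov}_\pi$-type identities hold with $\mathrm{Cov}_\pi=\Theta\,\mathrm{Cov}_\mu\Theta+\diag(\theta_i\mu(i)-\theta_i^2\mu(i))$. This directly gives $\mathrm{Cov}_\pi\succeq\Theta\,\mathrm{Cov}_\mu\Theta$, which is a cleaner route to the same Rayleigh comparison.
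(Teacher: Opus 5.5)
Your proposal is correct and is essentially the paper's argument: left-multiplying your identity $\cor{\pi} = \cor{\mu}\Theta + \diag(1-\theta_i)$ by $\diag(\pi(i)) = \Theta\,\diag(\mu(i))$ gives exactly the paper's decomposition $\mathrm{Cov}_\pi = \Theta\,\mathrm{Cov}_\mu\Theta + \diag(\mu(i)\theta_i(1-\theta_i))$, and your test vector $y = \Theta^{-1/2}x$ in symmetrized coordinates is the paper's substitution $f \mapsto \Theta^{-1} f$ in the generalized Rayleigh quotient. Your handling of the degenerate coordinates also works, a point the paper leaves implicit: they split off as diagonal blocks, contributing eigenvalue $0$ to $\cor{\mu}$ and eigenvalue $0$ or $1-\theta_i \ge 0$ to $\cor{\pi}$, and this only helps the inequality.
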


\begin{proof}  
    A crucial fact from the definition of ${D}_{1\to \*\theta}$ is that, for every distribution $\nu$ on $\set{0,1}^m$,
    \begin{align} \label{eq:aux-pi-mu-marginal}
        \*m(\nu {D}_{1\to \*\theta}) = \*\theta \odot \*m(\nu), 
    \end{align}
    where we use $\odot$ to denote the entry-wise product and $\*m(\nu)$ denotes the barycenter of $\nu$.
    We define the following matrices
    \begin{align*}
        D_\mu := \mathrm{diag}(\*m(\mu)), \quad D_\pi := \mathrm{diag}(\*m(\pi)), \quad \text{and } D_{\*\theta} := \mathrm{diag}\set{\theta_i}_{i\in [m]}.
    \end{align*}
    According to \eqref{eq:aux-pi-mu-marginal}, it can be verified that 
    \begin{align*}
        D_\pi = D_\mu D_{\*\theta},
        \quad \text{and} \quad 
     \mathrm{Cov}_\pi=D_{\*\theta}\mathrm{Cov}_\mu D_{\*\theta}+D_\mu D_{\*\theta}(I-D_{\*\theta}).
    \end{align*}
    The term $D_\mu D_{\*\theta}(I-D_{\*\theta})$ is positive semidefinite; therefore,
        $\mathrm{Cov}_\pi \succeq D_{\*\theta} \mathrm{Cov}_\mu D_{\*\theta}$.
        
    We complete the proof using the Rayleigh quotient
    \begin{align*}
        \lambda_{\max}(\mathrm{Cor}_{\mu}) 
        &= \sup_{f\neq 0} \frac{\inner{f}{ \mathrm{Cov}_{\mu} f}}{\inner{f}{D_\mu f}} \\
        &\le \sup_{f\neq 0} \frac{\inner{D_{\*\theta}^{-1}f}{ \mathrm{Cov}_{\pi} D_{\*\theta}^{-1}f}}{\inner{D_{\*\theta}^{-1}f}{D_\pi D_{\*\theta}^{-1} f}} \cdot \frac{\inner{D_{\*\theta}^{-1} f}{D_{\mu} f}}{\inner{f}{D_\mu f}} \\
        &\leq \lambda_{\max}(\mathrm{Cor}_\pi) \sup_{f\neq 0} \frac{\inner{f}{D_{\*\theta}^{-1}D_\mu f}}{\inner{f}{D_\mu f}} \\
        &\leq \lambda_{\max}(\mathrm{Cor}_\pi) \cdot \max_i \theta_i^{-1}. \qedhere
    \end{align*}
\end{proof}
\begin{proof}[Proof of the bound $\SSnoising{t}$ in \Cref{lem:SI-SW}]
    Fix $t \in [0,1-\beta^{-1})$, and let $ p = p(t)$, where
    \[
     p(t) := 1 - ((1-t)\beta)^{-1}.
    \]
    By \Cref{lem:SW-posterior}, 
    \begin{align*}
        \mathrm{Law}(Y_1 \mid Y_t) = (1-t) \otimes \muIsing{}^{Y_t}.
    \end{align*}
Denote this distribution by $\nu$; it is the Gibbs distribution of an Ising model with edge activity $(1-t)\beta$, conditioned on the event $Y_t$.
    
    Let $\widetilde{\nu}$ be the distribution over $\{0,1\}^E$ obtained by first sampling $X \sim \nu$ and then taking the set of monochromatic edges.
    Let ${D}_{1 \to \*p}$ denote the down operator, where $p_e = p$ for edges $e \in E \setminus Y_t$, and $p_e = 1$ for edges $e \in Y_t$.
    Then, by \Cref{prop:ES-coupling}, we have
    \begin{align*}
        \widetilde{\nu} {D}_{1\to \*p} = \murc{}^{Y_t}.
    \end{align*}
Denote this distribution by $\pi$.
    Applying \Cref{lem:SI-reduction} to $\widetilde{\nu}$ and $\pi$, we obtain
    \begin{align*}
        \lambda_{\max}(\swcor{\nu}) = \lambda_{\max}(\cor{\widetilde{\nu}})
        &\leq \lambda_{\max}(\cor{\pi}) p^{-1} 
        \leq p^{-1} \cdot \frac{2}{\delta^2}
        = \frac{(1-t)\beta}{(1-t)\beta - 1} \cdot \frac{2}{\delta^2},
    \end{align*}
    where the last inequality follows from \Cref{lem:SI-CZ23}. 
    
    Combining this with \Cref{lem:SW-matrix-to-stable}, we conclude that the Swendsen--Wang denoising process exhibits spectral stability with rate 
    $\frac{(1-t)\beta}{(1-t)\beta -1} \cdot \frac{2}{\delta^2}$ at time $t$.
\end{proof}

\paragraph{Spectral stability via Edwards--Sokal coupling.}
Note that the bound $\SSnoising{t}$ becomes unbounded as $t$ approaches $1 - \beta^{-1}$.
Intuitively, this is because as $t \to 1 - \beta^{-1}$, we have $p = p(t) = 1 - ((1-t)\beta)^{-1} \to 0$, and the down operator ${D}_{1 \to p}$ erases too much information.
Therefore, an alternative approach is needed when $t$ is close to $1 - \beta^{-1}$.

Next, we prove an alternative bound $\SScrude{t}$ for this regime by exploiting the Edwards--Sokal coupling (see \Cref{prop:ES-coupling}).

\begin{proof}[Proof of the bound $\SScrude{t}$ in \Cref{lem:SI-SW}]
Let $(Y_t)_{t \in [0,1]}$ be the Swendsen--Wang denoising process for $\muIsing{}$, and let $S \subseteq E$ be an edge subset.
Fix $t \in [0,1-\beta^{-1}]$, and define
\[
\nu:=\mathrm{Law}(Y_1 \mid Y_t = S) = (1-t) \otimes \muIsing{}^{S}.
\]
Note that
\begin{align*}
    \lambda_{\max}(\swcor{\nu}) \leq \norm{\swcor{\nu}}_\infty \leq \max_{e \not\in Y_\theta} \inf_{\xi = \xi(e)} \E[(X,Y)\sim \xi]{\dist(Z(X),Z(Y))},
\end{align*}
where $\xi = \xi(e)$ ranges over all couplings between $(1-t) \otimes \muIsing{}^{S}$ and $(1-t) \otimes \muIsing{}^{S \cup \{e\}}$, $\mathrm{dist}(\cdot,\cdot)$ denotes the Hamming distance, and 
$Z(X) := \set{e=\{u,v\} \mid X_u = X_v} \in \set{0,1}^{E}$ 
denotes the set of monochromatic edges in configuration $X$.

Therefore, it suffices to construct a coupling $\xi$ with small expected distance.
We construct such a coupling using the Edwards--Sokal coupling (\Cref{prop:ES-coupling}) as follows:
    \begin{enumerate}
    \item Draw $(P,Q) \sim \mathcal{C}$, where $\mathcal{C}$ is a coupling between $\murc{}^{S}$ and $\murc{}^{S \cup \{e\}}$ as in \Cref{lem:SI-CZ23}, satisfying 
    \[
    \E[\mathcal{C}]{\mathrm{dist}(P,Q)} \le \frac{2}{\delta^2}.
    \]


    \item Suppose $P$ and $Q$ differ by exactly one edge. 
    Without loss of generality, assume $P = Q \setminus \{f\}$. 
    We construct a coupling of $X$ and $Y$ as follows:
    \begin{itemize}
        \item For each connected component $C \in \kappa(V,P) \cap \kappa(V,Q)$, include $C$ in both $X$ and $Y$ independently with probability $\frac{\lambda^{|C|}}{1+\lambda^{|C|}}$;
        \item For each $C \in \kappa(V,P) \setminus \kappa(V,Q)$, include $C$ in $X$ with probability $\frac{\lambda^{|C|}}{1+\lambda^{|C|}}$;
        \item For each $C \in \kappa(V,Q) \setminus \kappa(V,P)$, include $C$ in $Y$ with probability $\frac{\lambda^{|C|}}{1+\lambda^{|C|}}$.
    \end{itemize}

    \item In the general case where $P$ and $Q$ differ by multiple edges, we apply the path coupling method to extend the construction.
    \end{enumerate}
    
    
The correctness of the coupling $\xi$ follows from \Cref{prop:ES-coupling}. 
Note that $P$ and $Q$ differ by a single edge $f$, so the component families $\kappa(V,P)$ and $\kappa(V,Q)$ are either identical, or satisfy 
$\kappa(V,P) \setminus \kappa(V,Q) = \{K\}$ and $\kappa(V,Q) \setminus \kappa(V,P) = \{K_a,K_b\}$.
    In either case, we have
    \begin{align*}
    \E{\dist(Z(X), Z(Y))} 
    &\leq \E{\Delta \cdot \tp{\lambda^{\abs{K_a}}\abs{K_a} + \lambda^{\abs{K_b}}\abs{K_b} + \lambda^{\abs{K}} \abs{K}}} \leq \frac{3\Delta}{\e\delta},
    \end{align*}
where the last inequality uses $(1-\delta)^x x \le (\e\delta)^{-1}$ for $x \ge 1$.

In the general case with $\mathrm{dist}(P,Q) \ge 1$, the path coupling step introduces an extra factor  $\E{\mathrm{dist}(P,Q)} \le \frac{2}{\delta^2}$. Therefore,
    \begin{align*}
    \lambda_{\max}(\swcor{\nu}) \le  \E[(X,Y)\sim \xi]{\dist(Z(X),Z(Y))} \le \frac{6\Delta}{\e \delta^3}.
    \end{align*}
Combining this with \Cref{lem:SW-matrix-to-stable}, we conclude that the Swendsen--Wang  denoising process is spectrally stable with rate $\frac{6\Delta}{\e \delta^3}$.
\end{proof}

\subsection{Entropic stability of the Swendsen--Wang dynamics}
In this subsection, we prove \Cref{lem:EI-SW}, which bounds the entropic stability of the Swendsen--Wang dynamics. 
Following the approach for spectral stability, we bound entropic stability via two routes: first via the down operator, and second via spectral stability combined with marginal bounds.

To simplify the exposition and avoid explicit reference to the denoising process, we introduce the notion of entropic stability for a distribution.
\begin{definition}
    Let $\mu$ be a distribution over $\{0,1\}^m$.
    We say that $\mu$ is \emph{$C$-entropically stable} if for all $\nu \ll \mu$, 
  \begin{align*}
    \sum_{i=1}^m \nu(i) \log \frac{\nu(i)}{\mu(i)} - (\nu(i) - \mu(i)) \le C \cdot D_{\mathrm{KL}}(\nu \parallel \mu),
  \end{align*}
where $\mu(i)$ and $\nu(i)$ denote the marginal probabilities of $i$ being occupied under $\mu$ and $\nu$, respectively.
Let $\SStab{\mu}$ denote the minimum $C \ge 0$ such that $\mu$ is $C$-entropically stable.
\end{definition}

\begin{remark}[A sufficient condition for entropic stability]
Consider the Gibbs distribution $\mu_{\mathrm{Ising}}$ of the Ising model on a graph $G = (V, E)$.
Let $\mu_E$ denote the distribution over $\set{0,1}^E$ obtained by sampling $X \sim \mu_{\mathrm{Ising}}$ and returning the set of monochromatic edges in $X$.

By \Cref{prop:spectral-stable-compare,def:SW-denoising,def:FD-event-denoising}, the Swendsen--Wang denoising process $(Y_t)_{t\in [0,1]}$ for $\mu_{\mathrm{Ising}}$ is entropically stable with rate $C$ at time $t$ if the same holds for $(Y_t')_{t\in [0,1]}$, where $(Y_t')_{t\in [0,1]}$ is the vertex-field denoising process (as in \Cref{def:field-dynamics-denoising}) for $\mu_E$.

Therefore, to prove the entropic stability bounds in \Cref{lem:EI-SW} for the Swendsen--Wang process, it suffices to bound $\SStab{(1-t)\ast \mu_E^T}$ for all $T \subseteq E$, where $\mu_E^T$ denotes the distribution of $S \sim \mu_E$ conditioned on $T \subseteq S$. Indeed, by \Cref{lem:entropic-stable-max-ent-form},
    \begin{align} \label{eq:ES-target}
      C(t) \leq \max_{T\subseteq E} \SStab{(1-t) * \mu^T_E}.
    \end{align}
Note that the vertex-field scaling $(1-t)\ast \mu_E$ now corresponds to modifying the edge activity from $\beta$ to $(1-t)\beta$ in the underlying Ising model, since $\mu_E$ is a distribution over $\set{0,1}^E$.
\end{remark}

\paragraph{Entropic stability via down operator.}
We first prove an entropic stability bound by analyzing how much the down operator preserves entropic stability.
\begin{lemma}[Entropic stability under down operator] \label{lem:ES-reduction-general}
Let $\mu$ be a distribution over $\set{0,1}^m$.
Let $\*\theta \in (0,1]^m$ and $D_{1\to \*\theta}$ be the down operator defined in \Cref{def:down-operator}.
Then,
\begin{align*}
    \SStab{\mu} \leq \SStab{\mu D_{1\to \*\theta}} \cdot \max_{i\in [m]} \theta_i^{-1}.
\end{align*}
\end{lemma}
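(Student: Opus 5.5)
Write $\pi := \mu D_{1\to\*\theta}$ and $C := \SStab{\pi}$. Fix $\nu \ll \mu$ and set $\nu' := \nu D_{1\to\*\theta}$, so $\nu' \ll \pi$.

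The plan is to transfer the defining inequality of entropic stability from $\pi$ back to $\mu$ by pushing the test measure $\nu$ through the same down operator. Two facts do most of the work.

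\emph{Marginals.} By \eqref{eq:aux-pi-mu-marginal} the marginals scale exactly: $\nu'(i) = \theta_i\nu(i)$ and $\pi(i) = \theta_i\mu(i)$.

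\emph{Data processing.} Since $D_{1\to\*\theta}$ is a Markov kernel, $\DKL{\nu'}{\pi} \le \DKL{\nu}{\mu}$.

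Applying $C$-entropic stability of $\pi$ to $\nu'$ gives
\begin{align*}
\sum_{i}\Bigl(\theta_i\nu(i)\log\frac{\nu(i)}{\mu(i)} - \theta_i(\nu(i)-\mu(i))\Bigr) \le C\cdot \DKL{\nu'}{\pi} \le C\cdot\DKL{\nu}{\mu},
\end{align*}
because the ratios $\theta_i\nu(i)/(\theta_i\mu(i))$ cancel to $\nu(i)/\mu(i)$.

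Now set $g(a,b) := a\log(a/b) - (a-b)$. This is the Bregman divergence of $x\log x$, so $g \ge 0$. The left-hand side above is therefore $\sum_i \theta_i\, g(\nu(i),\mu(i))$, a nonnegative combination. Hence
\begin{align*}
\sum_i g(\nu(i),\mu(i)) \le \max_i\theta_i^{-1}\sum_i\theta_i\, g(\nu(i),\mu(i)) \le \max_i\theta_i^{-1}\cdot C\cdot \DKL{\nu}{\mu}.
\end{align*}
Since $\nu$ was arbitrary, this gives $\SStab{\mu}\le \SStab{\mu D_{1\to\*\theta}}\cdot\max_i\theta_i^{-1}$.

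The main point to get right is that the marginal term is exactly linear in $\*\theta$. This holds because both the $\log$ ratio and the linear correction $-(a-b)$ are homogeneous under the common scaling by $\theta_i$, so $g(\theta_i a,\theta_i b)=\theta_i\, g(a,b)$. The only other thing to check is nonnegativity of $g$, which is what lets us pull out $\max_i\theta_i^{-1}$. Degenerate coordinates with $\mu(i)=0$ contribute zero under the usual conventions, since $\nu\ll\mu$ forces $\nu(i)=0$ there.
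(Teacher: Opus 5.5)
Your proposal is correct and follows essentially the same route as the paper's proof. Both apply entropic stability of $\mu D_{1\to\*\theta}$ to the test measure $\nu D_{1\to\*\theta}$, use the exact marginal scaling $\theta_i\mu(i)$ and $\theta_i\nu(i)$, invoke nonnegativity of $a\log\frac{a}{b}-(a-b)$ to pull out $\max_i\theta_i^{-1}$, and finish with the data processing inequality for KL divergence.
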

\begin{proof}
    It suffices to show that for any distribution $\nu \ll \mu$, 
    \begin{align}\label{eq:ES-down-target}
        \sum_{i=1}^m \nu(i) \log \frac{\nu(i)}{\mu(i)} - (\nu(i) - \mu(i)) \le \SStab{\mu D_{1\to \*\theta}} \cdot \max_{i\in [m]} \theta_i^{-1} \cdot D_{\mathrm{KL}}(\nu \parallel \mu).
    \end{align}

By the definition of $\SStab{\mu D_{1 \to \theta}}$, we have
    \begin{align}\label{eq:ES-tilt}
        \nonumber &\sum_{i=1}^m (\nu D_{1 \to \*\theta})(i) \log \frac{(\nu  D_{1 \to \*\theta})(i)}{(\mu  D_{1 \to \*\theta})(i)} - ((\nu  D_{1 \to \*\theta})(i) - (\mu  D_{1 \to \*\theta})(i)) \\
        \le& \,\SStab{\mu D_{1\to \*\theta}} \cdot D_{\mathrm{KL}}(\nu  D_{1 \to \*\theta} \parallel \mu  D_{1 \to \*\theta}).
    \end{align}
Then \eqref{eq:ES-down-target} follows by combining \eqref{eq:ES-tilt} with the fact that 
\[
(\mu D_{1 \to \theta})(i) = \theta_i \mu(i) \quad\text{ and }\quad (\nu D_{1 \to \theta})(i) = \theta_i \nu(i),
\]
together with the inequality $a \log \frac{a}{b} \ge a - b$ and the following data processing inequality 
\[
D_{\mathrm{KL}}(\nu  D_{1 \to \*\theta} \parallel \mu  D_{1 \to \*\theta})\le D_{\mathrm{KL}}(\nu   \parallel \mu ). \qedhere
\]
%
\end{proof}

\begin{proof}[Proof of the bound $\ESnoising{t}$ in \Cref{lem:EI-SW}]
    Fix $t \in [0,1-\beta^{-1})$.
    Let 
    \[p = p(t) := 1 - ((1-t)\beta)^{-1}.\]
By~\eqref{eq:ES-target}, it suffices to bound $\SStab{(1-t)\ast \mu_E^T}$ for an arbitrary $T\subseteq E$. 

Let $D_{1\to \*p}$ be the down operator where $p_e = p$ for edges $e \in E \setminus T$, and $p_e = 1$ for edges $e \in T$.
Let $\mu_{\mathrm{RC}}^{T}$ be the random-cluster distribution with parameter $\*p$ conditioning on the edges in $T$ being occupied.
    By \Cref{prop:ES-coupling}, we have
    \begin{align*}
    ((1-t)*\mu_E^T) D_{1\to \*p} = \mu_{\mathrm{RC}}^{T}.
\end{align*}
Applying \Cref{lem:ES-reduction-general}, we obtain
\begin{align*}
    \SStab{(1-t)\ast \mu_E^T}
    \le \SStab{\mu_{\mathrm{RC}}^{T}} \cdot p^{-1}
    \le p^{-1} \cdot \frac{2}{\delta^2}
    = \frac{(1-t)\beta}{(1-t)\beta - 1} \cdot \frac{2}{\delta^2},
\end{align*}
where the second inequality follows from \Cref{lem:SI-CZ23} together with the fact that $\frac{2}{\delta^2}$-coupling independence (under arbitrary external fields) implies $\frac{2}{\delta^2}$-entropic stability, i.e., $\SStab{\mu_{\mathrm{RC}}^{T}} \le \frac{2}{\delta^2}$ (see~\cite{anari2022entropic}).
\end{proof}

\paragraph{From spectral stability to entropic stability assuming marginal bounds.}
The following lemma lifts spectral stability to entropic stability under suitable marginal bounds.
\begin{lemma}\label{lem:2nd-derivative}
Let $\theta \in (0,1)$, $\eta \ge 1$, and $K \ge 1$.
Let $\mu$ be a distribution over $\{0,1\}^m$. 
Suppose that for every $t \in [0,\theta]$ and every feasible pinning $\tau = \mathbf{1}_T$, where $T\subseteq[m]$, the distribution $\nu := (1-t) * \mu^\tau$ satisfies
 \begin{itemize}
    \item \textbf{marginal bound:} $\nu(i) \ge \frac{1}{K}$ for all $i\in[m]$;
    \item \textbf{spectral stability:} $\lambda_{\max}(\mathrm{Cor}_\nu) \le \eta$.
 \end{itemize} 
 Let
 $\alpha := (K+1)(\eta-1) + 1$.
Then $\mu$ satisfies
     \begin{align*}
    \SStab{\mu}\le\frac{\alpha}{1-(1-\theta)^{\alpha}}.
    \end{align*}
Moreover, the vertex-field denoising process $(Y_t)_{t \in [0,1]}$  for $\mu$ satisfies $R$-approximate conservation of entropy up to time $\theta$, where
    \begin{align*}
      R = 1 + \frac{\max_\tau \SStab{(1-\theta) * \mu^\tau}}{\alpha(1-\theta)^{\alpha}}. 
    \end{align*}
\end{lemma}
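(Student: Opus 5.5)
We argue along the negative-field (vertex-field) localization path for $\mu$. For $t\in[0,\theta]$ and each pinning, the posterior of the vertex-field denoising process is $\nu=(1-t)*\mu^\tau$. The plan is to bound the entropic-stability constant of $\nu$ in terms of its spectral stability plus the marginal lower bound. This gives a differential inequality in $t$, which we then integrate backward from $t=\theta$ to $t=0$.

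\textbf{Step 1: a differential inequality for the KL gap.} Fix $\pi\ll\mu$. For $s\in[0,\theta]$ write $\mu_s:=(1-s)*\mu$ and $\pi_s:=(1-s)*\pi$; tilting both measures by the same external field is the natural choice. Set
\[
F(s):=\sum_i \pi_s(i)\log\frac{\pi_s(i)}{\mu_s(i)}-(\pi_s(i)-\mu_s(i)),\qquad G(s):=D_{\mathrm{KL}}(\pi_s\parallel\mu_s).
\]
Differentiate both in $s$. The derivative of a tilted marginal is a covariance,
\[
\frac{d}{ds}\mu_s(i)=-\frac{1}{1-s}\sum_j \mathrm{Cov}_{\mu_s}(i,j).
\]
The derivative of $G$ equals $-\frac{1}{1-s}F(s)$ up to sign conventions; this is the standard identity relating the field derivative of KL to the marginal-KL term.

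For $F'$, write $\mathrm{Cov}=D\,\mathrm{Cor}$ and use $\lambda_{\max}(\mathrm{Cor}_{\mu_s}),\lambda_{\max}(\mathrm{Cor}_{\pi_s})\le\eta$. The factors $\log\frac{\pi_s(i)}{\mu_s(i)}$ are controlled using the marginal bound $\mu_s(i)\ge 1/K$. From this we aim for
\[
-F'(s)\le \frac{\alpha-1}{1-s}F(s)+\text{(terms)},\qquad \alpha=(K+1)(\eta-1)+1.
\]
The factor $K+1$ arises from comparing $\pi(i)\log\frac{\pi(i)}{\mu(i)}$-type quantities with quadratic forms weighted by $1/\mu(i)\le K$. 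The $-1$ subtracted from $\eta$ reflects the diagonal part of the covariance, which reproduces $F$ itself. Together these give $\frac{d}{ds}\bigl[(1-s)^{-\alpha}\ldots\bigr]$ monotonicity for the combination $F-\alpha G$, or a similar pairing.

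\textbf{Step 2: integration.} Integrate over $[0,\theta]$. At $s=\theta$ we use the trivial bound $F\ge 0$ together with the endpoint stability $\SStab{\mu_\theta}$. Rearranging yields
\[
F(0)\le \frac{\alpha}{1-(1-\theta)^\alpha}\,G(0),
\]
where the denominator comes from $\int_0^\theta \alpha(1-s)^{\alpha-1}\,ds=1-(1-\theta)^\alpha$. This is the bound on $\SStab{\mu}$.

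\textbf{Step 3: conservation of entropy.} Keep the endpoint term $\SStab{(1-\theta)*\mu^\tau}\,G(\theta)$ instead of discarding it. Then apply the same argument to every pinning $\mu^\tau$ along the process. This gives entropic stability with rate $C(t)$ at each time $t$. Plugging $C(t)$ into the entropy analogue of \Cref{lem:SS-to-AC-variance} produces $\exp\bigl(\int_0^\theta C(t)/(1-t)\,dt\bigr)$, which we expect to simplify to $R=1+\frac{\max_\tau\SStab{(1-\theta)*\mu^\tau}}{\alpha(1-\theta)^\alpha}$.

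\textbf{Main obstacle.} The hard part is the derivative estimate in Step 1, namely converting the covariance–log-ratio bilinear form into $\eta$ and $K$ with exactly the constant $\alpha$. My first attempt will be the inequality
\[
\bigl|\log(a/b)\bigr|\cdot|a-b|\le (a-b)^2/\min(a,b)+\ldots,
\]
combined with a Cauchy–Schwarz step in the $D^{-1}$-weighted inner product, so that $\lambda_{\max}(\mathrm{Cor})$ appears.
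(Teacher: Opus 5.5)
Your Step 1 rests on an identity that is false, and the problem is not one of sign conventions. Take the common tilts $\pi_s=(1-s)*\pi$, $\mu_s=(1-s)*\mu$ and set $G(s)=D_{\mathrm{KL}}(\pi_s\parallel\mu_s)$. Differentiating the two normalizing constants gives
\[
-G'(0)=F(0)+\sum_i \pi(i)\bigl(D_{\mathrm{KL}}(\pi^{i}\parallel\mu^{i})-D_{\mathrm{KL}}(\pi\parallel\mu)\bigr),
\]
where $\pi^{i},\mu^{i}$ denote conditioning on $x_i=1$. Already for $m=1$ the correction is $-\pi(1)D_{\mathrm{KL}}(\pi\parallel\mu)$, which is nonzero as soon as $\pi(1)>0$ and $\pi\neq\mu$.

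The marginal-KL expression $F$ is not the derivative of a KL between common tilts. It is the derivative of the \emph{localized} entropy $E(t)=\mathbb{E}[\mathrm{Ent}(f(Y_1)\mid Y_t)]$. For $\pi=f\mu/\mu(f)$ this equals $\mu(f)\sum_S(\pi D_{1\to t})(S)\,D_{\mathrm{KL}}((1-t)*\pi^{S}\parallel(1-t)*\mu^{S})$, an average over the random pinning $Y_t=S$. The mass flowing from the unpinned branch into the branches $S=\{i\}$ is exactly what cancels the correction above. The paper uses the equivalent form $E(t)=\mu(f)\bigl(D_{\mathrm{KL}}(\pi\parallel\mu)-D_{\mathrm{KL}}(\pi D_{1\to t}\parallel\mu D_{1\to t})\bigr)$, a KL between laws of $Y_t$, and the role of your $F$ is played by $-E'(t)$. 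With that replacement your target inequality has the right shape: it becomes $(1-t)E''(t)\le(K+1)(\eta-1)(-E'(t))$. Hence $-E'(t)/(1-t)^{\alpha-1}$ is nondecreasing, and integrating over $[0,\theta]$ with only $E(\theta)\ge 0$ yields $E(0)\ge\frac{1-(1-\theta)^{\alpha}}{\alpha}(-E'(0))$, which is the first claim. No endpoint stability is needed there.

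The derivative estimate, which you flag as the main obstacle, is the substance of the lemma, and your proposal does not supply it. Your sketch also invokes $\lambda_{\max}(\mathrm{Cor}_{\pi_s})\le\eta$, which is not a hypothesis, since $\pi$ is an arbitrary measure $\ll\mu$. The paper first reduces to $t=0$ by restarting the process after conditioning on $Y_t$; this is where the hypotheses for all pinnings and all $t\le\theta$ enter. It then expands $D_{\mathrm{KL}}(\pi D_{1\to t}\parallel\mu D_{1\to t})$ to second order in $t$ (only sets of size at most $2$ contribute) and applies $a\log(a/b)\ge a-b$, after which the pair terms $\pi(ij)$ cancel. This leaves $E''(0)\le\mu(f)\,\mathbf{f}^{\top}(\mathrm{Cov}_\mu-\mathcal{D}_\mu)\mathbf{f}$, with $\mathbf{f}_i=x_i-1$, $x_i=\pi(i)/\mu(i)$ and $\mathcal{D}_\mu=\mathrm{diag}(\mu(i))$; this form involves $\mu$ alone. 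Next, $\mathrm{Cov}_\mu\preceq\eta\,\mathcal{D}_\mu$ bounds it by $(\eta-1)\mu(f)\sum_i\mu(i)(x_i-1)^2$. Since $x_i\le 1/\mu(i)\le K$, the scalar inequality $(x-1)^2\le(K+1)(x\log x-x+1)$ on $[0,K]$ turns this into $(K+1)(\eta-1)(-E'(0))$. So the comparison you need runs from a quadratic to the entropy-type term, not from $\lvert\log(a/b)\rvert\,\lvert a-b\rvert$ to a quadratic.

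Finally, Step 3 does not produce the stated $R$. The formula $\exp\bigl(\int_0^\theta C(t)\,dt/(1-t)\bigr)$ is not of the form $1+\cdots$. Moreover, if $C(t)$ is taken from the first claim on the shrinking window $[t,\theta]$, namely $C(t)\le\alpha/\bigl(1-((1-\theta)/(1-t))^{\alpha}\bigr)\approx(1-t)/(\theta-t)$, the integral diverges. The paper instead runs the same monotonicity in the other direction:
\[
E(0)-E(\theta)\le\frac{-E'(\theta)}{\alpha(1-\theta)^{\alpha-1}}\le\frac{\max_\tau\mathrm{ES}((1-\theta)*\mu^\tau)}{\alpha(1-\theta)^{\alpha}}\,E(\theta),
\]
which is exactly $E(0)\le R\,E(\theta)$.
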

The proof of \Cref{lem:2nd-derivative} is deferred to \Cref{sec:appendix-B}.

Let $G = (V,E)$ be a graph with maximum degree $\Delta$, and let $\mu$ be the Gibbs distribution of the Ising model on $G$ with edge activity $\beta$ and vertex activity $\lambda$.

Let $\Lambda \subseteq V$, $S \subseteq E$, and $\tau \in \{0,1\}^\Lambda$. Define the event
\begin{align*}
  \mathcal{E} := \set{\sigma\in\set{0,1}^V \mid \sigma_\Lambda = \tau \land \forall e = \set{u,v} \in S, \sigma_u = \sigma_v}.
\end{align*}
Without loss of generality, assume $\mu(\mathcal{E}) > 0$, and define the conditional distribution
\begin{align*}
  \mu^{S,\tau} := \mu(\cdot \mid \mathcal{E}).
\end{align*}
When $\Lambda = \emptyset$, we simply write $\mu^{S,\tau} = \mu^S$.

Given $\delta \in (0,1)$ and $\Delta \ge 1$, consider the following regime where $\beta$ is close to $1$:
\begin{align} \label{eq:cond-beta-lambda-Dobrushin}
  1 - \frac{\delta^2}{\Delta^2} \le \beta \le 1 + \frac{\delta^2}{\Delta^2}
  \quad \text{and} \quad
  \lambda \le 1-\delta.
\end{align}
Note that this regime includes both the ferromagnetic and antiferromagnetic cases.
Within this regime, the following marginal and spectral stability bounds can be proved.
\begin{proposition}[Marginal bounds]\label{prop:marginal-bound}
Let $\delta \in (0,1)$ and $\Delta \ge 1$, and assume~\eqref{eq:cond-beta-lambda-Dobrushin}.
For any $S \subseteq E$, $\tau \in \{0,1\}^\Lambda$, and any vertex $u \in V$ whose spin is not fixed by $(S,\tau)$,
    \begin{align*}
        \frac{\mu^{S,\tau}(u)}{\mu^{S,\tau}(\overline{u})}
    \le (1+\delta)^{-k_u},
    \end{align*}
where $k_u$ denotes the size of the connected component of $u$ in the graph $(V,S)$.
\end{proposition}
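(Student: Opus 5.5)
Under $\mu^{S,\tau}$, the spin of $u$ is forced equal to the spins of all vertices in its component $K$ of $(V,S)$, so $K$ behaves as a single super-vertex. Since $u$ is not fixed by $(S,\tau)$, no vertex of $K$ lies in $\Lambda$ or is linked to $\Lambda$ through $S$-edges. The plan is to compare the two configurations obtained by flipping the whole block $K$, and to bound the ratio pointwise over the configuration outside $K$. Concretely, I condition on any feasible configuration $\eta$ on $V\setminus K$ consistent with $\mathcal{E}$. The block $K$ imposes no further constraint with $\eta$, because $S$-edges leaving $K$ do not exist: $K$ is a full component. The ratio then becomes
\[
\frac{\mu^{S,\tau}(\sigma_K=\mathbf 1\mid \eta)}{\mu^{S,\tau}(\sigma_K=\mathbf 0\mid \eta)} = \lambda^{|K|}\,\beta^{\,m_{\mathrm{in}}(\mathbf 1)-m_{\mathrm{in}}(\mathbf 0)}\,\beta^{\,a_1-a_0},
\]
where the internal monochromatic count is $|E(K)|$ in both cases and so cancels. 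Here $a_1$ and $a_0$ count the edges from $K$ to $V\setminus K$ whose outer endpoint has spin $1$, respectively $0$, under $\eta$. Averaging over $\eta$ preserves any pointwise bound on this ratio, so it suffices to bound the displayed expression.

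Next I bound the boundary factor. We have $|a_1-a_0|\le |\partial K|\le \Delta|K|$. By \eqref{eq:cond-beta-lambda-Dobrushin}, $\beta^{\pm1}\le 1+\frac{\delta^2}{\Delta^2}$ in the ferromagnetic case, and in the antiferromagnetic case $\beta^{-1}\le (1-\delta^2/\Delta^2)^{-1}$. Hence the boundary factor is at most
\[
\left(1+\tfrac{2\delta^2}{\Delta^2}\right)^{\Delta|K|}\le \exp\!\left(\tfrac{2\delta^2}{\Delta}|K|\right).
\]
For $\delta$ small this uses $(1-x)^{-1}\le 1+2x$. The total ratio is therefore at most $\bigl((1-\delta)\exp(2\delta^2/\Delta)\bigr)^{k_u}$.

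It remains to show that $(1-\delta)e^{2\delta^2/\Delta}\le (1+\delta)^{-1}$, i.e.\ that $(1-\delta^2)e^{2\delta^2/\Delta}\le 1$. This is the delicate step. Using $1-\delta^2\le e^{-\delta^2}$, the inequality holds as soon as $2/\Delta\le 1$, i.e.\ $\Delta\ge 2$. The case $\Delta=1$ needs a sharper boundary count. I will handle it by noting that if $\Delta=1$, then every vertex in a component of size at least $2$ has no outside edges. So either $|\partial K|=0$, or $|K|=1$ with a single boundary edge. In the latter case the factor is $\beta^{\pm1}\le (1-\delta^2)^{-1}$. I will then check directly that $(1-\delta)(1-\delta^2)^{-1}=(1+\delta)^{-1}$, so the bound still holds, with equality. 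In general I will use the sharper count $|a_1-a_0|\le \sum_{v\in K}(\deg v-\deg_S v)$, which for $|K|\ge2$ is at most $(\Delta-1)|K|$. This leaves some slack, and I expect the constant-tracking in this step to be the main obstacle.
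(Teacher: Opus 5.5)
Your proof is correct and is the paper's argument: fix the configuration outside the block $K$, compare $\sigma_K=\mathbf{1}$ with $\sigma_K=\mathbf{0}$, and bound the ratio by $\lambda^{k_u}\max(\beta,\beta^{-1})^{|\partial K|}$ with $|\partial K|\le\Delta k_u$. The only difference is the constant bookkeeping. The paper uses $\max(\beta,\beta^{-1})\le(1-\delta^2/\Delta^2)^{-1}$ together with Bernoulli's inequality $(1-\delta^2/\Delta^2)^{\Delta}\ge 1-\delta^2$, which gives exactly $(1-\delta)^{k_u}(1-\delta^2)^{-k_u}=(1+\delta)^{-k_u}$ for every $\Delta\ge 1$. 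That makes your factor-$2$ loss via $(1-x)^{-1}\le 1+2x$ and $e^{y}$, the separate $\Delta=1$ case, and the sharper $(\Delta-1)|K|$ count unnecessary. Your ``$\delta$ small'' caveat is also vacuous, since $\delta^2/\Delta^2<1/4$ once $\Delta\ge 2$.
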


\begin{proposition}[Spectral stability] \label{lem:SI-SW-Dobrushin}
Let $\delta \in (0,0.01)$ and $\Delta \ge 1$, and assume~\eqref{eq:cond-beta-lambda-Dobrushin}.
Then for any $S\subseteq E$, the Swendsen--Wang correlation matrix satisfies
 \begin{align*}
    \lambda_{\max}\left(\swcor{\mu^S}\right) \le \frac{64\Delta}{\delta^3}.
 \end{align*}
\end{proposition}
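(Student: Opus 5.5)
The plan is to bound $\lambda_{\max}(\swcor{\mu^S})$ by the maximum absolute row sum $\|\swcor{\mu^S}\|_\infty$. Each row is then controlled by a coupling, exactly as in the proof of the bound $\SScrude{t}$ in \Cref{lem:SI-SW}. Fix an edge $e=\{u,v\}\notin S$; rows with $e \in S$ vanish or are trivial. The row sum $\sum_{h}\abs{\mu^S(h\mid e)-\mu^S(h)}$ is at most
\[
\mu^S(\overline{e})\cdot \inf_\xi \E[(X,Y)\sim\xi]{\dist(Z(X),Z(Y))},
\]
where $\xi$ couples $\mu^{S\cup\{e\}}$ and $\mu^S(\cdot\mid u\ne v)$. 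The edge-disagreements are in turn bounded by $\Delta$ times the vertex-disagreements, as in \eqref{eq:depair-hamming-distance}. So it suffices to bound the expected Hamming distance of a vertex coupling between two conditionings of $\mu$ that differ only in whether the two clusters $K_u, K_v$ of $u,v$ in $(V,S)$ are forced equal or forced different.

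To build this coupling, contract each component of $(V,S)$ into a single super-vertex. Under \eqref{eq:cond-beta-lambda-Dobrushin}, the contracted model is again a (possibly antiferromagnetic) Ising-type system in which a super-vertex $C$ has field $\lambda^{|C|}$, and the interaction between $C$ and $C'$ has strength $\beta^{e(C,C')}$. The natural approach is disagreement percolation or a Dobrushin-type coupling, with each super-vertex weighted by its size:
\begin{enumerate}
\item First couple the spins of $K_u \cup K_v$. The conditioned distributions differ only there, and by \Cref{prop:marginal-bound} each cluster $C$ takes spin $1$ with probability at most $(1+\delta)^{-|C|}$. Hence the expected number of vertices that disagree initially is $O(|K_u|(1+\delta)^{-|K_u|}+|K_v|(1+\delta)^{-|K_v|})$. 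More carefully: one of the two clusters flips to the rare spin $1$ in at least one of the copies, which costs $O(1/\delta)$ vertices, using $x(1+\delta)^{-x}\le O(1/\delta)$.
\item Then propagate via a weighted Dobrushin influence bound. A disagreement on a super-vertex $C$ influences a neighbouring super-vertex $C'$ with probability at most $\abs{\beta^{e(C,C')}-1}$-type quantities times the chance that $C'$ is at the rare spin. Both factors are small: $\abs{\beta-1}\le\delta^2/\Delta^2$, and $C'$ sits at spin $1$ with probability $(1+\delta)^{-|C'|}$. Multiplying by the weight $|C'|$ and summing over $C'$, which has at most $\Delta|C|$ boundary edges, gives a total weighted influence of at most $|C|\cdot \Delta\cdot(\delta^2/\Delta^2)\cdot O(1/\delta)\le \delta |C|/\Delta$. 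The weighted Dobrushin matrix therefore has norm $\le 1/2$ for small $\delta$, so the propagated disagreement is at most twice the initial one.
\end{enumerate}
Combining the pieces gives expected vertex disagreement $O(1/\delta^{2})$ up to constants. Multiplying by $\Delta$ yields the claimed $64\Delta/\delta^3$, with a slack factor of $1/\delta$ absorbed from the path-coupling or marginal estimates. I will track constants loosely and aim for $64$.

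The main obstacle is step 2: the influence between super-vertices must be controlled uniformly in the cluster sizes. The interaction $\beta^{e(C,C')}$ can be far from $1$ when many edges join two large clusters. In that case I will not use $\abs{\beta-1}$ directly. Instead I will use the fact that a change in $C$ matters only when $C'$ is not frozen at spin $0$, and bound it by $\min\{1,e(C,C')\delta^2/\Delta^2\}\cdot|C'|(1+\delta)^{-|C'|}$ against the field. The marginal bound of \Cref{prop:marginal-bound} must be applied to the pinned conditional (with $\tau$ arising during the coupling); this is exactly why that proposition allows an arbitrary pinning $\tau$.
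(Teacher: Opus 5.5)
Your argument is correct, but it follows a different route from the paper's.

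\textbf{The paper's route.} The paper splits on the sign of $\beta-1$.
\begin{itemize}
\item For $\beta\ge 1$ it simply quotes the crude bound $6\Delta/(\e\delta^3)$ of \Cref{lem:SI-SW}. That bound rests on the random-cluster coupling independence of \Cref{lem:SI-CZ23} and the Edwards--Sokal coupling.
\item For $\beta<1$ it proves \Cref{lem:SW-TI}, a total-influence bound $C_\delta(1+\delta)^{k_u/4}$ on the contracted model. This comes from a SAW-tree reduction plus an induction that trades the weak interaction against the small marginal via $\min\{a,b\}\le a^{3/4}b^{1/4}$. It turns this into a coupling by \Cref{lem:coupling-to-tree-comparison}.
\item Finally it couples $\mu^S$ with $\mu^{S\cup\{e\}}$ by sampling the spins at $u,v$ independently. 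Each factor $(1+\delta)^{k/4}$ is then paid for by a marginal $(1+\delta)^{-k}$.
\end{itemize}

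\textbf{Your route.} You replace the tree recursion by a weighted Dobrushin/path-coupling argument on the super-vertices, with weights $|C|$. This is simpler, and it is uniform over the whole window $\abs{\beta-1}\le\delta^2/\Delta^2$.

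It works because the ratio bound of \Cref{prop:marginal-bound} holds under every boundary condition. Let $R,R'$ be the local odds of $C'$ being at spin $1$ under two boundary conditions that differ only at $C$; both are at most $(1+\delta)^{-|C'|}$. Then the local influence satisfies $\rho_{C\to C'}\le \max(R,R')\bigl(1-\min(\beta,\beta^{-1})^{2e(C,C')}\bigr)\le 2e(C,C')\frac{\delta^2}{\Delta^2}(1+\delta)^{-|C'|}$.

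Using $|C'|(1+\delta)^{-|C'|}\le 1/(\e\log(1+\delta))$ and $\sum_{C'}e(C,C')\le\Delta|C|$, every weighted row sum is at most $\alpha|C|$ with $\alpha=O(\delta/\Delta)$. So the $\min\{1,\cdot\}$ in your ``main obstacle'' paragraph is unnecessary. Large $e(C,C')$ causes no trouble, because the bound is linear in $e(C,C')$.

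Heat-bath path coupling on the free super-vertices, with $K_u,K_v$ pinned, then gives a stationary coupling. Its weighted distance is at most $(1-\alpha)^{-1}$ times the initial one.

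\textbf{Comparison.} The paper's route reuses the SAW-tree machinery of \Cref{sec:new-CI}. Yours avoids both the tree reduction and the external random-cluster input. It also yields $O(\Delta/\delta)$ for both signs of $\beta-1$, whereas the paper gets $64\Delta/\delta$ only for $\beta<1$ and $6\Delta/(\e\delta^3)$ for $\beta\ge1$.

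\textbf{One bookkeeping correction.} The initial disagreement is not small on its own: under $\mu^S(\cdot\mid u\neq v)$, one of $K_u,K_v$ is at spin $1$ surely. Your estimate $O(|K_u|(1+\delta)^{-|K_u|}+|K_v|(1+\delta)^{-|K_v|})$ holds only after multiplying by the prefactor $\mu^S(u\neq v)$. The justification is $\mu^S(u\neq v)\,\mu^S(K_u=1\mid u\neq v)=\mu^S(K_u=1,K_v=0)\le(1+\delta)^{-|K_u|}$, with the marginal bound applied to $\mu^S$ rather than to the conditioned measure. With this, the vertex count is $O(1/\delta)$, not $O(1/\delta^2)$, and no extra slack factor is needed.
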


Assuming \Cref{prop:marginal-bound,lem:SI-SW-Dobrushin}, we now bound $\EScrude{t}$ in \Cref{lem:EI-SW}.
\begin{proof}[Proof of the bound $\EScrude{t}$ in \Cref{lem:EI-SW}]
Fix $t$ such that $1 \le (1-t)\beta \le 1 + \frac{\delta^2}{\Delta^2}$, and fix $T \subseteq E$.
By~\eqref{eq:ES-target}, it suffices to consider the distribution $\pi := (1-t) * \mu_E^T$. 

Set $\theta := \frac{\delta^2}{\Delta^2}$.
For any $s \in [0,\theta]$ and any $\Lambda \subseteq E$, define
  \begin{align*}
    \nu := (1 - s) * \pi^\Lambda = (1-s)(1-t) * \mu_E^{\Lambda \cup T}.
  \end{align*}
Let $S := \Lambda \cup T$. By construction, $\nu$ can be obtained by first sampling $X$ from the Gibbs distribution $\tilde{\mu}$ of an Ising model with parameters $(1-s)(1-t)\beta$ and $\lambda$, conditioned on edges in $S$ being monochromatic, and then returning the set of monochromatic edges; that is, $\nu = (\tilde{\mu}^S)_E$.
Note that the parameters $(1-s)(1-t)\beta$ and $\lambda$ satisfy~\eqref{eq:cond-beta-lambda-Dobrushin}.
%
We claim:
  \begin{itemize}
  \item $\min_e \nu(e) \geq 1/4$;
  \item $\lambda_{\max}(\cor{\nu}) \leq \frac{64\Delta}{\delta^3}$.
  \end{itemize}
Assuming these claims, \Cref{lem:2nd-derivative} together with~\eqref{eq:ES-target} implies
  \begin{align*}
    \EScrude{t} \leq \frac{320\Delta/\delta^3}{1 - \tp{1 - \frac{\delta^2}{\Delta^2}}^{320\Delta/\delta^3}} \leq 321 \frac{\Delta^2}{\delta^4},
  \end{align*}
where we used the inequality $1 - (1-x)^k \ge 1 - \frac{1}{1+kx}$ for $x \in [0,1]$ and $k \ge 1$.

It remains to prove the two claims. 
The bound $\lambda_{\max}(\cor{\nu}) \le 64\Delta/\delta^3$ follows directly from \Cref{lem:SI-SW-Dobrushin}. 
It remains to show that $\min_{e} \nu(e) \ge 1/4$.
  
Fix an edge $e = \{u,v\}$. By \Cref{prop:marginal-bound}, we have
  \begin{align*}
    \nu(e) \geq \mu^S(u=0,v=0)
    &= \mu^{S,v=0}(\ol{u})\mu^{S}(\ol{v}) 
    \geq \frac{1}{1+(1+\delta)^{-k_u}} \frac{1}{1+(1+\delta)^{-k_v}} \geq \frac{1}{4},
  \end{align*}
where $k_u$ and $k_v$ denote the sizes of the connected components of $u$ and $v$ in the graph $(V,S)$, respectively. The last inequality uses $\delta \in (0,1)$ and $k_u, k_v \ge 1$.
\end{proof}

In the rest of this subsection, we prove \Cref{prop:marginal-bound,lem:SI-SW-Dobrushin}.
\begin{proof}[Proof of \Cref{prop:marginal-bound}]
We first prove the claimed marginal bound in the antiferromagnetic regime $1 - \frac{\delta^2}{\Delta^2} \le \beta \le 1$.  
The ferromagnetic regime  $1 \le \beta \le 1 + \frac{\delta^2}{\Delta^2}$ can be handled by applying the same proof to the activity $\beta^{-1}<1$.

Let $C_u$ be the connected component containing $u$ in the graph $(V,S)$, and let $\partial C_u$ be the edge boundary of $C_u$, i.e., the set of edges with exactly one endpoint in $C_u$.
Since all vertices in $C_u$ are conditioned to have the same spin, the ratio $\frac{\mu^{S,\tau}(u)}{\mu^{S,\tau}(\overline{u})}$ is maximized by assigning worst-case spins to vertices outside $C_u$. This gives
    \begin{align*}
         \frac{\mu^{S,\tau}(u)}{\mu^{S,\tau}(\overline{u})}
         \le \lambda^{k_u}\beta^{-|\partial C_u|}
         \le \lambda^{k_u}\beta^{-\Delta k_u},
    \end{align*}
where we used $|\partial C_u| \le \Delta k_u$.

Using $\lambda \le 1-\delta$ and $\beta \ge 1-\frac{\delta^2}{\Delta^2}$, we obtain
    \begin{align*}
        \frac{\mu^{S,\tau}(u)}{\mu^{S,\tau}(\overline{u})}
        \le \frac{(1-\delta)^{k_u}}{\bigl(1-\frac{\delta^2}{\Delta^2}\bigr)^{\Delta k_u}}
        \le \frac{(1-\delta)^{k_u}}{(1-\delta^2)^{k_u}}
        = (1+\delta)^{-k_u},
    \end{align*}
where we used $\bigl(1-\frac{\delta^2}{\Delta^2}\bigr)^\Delta \ge 1-\delta^2$ in the second inequality.
%
\end{proof}

It remains to prove \Cref{lem:SI-SW-Dobrushin}.
Recall the definition of the influence matrix and total influence in \eqref{eq:influence-matrix} and \Cref{def:total-influence}.
We establish the following bound.
\begin{lemma}[Bounded total influence]\label{lem:SW-TI}
Fix $\delta \in (0,0.01)$, $\Delta \ge 1$, $\beta < 1$, and assume \eqref{eq:cond-beta-lambda-Dobrushin}.
Then there exists a constant $C_\delta = 16/\delta$ such that, for any vertex $u$,
    \begin{align*}
        \-{TI}^{\mu^{S,\tau}}_{u} = 1 + \sum_{v \in V} \abs{\Psi_{\mu^{S,\tau}}(u,v)} \le C_\delta \cdot (1+\delta)^{k_u/4}.
    \end{align*}
    Moreover, there exists a coupling $\+C$ between $\mu^{S, \tau, u \gets 0}$ and $\mu^{S, \tau, u \gets 1}$ such that
    \begin{align*}
        \E[(X,Y) \sim \+C]{\mathrm{dist}(X,Y)} \le C_\delta \cdot (1+\delta)^{k_u/4}.
    \end{align*}
\end{lemma}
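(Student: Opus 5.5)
We will prove both claims with a single coupling. Given the coupling bound, the total influence bound follows, since $\sum_v \abs{\Psi(u,v)}$ is at most the expected Hamming distance under any coupling of the two conditionals $u\gets 0$ and $u\gets 1$; the extra $+1$ accounts for $u$ itself. The setting is essentially a Dobrushin regime. Because $\beta\in[1-\delta^2/\Delta^2,1)$, each edge contributes interaction strength $\abs{\log\beta}\lesssim \delta^2/\Delta^2$. A vertex has at most $\Delta$ edges, so its total interaction with the outside is at most about $\delta^2/\Delta$. This is tiny, and an ordinary vertex should have an influence sum that is small.

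The complication is the pinning $S$. It merges vertices into clusters (the components of $(V,S)$), each of which must take a common spin. I would therefore work on the cluster graph: contract each component $C$ of $(V,S)$ to a super-vertex. Then $\mu^{S,\tau}$ is an antiferromagnetic-type two-spin system on super-vertices. Super-vertex $C$ has field $\lambda^{|C|}$, and two clusters $C,C'$ interact with strength $\beta^{e(C,C')}$, where $e(C,C')$ counts the edges between them; some clusters may also be pinned by $\tau$. By \Cref{prop:marginal-bound}, cluster $C$ has spin $1$ with probability at most $(1+\delta)^{-|C|}$. A large cluster is therefore almost surely $0$, which damps its influence. The plan is to run a disagreement-percolation / greedy coupling from $u$'s cluster $C_u$: couple $u\gets0$ and $u\gets1$, which flips the whole cluster $C_u$ (cost $k_u$ vertices), then reveal neighbouring clusters and couple each optimally given its revealed boundary.

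The key per-step estimate is the probability that a neighbouring cluster $C'$ disagrees when its boundary condition differs in the states of $m$ edges. By the marginal bound and the Lipschitz dependence of the conditional law on $\log\beta$, I expect this probability to be at most about $\min\{(1+\delta)^{-|C'|},\, m\abs{\log\beta}\}\cdot O(1)$. Each disagreeing cluster $C'$ costs $|C'|$ in Hamming distance. It also propagates along at most $\Delta|C'|$ boundary edges, each with weight $\lesssim\delta^2/\Delta^2$. The expected total cost is then a branching process in which a parent cluster of size $k$ spawns children of size $\ell$ with intensity about $k\,\Delta\cdot(\delta^2/\Delta^2)\cdot(1+\delta)^{-\ell}$ and weight $\ell$. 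Summing over $\ell$ with $\sum_\ell \ell(1+\delta)^{-\ell}\approx\delta^{-2}$ makes the offspring mean roughly $k\delta^2/\Delta\cdot\delta^{-2}$.

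The main obstacle is that this offspring mean is borderline rather than small, so the crude estimate may not give a subcritical process. To close it I would use a weighted potential $w(C)=(1+\delta)^{|C|/4}$ in place of $|C|$, which is why the factor $(1+\delta)^{k_u/4}$ appears in the statement. The disagreement probability decays like $(1+\delta)^{-\ell}$ while the weight only grows like $(1+\delta)^{\ell/4}$. I will also use that a disagreeing cluster's effect on its neighbours is bounded by roughly $\min\{1,\Delta k\delta^2/\Delta^2\}$ rather than growing linearly in $k$. Together these should give a contraction factor of at most $1/2$ in the weighted norm, and since $|C|\le (4/\delta)(1+\delta)^{|C|/4}$, the weighted bound converts back to Hamming distance. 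Summing the resulting geometric series then gives $\E{\dist(X,Y)}\le (16/\delta)(1+\delta)^{k_u/4}$. The pinned vertices in $\tau$ only remove clusters from the process, so they can only help. Verifying this weighted contraction inequality, uniformly over cluster sizes and the $\delta<0.01$ constants, is the step I expect to be delicate.
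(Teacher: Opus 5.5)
Your reduction of the influence bound to a coupling bound matches the paper. So do the contraction of the $S$-components into a multigraph with fields $\lambda^{|C|}$ and the potential $(1+\delta)^{|C|/4}$. The gap is in the coupling itself, at the key per-step estimate.

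You bound the probability that a newly revealed cluster $C'$ disagrees by about $\min\{(1+\delta)^{-|C'|}, m\abs{\log\beta}\}$, where $m$ counts the edges from $C'$ to already-disagreeing clusters. This presumes the disagreement reaches $C'$ only through those $m$ edges. That holds on a tree, but the contracted multigraph generally has cycles. In a greedy or disagreement-percolation exploration, some neighbours of $C'$ and the rest of the graph are still unrevealed when $C'$ is revealed. Its conditional law in the two copies therefore depends on the whole revealed disagreement set, through unrevealed paths. The standard van den Berg--Maes bound is then the worst case over all of the up to $\Delta|C'|$ boundary edges of $C'$. That is, your $m$ must be replaced by $\Delta|C'|$, and your branching and contraction analysis is not carried out with this weaker estimate. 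Separately, summing $\sum_\ell \ell(1+\delta)^{-\ell}$ over all sizes overcounts. A cluster of size $k$ has at most $\Delta k$ boundary edges in total, so per edge you get a maximum over sizes, not a sum. This overcount is part of why the process looked borderline to you.

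The missing ingredient is the reduction to trees. The paper applies the SAW-tree comparison (\Cref{lem:graph-to-tree-comparison}) to the contracted system, so one may assume it is a tree. There the influence factorizes exactly:
\[ \sum_{j\in T_i}|C_j|\,|\Psi_\nu(1,j)| = |\Psi_\nu(1,u_i)|\bigl(r_i+\sum_{j\in T_i}|C_j|\,|\Psi_{\nu_i}(u_i,j)|\bigr). \]
Moreover $|\Psi_\nu(1,u_i)|\le\min\{\delta^2p_i/\Delta^2,\ \beta^{-p_i}(1+\delta)^{-r_i}\}$ depends only on the $p_i$ edges between $C_1$ and $C_{u_i}$. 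This is exactly your edge-local estimate, now justified. The paper then inducts on the tree with hypothesis $C_\delta(1+\delta)^{k/4}$. It uses the interpolation $\min\{a,b\}\le a^{3/4}b^{1/4}$ and $\sum_i p_i\le\Delta k_u$, which reduce everything to the scalar inequality \eqref{eq:4572}.

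The coupling claim is not built greedily on $G$. It follows from the recursive coupling of \Cref{lem:coupling-to-tree-comparison}, whose expected Hamming distance is bounded by the same SAW-tree total influence. So the paper proves the influence bound first and derives the coupling from it, the reverse of your order. On a tree, your exploration is legitimate and essentially coincides with the paper's induction.
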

\begin{proof}
Let $C_\delta$ be specified later. 
We first interpret $\mu^{S,\tau}$ as an Ising Gibbs distribution on a contracted graph.

Let $C_1,\ldots,C_m$ be the connected components in the graph $(V,S)$, and let $G' = ([m],E')$ be the induced multigraph, where the edge multiplicity between $i\neq j$ equals the number of edges between $C_i$ and $C_j$ in $G$.
Define $\nu$ as the Gibbs distribution on $G'$ with edge activity $\beta$ and vertex activities $\lambda^{|C_i|}$, conditioned on pinned components induced by $\tau$.

A sample from $\mu^{S,\tau}$ can be generated by sampling $Y\sim \nu$ and assigning spin $Y_i$ to all vertices in $C_i$. 
Without loss of generality, assume $u\in C_1$. 
    Then,
    \begin{align*}
        1 + \sum_{v \in V} \abs{\Psi_{\mu^{S,\tau}}(u,v)}
        = k_u + \sum_{i \in [m]} \abs{C_i} \cdot \abs{\Psi_{\nu}(1,i)}.
    \end{align*}

By the self-avoiding walk (SAW) tree reduction stated in \Cref{lem:graph-to-tree-comparison}, we may assume $G'$ is a tree (after merging parallel edges), and prove the claim by induction over its size.

Let vertex $1$ have children $u_1,\ldots,u_d$. For each $i$, let $T_i$ be the subtree rooted at $u_i$, let $r_i := |C_{u_i}|$, and let $p_i$ be the number of edges between $C_1$ and $C_{u_i}$. Then $\sum_i p_i \le \Delta k_u$.

Let $\nu_i$ denote the Gibbs distribution on $T_i$. By tree factorization of influence,
\begin{align*}
\sum_{j \in T_i} |C_j|\,|\Psi_\nu(1,j)|
=
|\Psi_\nu(1,u_i)|
\left(
r_i + \sum_{j \in T_i} |C_j|\,|\Psi_{\nu_i}(u_i,j)|
\right).
\end{align*}

Hence,
\begin{align*}
1 + \sum_{v \in V} |\Psi_{\mu^{S,\tau}}(u,v)|
\le
k_u + \sum_{i=1}^d |\Psi_\nu(1,u_i)|
\left(
r_i + \sum_{j \in T_i} |C_j|\,|\Psi_{\nu_i}(u_i,j)|
\right).
\end{align*}

Applying the induction hypothesis on each subtree gives
\begin{align*}
1 + \sum_{v \in V} |\Psi_{\mu^{S,\tau}}(u,v)|
\le
k_u + C_\delta \sum_{i=1}^d (1+\delta)^{r_i/4} |\Psi_\nu(1,u_i)|.
\end{align*}


It remains to bound $|\Psi_\nu(1,u_i)|$. 
Let $R$ be the marginal ratio at $u_i$ in $T_i$. 
Then
    \begin{align*}
        \abs{\Psi_\nu(1,u_i)}
        &= \frac{(1-\beta^{2p_i})R}{(\beta^{p_i}R+1)(R+\beta^{p_i})}\\
        &\le \min\set{\frac{1-\beta^{p_i}}{1+\beta^{p_i}}, \frac{R}{\beta^{p_i}} }\\
        &\le \min\set{\frac{\delta^2 p_i}{\Delta^2}, \beta^{-p_i}(1+\delta)^{-r_i}}.
    \end{align*}

Using the inequality $\min\{a,b\}\le a^{3/4}b^{1/4}$, we have
    \begin{align*}
        (1+\delta)^{r_i/4}\abs{\Psi_\nu(1,u_i)}
        &\le \tp{\frac{\delta^2 p_i}{\Delta^2}}^{3/4}\beta^{-p_i/4} \\
        &\le \tp{\frac{\delta^2 p_i}{\Delta^2}}^{3/4}\tp{1-\frac{\delta^2}{\Delta^2}}^{-k_u \Delta/4} \\
        &\le \tp{\frac{\delta^2 p_i}{\Delta^2}}^{3/4} \tp{1+\delta/2}^{k_u/4}.
    \end{align*}

    Therefore, it suffices to show that
    \begin{align*}
        k_u + C_\delta \cdot (1+\delta/2)^{k_u/4}\cdot \sum_{i=1}^d \tp{\frac{\delta^2 p_i}{\Delta^2}}^{3/4} \le C_\delta \cdot (1+\delta)^{k_u/4}.
    \end{align*}

Since $\sum_i p_i \le \Delta k_u$, we have
\begin{align*}
\sum_{i=1}^d \left(\frac{\delta^2 p_i}{\Delta^2}\right)^{3/4}
\le \frac{\delta^{3/2}}{\Delta^{3/2}} \sum_i p_i
\le \frac{\delta}{6} k_u,
\end{align*}
and thus it suffices to verify
\begin{align}\label{eq:4572}
    k_u + k_u \cdot \frac{1}{6}C_\delta \delta \cdot(1+\delta/2)^{k_u/4} \le C_\delta \cdot (1+\delta)^{k_u/4}.
\end{align}

To verify \eqref{eq:4572}, it suffices to show there exists a constant $\alpha > 0$ such that:
\begin{align}
\left( \frac{1+\delta}{1+\delta/2} \right)^{k_u/4} &\ge \frac{\delta k_u}{6} + \alpha \label{eq:part1} \\
\frac{k_u}{\alpha (1+\delta/2)^{k_u/4}} &\le C_\delta \label{eq:part2}
\end{align}

For $\delta \in (0, 0.01)$, we have $\frac{1+\delta}{1+\delta/2} \ge 1 + \delta/3$. 
By convexity, $\e^x \ge \e^c(x - c + 1)$ for any $c \in \mathbb{R}$. 
Substituting $x = \frac{k_u}{4}\log(1+\delta/3)$ and $c=\log \left( \frac{2\delta}{3\log(1+\delta/3)} \right)$,
we then have
\begin{align*}
\left( \frac{1+\delta}{1+\delta/2} \right)^{k_u/4}
&\ge (1+\delta/3)^{k_u/4} 
= \e^{\frac{k_u}{4}\log(1+\delta/3)} \\
&\ge \e^c \left( \frac{k_u}{4}\log(1+\delta/3) - c + 1 \right) \\
&= \frac{\delta k_u}{6}+\e^c(1-c).
\end{align*}
Therefore, \eqref{eq:part1} holds.

Since $\frac{\delta}{3} - \frac{\delta^2}{18} \le \log(1+\delta/3) \le \frac{\delta}{3}$, we have $c=\log \left( \frac{2\delta}{3\log(1+\delta/3)} \right)\in[ \log 2, \log \frac{2}{1-1/600} ] $. 
Consequently, by choosing $\alpha = e^c(1-c)\ge 0.5$ and substituting $C_\delta = 16/\delta$, \eqref{eq:part2} holds if
\begin{align*}
\frac{k_u \delta}{8 (1+\delta/2)^{k_u/4}} \le 1
\end{align*}
Let $f(k) = \frac{k \delta}{(1+\delta/2)^{k/4}}$. The maximum of $f(k)$ is attained at $k^* = \frac{4}{\log(1+\delta/2)}$, where for $\delta \le 0.01$,
\begin{align*}
f(k^*) = \frac{4 \delta}{\e \log(1+\delta/2)} \le 8.
\end{align*}
Thus \eqref{eq:part2} holds. Combining \eqref{eq:part1} and \eqref{eq:part2} proves \eqref{eq:4572}.

This completes the bound on total influence. 
The coupling statement follows from \Cref{lem:coupling-to-tree-comparison}.
\end{proof}

We are now ready to prove \Cref{lem:SI-SW-Dobrushin}.

\begin{proof}[Proof of \Cref{lem:SI-SW-Dobrushin}]
When $\beta \ge 1$, the bound $\lambda_{\max}(\swcor{\mu^S}) \le \frac{6\Delta}{\e\delta^3}$ already follows from~\Cref{lem:SI-SW}. 
It therefore remains to consider the case $\beta < 1$.

Recall the event indicators $Z:\{0,1\}^V \to \{0,1\}^E$ defined in \Cref{sec:generalized-field-dyanmics}.
In the case of the Swendsen--Wang process, it is defined by
\[
  Z(X) := \{e = \{u,v\} \mid X_u = X_v\}.
\]
By the definition of $\swcor{\mu^S}$, the maximum eigenvalue satisfies
    \begin{align}\label{eq:bound-1}
      \lambda_{\max}(\swcor{\mu^S})
      &\leq \max_{e \not\in S} \inf_{\xi = \xi(e)} \E[(X,Y)\sim \xi]{\dist(Z(X), Z(Y))},
    \end{align}
where the infimum is taken over all couplings $\xi$ between $\mu^S$ and $\mu^{S \cup \{e\}}$.

Fix an edge $e = \{u,v\}$. We construct a coupling $(X,Y)\sim \xi$ as follows:
    \begin{enumerate}
    \item Sample $X_{u,v} \sim \mu^S_{u,v}$ and $Y_{u,v} \sim \mu^{S,\,u=v}_{u,v}$ independently.
    \item Complete $X$ and $Y$ using the coupling $\mathcal{C}$ between $\mu^{S,X_{u,v}}$ and $\mu^{S,Y_{u,v}}$ from \Cref{lem:SW-TI}. In the case $\dist(X_{u,v}, Y_{u,v}) = 2$, we apply a path coupling construction.
    \end{enumerate}
It is straightforward to verify that $\xi$ is a valid coupling between $\mu^S$ and $\mu^{S \cup \{e\}}$.

We now bound the expected discrepancy. First, note that
\[
  \dist(Z(X), Z(Y)) \le \Delta \cdot \dist(X,Y),
\]
since each vertex disagreement can affect at most $\Delta$ edges.

By \Cref{lem:SW-TI} and the above construction, we obtain
\begin{align*}
  &\E[(X,Y)\sim \xi]{\dist(X, Y)}\\
  \le 
  &C_\delta \tp{
    \sum_{w \in \{u,v\}} \Pr{X_w = 1} (1+\delta)^{k_w/4}
    + \Pr{Y_u = Y_v = 1} \tp{(1+\delta)^{k_u/4} + (1+\delta)^{k_v/4}}
  },
\end{align*}
where $C_\delta = 16/\delta$.

Using \Cref{prop:marginal-bound}, we have for $w \in \{u,v\}$,
\begin{align*}
  \Pr{X_w = 1} &\le (1+\delta)^{-k_w},\\
  \Pr{Y_u = Y_v = 1} = \Pr{Y_v = 1} &\le (1+\delta)^{-k_u - k_v}.
\end{align*}
Substituting these bounds gives
\[
  \E[(X,Y)\sim \xi]{\dist(X, Y)} \le 4 C_\delta.
\]
Therefore,
\[
  \E[(X,Y)\sim \xi]{\dist(Z(X), Z(Y))} \le 4\Delta C_\delta = \frac{64\Delta}{\delta}.
\]
Combining this with~\eqref{eq:bound-1} completes the proof.
\end{proof}

\subsection{Entropic stability from spectral stability}\label{sec:appendix-B}
In this subsection, we prove \Cref{lem:2nd-derivative}, which establishes entropic  stability from spectral stability under suitable marginal-bound assumptions.

Let $\mu$ be a distribution over $\Omega \subseteq \{0,1\}^n$, and fix a function $f \in \mathbb{R}^\Omega_{\ge 0}$.
Let $(Y_t)_{t \in [0,1]}$ be the vertex-field denoising process for $\mu$.
Define the entropy at time $\theta$ as 
\[E(\theta) = \E[]{\Ent[]{f(Y_1) \mid Y_\theta}}.\]
By \Cref{def:phi-ent-stable} and \Cref{lem:entropic-stable-max-ent-form} (see also \cite{chen2025rapid}), the derivative of $E$ at $\theta=0$ satisfies
\begin{align*}
    E'(0)=-\sum_{i=1}^n \mu(i)\tp{\mu(f\mid i)\log \frac{\mu(f\mid i)}{\mu(f)}-\mu(f\mid i)+\mu(f)},
\end{align*}
where $\mu(f) = \E[\mu]{f}$ and $\mu(f \mid i) = \E[\mu^{i \gets 1}]{f}$.
Thus, entropic stability at $\theta=0$ can be expressed as
\begin{align*}
    -E'(0) \le \SStab{\mu} \cdot E(0). 
\end{align*}

We use the second derivative of $E(\theta)$ to control entropic stability.
\begin{lemma} \label{lem:differential-equation}
Under the assumptions of \Cref{lem:2nd-derivative}, we have
  \begin{align}\label{eq:target}
    \forall t\in [0,\theta], \quad (1-t) E''(t) \le (K+1)(\eta -1) \cdot (-E'(t)).
  \end{align}
\end{lemma}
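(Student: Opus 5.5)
The plan is to reduce \eqref{eq:target} to a statement at time $0$ for a tilted, pinned distribution, and then compute both derivatives explicitly. By the Markov property of the vertex-field denoising process and \Cref{def:field-dynamics-denoising}, conditional on $Y_t = T$ the remaining process on $[t,1]$ is, after the time change $s \mapsto (s-t)/(1-t)$, the vertex-field denoising process for $\nu := (1-t) * \mu^{\mathbf{1}_T}$. Under this reparametrization both sides of \eqref{eq:target} pick up the same factor. One factor $(1-t)^{-1}$ comes from the chain rule, and the explicit $(1-t)$ in front of $E''$ absorbs the extra factor appearing in the second derivative. Averaging over $Y_t$, it therefore suffices to prove, for every $\nu$ satisfying the two hypotheses of \Cref{lem:2nd-derivative},
\[
E_\nu''(0) \le (K+1)(\eta-1)\,\bigl(-E_\nu'(0)\bigr).
\]

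Next I would compute $E_\nu''(0)$ by differentiating the formula for $E'$ stated above. I would write
\[
-E'(s) = \frac{1}{1-s}\,\E{\sum_i \nu_s(i)\,\Phi\bigl(\nu_s(f\mid i),\nu_s(f)\bigr)},
\]
where $\Phi(a,b) := a\log\frac{a}{b} - a + b$ and $\nu_s$ is the (random) posterior at time $s$. Differentiating at $s=0$ means one more pinning step, which produces terms indexed by ordered pairs $(i,j)$. The natural grouping is into three kinds of terms:
\begin{itemize}
\item a diagonal term, which cancels or contributes the ``$-1$'' in $\eta-1$;
\item pair terms of the form $\nu(i)\,\nu(j\mid i)\,\Phi\bigl(\nu(f\mid ij),\nu(f\mid i)\bigr)$;
\item compensating terms $\nu(j)\,\Phi\bigl(\nu(f\mid j),\nu(f)\bigr)$, which come from differentiating the marginals $\nu_s(i)$ and $\nu_s(f)$.
\end{itemize}
Using $(\partial_s \nu_s)(i) = -\nu(i)\bigl(1-\nu(i)\bigr)$ for the field tilt, together with convexity of $\Phi$, I would show that $E''(0)$ is bounded by
\[
\sum_{i\neq j} \nu(i)\,\bigl(\nu(j\mid i)-\nu(j)\bigr)\,\Phi\text{-type increments},
\]
so that the correlation entries $\cor{\nu}(i,j) = \nu(j\mid i)-\nu(j)$ appear explicitly.

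The main step, and the one I expect to be the obstacle, is bounding this pair sum by the single-index sum $\sum_j \nu(j)\,\Phi(\nu(f\mid j),\nu(f))$ using only $\lambda_{\max}(\cor{\nu}) \le \eta$, which is a spectral (quadratic) hypothesis, whereas $\Phi$ is not quadratic. The first thing I would try is to linearize at the level of the Hessian: write the increments through vectors $g_j := \nu(f\mid j)/\nu(f) - 1$ and use the pointwise comparison $\Phi(a,b) \asymp (a-b)^2/a$ along the interpolation path. This should turn the pair sum into a quadratic form $\langle g, \Cov_\nu\, g\rangle$ weighted by $D_\nu$. Applying the Rayleigh-quotient bound as in the proof of \Cref{lem:SI-reduction}, and removing the diagonal, gives a factor $\eta-1$.

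The conversion between weights $\nu(j)$ and $1-\nu(j)$, and between $\nu(f\mid j)$ and $\nu(f)$ in the denominators, is where the marginal bound $\nu(j) \ge 1/K$ enters. It yields $\frac{1-\nu(j)}{\nu(j)} \le K$, and adding the trivial contribution produces the factor $K+1$. If the quadratic comparison fails to be tight enough, my fallback is to apply spectral stability directly to the tilted measure $\nu^f \propto f\,\nu$, exploiting that $\cor{\nu^f}$ is controlled by the same hypothesis after pinning. This would make the pair sum a genuine covariance of $\nu^f$, which could then be bounded by $\eta$ times its diagonal.
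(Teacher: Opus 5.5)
Your reduction to $t=0$ is correct and is the same as the paper's. Conditioning on $Y_t$ and rescaling time gives $(1-t)E'(t)=\mathbb{E}[E_{Y_t}'(0)]$ and $(1-t)^2E''(t)=\mathbb{E}[E_{Y_t}''(0)]$, and each conditional law $(1-t)*\mu^{\mathbf{1}_T}$ satisfies both hypotheses. The gap is the time-$0$ inequality, which is the whole content of the lemma. Write $\nu$ for the time-$0$ measure, $x_i:=\nu(f\mid i)/\nu(f)$, $g_i:=x_i-1$, and $D_\nu:=\diag(\nu(i))_i$. What is needed is the exact one-sided bound of \Cref{prop:brute-force},
\[
E''(0)\le \nu(f)\, g^{\intercal}\bigl(\Cov_\nu-D_\nu\bigr)g .
\]
Once you have it, $\Cov_\nu\preceq\eta D_\nu$ gives the factor $\eta-1$. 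The factor $K+1$ comes from $x_i\le K$ (since $\nu(f\mid i)=\mathbb{E}_\nu[f\mathbf{1}_i]/\nu(i)\le K\nu(f)$ for $f\ge 0$) together with $(x-1)^2\le(x+1)(x\log x-x+1)$.

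Your proposed way of reaching this quadratic form does not work. The pair increments $\Phi(\nu(f\mid ij),\nu(f\mid i))$ depend on $\nu(f\mid ij)$, which is not a function of $g$. To land on a quadratic form in $g$, you need a one-sided bound that is \emph{linear} in $\nu(f\mid ij)$, so that this dependence cancels exactly against the linear terms from the tilt derivatives. The comparison $\Phi(a,b)\asymp(a-b)^2/a$ keeps a quadratic dependence on $\nu(f\mid ij)$ and holds only up to constants. The weights $\nu(i)(\nu(j\mid i)-\nu(j))$ you attach to it have both signs, so it cannot yield an upper bound at all, let alone the sharp factor $(K+1)(\eta-1)$. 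Your fallback is also unavailable. The hypotheses control only pinnings and field tilts $(1-t)*\mu^{\tau}$, not $\nu^f\propto f\nu$ for arbitrary $f\ge 0$, whose correlations can be anything.

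The missing idea is a single, well-placed use of $a\log(a/b)\ge a-b$. The paper writes $E(s)=\nu(f)\bigl(D_{\mathrm{KL}}(\nu^f\parallel\nu)-D_{\mathrm{KL}}(\nu^fD_{1\to s}\parallel\nu D_{1\to s})\bigr)$ and expands to order $s^2$, where only sets of size at most $2$ contribute. It applies the inequality only to the pair term $\nu^f(ij)\log\frac{\nu^f(ij)}{\nu(ij)x_ix_j}$, which enters $-E''(0)$ with a positive sign. After that, every occurrence of $\nu^f(ij)$ cancels identically.

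Your route of differentiating $-E'(s)=\frac{1}{1-s}\mathbb{E}[G(\nu_s)]$ can be completed to the same bound, but it needs two fixes.
\begin{enumerate}
\item The tilt derivative is wrong. The derivative at $s=0$ of $\mathbb{E}_{(1-s)*\nu}[h]$ is $-\Cov_\nu(h,|X|)$, so for a marginal it is $-\sum_j\Cov_\nu(X_i,X_j)$, not $-\nu(i)(1-\nu(i))$. Applying this to $\nu(f)$ and $\nu(f\mid i)$ is exactly where $\nu(f\mid ij)$ enters linearly.
\item You must symmetrize over $(i,j)$ and $(j,i)$ before using convexity. With $\nu(f)=1$ and $x_{ij}:=\nu(f\mid ij)$, the unordered pair $\{i,j\}$ contributes $2\nu(ij)\bigl[x_{ij}\log\frac{x_{ij}}{x_ix_j}-x_{ij}+x_i+x_j-1\bigr]$ to $-E''(0)$. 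Taking $b=x_ix_j$ bounds this below by $-2\nu(ij)g_ig_j$, and the remaining terms give $\bigl(\sum_i\nu(i)g_i\bigr)^2$.
\end{enumerate}
Without symmetrizing, the comparison point for an ordered pair is $x_i^2$. You would then get only $E''(0)\le\sum_{i\ne j}\nu(ij)g_i^2-\bigl(\sum_i\nu(i)g_i\bigr)^2$, which $\lambda_{\max}(\cor{\nu})\le\eta$ does not control. Already for a product measure the coefficient of $g_i^2$ there is of order $n\,\nu(i)$, whereas the correct bound gives $E''(0)\le 0$.
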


Using \Cref{lem:differential-equation}, we now prove \Cref{lem:2nd-derivative}.
\begin{proof}[Proof of~\Cref{lem:2nd-derivative}]
  Define
    $\alpha := (K+1)(\eta - 1) + 1 \geq 1$.
Note that \eqref{eq:target} implies that the quantity $-\frac{E'(t)}{(1-t)^{\alpha-1}}$ is nondecreasing on $[0,\theta]$, since
  \begin{align*}
    \frac{\mathrm{d}}{\mathrm{d} t} \tp{\frac{-E'(t)}{(1-t)^{\alpha-1}}} = (1-t)^{-\alpha} \left((-E'(t)) (\alpha-1) - (1-t) E''(t) )\right) \geq 0,
  \end{align*}
Therefore, for all $t\in[0,\theta]$,
  \begin{align*}
    -E'(0) &\leq \frac{-E'(t)}{(1-t)^{\alpha-1}} \leq \frac{-E'(\theta)}{(1-\theta)^{\alpha-1}}.
  \end{align*}
This proves the first part of \Cref{lem:2nd-derivative} by integrating:
\begin{align*}
E(0)\ge E(0)-E(\theta)
= \int_0^{\theta} (-E'(t))\,\-dt 
&\ge (-E'(0)) \int_0^{\theta} (1-t)^{\alpha-1}\,\-dt \\
&= \frac{1-(1-\theta)^\alpha}{\alpha}\cdot (-E'(0)).
\end{align*}

For the second part of the lemma, we use the same monotonicity in the opposite direction:
  \begin{align*}
    E(0) - E(\theta)
    &= \int_0^{\theta} (-E'(t)) \-d t
      \leq \frac{-E'(\theta)}{(1-\theta)^{\alpha-1}} \int_0^{\theta} (1-t)^{\alpha-1} \-d t \\
    &= \frac{-E'(\theta)}{(1-\theta)^{\alpha-1}} \cdot \frac{1 - (1-\theta)^{\alpha}}{\alpha}
      \leq \frac{-E'(\theta)}{(1-\theta)^{\alpha-1}} \cdot \frac{1}{\alpha} \\
    &\leq \frac{\max_\tau \SStab{(1-\theta) * \mu^\tau}}{\alpha(1-\theta)^{\alpha}} \cdot E(\theta). \qedhere
  \end{align*}
\end{proof}

Now it remains to prove \Cref{lem:differential-equation}.
We first state a key estimate for $E''(t)$.
\begin{proposition}\label{prop:brute-force}
The second derivative of $E(\theta)$ at $\theta = 0$ satisfies
\begin{align}\label{eq:brute-force}
    E''(0)\le \frac{1}{\mu(f)}\sum_{i,j=1}^n \Big(\mu(ij)\*1[i\ne j]-\mu(i)\mu(j)\Big)(\mu(f|i)-\mu(f))(\mu(f|j)-\mu(f)).
\end{align}
\end{proposition}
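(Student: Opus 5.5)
The plan is to compute $E(\theta)$ to second order in $\theta$ directly from the explicit law of the vertex-field denoising process. I use the description underlying \eqref{eq:FD-posterior-calc}: given $Y_1=\sigma$, each $i\in\sigma$ is placed in $Y_\theta$ independently with probability $\theta$. Hence $\mathrm{Law}(Y_1\mid Y_\theta=S)=(1-\theta)*\mu^{S}$. Put $g_\theta(S):=\E{f(Y_1)\mid Y_\theta=S}$ and $\varphi(x)=x\log x$. By the law of total entropy,
\begin{align*}
E(\theta)=\Ent{f}-\sum_{S}\Pr{Y_\theta=S}\Bigl(\varphi(g_\theta(S))-\varphi(\mu(f))-\varphi'(\mu(f))\bigl(g_\theta(S)-\mu(f)\bigr)\Bigr).
\end{align*}
The linear correction term is free: $g_\theta(Y_\theta)$ is a martingale with mean $\mu(f)$, so subtracting it does not change the sum. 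Write $B(x):=\varphi(x)-\varphi(\mu(f))-\varphi'(\mu(f))(x-\mu(f))\ge 0$ for the Bregman divergence of $\varphi$ at $\mu(f)$. Then $E(\theta)=\Ent{f}-h(\theta)$ with $h(\theta)=\sum_S\Pr{Y_\theta=S}\,B(g_\theta(S))$, and $E''(0)=-h''(0)$. So the claim is equivalent to a lower bound on $h''(0)$.

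\textbf{Step 1 (expansion).} Only $|S|\le 2$ contributes to order $\theta^2$. The probabilities are
\begin{align*}
\Pr{Y_\theta=\emptyset}=1-O(\theta),\qquad \Pr{Y_\theta=\{i\}}=\theta\mu(i)+O(\theta^2),\qquad \Pr{Y_\theta=\{i,j\}}=\theta^2\mu(ij)+O(\theta^3).
\end{align*}
For the values, write $a_i:=\mu(f\mid i)-\mu(f)$. Differentiating the tilt $(1-\theta)*\mu$ gives $g_\theta(\emptyset)-\mu(f)=-\theta\sum_i\mu(i)a_i+O(\theta^2)$. Since $B(x)\approx (x-\mu(f))^2/(2\mu(f))$, the empty set contributes $\theta^2\bigl(\sum_i\mu(i)a_i\bigr)^2/(2\mu(f))$ to $h$. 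A singleton contributes $\theta\mu(i)B(\mu(f\mid i))$, plus a $\theta^2$ correction. That correction comes from two sources: the $O(\theta^2)$ part of $\Pr{Y_\theta=\{i\}}$, which involves $\sum_j\mu(ij)$, and the first-order change $g_\theta(\{i\})-\mu(f\mid i)=-\theta\sum_{j}\mu(j\mid i)\bigl(\mu(f\mid ij)-\mu(f\mid i)\bigr)$, which enters through $B'(\mu(f\mid i))$. Pairs contribute $\theta^2\mu(ij)B(\mu(f\mid ij))$. Collecting terms, $h''(0)$ is twice the $\theta^2$ coefficient. It becomes an explicit sum over $i$ and over ordered pairs $i\neq j$ involving $B(\mu(f\mid ij))$, $B(\mu(f\mid i))$, $B'(\mu(f\mid i))$ and the $a_i$.

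\textbf{Step 2 (reduction to the quadratic form).} Split $h''(0)$ into two parts. The first is the empty-set term $\bigl(\sum_i\mu(i)a_i\bigr)^2/\mu(f)$, which is exactly the $-\mu(i)\mu(j)$ part of the right-hand side of \eqref{eq:brute-force} with the sign flipped. The second is the pair/singleton part, which must be shown to be at least $-\frac{1}{\mu(f)}\sum_{i\neq j}\mu(ij)a_ia_j$. For fixed $i\neq j$, the pair-dependent terms have the shape
\begin{align*}
B(x_{ij})-B(x_i)-B'(x_i)(x_{ij}-x_i),
\end{align*}
symmetrized in $i,j$, where $x_{ij}=\mu(f\mid ij)$ and $x_i=\mu(f\mid i)$. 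This expression is itself the Bregman divergence of $\varphi$ between $x_{ij}$ and $x_i$, so it is nonnegative. The leftover cross terms must then assemble into $\mu(ij)\bigl(B(x_i)+B(x_j)-\dots\bigr)$, and I would compare these against $\mu(ij)a_ia_j/\mu(f)$ using the elementary fact that $B(x)+B(y)-B(x+y-\mu(f))$ dominates $-(x-\mu(f))(y-\mu(f))/\mu(f)$, or an analogous inequality valid for $\varphi=x\log x$. Dropping every nonnegative Bregman remainder then yields exactly \eqref{eq:brute-force}.

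The main obstacle is the bookkeeping in Step 1 together with the choice of the right pointwise inequality in Step 2. The diagonal $i=j$ terms must cancel exactly, which is why the right-hand side of \eqref{eq:brute-force} carries $\mathbf{1}[i\neq j]$. The off-diagonal terms must be regrouped into nonnegative Bregman divergences of $\varphi$ plus the quadratic form. I would first test the regrouping against the case $f$ affine in $\sigma$, where $B$ reduces to its quadratic part and everything is explicit, and then promote the identity to a general $f$ using convexity of $\varphi$.
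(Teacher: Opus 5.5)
Your Step 1 is sound. It is essentially the paper's computation in a cleaner form. Since $g_\theta(Y_\theta)$ has mean $m:=\mu(f)$, your $h(\theta)$ equals $m\,\mathrm{D}_{\mathrm{KL}}(\nu_\theta\parallel\mu_\theta)$, where $\nu=f\mu/m$, $\nu_\theta=\nu D_{1\to\theta}$ and $\mu_\theta=\mu D_{1\to\theta}$, written summand-wise in Bregman form. Your expansions of $\Pr{Y_\theta=S}$ and of $g_\theta(S)$ for $|S|\le 2$ are also correct.

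The gap is Step 2, and it is more than bookkeeping. Write $D_\varphi(y,x):=y\log\frac{y}{x}-y+x$, so that $B=D_\varphi(\cdot,m)$, and put $x_i=\mu(f\mid i)$, $x_{ij}=\mu(f\mid ij)$. Collecting your own terms, an unordered pair $\{i,j\}$ contributes $\mu(ij)Q_{ij}$ to the $\theta^2$-coefficient of $h$, where
\[
Q_{ij}=B(x_{ij})-B(x_i)-B'(x_i)(x_{ij}-x_i)-B(x_j)-B'(x_j)(x_{ij}-x_j)=D_\varphi(x_{ij},x_i)+D_\varphi(x_{ij},x_j)-B(x_{ij}).
\]
The pair term supplies only one copy of $B(x_{ij})$, so discarding the two divergences, as you propose, leaves $-B(x_{ij})$. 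That leftover would have to dominate $-a_ia_j/m$, and it already fails for $x_i=x_j=m\neq x_{ij}$. More decisively, the per-pair bound $Q_{ij}\ge -a_ia_j/m$ you need is attained exactly when $x_{ij}=x_ix_j/m$. This happens, for example, when $\mu$ and $f$ are both of product form. At that point $D_\varphi(x_{ij},x_i)>0$ whenever $x_j\neq m$, so any regrouping that throws those divergences away is trying to prove a false inequality.

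The ``elementary fact'' you cite is also false for $\varphi=x\log x$. Since $B''(z)=1/z>1/m$ for $z<m$, one gets $B(x)+B(y)-B(x+y-m)<-(x-m)(y-m)/m$ whenever $x,y<m$; with $m=1$ and $x=y=0.6$ this reads $-0.291<-0.16$. In any case the combination $B(x_i+x_j-m)$ never appears in $h''(0)$.

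The missing idea is to merge the three $x_{ij}$-logarithms into a single one:
\[
Q_{ij}=x_{ij}\log\frac{x_{ij}\,m}{x_ix_j}-x_{ij}+x_i+x_j-m=D_\varphi\Bigl(x_{ij},\frac{x_ix_j}{m}\Bigr)-\frac{a_ia_j}{m}.
\]
This gives the exact identity
\[
h''(0)=\frac{1}{m}\Bigl(\sum_i\mu(i)a_i\Bigr)^2+2\sum_{i<j}\mu(ij)\Bigl[D_\varphi\Bigl(x_{ij},\frac{x_ix_j}{m}\Bigr)-\frac{a_ia_j}{m}\Bigr].
\]
Now drop the single nonnegative term $D_\varphi(x_{ij},x_ix_j/m)$, i.e.\ apply $a\log\frac{a}{b}\ge a-b$ with $a=x_{ij}$ and $b=x_ix_j/m$. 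Together with $E''(0)=-h''(0)$, this yields \eqref{eq:brute-force}. It is exactly the inequality step in the paper's proof, where $\nu(ij)/\mu(ij)=x_{ij}/m$ and $\nu(i)/\mu(i)=x_i/m$.

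Your plan to test on affine $f$ would not have revealed this. $B$ is fixed by $\varphi$, not by $f$, so it never reduces to its quadratic part.
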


With this proposition in hand, we can prove \Cref{lem:differential-equation}.
\begin{proof}[Proof of \Cref{lem:differential-equation}]    
  We first focus on the case $t = 0$.
Let $\mu(f):=\E[\mu]{f}$ denote the expectation of $f$, and define the vector $\*f\in \mathbb{R}^n$  by $\*f_i=\frac{\mu(f|i)}{\mu(f)}-1$.
Let $\+D_\mu$ denote the diagonal matrix of marginals. Then, by \Cref{prop:brute-force},
    \begin{align*}
    E''(0)\le &\, \mu(f)\cdot \*f^\intercal(\-{Cov}_\mu-\+D_\mu)\*f \tag{\Cref{prop:brute-force}}\\
    \le&\, \mu(f)(\eta-1)\cdot \*f^\intercal\+D_\mu\*f \\
    =&\, \mu(f)(\eta-1)\sum_{i=1}^n \mu(i)\tp{\frac{\mu(f \mid i)}{\mu(f)} - 1}^2 \\
    \overset{(\star)}{\le}&\, (K+1)(\eta-1)\sum_{i=1}^n \mu(i) \tp{\mu(f \mid i) \log \frac{\mu(f \mid i)}{\mu(f)} - \tp{\mu(f \mid i) - \mu(f)}}\\
    = &\, (K+1)(\eta - 1) \cdot (-E'(0)).
    \end{align*}
Here $(\star)$ uses $\mu(f\mid i)\le K\mu(f)$ and the inequality $(x-1)^2 \le (K+1)(x \log x - x + 1)$ for all $0 \le x \le K$.
To see this, note that $\log x \ge \frac{2(x-1)}{x+1}$ for $x > 0$,
which implies
    \begin{align*}
        x \log x - x + 1
        \ge \frac{2x(x-1)}{x+1} - x + 1
        = \frac{2x(x-1) - x(x+1) + (x+1)}{x+1}
        = \frac{(x-1)^2}{x+1}.
    \end{align*}
Hence $(x-1)^2 \le (x+1)(x\log x - x + 1)\le (K+1)(x\log x - x + 1)$ for $0< x \le K$.

We now extend the argument to general $t\in(0,1)$ by conditioning.
Recall that
    \begin{align*}
      E(t) = \E{\Ent{f(Y_1) \mid Y_t}}
    \end{align*}
Fix $Y_t$ and define a shifted denoising process $(Y'_\theta)_{\theta\in[0,1]}$ by
    \begin{align*}
      Y_\theta' := Y_{t + \theta(1-t)} - Y_t.
    \end{align*}
Let $f_{Y_t}$ be defined by $f_{Y_t}(Y_1-Y_t)=f(Y_1)$, and define
    \begin{align*}
      E_{Y_t}(\theta)
      := \E{\Ent{f_{Y_t}(Y'_1) \mid Y_\theta'} \mid Y_t}
      = \E{\Ent{f(Y_1) \mid Y_{t + \theta(1-t)}} \mid Y_t}.
    \end{align*}
Taking expectation over $Y_t$ gives
    \begin{align*}
      \E{E_{Y_t}(\theta)}  &= \E{\Ent{f(Y_1) \mid Y_{t + \theta(1-t)}} } = E\tp{t + \theta(1-t)}.
    \end{align*}
Differentiating in $\theta$, we obtain
    \begin{align*}
      (1-t) E'\tp{t + \theta(1-t)} &= \E{E_{Y_t}'(\theta)}, \\
      (1-t)^2 E''\tp{t + \theta(1-t)} &= \E{E_{Y_t}''(\theta)}.
    \end{align*}
Setting $\theta=0$ reduces the general case to $t=0$, completing the proof.
\end{proof}

It now remains to prove~\Cref{prop:brute-force}.

\begin{proof}[Proof of \Cref{prop:brute-force}]
Define the  distribution $\nu$ by $\nu/\mu=f/\mu(f)$. Let $\nu_t = \nu D_{1\to t}$ and $\mu_t = \mu D_{1\to t}$. 
Then
\begin{align*}
    E(t) 
    &= \Ent{f(Y_1)} - \Ent{\E{f(Y_1) \mid Y_t}} \\
    &= \mu(f) \tp{\DKL{\nu}{\mu} - \DKL{\nu_t}{\mu_t}}.
\end{align*}
This implies that
\begin{align*}
    E''(0) = -\mu(f) \frac{d^2}{d t^2} \left.  \DKL{\nu_t}{\mu_t} \right\vert_{t = 0}.
\end{align*}
By a direct expansion, for any $S \subseteq [n]$ we have
\begin{align}\label{eq:prob-S}
    \mu_t(S) =
    \begin{cases}
        \mu(i) \cdot t - \tp{\sum_{j \neq i}\mu(ij)} \cdot t^2 + o(t^2), & S = \{i\},\\
        \mu(ij) \cdot t^2 + o(t^2), & S = \{i,j\},\\
        o(t^2), & \abs{S} \ge 3,\\
        1 - (\sum_{1 \le i \le n} \mu(i)) t + \tp{\sum_{1 \le i<j \le n} \mu(ij)} t^2 +o(t^2), & S = \emptyset,
    \end{cases}
\end{align}
where $\mu(i)$ is the marginal of $i$ and $\mu(ij)$ is the joint marginal of ${i,j}$. 
The same expansion holds for $\nu_t$.
Hence it suffices to expand $\DKL{\nu_t}{\mu_t}$ over sets of size at most $2$:
\begin{align*}
    \DKL{\nu_t}{\mu_t} &= \nu_t(\emptyset)\log\frac{\nu_t(\emptyset)}{\mu_t(\emptyset)} + \sum_i \nu_t(\set{i})\log \frac{\nu_t(\set{i})}{\mu_t(\set{i})} + \sum_{i < j} \nu_t(\set{i,j}) \log \frac{\nu_t(\set{i,j})}{\mu_t(\set{i,j})} + o(t^2). 
\end{align*}
We analyze the three contributions separately.

\paragraph{Empty set.}
By \eqref{eq:prob-S}, write 
$\mu_t(\emptyset) = 1 + g t + h t^2 + o(t^2)$ and $\nu_t(\emptyset) = 1 + b t + c t^2 + o(t^2)$.
Using the Taylor expansion of $\log$, we obtain
\begin{align*}
    &\nu_t(\emptyset)\log\frac{\nu_t(\emptyset)}{\mu_t(\emptyset)} 
    = t^2 \left(\frac{1}{2} (b-g)^2+c-h\right)+t (b-g) + o(t^2) \\
    &= t^2 \tp{\frac{1}{2}\tp{\sum_i (\nu(i) - \mu(i))}^2 + \sum_{i<j} \tp{\nu(ij) - \mu(ij)}} + t \sum_i (\mu(i) - \nu(i)) + o(t^2).
\end{align*}
\paragraph{Singletons.} 
By \eqref{eq:prob-S}, write 
$\nu_t(\set{i}) = a t + b t^2 + o(t^2)$ and $\mu_t(\set{i}) = c t + d t^2 + o(t^2)$.
Then
\begin{align*}
    &\nu_t(\set{i}) \log \frac{\nu_t(\set{i})}{\mu_t(\set{i})} \\
    &= t^2 \left(b \log \left(\frac{a}{c}\right)-\frac{a d}{c}+b\right)+a t \log \left(\frac{a}{c}\right) + o(t^2) \\
    &= t^2 \tp{-\tp{\sum_{j\neq i} \nu(ij)} \tp{1 + \log \frac{\nu(i)}{\mu(i)}} + \frac{\nu(i)}{\mu(i)} \tp{\sum_{j\neq i} \mu(ij)}} + t \nu(i) \log \frac{\nu(i)}{\mu(i)} + o(t^2).
\end{align*}
Summing over $i$ gives
\begin{align*}
    &\sum_i \nu_t(\set{i}) \log \frac{\nu_t(\set{i})}{\mu_t(\set{i})} - t \nu(i) \log \frac{\nu(i)}{\mu(i)} + o(t^2)\\
    &= t^2 \tp{-\sum_{i<j} \nu(ij) \tp{2 + \log \frac{\nu(i)\nu(j)}{\mu(i)\mu(j)}} + \sum_{i<j} \mu(ij) \tp{\frac{\nu(i)}{\mu(i)} + \frac{\nu(j)}{\mu(j)} }}.
\end{align*}
\paragraph{Pairs.}
By \eqref{eq:prob-S}, for $i\neq j$, write $\nu_t(\set{i,j}) = a t^2 + o(t^2)$, and $\mu_t(\set{i,j}) = b t^2 + o(t^2)$.
Then
\begin{align*}
    \nu_t(\set{i,j}) \log \frac{\nu_t(\set{i,j})}{\mu_t(\set{i,j})} 
    &= t^2  a \log \left(\frac{a}{b}\right) + o(t^2) 
    = t^2 \nu(ij)\log \frac{\nu(ij)}{\mu(ij)} + o(t^2).
\end{align*}
\paragraph{The $t^2$-coefficient of $\DKL{\nu_t}{\mu_t}$.}
Summing up all the terms, 
the coefficient of $t^2$ in the Taylor expansion of  $\DKL{\nu_t}{\mu_t}$ equals
\begin{align*}
A_2 &=\frac{1}{2}\tp{\sum_i (\nu(i) - \mu(i))}^2 + \sum_{i<j} \tp{\nu(ij) - \mu(ij)} \\
&\quad + \sum_{i<j} \nu(ij) \tp{\log \frac{\nu(ij)}{\mu(ij)} - \log \frac{\nu(i)\nu(j)}{\mu(i)\mu(j)}} \\
&\quad + \sum_{i<j} \mu(ij) \tp{\frac{\nu(i)}{\mu(i)} + \frac{\nu(j)}{\mu(j)} } - 2\sum_{i<j} \nu(ij)\\
&\geq\frac{1}{2}\tp{\sum_i (\nu(i) - \mu(i))}^2 + \sum_{i<j} \tp{\nu(ij) - \mu(ij)} \\
&\quad+ \sum_{i<j} \mu(ij) \tp{\frac{\nu(ij)}{\mu(ij)} - \frac{\nu(i)\nu(j)}{\mu(i)\mu(j)}} \tag{$a\log\frac{a}{b} \geq a - b$ for $a, b \geq 0$}\\
&\quad + \sum_{i<j} \mu(ij) \tp{\frac{\nu(i)}{\mu(i)} + \frac{\nu(j)}{\mu(j)} } - 2\sum_{i<j} \nu(ij) \\
&=\frac{1}{2}\tp{\sum_i \mu(i) \tp{\frac{\nu(i)}{\mu(i)} - 1}}^2 - \sum_{i < j} \mu(ij)\tp{\frac{\nu(i)}{\mu(i)} - 1}\tp{\frac{\nu(j)}{\mu(j)} - 1}.
\end{align*}
Finally, since $E''(0) = -2\mu(f)A_2$ and $\mu(f|i)=\mu(f)\frac{\nu(i)}{\mu(i)}$, we obtain
\begin{align*}
E''(0) 
&\leq -\mu(f)\tp{\sum_i \mu(i) \tp{\frac{\nu(i)}{\mu(i)} - 1}}^2 + 2\mu(f)\sum_{i < j} \mu(ij)\tp{\frac{\nu(i)}{\mu(i)} - 1}\tp{\frac{\nu(j)}{\mu(j)} - 1} \\
&= \frac{1}{\mu(f)}\sum_{i,j=1}^n \Big(\mu(ij)\*1[i\ne j]-\mu(i)\mu(j)\Big)(\mu(f|i)-\mu(f))(\mu(f|j)-\mu(f)).\qedhere
\end{align*}
\end{proof}

\bibliographystyle{alpha}
\bibliography{arxiv}

\appendix

\section{SAW-Tree Reduction for Coupling Independence}\label{sec:missing-5}
For completeness, we include the proofs of \Cref{cor:SI-general-graphs,cor:CI-general-graphs} via SAW-tree reduction. See~\cite{weitz2006counting,chen2020contraction} for more details.

For a graph $G=(V,E)$, a feasible pinning $\tau$, and a vertex $u\in V$, let $T_{\mathrm{SAW}}(G,u)$ denote the self-avoiding walk tree rooted at $u$.
This tree enumerates all self-avoiding walks $v_0, v_1, \dots, v_\ell$ starting from $r = v_0$ in $G$ that satisfy:
\begin{itemize}
\item All vertices $v_0, v_1, \dots, v_{\ell-1}$ are distinct;
\item $v_\ell$ has degree $1$ in $G$, or $v_\ell$ is a cycle-closing vertex (i.e., there exists $i < \ell$ such that $v_\ell = v_i$);
\item For each cycle-closing vertex $v_\ell$ (with $v_\ell = v_i$ for some $i < \ell$), its value is fixed as follows: it is pinned to $-$ if $v_{i+1} \succ v_{\ell-1}$, and pinned to $+$ if $v_{i+1} \prec v_{\ell-1}$ given the order $(V,\prec)$. 
\end{itemize}
Together with the original pinning $\tau$ on $G$, this induces a feasible pinning on
$T_{\mathrm{SAW}}(G,u)$, which we denote by $\tau_{\mathrm{SAW}}$. 
In the definition of the SAW tree, each vertex $v \in V$ of the graph $G$ may have multiple copies in $T_{\text{SAW}}(G, r)$. A copy is called \emph{free} if its value is not fixed. We use only the fact that
\begin{fact} \label{fact:SAW-tree-lambda-degree}
Let $v \in V$ be a vertex in $G$. 
Then every free copy $v'$ of $v$ in $T_{\mathrm{SAW}}$ inherits the same external field $\lambda_v$ and the same degree $\Delta_v$.
\end{fact}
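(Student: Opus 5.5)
We verify \Cref{fact:SAW-tree-lambda-degree} directly from the construction of $T_{\mathrm{SAW}}(G,u)$, treating separately the external field and the degree.

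\emph{External field.} A copy $v'$ of $v$ corresponds to a self-avoiding walk $u=v_0,v_1,\dots,v_\ell=v$. The two-spin system on $T_{\mathrm{SAW}}$ is defined by assigning each node the vertex weight of the graph vertex it copies; edge interactions $(\beta,\gamma)$ are uniform, so only the vertex activities need checking. By definition $v'$ receives activity $\lambda_v$ whenever it is free, that is, neither pinned by $\tau$ nor a cycle-closing leaf. This part therefore holds by construction.

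\emph{Degree.} We count the neighbours of a free copy $v'$ in the tree.
\begin{itemize}
\item Its parent is $v_{\ell-1}$, which contributes one tree edge when $\ell\ge1$.
\item Its children correspond to the neighbours $w\neq v_{\ell-1}$ of $v$ in $G$. Each such $w$ extends the walk: if $w$ is a new vertex, the extended walk is again self-avoiding; if $w=v_i$ for some $i<\ell$, we obtain a cycle-closing leaf, which is pinned but still present as a tree node.
\end{itemize}
The walk never terminates at $v'$, because $v'$ is free and hence is neither cycle-closing nor a degree-one endpoint that stops the enumeration. Consequently $v'$ has exactly $\Delta_v-1$ children, plus its parent, for tree degree $\Delta_v$. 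For the root ($\ell=0$) there is no parent and all $\Delta_v$ neighbours appear as children, again giving degree $\Delta_v$.

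The one point requiring care is that pinned children, whether cycle-closing copies or vertices pinned by $\tau$, must be retained as tree nodes rather than deleted. Otherwise the degree count would drop below $\Delta_v$. Under the stated definition these nodes are kept, and their values are fixed by the pinning $\tau_{\mathrm{SAW}}$. This is consistent with the convention $R_j\in\{0,+\infty\}$ for pinned children used in the proof of \Cref{thm:CI-ctrl-main}. Retaining pinned children is also what makes the fact useful downstream: the control-function inequality \eqref{eq:ctrl_functional} with $d_v=\Delta_v-1$ arguments, and the $\Delta$-regularity case in \Cref{lem:STD-sol}, both apply verbatim at every free node of the SAW tree.
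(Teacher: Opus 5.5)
Your argument is correct and is essentially the by-construction verification the paper relies on (it states this fact without proof, deferring to the standard SAW-tree construction of Weitz and Chen--Liu--Vigoda): each non-root free copy has its parent plus exactly one child per neighbour $w\neq v_{\ell-1}$ of $v$, and cycle-closing and $\tau$-pinned children are kept as pinned nodes. One small slip is your claim that a free copy is never terminal: a free copy \emph{can} be an endpoint of the enumeration, namely when $\Delta_v=1$. In that case it has $0=\Delta_v-1$ children, so your degree count is unaffected.
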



The following two lemmas reveal the relationship among total influence on trees, total influence on general graphs, and coupling independence on general graphs.

\begin{lemma}[{\cite[Lemma 8]{chen2020contraction}}]\label{lem:graph-to-tree-comparison}
Let $\mu$ be a distribution of a two-spin system on graph $G=(V,E)$.
Let $u \in V$ be a fixed vertex, $\tau$ be a feasible pinning, and $T = T_{\mathrm{SAW}}(G,u)=(V_{\mathrm{SAW}},E_{\mathrm{SAW}})$ be the self-avoiding walk tree rooted at $u$ equipped with pinning $\tau_{\mathrm{SAW}}$.
Let $\nu$ be the corresponding two-spin system on tree $T$. It holds
\begin{align*}
\sum_{v\in V\setminus\{u\}} \abs{\Psi_{\mu^\tau}(u,v)} \le \sum_{v \in V_{\mathrm{SAW}} \setminus \{u\}} \abs{\Psi_{\nu^{\tau_{\mathrm{SAW}}}}(u,v)}.
\end{align*}
\end{lemma}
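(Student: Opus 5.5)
We will prove the inequality one vertex at a time, using Weitz's identity for the SAW tree. This is the approach of \cite[Lemma 8]{chen2020contraction}. The plan has three steps.

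\textbf{Step 1: reduce to marginal ratios.} Fix $u$ and the pinning $\tau$, and let $R_u(\tau)$ denote the marginal ratio $\mu^\tau(\sigma_u=1)/\mu^\tau(\sigma_u=0)$. For every $v \neq u$ we have
\[
\Psi_{\mu^\tau}(v,u) = \mu^{\tau, v\gets 1}(\sigma_u=1) - \mu^{\tau, v\gets 0}(\sigma_u=1).
\]
This quantity is a finite difference in the pinning at $v$. It is therefore convenient to work with $\log R_u$, differentiated with respect to a log-external field at $v$. By the standard identity
\[
\Psi_{\mu^\tau}(v,u)\,\mathrm{Var}(\sigma_v)=\mathrm{Cov}(\sigma_u,\sigma_v)=\Psi_{\mu^\tau}(u,v)\,\mathrm{Var}(\sigma_u),
\]
the influence from $u$ to $v$ equals
\[
\frac{\partial\, \mathbb{E}[\sigma_v]}{\partial \log \lambda_u}\cdot \frac{1}{\mathrm{Var}(\sigma_u)}.
\]
The same derivative quantities are available on the tree $T$.

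\textbf{Step 2: Weitz's theorem, differentiated.} Weitz's theorem gives $R_u^{G,\tau}(\boldsymbol{\lambda}) = R_u^{T,\tau_{\mathrm{SAW}}}(\boldsymbol{\lambda}_{\mathrm{SAW}})$ identically in the vertex fields $\boldsymbol{\lambda}$, where every free copy $v'$ of $v$ carries the field $\lambda_v$ (\Cref{fact:SAW-tree-lambda-degree}). We differentiate this identity with respect to $\log\lambda_v$. The chain rule then expresses the graph quantity as a sum over the free copies:
\[
\frac{\partial \log R_u^G}{\partial \log \lambda_v}=\sum_{v' \text{ copy of } v}\frac{\partial \log R_u^T}{\partial \log \lambda_{v'}}.
\]
Each tree derivative is a covariance $\mathrm{Cov}_\nu(\sigma_u,\sigma_{v'})$ normalized by $\mathrm{Var}(\sigma_u)$, which is exactly $\Psi_{\nu}(u,v')$. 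The root marginal at $u$ is the same on $G$ and on $T$, so $\mathrm{Var}(\sigma_u)$ agrees on both sides and the normalizations cancel. This yields
\[
\Psi_{\mu^\tau}(u,v)=\sum_{v'}\Psi_{\nu^{\tau_{\mathrm{SAW}}}}(u,v').
\]

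\textbf{Step 3: conclude.} Apply the triangle inequality to this identity and sum over $v\neq u$. The copies $v'$ of distinct vertices $v$ are disjoint subsets of $V_{\mathrm{SAW}}\setminus\{u\}$, which gives the claimed inequality. Pinned copies contribute nothing, since their fields are fixed and independent of $\lambda_v$. Vertices $v$ that are pinned by $\tau$ contribute $0$ on both sides.

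\textbf{Main obstacle.} The hard step is Step 2: we must justify that Weitz's identity holds as an identity of functions of the full field vector, so that differentiating it is legitimate. The cycle-closing pinnings in the SAW tree depend only on the ordering $\prec$, not on the fields, so the identity is genuinely functional. We also need the terminal pinnings to be feasible, which requires a positivity check when $\beta=0$.
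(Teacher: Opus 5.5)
Your proposal is correct. The paper gives no proof of its own and simply cites \cite[Lemma 8]{chen2020contraction}, and your argument is the standard proof of that cited lemma: differentiate Weitz's identity $R_u^{G,\tau}=R_u^{T,\tau_{\mathrm{SAW}}}$ in $\log\lambda_v$, viewing it as an identity in the full vertex-field vector; use $\partial\log R_u/\partial\log\lambda_w=\mathrm{Cov}(\sigma_u,\sigma_w)/\mathrm{Var}(\sigma_u)=\Psi(u,w)$; sum over the free copies of $v$; and apply the triangle inequality. The caveats you flag are the right ones to check. These are that the cycle-closing pinnings do not depend on the fields, and that the tree pinning is feasible when $\beta=0$. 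The one remaining edge case, a degenerate root marginal, is trivial because the corresponding row of $\Psi_{\mu^\tau}$ is zero by definition.
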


\Cref{cor:SI-general-graphs} directly follows from~\Cref{lem:graph-to-tree-comparison}, \Cref{fact:SAW-tree-lambda-degree}, and~\Cref{thm:CI-ctrl-main}.

For establishing coupling independence, we use the following lemma.
\begin{lemma}[{\cite[Lemma 39]{chen2024rapid}}]\label{lem:coupling-to-tree-comparison}
Let $\mu$ be a distribution of a two-spin system on graph $G=(V,E)$.
Let $u \in V$ be a fixed vertex, $\tau$ be a feasible pinning, and $T = T_{\mathrm{SAW}}(G,u)=(V_{\mathrm{SAW}},E_{\mathrm{SAW}})$ be the self-avoiding walk tree rooted at $u$ equipped with pinning $\tau_{\mathrm{SAW}}$.
Let $\nu$ be the corresponding two-spin system on tree $T$. There exists a coupling $\+C$ between $\mu^{\tau \vee (u \gets 1)}$ and $\mu^{\tau \vee (u \gets -1)}$ (if applicable) such that
\begin{align*}
    \E[(X,Y) \sim \+C]{\abs{X \oplus Y}} \le \sum_{v \in V_{\mathrm{SAW}}} \abs{\Psi_{\nu^{\tau_{\mathrm{SAW}}}}(u,v)}.
\end{align*}
\end{lemma}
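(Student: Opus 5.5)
We prove the lemma by induction on the number of free (unpinned) vertices of $G$ under $\tau$. The coupling is built recursively, in the spirit of Weitz's construction of $T_{\mathrm{SAW}}(G,u)$. Each step of the recursion will follow one edge of the SAW tree. The expected number of disagreements will then obey the same recursion as the tree total influence $\sum_{v}\abs{\Psi_{\nu^{\tau_{\mathrm{SAW}}}}(u,v)}$. That quantity factorizes along root-to-leaf paths: it is $1$ for the root plus, over the children $w$ of the root, the absolute influence from the root to $w$ times the total influence of $w$ in its own subtree.

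\textbf{Step 1 (splitting the root).} Let $v_1,\dots,v_d$ be the free neighbours of $u$, listed in the order $\prec$. Following Weitz, replace $u$ by copies $u_1,\dots,u_d$, with $u_i$ adjacent only to $v_i$; call the resulting graph $G'$. The pinning $(u\gets 1)$ on $G$ is the same as pinning every copy to $1$ on $G'$, and similarly for $(u\gets 0)$. We interpolate between these through hybrid pinnings $\rho_0,\dots,\rho_d$, where $\rho_i$ sets the first $i$ copies to one value and the remaining copies to the other. Consecutive hybrids $\rho_{i-1}$ and $\rho_i$ differ only at $u_i$. We choose the direction of the interpolation so that, once the root is fixed, the copies still pinned to the ``other'' value are exactly the cycle-closing leaves of the SAW tree, with the signs prescribed by $\prec$. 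We couple the two end distributions by gluing the $d$ single-copy couplings (path coupling). By the triangle inequality for Hamming distance, the expected distance is at most the sum of the $d$ individual costs.

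\textbf{Step 2 (one edge of the tree).} Fix $i$ and couple $\mu'^{\tau\wedge\rho_{i-1}}$ with $\mu'^{\tau\wedge\rho_i}$, where $\mu'$ is the Gibbs distribution on $G'$. First couple the spin of $v_i$ optimally, so that the two copies of $v_i$ disagree with probability $\abs{\Pr{\sigma_{v_i}=1\mid\cdot}-\Pr{\sigma_{v_i}=1\mid\cdot}}$. Because $u_i$ is a leaf hanging off $v_i$, this difference equals the single-edge influence $\frac{(1-\beta\gamma)R}{(\beta R+1)(R+\gamma)}$. Here $R$ is the marginal ratio of $v_i$ in $G'-u_i$ under the remaining hybrid pinning. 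By Weitz's theorem, $R$ is exactly the marginal ratio of the child $v_i$ of the root in $T_{\mathrm{SAW}}(G,u)$ under $\tau_{\mathrm{SAW}}$. So the disagreement probability is $\abs{\Psi_{\nu^{\tau_{\mathrm{SAW}}}}(u,v_i)}$. If the two copies of $v_i$ agree, complete both configurations identically, at zero extra cost. If they disagree, remove $u_i$ (it is now irrelevant) and apply the induction hypothesis at the root $v_i$, in the graph $G'-u_i$ with the hybrid pinning. This graph has strictly fewer free vertices. Its SAW tree rooted at $v_i$ coincides, with its induced pinning, with the subtree of $T_{\mathrm{SAW}}(G,u)$ below $v_i$. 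Hence the conditional expected distance is at most the total influence of $v_i$ in that subtree.

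\textbf{Step 3 (assembling).} Summing over $i$, and adding $1$ for the disagreement at $u$ itself, reproduces exactly the tree recursion for $\sum_{v\in V_{\mathrm{SAW}}}\abs{\Psi_{\nu^{\tau_{\mathrm{SAW}}}}(u,v)}$. This closes the induction. In the base case $u$ has no free neighbours, the two conditional laws agree off $u$, and the distance is $1$.

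The main obstacle is the bookkeeping in Step 2. One must check that each hybrid pinning, after removing $u_i$ and rooting at $v_i$, yields precisely the pinning that $\tau_{\mathrm{SAW}}$ puts on the corresponding subtree, including the cycle-closing signs determined by $\prec$. One must also check that the recursive graph really has fewer free vertices, so that the induction is well-founded. I would verify this first, by matching the order of the interpolation in Step 1 with the comparison rule $v_{i+1}\succ v_{\ell-1}$ used to pin cycle-closing vertices in the SAW tree.
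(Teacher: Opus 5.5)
Your proof is correct. It is the standard Weitz-type recursive coupling: split $u$ into pendant copies, path-couple through the hybrid pinnings, couple $v_i$ optimally, and recurse on $G'-u_i$, which has fewer free vertices. The paper gives no proof of its own and only imports the lemma from \cite[Lemma~39]{chen2024rapid}, so there is nothing further to compare against.

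The sign bookkeeping you flag does close up. Under the paper's rule, the copy of $u$ reached through $v_j$ inside the $v_i$-subtree is pinned to $-$ exactly when $v_j\prec v_i$. So take $\rho_i$ to pin $u_1,\dots,u_i$ to $-$ and $u_{i+1},\dots,u_d$ to $+$. Then $\rho_0$ is $u\gets +$, $\rho_d$ is $u\gets -$, and both $\rho_{i-1}$ and $\rho_i$ pin each $u_j$ with $j\neq i$ exactly as $\tau_{\mathrm{SAW}}$ pins the cycle-closing copy of $u$ entered from $v_j$ in the subtree of $v_i$.

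One point on the statement itself. What you actually prove is $\mathbb{E}\abs{X\oplus Y}\le 1+\sum_{v\neq u}\abs{\Psi_{\nu^{\tau_{\mathrm{SAW}}}}(u,v)}$, with the root counted once, as in your base case. This is the intended reading, and it is the form used in \Cref{cor:CI-general-graphs} and \Cref{lem:SW-TI}. Under the paper's literal convention $\Psi(u,u)=0$, the displayed inequality would already fail for an isolated $u$.
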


\Cref{cor:CI-general-graphs} directly follows from~\Cref{lem:coupling-to-tree-comparison}, \Cref{fact:SAW-tree-lambda-degree}, and~\Cref{thm:CI-ctrl-main}.
\section{Rapid Mixing via Vertex-Tilting}\label{sec:mixing-2spin-1}
 
In this section, we prove \Cref{lem:mixing-2spin-1}, establishing rapid mixing of the Glauber dynamics for critical antiferromagnetic two-spin systems with small $\sqrt{\beta\gamma}$.
In this regime, the spin system behaves similarly to the critical hardcore model, and we employ the traditional vertex-field dynamics to establish rapid mixing at criticality.

Let $G=(V,E)$ be an undirected graph on $n$ vertices with maximum degree $\Delta = D+1 \ge 3$, and let $0 \le \ol{\beta} \le \beta_c := \frac{\Delta-2}{\Delta}$.
Let $\mu$ be a Gibbs distribution on $G$ with parameters $(\beta,\gamma,\lambda_c)$ satisfying $\sqrt{\beta \gamma} \le \ol{\beta}$ and \Cref{cond:critical}.

By the flipping argument in~\Cref{subsec:ctrl.2}, we may assume without loss of generality that $\lambda_c \le (\gamma/\beta)^{\frac{\Delta}{2}}$. 
In particular, the parameters satisfy \eqref{eq:cond-anti-ferro-flip} and \eqref{eq:cond-regular-flip}, that is
\begin{align}\label{eq:3202}
    &\beta,\gamma\ge 0,  \quad \beta\gamma < 1, \quad  0<\lambda_c \leq (\gamma/\beta)^{\frac{\Delta}{2}}, \quad \sqrt{\beta\gamma}\le \ol{\beta}\le \beta_c:=\frac{\Delta-2}{\Delta};
\end{align}
and $G$ is $\Delta$-regular if $\max\{\beta,\gamma\} > 1$.
The critical external field $\lambda_c$ is defined as follows. Let
\begin{align*}
    \lambda(x) := x \left(\frac{x+\gamma}{\beta x+1}\right)^D, 
    \qquad  
    \delta(x) := 1 - \frac{D(1-\beta \gamma) x}{(\beta x + 1)(x+\gamma)}.
\end{align*}
Then $\lambda_c = \lambda(x_c)$, where $x_c$ is the smaller root of the equation $\delta(x)=0$.

By the monotonicity of $\lambda(\cdot)$ and the bound $\lambda_c \le (\gamma/\beta)^{(D+1)/2}$, it follows that $x_c \le \sqrt{\gamma/\beta}$.

Recall that the exponent in \Cref{lem:mixing-2spin-1} is given by
\[
\kappa:= \sqrt{\frac{1-\ol{\beta}^2}{\beta_c^2-\ol{\beta}^2}}.
\]

We begin with the following technical lemma.
\begin{lemma}\label{lem:coefV}
    Suppose $\ol{\beta} \le \frac{\Delta-2.1}{\Delta}$. Then
    \begin{align*}
        \frac{\Delta}{\Delta-2}\le \frac{\Delta(1-\beta\gamma)x_c}{\gamma-\beta x_c^2}=\sqrt{\frac{1-\beta\gamma}{\beta_c^2-\beta\gamma}}\le \kappa\le {10}
    \end{align*}
\end{lemma}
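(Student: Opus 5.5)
The plan is to compute everything explicitly from the quadratic equation defining $x_c$. Write $b:=\beta\gamma\in[0,\ol{\beta}^2]$. Clearing denominators, the equation $\delta(x)=0$ reads $D(1-b)x=(\beta x+1)(x+\gamma)$, that is,
\[
\beta x^2-s x+\gamma=0,\qquad s:=D(1-b)-(1+b)=\Delta(1-b)-2 .
\]
Since $b\le\ol{\beta}^2<\beta_c^2<\beta_c$, we have $s>0$. The discriminant condition $s^2\ge 4\beta\gamma$ will follow from the identity derived below. For $\beta>0$ I will normalize by $y:=x\sqrt{\beta/\gamma}$, which turns the equation into
\[
y+\frac1y=S,\qquad S:=\frac{s}{\sqrt b}.
\]
The smaller root $y_c=\sqrt{\beta/\gamma}\,x_c$ then lies in $(0,1]$, consistent with $x_c\le\sqrt{\gamma/\beta}$. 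The case $\beta=0$ is degenerate: there $x_c=\gamma/s$ and $\gamma-\beta x_c^2=\gamma$, and the same final formula with $b=0$ follows by direct substitution (or by taking the limit).

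Next I will express the middle quantity in $y_c$. From $\gamma-\beta x_c^2=\gamma(1-y_c^2)$ and $x_c=\sqrt{\gamma/\beta}\,y_c$,
\[
\frac{\Delta(1-b)x_c}{\gamma-\beta x_c^2}=\frac{\Delta(1-b)}{\sqrt b}\cdot\frac{y_c}{1-y_c^2}=\frac{\Delta(1-b)}{\sqrt b}\cdot\frac{1}{1/y_c-y_c}.
\]
Since $y_c\le 1$ is the smaller root of $y+1/y=S$, we have $1/y_c-y_c=\sqrt{S^2-4}=\sqrt{s^2-4b}/\sqrt b$. Hence the quantity equals $\Delta(1-b)/\sqrt{s^2-4b}$.

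The key algebraic step is the identity
\[
s^2-4b=(\Delta(1-b)-2)^2-4b=(1-b)\bigl(\Delta^2(1-b)-4\Delta+4\bigr)=(1-b)\Delta^2(\beta_c^2-b),
\]
which uses $\beta_c^2=1-4(\Delta-1)/\Delta^2$. It gives $s^2-4b>0$ (so the root is real) and the claimed equality
\[
\frac{\Delta(1-\beta\gamma)x_c}{\gamma-\beta x_c^2}=\sqrt{\frac{1-b}{\beta_c^2-b}} .
\]

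The inequalities then follow from monotonicity. The map $b\mapsto(1-b)/(\beta_c^2-b)$ is increasing on $[0,\beta_c^2)$ because $1>\beta_c^2$. Its value at $b=0$ gives the lower bound $1/\beta_c=\Delta/(\Delta-2)$, and evaluating at $b\le\ol{\beta}^2$ gives the upper bound $\kappa$. For $\kappa\le 10$ I will use the same monotonicity in $\ol{\beta}$ and plug in the endpoint $\ol{\beta}=(\Delta-2.1)/\Delta$:
\[
\kappa^2\le\frac{\Delta^2-(\Delta-2.1)^2}{(\Delta-2)^2-(\Delta-2.1)^2}=\frac{4.2\Delta-4.41}{0.2\Delta-0.41}.
\]
This expression is decreasing in $\Delta$ for $\Delta\ge3$, since $4.2\cdot0.41>4.41\cdot0.2$. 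Its value at $\Delta=3$ is $8.19/0.19<44$, so $\kappa<7\le10$. The only real obstacle is getting the factorization of $s^2-4b$ right; everything else is routine.
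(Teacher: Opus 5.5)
Your proof is correct and follows the paper's argument. Both reduce $\delta(x_c)=0$ to $\beta x_c^2+\gamma=(\Delta(1-\beta\gamma)-2)x_c$ and write the middle quantity as $\Delta(1-\beta\gamma)/\sqrt{s^2-4\beta\gamma}$ (the paper via $(\gamma-\beta x_c^2)^2=(\gamma+\beta x_c^2)^2-4\beta\gamma x_c^2$, you via the normalized root $y_c$); both then factor $s^2-4\beta\gamma=\Delta^2(1-\beta\gamma)(\beta_c^2-\beta\gamma)$ and finish by monotonicity in $\beta\gamma$, and your explicit endpoint check giving $\kappa<7$ fills in the step the paper only asserts for $\kappa\le 10$.
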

\begin{proof}
    We first prove $\frac{\Delta(1-\beta\gamma)x_c}{\gamma-\beta x_c^2}=\sqrt{\frac{1-\beta\gamma}{\beta_c^2-\beta\gamma}}$ under the condition that $\frac{D(1-\beta\gamma)x_c}{(\beta x_c+1)(x_c+\gamma)}=1$. The condition can also be written as
    \begin{align}\label{eq:3057}
        \beta x_c^2+\gamma=\tp{D-1-\Delta \beta\gamma}x_c
    \end{align}
    With $(\gamma-\beta x_c^2)^2=(\gamma+\beta x_c^2)^2-4\beta\gamma x_c^2=(\gamma+\beta x_c^2-2\sqrt{\beta\gamma}x_c)(\gamma+\beta x_c^2+2\sqrt{\beta\gamma}x_c)$, therefore
    \begin{align*}
        \frac{\Delta^2(1-\beta\gamma)^2x_c^2}{(\gamma-\beta x_c^2)^2}=& \frac{\Delta^2(1-\beta\gamma)^2 }{(D-1-\Delta \beta \gamma-2\sqrt{\beta\gamma})(D-1-\Delta \beta \gamma+2\sqrt{\beta\gamma})} \\
        =& \frac{\Delta^2(1-\beta\gamma)^2 }{(D+1)^2(\sqrt{\beta\gamma}+1)(\beta_c-\sqrt{\beta\gamma})(1-\sqrt{\beta\gamma})(\beta_c+\sqrt{\beta\gamma})} \\
        =& \frac{1-\beta\gamma}{\beta_c^2-\beta\gamma}
    \end{align*}
    The function $\frac{1-\beta\gamma}{\beta_c^2-\beta\gamma}$ is increasing in $\beta\gamma\in (0,\ol{\beta}^2)$. Hence,
    \begin{align*}
        \frac{\Delta}{\Delta-2}=\frac{1}{\beta_c}\le \sqrt{\frac{1-\beta\gamma}{\beta_c^2-\beta\gamma}} \le \sqrt{\frac{1-\ol{\beta}^2}{\beta_c^2-\ol{\beta}^2}} \le 10.
    \end{align*}
    The last inequality follows from $\beta_c=\frac{\Delta-2}{\Delta}$, $\ol{\beta} \le \frac{\Delta -2.1}{\Delta}$ and $\Delta \ge 3$.
\end{proof}
\begin{lemma}
\label{lem:field-to-unique-slack-good}
$(\beta,\gamma,\lambda(\e^{-10} x_c))$ is $D$-unique with slack $1 - \frac{1}{\e}$.
\end{lemma}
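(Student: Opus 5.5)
We need to show that with external field $\lambda' := \lambda(\e^{-10}x_c)$, the fixed point of $f'(x)=\lambda'\left(\frac{\beta x+1}{x+\gamma}\right)^D$ has $D\psi(\hat x')\le 1/\e$. The plan is to identify the fixed point explicitly and then use the monotonicity of $\psi$ together with the bounds of \Cref{lem:coefV}.

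\textbf{Step 1 (identifying the fixed point).} By definition, $\lambda(x)=x\left(\frac{x+\gamma}{\beta x+1}\right)^D$ is exactly the inverse relation of the fixed-point equation. Hence $x$ is the fixed point of $f$ with field $\lambda(x)$. Since $\lambda(\cdot)$ is increasing, the unique positive fixed point for field $\lambda(\e^{-10}x_c)$ is $\hat x'=\e^{-10}x_c$. The slack is therefore $\delta(\e^{-10}x_c)=1-D\psi(\e^{-10}x_c)$, with $\psi$ as in \eqref{eq:def-aux-func}. It suffices to show $D\psi(\e^{-10}x_c)\le \e^{-1}$.

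\textbf{Step 2 (comparison with the critical point).} Since $D\psi(x_c)=1$, we need
\[
\frac{\psi(\e^{-10}x_c)}{\psi(x_c)}=\e^{-10}\cdot\frac{(\beta x_c+1)(x_c+\gamma)}{(\beta \e^{-10}x_c+1)(\e^{-10}x_c+\gamma)}\le \e^{-1}.
\]
A crude bound loses here: each denominator factor only shrinks, so the ratio is at most $\e^{-10}\cdot\frac{x_c+\gamma}{\e^{-10}x_c+\gamma}\le 1$, which is useless. I would instead pass to logarithmic derivatives. Write $g(s)=\log\psi(\e^{s}x_c)$. Then
\[
g'(s)=\frac{\gamma-\beta x^2}{(\beta x+1)(x+\gamma)}\Big|_{x=\e^{s}x_c}=G(x),
\]
with $G$ the function from the proof of \Cref{lem:STD-max}. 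Since $G$ is decreasing on $x\ge0$, we have $G(x)\ge G(x_c)$ for $x\le x_c$. Therefore
\[
\log\frac{\psi(x_c)}{\psi(\e^{-10}x_c)}=\int_{-10}^0 g'(s)\,\mathrm ds\ge 10\,G(x_c).
\]

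\textbf{Step 3 (bounding $G(x_c)$).} At the critical point, $D(1-\beta\gamma)x_c=(\beta x_c+1)(x_c+\gamma)$, so
\[
G(x_c)=\frac{\gamma-\beta x_c^2}{D(1-\beta\gamma)x_c}=\frac{\Delta}{D}\cdot\frac{\gamma-\beta x_c^2}{\Delta(1-\beta\gamma)x_c}.
\]
By \Cref{lem:coefV}, the last fraction is at least $1/\kappa\ge 1/10$. Hence $10\,G(x_c)\ge \frac{\Delta}{D}\ge 1$, giving $D\psi(\e^{-10}x_c)\le \e^{-1}$, i.e.\ slack at least $1-\frac1\e$.

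\textbf{Main obstacle.} The delicate point is Step 2: we must avoid the lossy direct comparison and use that $G$ is decreasing, so that the integration runs in the correct direction. We also need $\e^{-10}x_c\le x_c\le\sqrt{\gamma/\beta}$ so that $G\ge0$ on the whole interval; this follows from $x_c\le\sqrt{\gamma/\beta}$, established just before \Cref{lem:coefV}.
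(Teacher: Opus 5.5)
Your proof is correct and is essentially the paper's argument: you identify $\e^{-10}x_c$ as the fixed point at field $\lambda(\e^{-10}x_c)$, use that $\frac{\mathrm{d}\log\psi}{\mathrm{d}\log x}=G(x)\ge G(x_c)$ on $[\e^{-10}x_c,x_c]$ because $G$ is decreasing there, and then apply \Cref{lem:coefV} to get $10\,G(x_c)\ge \Delta/D\ge 1$, exactly as the paper does with $\log(1-\delta(x))$. (Your side remark about the ``crude bound'' is misstated, since dropping the factor $\frac{\beta x_c+1}{\beta\e^{-10}x_c+1}\ge 1$ does not give an upper bound, but that remark plays no role in the proof.)
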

\begin{proof}
     Recall the definition of $G$ from the proof of \Cref{lem:STD-max},
    \[G(x) := \frac{\gamma-\beta x^2}{(\beta x+1)(x+\gamma)}.\]
    Recall that $G(\cdot)$ is monotonically decreasing on $[0,+\infty)$. Hence,
    \begin{align*}
        \frac{\-d\log (1-\delta(x))}{\-d\log x}=&\frac{\gamma-\beta x^2}{(\beta x+1)(x+\gamma)}=G(x)\ge G(x_c)=\frac{\gamma-\beta x_c^2}{D(1-\beta\gamma)x_c} .
    \end{align*}
    By \Cref{lem:coefV}, we have $\frac{\gamma-\beta x_c^2}{D(1-\beta\gamma)x_c}\ge \frac{\Delta}{10D}\ge 0.1$. Hence, 
    \begin{align*}
        \log \frac{1-\delta(x_c)}{1-\delta(e^{-10}x_c)}\ge \frac{1}{10}\log \frac{x_c}{e^{-10}x_c}=1,
    \end{align*}
    implying $\delta(\e^{-10} x_c) \ge 1 -\frac{1}{\e}$, i.e. $(\beta,\gamma,\lambda(\e^{-10} x_c))$ is $D$-unique with slack $1-\frac{1}{\e}$.
\end{proof}

The following lemma establishes approximate tensorization of variance in the subcritical regime, which, by \eqref{eq:AT-implies-mixing}, implies an upper bound on the mixing time of the Glauber dynamics.

\begin{lemma}[\cite{chen2021rapid}, Lemma 8.4]\label{lem:easy-regime}
    Consider an antiferromagnetic two-spin system with parameters $(\beta,\gamma,\lambda)$ on an $n$-vertex graph $G=(V,E)$ with maximum degree $\Delta \ge 3$. 
    Suppose $(\beta,\gamma,\lambda)$ is $(\Delta_v-1)$-unique with slack $\delta$ for every $v \in V$, where $\Delta_v$ denotes the degree of $v$. 
    Then $\frac{\delta^2}{64}\ast \mu$ satisfies $\frac{8}{\delta}$-approximate tensorization of variance.
\end{lemma}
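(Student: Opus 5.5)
The plan is to run a vertex-field-dynamics (negative-field localization) argument for the tilted family $\theta * \mu$, $\theta\in[\delta^2/64,1]$, rather than to bound the Glauber dynamics of $\frac{\delta^2}{64}\ast\mu$ directly. The scheme has two ingredients: a uniform spectral-independence bound along the whole path of decreasing fields, and a crude but explicit estimate once the field is small enough.

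\textbf{Step 1 (uniform SI along the path).} By the flipping argument of \Cref{subsec:ctrl.2}, assume \eqref{eq:cond-anti-ferro-flip}. I claim that lowering the field never decreases the slack: for every $\theta\in(0,1]$ and every $v$, the parameters $(\beta,\gamma,\theta\lambda)$ remain $(\Delta_v-1)$-unique with slack at least $\delta$.
\begin{itemize}
\item This should follow as in the proof of \Cref{lem:field-to-unique-slack-good}. Parametrize by the fixed point $x$, so that the field is $\lambda(x)=x\bigl(\frac{x+\gamma}{\beta x+1}\bigr)^{D}$, which is increasing in $x$.
\item The derivative $\frac{\mathrm d\log(1-\delta(x))}{\mathrm d\log x}=G(x)$ is nonnegative on $[0,\sqrt{\gamma/\beta}]$, and \Cref{fact:range-hat-x} places the fixed point in this interval. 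Hence a smaller field gives a smaller $x$ and therefore a larger slack.
\end{itemize}
Consequently \Cref{thm:optimal-SI} (equivalently \Cref{cor:SI-general-graphs} with the control function of \Cref{cons:STD}) gives $\eta$-spectral independence with $\eta=O(1/\delta)$ for every $\theta*\mu^\tau$, uniformly in $\theta$ and in the pinning $\tau$.

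\textbf{Step 2 (spectral stability and conservation of variance).} Translate Step 1 into spectral stability of the vertex-field denoising process via \Cref{lem:spectral-stable-matrix-form}. Then integrate the rate using \Cref{lem:SS-to-AC-variance} up to time $1-\theta_0$, with $\theta_0=\delta^2/64$. Since the target here is approximate tensorization for the \emph{already tilted} measure, I would run the argument in reverse: apply the boosting lemma \Cref{lem:boosting} from $\theta_0*\mu$ to an even smaller field $\theta_1\theta_0$. At that smaller field the measure is essentially a product measure, because each marginal ratio is at most $\theta_1\theta_0\lambda\gamma^{-d}$ times a bounded factor, so the Dobrushin influence sum is $O(\theta_1)$.

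\textbf{Step 3 (tuning the window).} Choose $\theta_1$ so that the integral $\int \frac{C(t)}{1-t}\,\mathrm dt$ over the window is $O(1)$. This costs a factor of only $1+O(\eta\theta_0)$ in the conservation constant, because the window $[0,1-\theta_1]$ has to be measured on the $\theta_0$-tilted scale and the correlation entries scale with the marginals, which are $O(\theta_0\cdot\text{stuff})$. Combining this with \Cref{lem:AT-prod-dist}, or with a Dobrushin bound, at the end of the window gives an approximate tensorization constant of order $\eta\le 8/\delta$.

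\textbf{Main obstacle.} The hard step is the bookkeeping that makes the constant exactly $8/\delta$ at tilt exactly $\delta^2/64$. Concretely, one must show that once the field is multiplied by $\delta^2/64$, the spectral-stability rate $C(t)$, weighted by $\frac{1}{1-t}$, integrates to at most $\log(8/\delta)$. I would first try the marginal-weighted form of the correlation matrix:
\[
\lambda_{\max}\bigl(\mathrm{Cov}\,D^{-1}\bigr)\le 1+\eta\cdot\max_v \mu(v),
\]
together with $\max_v\mu(v)\lesssim\theta$. This turns the integrand into roughly $1+\eta\theta$ and yields a bound $\exp(O(\eta\sqrt{\theta_0}))\cdot\eta$. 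With $\sqrt{\theta_0}=\delta/8$ and $\eta\le 1/\delta$, this matches the claimed constant.
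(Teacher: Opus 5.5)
Your proposal has a genuine gap. The two estimates that are supposed to produce the constant are a marginal-weighted spectral-stability rate along the window and an approximate-tensorization bound at the end of the window. Both are asserted rather than derived, and the first does not follow from spectral independence.

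Write $D_{\mathrm{var}}=\diag(\boldsymbol{m}(1-\boldsymbol{m}))$. Then $\Cov=D_{\mathrm{var}}(I+\Psi)$, so $\eta$-spectral independence gives only $\Cov\preceq(1+\eta)D_{\mathrm{var}}\preceq(1+\eta)\diag(\boldsymbol{m})$. It does not give $\lambda_{\max}(\Cov\,\diag(\boldsymbol{m})^{-1})\le 1+\eta\max_v\mu(v)$. For a counterexample, take two identical bits with marginal $p$: they have $\eta=1$, but $\lambda_{\max}(\Cov\,\diag(\boldsymbol{m})^{-1})=2(1-p)$, which exceeds $1+p$ for $p<1/3$.

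With the rate $1+\eta=\Theta(1/\delta)$ that \Cref{thm:optimal-SI} actually supplies, \Cref{lem:SS-to-AC-variance} gives $R=\theta_1^{-(1+\eta)}$. This is $\mathrm{e}^{\Theta(1/\delta)}$ for any constant $\theta_1$. Taking $\theta_1=1-O(\delta)$ to avoid this puts the end of the window at essentially the field $\frac{\delta^2}{64}\lambda$ again, so the argument becomes circular.

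Even granting a marginal-scaled bound on the off-diagonal part, the diagonal part forces the rate to be at least $\max_v(1-m_v)\approx 1$, so $R\gtrsim 1/\theta_1$. This loss is real even for product measures. Integrating $\frac{1+O(\eta\theta_0(1-t))}{1-t}$ over $[0,1-\theta_1]$ gives $\theta_1^{-1}\mathrm{e}^{O(\eta\theta_0)}$, with no $\sqrt{\theta_0}$ and no factor $\eta$. So your closing computation $\exp(O(\eta\sqrt{\theta_0}))\cdot\eta\le 8/\delta$ is not supported by anything in the plan.

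The substantive content is therefore the step you dispatch in one line: approximate tensorization of $\theta*\mu^\tau$ at small field via a Dobrushin bound. Since $\Delta$ is unbounded, the Dobrushin influence sum at $v$ is of order $\theta\cdot\Delta_v(1-\beta\gamma)\lambda\gamma^{-\Delta_v}$ when $\gamma<1$, since $\theta\lambda\gamma^{-\Delta_v}$ is the worst-case marginal ratio. To show that $\Delta_v(1-\beta\gamma)\lambda\gamma^{-\Delta_v}$ is bounded independently of $\Delta$, you must relate $\lambda\gamma^{-d}$ to the fixed point $\hat{x}$ and use $d\,\psi(\hat{x})\le 1-\delta$ together with $\hat{x}\le\sqrt{\gamma/\beta}$. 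Your phrase ``a bounded factor'' is exactly what has to be proved. Once such an explicit estimate is available at field $\frac{\delta^2}{64}\lambda$ with influence sum at most $1-\delta/8$, the Dobrushin contraction already gives $\frac{8}{\delta}$-approximate tensorization, and the localization detour is unnecessary. Note that the paper does not reprove this lemma; it imports it from \cite{chen2021rapid}.

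Finally, the spin flip is not without loss of generality here. The tilt $\theta*$ acts only on spin $1$, so flipping turns $\frac{\delta^2}{64}*\mu$ into a tilt of the other spin by $\frac{64}{\delta^2}$. The normalization \eqref{eq:cond-anti-ferro-flip} must therefore be an explicit hypothesis; it does hold where the lemma is applied in \Cref{sec:mixing-2spin-1}. Without it, the tilted field can land inside the non-uniqueness window, e.g.\ for $\beta=\gamma$ small and $\lambda$ just above the upper uniqueness threshold. In that case your Step~1 monotonicity, which is correct under the normalization, breaks down.
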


With these lemmas in hand, we are ready to prove the mixing time bound in \Cref{lem:mixing-2spin-1}.

\begin{proof}[Proof of \Cref{lem:mixing-2spin-1}]
We prove this through the vertex-field localization framework. 
As in \Cref{def:field-dynamics-denoising}, a vertex-field denoising process $(Y_t)_{t\in [0,1]}$ is defined for the Gibbs distribution $\mu$ of a two-spin system on graph $G$ with parameters $(\beta,\gamma,\lambda(x_c))$ satisfying \eqref{eq:3202}.

At time $t = 1 - \frac{\lambda(x)}{\lambda(x_c)}$, the conditional law $\-{Law}(Y_1 \mid Y_t)$ is the Gibbs distribution of an antiferromagnetic system with parameters $(\beta,\gamma,\lambda(x))$ on graph $G$ under a feasible pinning. By \Cref{lem:optimal-CI}, $\mathrm{Law}(Y_1 \mid Y_t)$ is $C(t)$-coupling independent with the trivial bound $C(t) \le n$, and
    \begin{align*}
        C(t)\le 1+\frac{\Delta(1-\delta(x))}{D\delta(x)}=1+\frac{\Delta(1-\beta\gamma)x}{(x_c-x)(\gamma x_c^{-1}-\beta x)}.
    \end{align*}
    Therefore, by $x\le x_c$ we have
    \begin{align*}
        C(t)\le& \min\set{n, 1+\frac{\Delta(1-\beta\gamma)x}{(x_c-x)(\gamma x_c^{-1}-\beta x)}} \\
        \le& 1+\min\set{n, \frac{\Delta(1-\beta\gamma)x_c}{(x_c-x)(\gamma x_c^{-1}-\beta x_c)}} \\
        \le& 1+ \sqrt{\frac{1-\beta\gamma}{\beta_c^2-\beta\gamma}}\min \set{n, \frac{x_c}{x_c-x}} \tag{\Cref{lem:coefV}}
    \end{align*}
    Fix time $t_0 = 1-\frac{\lambda(x_0)}{\lambda(x_c)}$ where $x_0$ is to be determined.
    The quantity $R = \int_0^{t} \frac{C(t)}{1-t} \-d t$ (see \Cref{lem:SS-to-AC-variance}) satisfies
    \begin{align}\label{eq:R-quantity}
    \nonumber R&=\int_{0}^{t_0}\frac{C(t)}{1-t}\-dt \\
    \nonumber \text{(change of variables)}\quad &\le \int_{x_0}^{x_c} \tp{1+ \sqrt{\frac{1-\beta\gamma}{\beta_c^2-\beta\gamma}}\min \set{n, \frac{x_c}{x_c-x}}}\cdot \frac{\-d\lambda(x)}{\lambda(x_c)} \\
    \nonumber \text{($\lambda(x_c) \ge \lambda(x)$)}\quad &\le \int_{x_0}^{x_c} \tp{1+ \sqrt{\frac{1-\beta\gamma}{\beta_c^2-\beta\gamma}}\min \set{n, \frac{x_c}{x_c-x}}}\cdot \frac{\-d\lambda(x)}{\lambda(x)} \\
    \text{($\frac{\lambda'(x)}{\lambda(x)} \le \frac{2-\delta(x)}{x} \le \frac{2}{x}$)}\quad &\le  2\sqrt{\frac{1-\beta\gamma}{\beta_c^2-\beta\gamma}}\int_{x_0}^{x_c} \min \set{n, \frac{x_c}{x_c-x}}\frac{\-dx}{x}+2\log \frac{x_c}{x_0}
    \end{align}
    We now choose $x_0 \ge 0$ such that $\lambda(x_0) = \frac{(1-\e^{-1})^2}{64} \lambda(\e^{-10} x_c)$. At time $t_0 = 1-\frac{\lambda(x_0)}{\lambda(x_c)}$, \Cref{lem:easy-regime} shows that the distribution $\mathrm{Law}(Y_1 \mid Y_t)$ exhibits the approximate tensorization of variance with constant $\frac{8}{1-\e^{-1}} \le 16$.
    Since
    \begin{align*}
        \frac{\-d\log\lambda(x)}{\-d\log x}=1+\frac{D(1-\beta\gamma)x}{(\beta x+1)(x+\gamma)}\ge 1,
    \end{align*}
    we have
    \begin{align}\label{eq:xc-and-x0}
        \log \frac{\e^{-10}x_c}{x_0}\le \log \frac{\lambda(\e^{-10}x_c)}{\frac{(1-\e^{-1})^2}{64}\lambda(\e^{-10}x_c)}\le \log 256.
    \end{align}
    Combining~\eqref{eq:R-quantity}, \eqref{eq:xc-and-x0}, and \Cref{lem:SS-to-AC-variance}, the distribution $\mu$ exhibits the approximate tensorization of variance with constant $C$ with 
    \begin{align*}
        C = 16 \cdot \exp\tp{R} \overset{(\star)}{=} O(1) \cdot \exp\tp{2 \kappa \log n} = O\tp{n^{2\kappa}},
    \end{align*}
    where $(\star)$ follows from
    \begin{align*}
         \int_{x_0}^{x_c} \min\set{n, \frac{x_c}{x_c-x}}\frac{\-dx}{x} 
    =\,& \int_{x_0}^{x_c(1-n^{-1})}\frac{\-dx}{x}+\int_{x_0}^{x_c(1-n^{-1})}\frac{\-dx}{x_c-x} +\int_{x_c(1-n^{-1})}^{x_c}\frac{n\-dx}{x} \\
    \le\,& \log\frac{x_c}{x_0}+\log n(1-\frac{x_0}{x_c})+\frac{n}{n-1} \\
    \le\, & \log n +O(1).
    \end{align*}
    By \eqref{eq:AT-implies-mixing}, the Glauber dynamics has mixing time $O(n^{2\kappa+2})$. This concludes the proof.
\end{proof}

\begin{remark}[Barrier regime for vertex-tilting] \label{rem:bad-dependency-field-dynamics}
    We enforce the condition $\bar{\beta}\le \frac{\Delta-2.1}{\Delta}$, 
    which prevents $\sqrt{\beta\gamma}$ from being too close to $\beta_c=\frac{\Delta-2}{\Delta}$.
    As discussed in \Cref{sec:proof-main-overview}, this limitation arises because the slack $\delta$ depends poorly on the external field $\lambda$ in this regime.

    In particular, when $\sqrt{\beta\gamma} \in \left[\frac{\Delta-2-\epsilon}{\Delta}, \frac{\Delta-2}{\Delta}\right)$, one can show that
    \begin{align*}
        \frac{\-d\delta(x)}{\-d\log \lambda(x)}\Big|_{x=x_c}=\frac{\delta'(x_c)}{(\log \lambda)'(x_c)}=\frac{x_c\delta'(x_c)}{2-\delta(x_c)}=-\frac{\gamma-\beta x_c^2}{D(1-\beta\gamma)x_c}=-\frac{\Delta}{D}\sqrt{\frac{\beta_c^2-\beta\gamma}{1-\beta\gamma}}=O(\sqrt{\eps})
    \end{align*}
        where we used $\delta(x_c) = 0$. 

    Consequently, by a Taylor expansion,  $(\beta,\gamma,(1-t)\lambda(x_c))$ is $D$-unique with slack
\begin{align*} 
\delta= O(\sqrt{\eps}\cdot t).
\end{align*}
    This proves the bound claimed in \eqref{eq:bad-reliance}, showing that even a significant multiplicative change in $\lambda$ leads to only a very small increase in the slack $\delta$, which is unfavorable for the vertex-field dynamics.

\end{remark}

\end{document}

%% file: ent-prelim.tex
\subsection{$\phi$-divergence and $\phi$-entropy}
Let $\phi : D \to \mathbb{R}$ be a convex function, where $D \subseteq \mathbb{R}$ is its domain.
Let $\mu$ be a distribution on a finite set $\Omega$, and let $X \sim \mu$.
For a real-valued random variable $f : \Omega \to D$, we define the \emph{$\phi$-entropy} of $f$ with respect to $\mu$ as
\begin{align*}
    \Ent[\mu][\phi]{f} := \Ent[][\phi]{f(X)} := \E{\phi(f(X))} - \phi(\E{f(X)}).
\end{align*}
We also define the \emph{conditional $\phi$-entropy} with respect to a random variable $Y$ as
\begin{align*}
    \Ent[][\phi]{f(X) \mid Y} := \E{\phi(f(X)) \mid Y} - \phi(\E{f(X) \mid Y}).
\end{align*}
Since $\phi$ is convex, by Jensen's inequality $\Ent[\mu][\phi]{f} \ge 0$.

For two distributions $\nu$ and $\mu$ on $\Omega$, we say that $\nu$ is \emph{absolutely continuous} with respect to $\mu$ (denoted $\nu \ll \mu$) if $\mathrm{supp}(\nu) \subseteq \mathrm{supp}(\mu)$.
In this case, $f = \nu/\mu$ is well-defined, and the \emph{$\phi$-divergence} between $\nu$ and $\mu$ is defined as
\begin{align*}
    \mathrm{D}_{\phi}(\nu \parallel \mu) := \Ent[\mu][\phi]{\frac{\nu}{\mu}}.
\end{align*}

In particular, we consider the following choices of $\phi$:
\begin{itemize}
    \item $\mathrm{TV}(x) := \tfrac{1}{2}\lvert x - 1\rvert$, which induces the total variation distance
    \[\DTV{\nu}{\mu} = \frac{1}{2} \sum_{\sigma \in\Omega} \abs{\nu(\sigma) - \mu(\sigma)}; \]
    \item $\chi^2(x) := x^2$, which induces the $\chi^2$-divergence
    \[\mathrm{D}_{\chi^2}(\nu\parallel \mu) = \tp{\sum_{\sigma\in \Omega} \nu(\sigma) \frac{\nu(\sigma)}{\mu(\sigma)} } - 1;\]
    \item $\mathrm{KL}(x) := x\log x$ (with the convention $0 \log 0 = 0$), which induces the KL-divergence
    \[\DKL{\nu}{\mu} = \sum_{\sigma \in \Omega} \nu(\sigma) \log \frac{\nu(\sigma)}{\mu(\sigma)}.\]
\end{itemize}


The $\mathrm{KL}$-entropy is typically referred to simply as \emph{entropy} (denoted $\Ent[\mu]{f}$), while the $\chi^2$-entropy is commonly called \emph{variance} (denoted $\Var[\mu]{f}$). In particular, for $X \sim \mu$, 
\begin{align*}
    \Ent[\mu]{f} = \Ent{f(X)} &= \E{f(X) \log f(X)} - \E{f(X)}\log \E{f(X)}; \\
    \Var[\mu]{f} = \Var{f(X)} &= \E{f(X)^2} - \E{f(X)}^2.
\end{align*}

Throughout the paper, we will focus  on the cases $\phi \in \{\chi^2, \mathrm{KL}\}$. Accordingly, we assume that the domain of $\phi$ is
\begin{align*}
    D = \mathbb{R}_{\geq 0}.
\end{align*}

\subsection{Markov chain mixing time}

Let $(X_t)_{t \ge 0}$ be a Markov chain defined over a finite state space $\Omega$ with transition matrix $P \in \mathbb{R}_{\ge 0}^{\Omega \times \Omega}$. 
A distribution $\mu$ over $\Omega$ is called a \emph{stationary distribution} of $P$ if $\mu = \mu P$.

For a convergent Markov chain $P$, its \emph{mixing time}  is defined as
\begin{align*}
\Tmix := \max_{x \in \Omega}\min\set{t \mid \DTV{P^t(x,\cdot)}{\mu} \leq 1/4}.
\end{align*}

A Markov chain $P$ is \emph{time reversible} with respect to the distribution $\mu$ if it satisfies the {detailed balance equation}:
\begin{align*}
    \forall x,y \in \Omega, \quad \mu(x)P(x,y) = \mu(y)P(y,x).
\end{align*}
This implies that $\mu$ is a stationary distribution of $P$. 
We assume throughout that all Markov chains considered in this paper are time reversible.

For a reversible Markov chain $P$ on a state space $\Omega$, all eigenvalues are real. 
Denote the eigenvalues of $P$ by
    $1 = \lambda_1 \geq \lambda_2 \geq \cdots \geq \lambda_{\abs{\Omega}} \geq -1$.
The \emph{spectral gap} of $P$ is defined by
    $\gamma(P) := 1 - \lambda_2$.
Moreover, the \emph{absolute spectral gap} is defined by
\[\gamma_\star(P) := 1 - \max\set{\abs{\lambda} \mid \lambda \text{ is an eigenvalue of $P$}, \lambda \neq 1}.\]
We note that all Markov chains considered in this paper have nonnegative eigenvalues, in which case $\gamma(P) = \gamma_\star(P)$.


We now define decay of $\phi$-entropy, which captures convergence under $\phi$-divergence.
\begin{definition}[Decay of $\phi$-entropy]
Let $(X_t)_{t\ge 0}$ be a Markov chain over $\Omega$ with transition matrix $P$, and let $\mu$ be its stationary distribution. 
We say that the Markov chain satisfies \emph{decay of $\phi$-entropy with rate $\kappa$} if for all functions $f:\Omega \to \mathbb{R}$,
\begin{align*}
\Ent[\mu][\phi]{Pf} \le (1-\kappa)\Ent[\mu][\phi]{f}.
\end{align*}
\end{definition}
For time-reversible chains, decay of variance implies a lower bound on the absolute spectral gap. Specifically, 
if a reversible chain $P$ satisfies decay of variance with rate $\kappa$, then its absolute spectral gap satisfies
\begin{align}\label{eq:var-decay-implies-spectral-gap}
    \gamma_{\star}(P)\ge \frac{1}{2}\kappa
\end{align}
Moreover, for time-reversible chains, depending on the choice of $\phi$,
decay of $\phi$-entropy implies upper bounds on the mixing time. In particular,
\begin{itemize}
    \item if a reversible chain satisfies decay of variance with rate $\kappa$, then its mixing time satisfies
\begin{align}\label{eq:var-decay-implies-mixing}
\Tmix \le O\tp{\kappa^{-1}\log\frac{1}{\mu_{\min}}},
\end{align}
    \item if a reversible chain satisfies decay of entropy with rate $\kappa$, then its mixing time satisfies
    \begin{align}\label{eq:ent-decay-implies-mixing}
\Tmix \le O\tp{\kappa^{-1}\log\log\frac{1}{\mu_{\min}}}, 
\end{align}
where $\mu_{\min} := \min_{\sigma \in \Omega(\mu)}{\mu(\sigma)}$.
\end{itemize}

When specializing to Glauber dynamics, the mixing behavior can be captured via approximate tensorization of $\phi$-entropy.

\begin{definition}[Approximate tensorization of $\phi$-entropy]
For a distribution $\mu$ over $\Omega = \{0,1\}^n$, let $X \sim \mu$. We say that $\mu$ satisfies \emph{$K$-approximate tensorization of $\phi$-entropy} if for any function $f:\Omega \to \mathbb{R}$,
\begin{align*}
\Ent[][\phi]{f(X)} \le K \cdot \sum_{i=1}^{n}\E{\Ent[][\phi]{f(X) \mid X_{-i}}},
\end{align*}
where $X_{-i}$ denotes the configuration $X$ restricted to coordinates $[n]\setminus\{i\}$.
\end{definition}

Approximate tensorization of $\phi$-entropy can be interpreted as a notion of closeness to a product distribution. 
In particular, if $\mu$ is a product distribution, i.e., $\mu = \bigotimes_{i=1}^n \mu_i$, then $\mu$ satisfies $1$-approximate tensorization of $\phi$-entropy for $\phi \in \{\chi^2, \mathrm{KL}\}$.


\begin{proposition} \label{lem:AT-prod-dist}
If $\mu$ is a product distribution on $\{0,1\}^n$, then $\mu$ satisfies $1$-approximate tensorization of variance and $1$-approximate tensorization of entropy.
\end{proposition}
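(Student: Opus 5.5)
The plan is to prove both claims in \Cref{lem:AT-prod-dist} by tensorization of $\phi$-entropy over independent coordinates, that is, by induction on $n$. I will do it once for a general convex $\phi\in\{\chi^2,\mathrm{KL}\}$. The only property needed is the subadditivity of $\phi$-entropy under product measures: for $\mu=\mu_1\otimes\mu'$ on $\{0,1\}\times\{0,1\}^{n-1}$ and any $f\ge 0$,
\[
\Ent[\mu][\phi]{f}\le \E[\mu']{\Ent[\mu_1][\phi]{f(\cdot,X')}}+\Ent[\mu'][\phi]{\E[\mu_1]{f(X_1,\cdot)}}.
\]

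I will first establish this two-block inequality. By the chain rule (law of total $\phi$-entropy) applied to conditioning on $X'=X_{-1}$,
\[
\Ent[][\phi]{f(X)}=\E{\Ent[][\phi]{f(X)\mid X'}}+\Ent[][\phi]{\E{f(X)\mid X'}},
\]
and this is simply an identity from the definition $\Ent[][\phi]{\cdot}=\E{\phi(\cdot)}-\phi(\E{\cdot})$. Independence is used here: $\E{f(X)\mid X'}=g(X')$ with $g(x')=\E[\mu_1]{f(\cdot,x')}$, and $\mathrm{Law}(X')=\mu'$. Moreover $\Ent[][\phi]{f(X)\mid X'}=\Ent[][\phi]{f(X)\mid X_{-1}}$, which is exactly the first term of the tensorization sum.

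Next, I apply induction to $\Ent[\mu'][\phi]{g}\le\sum_{i\ge2}\E{\Ent[][\phi]{g(X')\mid X'_{-i}}}$. It then remains to show the key step, which I expect to be the main obstacle:
\[
\E{\Ent[][\phi]{g(X')\mid X'_{-i}}}\le \E{\Ent[][\phi]{f(X)\mid X_{-i}}}.
\]
Given $X'_{-i}$, $g$ is the $\mu_1$-average over $x_1$ of $f(x_1,\cdot)$. So the claim is that $\Ent^\phi$ with respect to $\mu_i$ is jointly convex in $f$, i.e.\ $\Ent_{\mu_i}^\phi[\E_{x_1}f]\le \E_{x_1}\Ent^\phi_{\mu_i}[f]$. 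For variance this follows from Jensen/Cauchy--Schwarz: $\Var{}$ of an average is at most the average of $\Var{}$, since $h\mapsto \Var{h}$ is a convex quadratic form. For entropy, I will use the variational formula $\Ent[\nu]{h}=\sup\{\E[\nu]{hg}:\E[\nu]{e^g}\le1\}$, which exhibits $\Ent$ as a supremum of linear functionals and hence as convex, giving the same inequality. Averaging over $X_1$ and $X'_{-i}$ (independence lets me interchange the integrals) yields the claim.

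The base case $n=1$ is trivial with constant $1$. Combining the pieces gives $\Ent[][\phi]{f(X)}\le\sum_{i=1}^n\E{\Ent[][\phi]{f(X)\mid X_{-i}}}$, that is, $1$-approximate tensorization for both variance and entropy.
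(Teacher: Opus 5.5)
Your proof is correct, and it is the classical tensorization argument for product measures (Efron--Stein for variance, subadditivity of entropy). The steps are: the chain-rule identity for conditioning on $X_{-1}$, induction on the remaining coordinates, and the comparison step via convexity of $h\mapsto \mathrm{Ent}^{\phi}_{\mu_i}[h]$. That map is a positive semidefinite quadratic form for variance and a supremum of linear functionals for entropy. Independence is used to identify every conditional law with $\mu_i$ and to interchange the averages over $x_1$ and the other coordinates.

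The paper gives no proof of this proposition and simply invokes it as a standard fact, so there is no route to compare against. Two cosmetic remarks: your ``two-block inequality'' actually holds with equality, and for variance the argument works verbatim for real-valued (not necessarily nonnegative) $f$, since variance is convex on all real functions.
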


In particular, approximate tensorization of variance implies rapid mixing of Glauber dynamics. 
If a distribution $\mu$ on $\{0,1\}^n$ satisfies $K$-approximate tensorization of variance, then the mixing time of Glauber dynamics is bounded by
\begin{align} \label{eq:AT-implies-mixing}
    \Tmix \leq n K \log\frac{1}{\mu_{\min}}, \quad \text{where} \quad \mu_{\min} := \min_{\sigma \in \Omega(\mu)} \mu(\sigma).
\end{align}

\subsection{Spectral independence}

Let $\mu$ be a distribution over $\{0,1\}^V$. 
A partial configuration $\tau \in \{0,1\}^{\Lambda}$ on a subset $\Lambda \subseteq V$ is called a \emph{feasible pinning} 
if $\tau \in \Omega(\mu_\Lambda)$ (i.e., $\mu_\Lambda(\tau) > 0$).

Let $X \sim \mu$. The \emph{influence matrix} $\Psi_{\mu}\in\mathbb{R}^{V\times V}$ is defined as: 
\begin{align}\label{eq:influence-matrix}
\Psi_{\mu}(u,v):=\begin{cases}
\Pr{X_v=1\mid X_u=1}-\Pr{X_v=1\mid X_u=0} &\quad \text{if }\E{X_u}\in(0,1)\text{ and }u\ne v,\\
0 &\quad \text{otherwise}.
\end{cases}
\end{align}
We say that $\mu$ is \emph{$\rho$-spectrally independent} if for every feasible pinning $\tau$,
\begin{align*}
\lambda_{\max}(\Psi_{\mu^\tau}) \le \rho.
\end{align*}

To upper bound the maximum eigenvalue of the influence matrix, it suffices to consider the notion of total influence defined below.
\begin{definition}[Total influence]\label{def:total-influence}
Given a distribution $\mu$ over $\{0,1\}^{V}$ and a vertex $r \in V$, the \emph{total influence} of $r$ is defined as
\begin{align*}
\-{TI}^{\mu}_{r} := 1+\sum_{v\in V}\abs{\Psi_{\mu}(r,v)}.
\end{align*}
The additional $+1$ accounts for the self-influence of $r$. 
\end{definition}

Total influence provides an upper bound on spectral independence.

\begin{lemma}\label{lem:TI-to-SI}
Let $\mu$ be a distribution over $\{0,1\}^V$. 
Suppose that for any vertex $r \in V$ and any feasible pinning $\tau$, it holds that
\begin{align*}
\-{TI}^{\mu^{\tau}}_{r} \le \rho.
\end{align*}
Then $\mu$ is $(\rho-1)$-spectrally independent.
\end{lemma}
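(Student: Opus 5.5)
The plan is to bound $\lambda_{\max}$ by a maximum weighted row sum. For a distribution $\mu$ on $\{0,1\}^V$, write $\Psi_\mu$ in terms of the covariance: with $D_\mu=\diag(\mu(v)(1-\mu(v)))$ we have $\Psi_\mu = D_\mu^{-1}(\Cov_\mu - D_\mu)$ off the diagonal. Hence $\Psi_\mu$ is similar to the symmetric matrix $D_\mu^{-1/2}(\Cov_\mu-D_\mu)D_\mu^{-1/2}$, and all its eigenvalues are real. This is the reason for working with $\lambda_{\max}$ rather than a complex spectral radius.

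Fix an arbitrary feasible pinning $\tau$. Any pinning of $\mu^\tau$ is again a pinning of $\mu$, so the hypothesis $\mathrm{TI}^{\mu^{\tau}}_r\le\rho$ holds for every unpinned $r$. Vertices $u$ whose marginal under $\mu^\tau$ is degenerate ($\E{X_u}\in\{0,1\}$) have zero rows by definition. Their columns are also zero, since conditioning on $X_u$ does not change the law when $u$ is a.s. constant, or is undefined and then set to $0$. We may therefore restrict to the set of free vertices.

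On this set, any eigenvalue $\lambda$ of $\Psi_{\mu^\tau}$ satisfies $|\lambda|\le \|\Psi_{\mu^\tau}\|_\infty = \max_r \sum_{v}|\Psi_{\mu^\tau}(r,v)|$. This holds because the induced $\ell_\infty$ operator norm dominates the spectral radius for any matrix; similarity is not even needed here, only the bound. By Definition~\ref{def:total-influence}, each row sum equals $\mathrm{TI}^{\mu^\tau}_r - 1 \le \rho-1$. Hence $\lambda_{\max}(\Psi_{\mu^\tau})\le \rho-1$, and since $\tau$ was arbitrary, $\mu$ is $(\rho-1)$-spectrally independent.

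The only subtle point is to check that $\Psi(r,r)=0$ is consistent with the "$+1$" in the total influence. This is exactly why the bound is $\rho-1$ rather than $\rho$, matching the remark after Corollary~\ref{cor:SI-general-graphs}. Beyond that bookkeeping about degenerate vertices, no real obstacle is expected.
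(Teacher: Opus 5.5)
Your proposal is correct and follows the paper's proof: you bound $\lambda_{\max}(\Psi_{\mu^\tau})$ by the maximum absolute row sum $\|\Psi_{\mu^\tau}\|_\infty$, which equals $\max_r \mathrm{TI}^{\mu^\tau}_r - 1$ because $\Psi_{\mu^\tau}(r,r)=0$. Your extra remarks, that the spectrum is real via similarity to $D^{-1/2}(\mathrm{Cov}-D)D^{-1/2}$ and that degenerate vertices contribute zero rows and columns, are harmless bookkeeping the paper leaves implicit.
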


\begin{proof}
For any feasible pinning $\tau$, the maximum eigenvalue of $\Psi_{\mu^{\tau}}$ is upper bounded by
\begin{align*}
\lambda_{\max}(\Psi_{\mu^{\tau}}) \le \norm{\Psi_{\mu^{\tau}}}_{\infty}
= \max_{r\in V}\sum_{v\in V}\abs{\Psi_{\mu^{\tau}}(r,v)} = \max_{r\in V}\-{TI}^{\mu^{\tau}}_r-1.
\end{align*}
Since $\-{TI}^{\mu^{\tau}}_r\le\rho$ for any $r\in V$ and any feasible pinning $\tau$, we have $\lambda_{\max}(\Psi_{\mu^{\tau}})\le\rho-1$ for any feasible pinning $\tau$. Thus $\mu$ is $(\rho-1)$-spectrally independent.
\end{proof}

The following stronger notion of coupling independence is effective for analyzing spectral independence.
\begin{definition}[Coupling independence]\label{def:CI}
Let $\mu$ be a distribution over $\{0,1\}^n$. Let $C > 0$. 
We say that $\mu$ is \emph{$C$-coupling independent} if for all $\Lambda \subseteq [n]$ and all $\sigma, \tau \in \Omega(\mu_\Lambda)$ with Hamming distance $\dist(\sigma,\tau) = 1$,
    \begin{align*}
        \inf_{\xi} \E[(X,Y) \sim \xi]{\dist(X,Y)} \leq C,
    \end{align*}
where the infimum is taken over all couplings $\xi$ between $\mu^\sigma$ and $\mu^\tau$.
\end{definition}

\subsection{Localization schemes}
The notion of \emph{localization schemes}, introduced in \cite{chen2022localization}, provides a continuous interpolation between a trivial measure and the target measure.

Let $\mu$ be a probability distribution over $\Omega$. 
A localization scheme associated with the target distribution $\mu$ consists of a pair of continuous-time processes:
\begin{itemize}
    \item The \emph{noising process} $(X_t)_{t \in [0,1]}$ is a Markov process such that $X_0\sim\mu$ and $X_1$ follows a Dirac measure. 
\item The \emph{denoising process} $(Y_t)_{t \in [0,1]}$ is defined as the time reversal of $(X_t)_{t \in [0,1]}$, that is, $Y_\theta \stackrel{d}{=} X_{1-\theta}$. In particular, $Y_0$ follows a Dirac measure and $Y_1 \sim \mu$.
\end{itemize}
%
Since the two processes are time reversals of each other, we will use the denoising process $(Y_t)_{t \in [0,1]}$ to represent the localization scheme, and call it the \emph{localization process}.

A localization scheme naturally induces a Markov chain with stationary distribution $\mu$.

\begin{definition}[Down-up chain induced by a localization scheme]
\label{def:LS-induce-down-up-walk}
Let $\mu$ be a distribution over $\Omega$, and let $(Y_t)_{t \in [0,1]}$ be a denoising process with respect to $\mu$. 
Fix any $\theta \in [0,1]$. 
The \emph{down-up chain} $P_{1 \leftrightarrow \theta}$ induced by $(Y_t)_{t \in [0,1]}$ is a Markov chain on $\Omega$ defined as follows. 

Given the current state $x_t \in \Omega$,  the next state $x_{t+1}$ is generated by:
\begin{enumerate}[leftmargin=2cm]
\item[(\textsc{Down})]
Sample $x'\sim\-{Law}(Y_{\theta}\mid Y_1=x_t)$;
\item[(\textsc{Up})]
Sample $x_{t+1}\sim\-{Law}(Y_1\mid Y_{\theta}=x')$.
\end{enumerate}
\end{definition}

It is straightforward to verify that the induced down-up chain $P_{1 \leftrightarrow \theta}$ has stationary distribution $\mu$. 
Moreover, its mixing time can be analyzed via the following notion of approximate conservation of variance, 
which is a key property of localization schemes.

\begin{definition}[Approximate conservation of $\phi$-entropy]\label{def:approx-conservation-var}
Given a denoising process $(Y_t)_{t \in [0,1]}$ and a parameter $\theta \in [0,1]$, we say that $(Y_t)_{t \in [0,1]}$ satisfies \emph{$R$-approximate conservation of $\phi$-entropy up to time $\theta$} if for all functions $f:\Omega \to \mathbb{R}_{\geq 0}$,
\begin{align*}
\Ent[][\phi]{f(Y_1)} \le R\cdot\E{\Ent[][\phi]{f(Y_1)\mid Y_{\theta}}}.
\end{align*}
\end{definition}

Approximate conservation of $\phi$-entropy of the denoising process implies $\phi$-entropy decay for the induced down-up chain.

\begin{lemma}[{\cite[Lemma 3.11]{chen2025uniqueness}}]
\label{lem:AC-var-to-var-decay}
If the denoising process $(Y_t)_{t \in [0,1]}$ satisfies $R$-approximate conservation of $\phi$-entropy up to time $\theta$, 
then the induced down-up chain $P_{1 \leftrightarrow \theta}$ satisfies decay of $\phi$-entropy with rate $1/R$.
\end{lemma}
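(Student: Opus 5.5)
We will prove \Cref{lem:AC-var-to-var-decay} directly from the definition of the down-up chain $P_{1\leftrightarrow\theta}$ together with the law of total $\phi$-entropy. The argument has three steps.

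\emph{Step 1: rewrite the chain as two conditional expectations.} Let $Y_1\sim\mu$. By \Cref{def:LS-induce-down-up-walk}, the transition operator acts on a test function $f$ by
\[
Pf(x) = \E{\,\E{f(Y_1)\mid Y_\theta}\mid Y_1 = x\,}.
\]
Set $g(Y_\theta) := \E{f(Y_1)\mid Y_\theta}$, so that $Pf(Y_1)=\E{g(Y_\theta)\mid Y_1}$.

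\emph{Step 2: conditional expectation does not increase $\phi$-entropy.} By Jensen's inequality for the convex $\phi$ (equivalently, the law of total $\phi$-entropy),
\[
\Ent[][\phi]{Pf(Y_1)} = \Ent[][\phi]{\E{g(Y_\theta)\mid Y_1}} \le \Ent[][\phi]{g(Y_\theta)}.
\]

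\emph{Step 3: decompose the entropy and apply the hypothesis.} The law of total $\phi$-entropy gives
\[
\Ent[][\phi]{f(Y_1)} = \Ent[][\phi]{g(Y_\theta)} + \E{\Ent[][\phi]{f(Y_1)\mid Y_\theta}}.
\]
The assumed $R$-approximate conservation of $\phi$-entropy up to time $\theta$ says the second term is at least $\frac{1}{R}\Ent[][\phi]{f(Y_1)}$. Therefore
\[
\Ent[][\phi]{g(Y_\theta)} \le \left(1-\frac{1}{R}\right)\Ent[][\phi]{f(Y_1)}.
\]
Combining this with Step 2 yields $\Ent[\mu][\phi]{Pf}\le(1-1/R)\Ent[\mu][\phi]{f}$, which is decay of $\phi$-entropy with rate $1/R$.

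Two technical points remain.

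\emph{Domain of $f$.} The definition of decay is stated for $f:\Omega\to\mathbb{R}$, while approximate conservation is stated for nonnegative $f$. Since the domain of $\phi$ is $\mathbb{R}_{\ge 0}$, we restrict to nonnegative $f$. For the variance case the general case follows by shifting $f$ by a constant: $P$ preserves constants and variance is shift-invariant.

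\emph{Legitimacy of the total-entropy identity.} The identity in Step 3 requires $Y_1\sim\mu$ jointly with $Y_\theta$. This holds because $\mathrm{Law}(Y_1)=\mu$, and the down step samples from the true conditional $\mathrm{Law}(Y_\theta\mid Y_1)$. This step is the main subtlety, but it follows directly from the definitions.
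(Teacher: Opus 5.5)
Your proof is correct and is the standard argument for this lemma. The paper does not reprove it but imports it from \cite{chen2025uniqueness}. You write $P_{1\leftrightarrow\theta}f(Y_1)=\mathbb{E}[\mathbb{E}[f(Y_1)\mid Y_\theta]\mid Y_1]$; conditional Jensen shows the down step cannot increase $\phi$-entropy, and the law of total $\phi$-entropy combined with the conservation hypothesis gives the factor $1-1/R$ for the up step.
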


An important application of localization schemes is the ``boosting'' (or ``lifting'') of mixing properties from a regime where they are known to hold to a new regime.

\begin{lemma}[{\cite[Lemma 3.13]{chen2025uniqueness}}]\label{lem:boosting}
Let $(Y_t)_{t \in [0,1]}$ be a denoising process with respect to a distribution $\mu$. 
Suppose there exists $\theta \in [0,1]$ such that:
\begin{enumerate}[label=(\roman*)]
\item The conditional distribution $\-{Law}(Y_1\mid Y_{\theta})$ always satisfies $K$-approximate tensorization of $\phi$-entropy;
\item $(Y_t)_{t \in [0,1]}$ satisfies $R$-approximate conservation of $\phi$-entropy up to time $\theta$.
\end{enumerate}
Then $\mu$ satisfies $(KR)$-approximate tensorization of $\phi$-entropy.
\end{lemma}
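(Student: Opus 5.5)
The plan is to chain the two hypotheses through the law of total expectation over $Y_\theta$, and then compare the averaged local conditional $\phi$-entropies of the conditional laws with those of $\mu$. Fix $f:\Omega\to\mathbb{R}_{\ge 0}$ and write $X:=Y_1\sim\mu$. Hypothesis (ii) gives
\begin{align*}
\Ent[][\phi]{f(X)} \le R\cdot \E{\Ent[][\phi]{f(X)\mid Y_\theta}}.
\end{align*}
For each fixed value $Y_\theta=S$, the conditional law $\mu_S:=\mathrm{Law}(Y_1\mid Y_\theta=S)$ satisfies $K$-approximate tensorization by (i). Hence
\begin{align*}
\Ent[][\phi]{f(X)\mid Y_\theta=S}\le K\sum_{i=1}^n \E{\Ent[][\phi]{f(X)\mid X_{-i},\,Y_\theta=S}\mid Y_\theta=S}.
\end{align*}
Averaging over $S$ then gives
\begin{align*}
\Ent[][\phi]{f(X)}\le KR\sum_{i=1}^n\E{\Ent[][\phi]{f(X)\mid X_{-i},Y_\theta}}.
\end{align*}

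The remaining step is to show that, for each $i$,
\begin{align*}
\E{\Ent[][\phi]{f(X)\mid X_{-i},Y_\theta}}\le \E{\Ent[][\phi]{f(X)\mid X_{-i}}}.
\end{align*}
This is the monotonicity of conditional $\phi$-entropy under refinement of the conditioning $\sigma$-algebra, i.e.\ the law of total $\phi$-entropy. Writing $\mathcal{G}=\sigma(X_{-i})\subseteq\mathcal{H}=\sigma(X_{-i},Y_\theta)$, we have
\begin{align*}
\E{\Ent[][\phi]{f\mid\mathcal{G}}}=\E{\Ent[][\phi]{f\mid\mathcal{H}}}+\E{\Ent[][\phi]{\E{f\mid\mathcal{H}}\mid\mathcal{G}}}.
\end{align*}
The last term is nonnegative by Jensen's inequality for the convex $\phi$. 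The identity itself follows by expanding $\E{\phi(f)}-\E{\phi(\E{f\mid\mathcal{G}})}$ and inserting $\pm\E{\phi(\E{f\mid\mathcal{H}})}$, using the tower property $\E{\E{f\mid\mathcal{H}}\mid\mathcal{G}}=\E{f\mid\mathcal{G}}$. Summing over $i$ and combining with the previous display gives $(KR)$-approximate tensorization of $\phi$-entropy for $\mu$.

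The main care point is the identification in the middle step. The quantity $\E{\Ent[][\phi]{f(X)\mid X_{-i}}\mid Y_\theta=S}$ computed under $\mu_S$ must coincide with the conditional $\phi$-entropy given $(X_{-i},Y_\theta)$ under the joint law of $(Y_\theta,Y_1)$. This holds because conditioning on $Y_\theta=S$ and then on $X_{-i}$ inside $\mu_S$ is the same as conditioning jointly. Configurations outside $\Omega(\mu_S)$ have zero mass, so they contribute nothing.
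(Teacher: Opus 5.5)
Your proof is correct. It follows the standard argument behind this boosting lemma: apply approximate conservation to get a bound in terms of $\mathbb{E}$ of the conditional $\phi$-entropy given $Y_\theta$, then apply $K$-approximate tensorization to each conditional law $\mathrm{Law}(Y_1\mid Y_\theta)$, and finally drop the extra conditioning on $Y_\theta$ using the law of total $\phi$-entropy. The paper does not reprove this lemma (it cites it), but it uses that same law-of-total-$\phi$-entropy step in its projection-invariance proposition, so your route matches its approach.
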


Finally, approximate conservation of $\phi$-entropy can be established via $\phi$-entropic stability, which quantifies the rate of local $\phi$-entropy contraction along the denoising process.

\begin{definition}[$\phi$-entropic stability] \label{def:phi-ent-stable}
Let $(Y_t)_{t \in [0,1]}$ be a denoising process with respect to a distribution $\mu$ over $\Omega$.
We say that $(Y_t)_{t \in [0,1]}$ is \emph{$\phi$-entropically stable with rate $C$ at time $\theta$} if for any function $f:\Omega \to \mathbb{R}$, letting $F = f(Y_1)$, it holds for every $T$ with $\Pr{Y_{\theta}=T}>0$,
\begin{align*}
\lim_{h\to 0^+}\frac{1}{h}\Ent[][\phi]{\E{F\mid Y_{\theta+h}}\mid Y_{\theta}=T} \le \frac{C}{1-\theta}\Ent[][\phi]{F\mid Y_{\theta}=T}. 
\end{align*}
In particular, the following conventional terminology is used:
\begin{itemize}
    \item $\mathrm{KL}$-entropic stability is called \emph{entropic stability};
    \item $\chi^2$-entropic stability is called \emph{spectral stability}.
\end{itemize}
\end{definition}

The following lemma connects $\phi$-entropic stability to approximate conservation of $\phi$-entropy.

\begin{lemma}[{\cite[Theorem 3.16]{chen2025uniqueness}}]
\label{lem:SS-to-AC-variance}
Let $C:[0,1] \to \mathbb{R}_{\ge 0}$. 
If a denoising process $(Y_t)_{t \in [0,1]}$ is $\phi$-entropically stable with rate $C(t)$ at any time $t\in [0,1]$, 
then for any $\theta \in [0,1]$, the process $(Y_t)_{t \in [0,1]}$ satisfies $R$-approximate conservation of $\phi$-entropy up to time~$\theta$, where
\begin{align*}
R = \exp\tp{\int_{0}^{\theta}\frac{C(t)}{1-t}\-{d}t}.
\end{align*}
\end{lemma}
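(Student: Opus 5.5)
The plan is to track the expected conditional $\phi$-entropy along the denoising process and close the argument with a Grönwall-type inequality. Fix $f:\Omega\to\mathbb{R}_{\ge 0}$, set $F=f(Y_1)$, and define
\[
E(t) := \E{\Ent[][\phi]{F \mid Y_t}}, \qquad t\in[0,1].
\]
Since $Y_0$ is a Dirac measure, $E(0)=\Ent[][\phi]{F}$. The target inequality $\Ent[][\phi]{F}\le R\cdot E(\theta)$ therefore reduces to the lower bound $E(\theta)\ge E(0)\exp\tp{-\int_0^\theta \frac{C(t)}{1-t}\,\-dt}$.

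\textbf{Step 1: a one-step decomposition.} I would use the law of total $\phi$-entropy, conditionally on $Y_t$, with respect to the finer information $Y_{t+h}$. The denoising process is Markov, and $\mathrm{Law}(Y_1\mid Y_t,Y_{t+h})=\mathrm{Law}(Y_1\mid Y_{t+h})$, so $M_s:=\E{F\mid Y_s}$ is a martingale. This gives
\[
\Ent[][\phi]{F\mid Y_t} = \E{\Ent[][\phi]{F\mid Y_{t+h}}\mid Y_t} + \Ent[][\phi]{\E{F\mid Y_{t+h}}\mid Y_t}.
\]
This is just the identity $\E{\phi(F)}-\phi(\E{F})$ split through the intermediate conditional expectation. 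Taking expectations over $Y_t$ yields
\[
E(t)-E(t+h)=\E{\Ent[][\phi]{\E{F\mid Y_{t+h}}\mid Y_t}} \ge 0 .
\]
In particular $E$ is nonincreasing.

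\textbf{Step 2: a differential inequality.} I would divide the identity from Step 1 by $h$ and let $h\to 0^+$. Applying $\phi$-entropic stability with rate $C(t)$ at time $t$ pointwise for each $T$ with $\Pr{Y_t=T}>0$, and averaging over $T$, gives for the upper right Dini derivative
\[
-D^+E(t) \le \frac{C(t)}{1-t}\,E(t).
\]
Exchanging the limit with the expectation over $Y_t$ is harmless because $\Omega$, and hence the support of $Y_t$, is finite.

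\textbf{Step 3: Grönwall.} Set $g(t):=\log E(t)+\int_0^t \frac{C(s)}{1-s}\,\-ds$, assuming $E>0$; otherwise the claim is trivial, since $E(\theta)=0$ forces $E(0)=0$ by the inequality. Step 2 gives $D^+ g\ge 0$ on $[0,\theta)$. Because $E$ is monotone, a function with nonnegative Dini derivative of this kind is nondecreasing, so $g(\theta)\ge g(0)$. This is exactly the desired bound with $R=\exp\tp{\int_0^\theta\frac{C(t)}{1-t}\,\-dt}$.

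The main obstacle is regularity. The stability hypothesis only controls a one-sided limit, so $E$ need not be differentiable. Step 3 therefore has to be run with Dini derivatives, using the monotone-function version of the mean value inequality and assuming $C$ is integrable. One also has to check the martingale identity $\E{F\mid Y_t,Y_{t+h}}=\E{F\mid Y_{t+h}}$ from the Markov/time-reversal structure of the localization scheme.
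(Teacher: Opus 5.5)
Your Steps 1 and 2 follow the standard route. The paper gives no proof of this lemma and simply imports it from its reference. Both steps are sound: the Markov property gives the law of total $\phi$-entropy, and for finitely supported $Y_t$ the hypothesis makes $E$ right-differentiable with $E'_+(t)\ge -\frac{C(t)}{1-t}E(t)$.

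The gap is in Step 3. Monotonicity of $E$ does not let a nonnegative right Dini derivative force $g(t)=\log E(t)+\int_0^t\frac{C(s)}{1-s}\,\mathrm{d}s$ to be nondecreasing. $E$ is nonincreasing, which is exactly the direction in which it can drop through a left-discontinuity that no right Dini derivative detects; for example, $\mathbf{1}[t<t_0]$ decreases although $D^+\equiv 0$. The hypotheses alone cannot exclude this. Let $Y_t$ be a fixed point for $t<t_0$ and $Y_t=Y_1\sim\mu$ for $t\ge t_0$. This process is Markov, and for every $t$ and $T$ one has $\mathrm{Ent}^{\phi}[\mathbb{E}[F\mid Y_{t+h}]\mid Y_t=T]=0$ once $h<t_0-t$ (and identically when $t\ge t_0$). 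So stability holds with $C\equiv 0$, giving $R=1$. Yet $E(\theta)=0<\mathrm{Ent}^{\phi}[F]$ for $\theta\ge t_0$ and any nonconstant $f$. The same jump makes your dismissal of the case where $E$ vanishes circular: "$E(\theta)=0$ forces $E(0)=0$" is precisely what fails here.

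The missing ingredient is continuity of $t\mapsto E(t)=\mathbb{E}[\phi(F)]-\mathbb{E}[\phi(\mathbb{E}[F\mid Y_t])]$ on $[0,\theta]$. The lemma tacitly presupposes this, so you must verify it for the process at hand. For every process to which the paper applies the lemma, this is immediate. The vertex-field process, and the event-field, edge-field and Swendsen--Wang processes (which coincide with a vertex-field process on the event indicators for $t<1$), have $\Pr{Y_t=T}$ polynomial in $t$. Their posteriors $\mathrm{Law}(Y_1\mid Y_t=T)$ are explicit rational functions of $t$, continuous on $[0,1)$.

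Given continuity, finish Step 3 with the classical lemma that a continuous function with $D^+\ge 0$ outside a countable set is nondecreasing. This also needs the right derivative of the integral term to equal $C(t)/(1-t)$ off a countable set. That holds, for instance, if $C$ is right-continuous except at countably many points, as is true of every rate used in the paper. Continuity also settles the case where $E$ reaches $0$, via a first-hitting-time argument.

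A minor point: finiteness of $\Omega$ does not by itself make the support of $Y_t$ finite (stochastic localization is a counterexample). Your exchange of limit and expectation instead rests on the discreteness built into the definition of stability, which does hold for the paper's processes.
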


\subsection{Field dynamics}\label{sec:prelim-field-dynamics}

An important example of localization schemes is the \emph{field dynamics} introduced in \cite{chen2021rapid}, 
which, as observed in \cite{chen2022localization}, corresponds to \emph{negative-field localization}.
To distinguish it from the edge-tilting variant introduced in this paper, we will refer to it as the 
\emph{vertex-tilting field dynamics}, or simply \emph{vertex-field dynamics}.

We focus on distributions over $\{0,1\}^n$. 
All results stated here extend naturally to distributions over $\{\pm 1\}^n$.
Moreover, when the context is clear, we identify a configuration $\sigma \in \{0,1\}^n$ with the subset $\{v \in [n] : \sigma_v = 1\}$.


The following defines the denoising process for the vertex-field dynamics. 
In particular, the vertex-field dynamics is the  down–up chain induced by this denoising process.
\begin{definition}[Vertex-field denoising process] \label{def:field-dynamics-denoising}
Let $\mu$ be a distribution over $\set{0,1}^n$, and let $X \sim \mu$. 
For each $v \in [n]$, assign an independent random variable $r_v \sim \-{Uniform}[0,1]$.

The \emph{vertex-field denoising process for $\mu$}, denoted by $(Y_t)_{t \in [0,1]}$, is defined by
\begin{align*}
Y_t := \set{v\in [n]:X_v=1\land r_v\le t}.
\end{align*}
\end{definition}

For $\theta \in [0,1]$, recall the  distribution $\theta * \mu$ with tilted external fields, defined by
\begin{align*}
(\theta*\mu)(\sigma) \propto \mu(\sigma)\cdot\theta^{\abs{\sigma}}, \quad \forall\sigma\in\set{0, 1}^n.
\end{align*}

A key property of the vertex-field denoising process  $(Y_t)_{t \in [0,1]}$ is that the posterior distribution corresponds to such a reweighting.

\begin{proposition} \label{prop:posterior-FD}
Let $(Y_t)_{t \in [0,1]}$ be the vertex-field denoising process for $\mu$. Then
\begin{align*}
\-{Law}(Y_1\mid Y_t)=(1-t)*\mu^{Y_t},
\end{align*}
where $\mu^{Y_t}$ denotes the conditional distribution obtained by pinning all variables in $Y_t$ to $1$.
\end{proposition}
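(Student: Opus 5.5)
The statement to prove is \Cref{prop:posterior-FD}: $\mathrm{Law}(Y_1\mid Y_t)=(1-t)*\mu^{Y_t}$. The plan is a direct Bayes computation from \Cref{def:field-dynamics-denoising}. Fix $t\in[0,1)$ and a set $T\subseteq[n]$ with $\Pr{Y_t=T}>0$. Since $Y_1=\set{v:X_v=1}$ is simply $X$ viewed as a subset, it suffices to compute $\Pr{X=\sigma, Y_t=T}$ for every $\sigma\in\set{0,1}^n$ and then normalize over $\sigma$.

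First, condition on $X=\sigma$. The only remaining randomness is in the independent uniforms $(r_v)_v$, and $Y_t=\set{v\in\sigma : r_v\le t}$. The event $Y_t=T$ therefore happens exactly when two things hold. First, $T\subseteq\sigma$. Second, $r_v\le t$ for each $v\in T$ and $r_v>t$ for each $v\in\sigma\setminus T$. Independence of the $r_v$ then gives
\[
\Pr{Y_t=T\mid X=\sigma}=\mathbf{1}[T\subseteq\sigma]\cdot t^{\abs{T}}(1-t)^{\abs{\sigma}-\abs{T}}.
\]

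Second, apply Bayes' rule:
\[
\Pr{X=\sigma\mid Y_t=T}\propto \mu(\sigma)\,\mathbf{1}[T\subseteq\sigma]\,(1-t)^{\abs{\sigma}},
\]
where the factor $t^{\abs{T}}(1-t)^{-\abs{T}}$ depends only on $T$ and has been absorbed into the normalization. The condition $T\subseteq\sigma$ says exactly that every variable in $T$ is pinned to $1$. So the right-hand side is proportional to $\mu^{T}(\sigma)\,(1-t)^{\abs{\sigma}}$, where $\mu^T$ is $\mu$ with the variables in $T$ pinned to $1$. Feasibility of this pinning follows from $\Pr{Y_t=T}>0$. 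By definition this is $(1-t)*\mu^{T}$, which proves the claim.

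There is no real obstacle here. The only care needed is with edge cases. For $t=0$ we have $Y_0=\emptyset$ almost surely, and the formula gives $1*\mu=\mu$, as it should. For $t=1$ the tilt $0^{\abs{\sigma}}$ is degenerate, so I would either restrict the statement to $t<1$ or read $0^0=1$, which forces $\sigma=T=Y_1$. Neither case affects the argument above. This is also the same computation invoked as \eqref{eq:FD-posterior-calc} in the proof of \Cref{prop:posterior-edge-field-dynamics}.
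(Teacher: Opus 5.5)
Your proposal is correct and is essentially the paper's own argument. Both compute $\Pr{Y_t=T\mid X=\sigma}=\mathbf{1}[T\subseteq\sigma]\,t^{\abs{T}}(1-t)^{\abs{\sigma}-\abs{T}}$ and then apply Bayes' rule, absorbing the $T$-dependent factors to get $\Pr{X=\sigma\mid Y_t=T}\propto(1-t)^{\abs{\sigma}}\mu(\sigma)\mathbf{1}[T\subseteq\sigma]$. Your remark that $t=1$ is degenerate, because dividing out $(1-t)^{-\abs{T}}$ is invalid there, is a small point the paper's proof passes over.
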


\begin{proof}
For any $\sigma\in\set{0, 1}^n$ and $T\subseteq V$ such that $\Pr{Y_t=T}>0$, by the Bayes's theorem,
\begin{align}
\nonumber
\Pr{X=\sigma\mid Y_t=T} =& \frac{\Pr{X=\sigma}}{\Pr{Y_t=T}}\cdot\Pr{Y_t=T\mid X=\sigma}\\
\nonumber
=& \frac{\mu(\sigma)}{\Pr{Y_t=T}}\cdot t^{\abs{T}}(1-t)^{\abs{\sigma}-\abs{T}} \cdot \*1[T \subseteq \sigma]\\
\label{eq:FD-posterior-calc}
\propto& (1-t)^{\abs{\sigma}}\mu(\sigma) \cdot \*1[T \subseteq \sigma],
\end{align}
where in the expression $T \subseteq \sigma$ we view $\sigma$ as a set $\set{v \mid \sigma_v = 1}$.
This finishes the proof.
\end{proof}

It follows from \Cref{prop:posterior-FD} that the induced down-up chain of the vertex-field denoising process is the following Markov chain, previously known as the field dynamics.
\begin{definition}[Vertex-field dynamics] \label{def:field-dynamics-down-up-chain}
Let $\mu$ be a distribution over $\{0,1\}^n$. 
Let $\theta \in [0,1]$. 
The \emph{vertex-field dynamics} for $\mu$ with tilt $\theta$ is the following Markov chain on $\{0,1\}^n$.

Given a configuration $\sigma \in \{0,1\}^n$, the chain updates as follows:
\begin{enumerate}[leftmargin=2cm]
\item[(\textsc{Down})]
Generate $T \subseteq \{v : \sigma_v = 1\}$ by independently removing each $v$ with probability $\theta$;
\item[(\textsc{Up})]
Resample $\sigma\sim(\theta*\mu)^{\sigma_T}$.
\end{enumerate}
\end{definition}

\paragraph{Spectral stability.} 
Spectral stability of the vertex-field denoising process admits an equivalent characterization in terms of an operator-norm bound on the covariance matrix.
\begin{lemma}[\text{\cite[Proposition 3.3]{chen2025rapid}}] \label{lem:spectral-stable-matrix-form}
Let $(Y_t)_{t \in [0,1]}$ be the vertex-field denoising process for $\mu$ over $\set{0,1}^n$. 
Then for any $\theta \in [0,1]$, the following statements are equivalent:
    \begin{itemize}
    \item $(Y_t)_{t \in [0,1]}$ is spectrally stable with rate $C$ at time $\theta$;
    \item for any $S$ with $\Pr{Y_\theta = S} > 0$,
        \begin{align*}
            \Cov((1-\theta) * \mu^S) \preceq C \cdot \diag\set{\*m((1-\theta) * \mu^S)},
        \end{align*}
    where for a distribution $\nu$ over $\{0,1\}^n$, $\*m(\nu) := \mathbb{E}_{X \sim \nu}[X]$ denotes its mean vector.
    \end{itemize}
\end{lemma}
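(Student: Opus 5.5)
The plan is to compute the left-hand side of the spectral stability inequality in \Cref{def:phi-ent-stable} (with $\phi=\chi^2$) exactly to first order in $h$. This turns it into a quadratic form in the covariances $\Cov_\nu(f,X_i)$. A variational (Cauchy--Schwarz) argument over $f$ then identifies the optimal constant with the matrix condition.

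\emph{Step 1: the conditional step law.} Fix $\theta$ and $S$ with $\Pr{Y_\theta=S}>0$, and set $\nu:=(1-\theta)*\mu^S$, which is $\mathrm{Law}(Y_1\mid Y_\theta=S)$ by \Cref{prop:posterior-FD}. Conditioned on $Y_\theta=S$ and on $X=Y_1$, every $v\in X\setminus S$ has $r_v$ uniform on $(\theta,1]$, independently. Hence $Y_{\theta+h}=S\cup\{v\in X\setminus S: r_v\le\theta+h\}$, and each such $v$ joins with probability $q:=h/(1-\theta)$.

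\emph{Step 2: first-order expansion.} Let $F=f(Y_1)$ and write $\nu(i)=\Pr[\nu]{X_i=1}$ and $\nu(f\mid i)=\E[\nu]{f\mid X_i=1}$. The configurations of $Y_{\theta+h}$ have the following probabilities:
\begin{itemize}
\item $Y_{\theta+h}=S$ with probability $1-O(h)$;
\item $Y_{\theta+h}=S\cup\{i\}$ with probability $q\,\nu(i)+O(h^2)$, for each $i\notin S$;
\item $|Y_{\theta+h}\setminus S|\ge 2$ with probability $O(h^2)$.
\end{itemize}
By \Cref{prop:posterior-FD} at time $\theta+h$, we have $\E{F\mid Y_{\theta+h}=S}=\nu(f)+O(h)$ and $\E{F\mid Y_{\theta+h}=S\cup\{i\}}=\nu(f\mid i)+O(h)$. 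The conditional mean of $\E{F\mid Y_{\theta+h}}$ is exactly $\nu(f)$. Expanding the variance therefore gives
\[
\lim_{h\to0^+}\frac1h\Var{\E{F\mid Y_{\theta+h}}\mid Y_\theta=S}=\frac{1}{1-\theta}\sum_{i\notin S}\nu(i)\bigl(\nu(f\mid i)-\nu(f)\bigr)^2 .
\]
Coordinates $i\in S$, and more generally coordinates with $\nu(i)\in\{0,1\}$, contribute nothing on either side. So spectral stability with rate $C$ is equivalent to
\[
\sum_i \frac{w_i^2}{\nu(i)}\le C\,\Var[\nu]{f}\quad\text{for all } f, \qquad \text{where } w_i:=\nu(i)\bigl(\nu(f\mid i)-\nu(f)\bigr)=\Cov_\nu(f,X_i),
\]
with the sum taken over coordinates having $\nu(i)>0$.

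\emph{Step 3: duality.} Let $D=\diag\{\*m(\nu)\}$. First, $w^\intercal D^{-1}w=\sup_{a}(a^\intercal w)^2/(a^\intercal D a)$. Second, $a^\intercal w=\Cov_\nu(f,a^\intercal X)$, so by Cauchy--Schwarz $(a^\intercal w)^2\le \Var[\nu]{f}\cdot a^\intercal\Cov(\nu)a$, with equality when $f=a^\intercal X$. Taking suprema over $a$ and $f$ yields
\[
\sup_f \frac{w^\intercal D^{-1}w}{\Var[\nu]{f}}=\sup_a\frac{a^\intercal \Cov(\nu)a}{a^\intercal D a}.
\]
The right-hand side is at most $C$ if and only if $\Cov(\nu)\preceq C D$. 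Degenerate coordinates need a small check: if $\nu(i)\in\{0,1\}$ then row and column $i$ of $\Cov(\nu)$ vanish, and a coordinate with $\nu(i)=0$ also has $D_{ii}=0$, so the matrix inequality reduces to the nondegenerate block. This proves both directions at once.

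The main obstacle I expect is making Step 2 rigorous. One must show that the $O(h)$ shifts in the conditional expectations (coming from the posterior changing from $(1-\theta)*$ to $(1-\theta-h)*$) and the $O(h^2)$ multi-addition events vanish in the limit after dividing by $h$. Both are finite sums of smooth functions of $h$ on a finite state space. The one point to check is that the centering term cancels exactly at first order; it does, because the mean of $\E{F\mid Y_{\theta+h}}$ is exactly $\nu(f)$ by the martingale property.
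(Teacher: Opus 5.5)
Your proof is correct and is essentially the standard argument behind the cited \cite[Proposition 3.3]{chen2025rapid}; the paper itself gives no proof. Your first-order expansion, $\lim_{h\to 0^+}\frac{1}{h}\mathrm{Var}\bigl[\mathbb{E}[F\mid Y_{\theta+h}]\bigm| Y_\theta=S\bigr]=\frac{1}{1-\theta}\sum_i \nu(i)\bigl(\nu(f\mid i)-\nu(f)\bigr)^2$, is the variance analogue of the formula for $E'(0)$ that the paper uses in proving \Cref{lem:2nd-derivative}, and your Cauchy--Schwarz duality with linear test functions $f=a^{\intercal}X$ then identifies the optimal rate with the largest generalized eigenvalue of $\mathrm{Cov}(\nu)$ relative to $\mathrm{diag}\{\boldsymbol{m}(\nu)\}$. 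One small point: exact centering in Step 2 is not actually needed, since an $O(h)$ error in the conditional mean contributes only $O(h^2)$ to the variance; when you do invoke it, it rests on the Markov property of $(Y_t)$ rather than the tower property alone, because $Y_\theta$ is not a function of $Y_{\theta+h}$.
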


An effective way to establish spectral stability is via the notion of coupling independence (\Cref{def:CI}).
Specifically, it has been established (see, e.g., \cite[Proposition 4.3]{chen2023near} and \cite[Corollary 16]{chen2025spectral}) that 
$C$-coupling independence of the distribution $(1-\theta) * \mu^S$ implies
\begin{align*}
    \Cov((1-\theta) * \mu^S) \preceq C \cdot \diag\set{\*m((1-\theta) * \mu^S)}.
\end{align*}
Thus, we obtain the following corollary of \Cref{lem:spectral-stable-matrix-form}.


\begin{lemma}
\label{lem:cond-stable-FD} 
Let $(Y_t)_{t \in [0,1]}$ be the vertex-field denoising process for $\mu$. 
Fix any $\theta \in (0,1]$. 
If $(1 - \theta) * \mu$ is $C$-coupling independent, then $(Y_t)_{t \in [0,1]}$ is $C$-spectrally stable at time $\theta$.
\end{lemma}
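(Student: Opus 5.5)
The plan is to reduce to the matrix criterion in \Cref{lem:spectral-stable-matrix-form}. It then suffices to show that for every $S$ with $\Pr{Y_\theta=S}>0$,
\[
\Cov(\nu)\preceq C\cdot\diag\set{\*m(\nu)}, \qquad \nu:=(1-\theta)*\mu^S .
\]

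First I will check that $\nu=((1-\theta)*\mu)^{S}$. Tilting the external field by $(1-\theta)$ multiplies each configuration's weight by $(1-\theta)^{|\sigma|}$. Conditioning on the event that all coordinates in $S$ equal $1$ restricts to a subset of configurations. These two operations commute, because after restriction the normalizing constants agree up to a common factor. Moreover, \Cref{def:CI} quantifies over all pinnings $\Lambda$ and all pairs of pinnings at Hamming distance $1$. Hence every pinning of a $C$-coupling independent distribution is again $C$-coupling independent, and so $\nu$ is $C$-coupling independent.

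Next I will derive the covariance bound from coupling independence. The argument is the standard one cited from \cite{chen2023near,chen2025spectral}, reproduced here. Coordinates with $m_i\in\{0,1\}$ have identically zero rows and columns in $\Cov(\nu)$, so I discard them and assume $m_i\in(0,1)$ for all remaining $i$. Write $D=\diag\set{\*m(\nu)}$. The matrix $D^{-1}\Cov(\nu)$ is similar to $D^{-1/2}\Cov(\nu)D^{-1/2}$, which is symmetric. So its eigenvalues are real, and $\Cov(\nu)\preceq C D$ is equivalent to $\lambda_{\max}(D^{-1}\Cov(\nu))\le C$.

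To bound this eigenvalue I will bound the maximum absolute row sum. The $(i,j)$ entry is
\[
\frac{\Cov(\nu)(i,j)}{m_i}=\nu(j\mid i)-\nu(j)=(1-m_i)\bigl(\nu(j\mid i)-\nu(j\mid \ol{i})\bigr).
\]
For $j=i$ this gives $1-m_i$, consistent with the bracket equal to $1$. Summing absolute values over $j$ and applying the coupling lemma coordinatewise, exactly as in \eqref{eq:rowsum-pair-hamming-distance}, gives
\[
\sum_j \abs{\nu(j\mid i)-\nu(j\mid\ol i)}\le \inf_\xi\E[(X,Y)\sim\xi]{\dist(X,Y)}\le C,
\]
where $\xi$ ranges over couplings of $\nu^{i\gets1}$ and $\nu^{i\gets0}$. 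The last inequality uses the $C$-coupling independence of $\nu$ applied with $\Lambda=\{i\}$. Every row sum is therefore at most $(1-m_i)C\le C$, so $\lambda_{\max}(D^{-1}\Cov(\nu))\le\norm{D^{-1}\Cov(\nu)}_\infty\le C$. Combined with \Cref{lem:spectral-stable-matrix-form}, this completes the argument.

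There is no real obstacle. The only points needing care are the commutation of tilting with pinning, and the degenerate coordinates with $m_i\in\{0,1\}$, which must be removed before dividing by $m_i$.
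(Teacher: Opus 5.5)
Your proof is correct and takes the paper's route. You reduce to the covariance criterion of \Cref{lem:spectral-stable-matrix-form}, and then use that $C$-coupling independence passes to the pinning $((1-\theta)*\mu)^S=(1-\theta)*\mu^S$ and gives $\Cov\preceq C\cdot\diag\set{\*m}$. The only difference is that the paper cites this last implication from \cite{chen2023near,chen2025spectral}, whereas you reprove it with the standard row-sum and coupling bound.
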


\paragraph{Entropic stability.}
Similarly, entropic stability of the vertex-field denoising process also admits an equivalent characterization.
\begin{lemma}[\text{\cite[Proposition 3.7]{chen2025rapid}}] \label{lem:entropic-stable-max-ent-form}
Let $(Y_t)_{t \in [0,1]}$ be the vertex-field denoising process for $\mu$ over $\set{0,1}^n$. 
Then for any $\theta \in [0,1]$, the following statements are equivalent:
    \begin{itemize}
    \item $(Y_t)_{t \in [0,1]}$ is entropically stable with rate $C$ at time $\theta$;
    \item for any $S$ with $\Pr{Y_\theta = S} > 0$, and any $\nu \ll (1-\theta)\ast \mu^S$,
        \begin{align*}
            \sum_{i\in [n]\setminus S} q_i \log \frac{q_i}{p_i} - (q_i - p_i) \leq C \cdot \DKL{\nu}{(1-\theta) * \mu^S},
        \end{align*}
    where $\*p := \*m((1-\theta) * \mu^S)$ and $\*q := \*m(\nu)$ are the mean vectors.
    \end{itemize}
\end{lemma}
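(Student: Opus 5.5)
The plan is to first reduce to the case $\theta = 0$ by a time change, then to compute the first-order behaviour of the entropy of $\E{F \mid Y_{h}}$ as $h \to 0^+$ explicitly, and finally to rewrite the resulting expression in terms of a tilted measure $\nu$.

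\textbf{Reduction to $\theta = 0$.} Fix $S$ with $\Pr{Y_\theta = S} > 0$. By \Cref{prop:posterior-FD}, $\mathrm{Law}(Y_1 \mid Y_\theta = S) = (1-\theta) * \mu^S =: \pi$. Conditionally on $Y_\theta = S$, the coordinates $i \notin S$ with $Y_1(i) = 1$ have clocks $r_i$ uniform on $(\theta, 1]$. Hence the rescaled process
\[
Y'_s := Y_{\theta + s(1-\theta)} \setminus S
\]
is exactly the vertex-field denoising process for $\pi$, viewed on $[n] \setminus S$. The coordinates in $S$ are frozen to $1$ and contribute nothing. Since $Y_{\theta+h}$ corresponds to $Y'_{h/(1-\theta)}$, the chain rule gives
\[
\lim_{h \to 0^+} \frac{1}{h} \Ent{\E{F \mid Y_{\theta+h}} \mid Y_\theta = S}
= \frac{1}{1-\theta} \lim_{s \to 0^+} \frac{1}{s} \Ent{\E{F \mid Y'_s}}.
\]
The factor $\frac{1}{1-\theta}$ on the left cancels the one in \Cref{def:phi-ent-stable}. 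So it suffices to prove the following: for a distribution $\pi$ and the vertex-field process for $\pi$ at time $0$,
\[
\lim_{s \to 0^+} \frac{1}{s} \Ent{\E{F \mid Y'_s}} = \pi(f) \sum_i \Bigl( q_i \log \frac{q_i}{p_i} - (q_i - p_i) \Bigr), \qquad \Ent{F} = \pi(f)\, \DKL{\nu}{\pi},
\]
where $\nu := f\pi / \pi(f)$, $\*p = \*m(\pi)$, $\*q = \*m(\nu)$, and $f \ge 0$ is not identically zero. (If $\pi(f) = 0$ both sides vanish.) Granting this, the equivalence is immediate, because every $\nu \ll \pi$ is of the form $f\pi/\pi(f)$ with $f = \nu/\pi$. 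The second identity is the standard relation between entropy and KL divergence.

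\textbf{First-order computation.} At time $s$, $Y'_s = \{i\}$ with probability $s\,p_i + O(s^2)$, $Y'_s = \emptyset$ with probability $1 - s\sum_i p_i + O(s^2)$, and $|Y'_s| \ge 2$ with probability $O(s^2)$. On the event $\{Y'_s = \{i\}\}$, \Cref{prop:posterior-FD} gives
\[
\E{F \mid Y'_s = \{i\}} = \E[(1-s) * \pi^{i}]{f} \longrightarrow \pi(f \mid i) \quad (s \to 0),
\]
where $\pi^{i}$ pins $i$ to $1$. Write $m_\emptyset := \E{F \mid Y'_s = \emptyset}$. The martingale identity $\E{\E{F \mid Y'_s}} = \pi(f)$ yields
\[
m_\emptyset = \pi(f) - s \sum_i p_i \bigl( \pi(f \mid i) - \pi(f) \bigr) + o(s).
\]
Now expand
\[
\Ent{\E{F \mid Y'_s}} = \sum_i s\, p_i\, \phi\bigl(\pi(f \mid i)\bigr) + \Bigl(1 - s \sum_i p_i\Bigr) \phi(m_\emptyset) - \phi(\pi(f)) + o(s),
\qquad \phi(x) = x \log x .
\]
Taylor-expanding $\phi(m_\emptyset)$ to first order in $s$ gives the derivative
\[
\sum_i p_i \Bigl( \pi(f \mid i) \log \frac{\pi(f \mid i)}{\pi(f)} - \pi(f \mid i) + \pi(f) \Bigr).
\]
Finally, substitute $q_i = p_i\, \pi(f \mid i) / \pi(f)$. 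Each summand becomes $\pi(f)\bigl( q_i \log(q_i/p_i) - q_i + p_i \bigr)$, which is the claimed expression.

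\textbf{Expected obstacle.} The main care is the uniformity of the $o(s)$ errors. Since $\Omega$ is finite and $f$ is bounded, the events with $|Y'_s| \ge 2$ contribute $O(s^2)$. The expansion is then routine, except near $\pi(f \mid i) = 0$, where $\phi$ is not differentiable. This is handled by the convention $0 \log 0 = 0$ and the continuity of $\phi$ on $[0, \infty)$: such terms contribute exactly $s\, p_i\, \phi(\text{small}) \to 0$ at rate $o(s)$. Coordinates with $p_i \in \{0, 1\}$ are frozen under $\pi$ and give $q_i = p_i$, so their terms vanish. This is why the sum in \Cref{lem:entropic-stable-max-ent-form} may be restricted to $[n] \setminus S$.
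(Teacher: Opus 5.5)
Your proposal is correct and follows the same route the paper relies on. The paper cites this lemma rather than proving it, but in \Cref{sec:appendix-B} it uses exactly your first-order formula $-E'(0)=\sum_i \mu(i)\bigl(\mu(f\mid i)\log\frac{\mu(f\mid i)}{\mu(f)}-\mu(f\mid i)+\mu(f)\bigr)$, and in the proof of \Cref{lem:differential-equation} it uses the same rescaled process $Y_{t+\theta(1-t)}\setminus Y_t$ to pass from time $0$ to general times; your substitution $q_i=p_i\,\pi(f\mid i)/\pi(f)$ together with $\mathrm{Ent}_\pi[f]=\pi(f)\,\mathrm{D}_{\mathrm{KL}}(\nu\parallel\pi)$ then gives the stated equivalence.
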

